\let\mathbb\varmathbb
\crefname{lemma}{Lemma}{Lemmas}
\crefname{fact}{Fact}{Facts}
\crefname{theorem}{Theorem}{Theorems}
\crefname{corollary}{Corollary}{Corollaries}
\crefname{claim}{Claim}{Claims}
\crefname{example}{Example}{Examples}
\crefname{algorithm}{Algorithm}{Algorithms}
\crefname{problem}{Problem}{Problems}
\crefname{definition}{Definition}{Definitions}
\crefname{exercise}{Exercise}{Exercises}
\crefname{condition}{Condition}{Conditions}
\crefname{property}{Property}{Properties}
\newtheorem{theorem}{Theorem}[section]
\newtheorem*{theorem*}{Theorem}
\newtheorem{lemma}[theorem]{Lemma}
\newtheorem*{lemma*}{Lemma}
\newtheorem{fact}[theorem]{Fact}
\newtheorem*{fact*}{Fact}
\newtheorem*{proposition*}{Proposition}
\newtheorem*{corollary*}{Corollary}
\newtheorem*{hypothesis*}{Hypothesis}
\newtheorem*{conjecture*}{Conjecture}
\theoremstyle{definition}
\newtheorem{definition}[theorem]{Definition}
\newtheorem*{definition*}{Definition}
\newtheorem*{construction*}{Construction}
\newtheorem*{example*}{Example}
\newtheorem*{question*}{Question}
\newtheorem{algorithm}[theorem]{Algorithm}
\newtheorem*{algorithm*}{Algorithm}
\newtheorem{assumption}[theorem]{Assumption}
\newtheorem*{assumption*}{Assumption}
\newtheorem*{problem*}{Problem}
\newtheorem*{openquestion*}{Open Question}
\theoremstyle{remark}
\newtheorem*{claim*}{Claim}
\newtheorem*{remark*}{Remark}
\newtheorem*{observation*}{Observation}
\let\originalleft\left
\let\originalright\right
\renewcommand{\left}{\mathopen{}\mathclose\bgroup\originalleft}
\renewcommand{\right}{\aftergroup\egroup\originalright}
\let\latexparagraph\paragraph
\RenewDocumentCommand{\paragraph}{som}{%
  \IfBooleanTF{#1}
    {\latexparagraph*{#3}}
    {\IfNoValueTF{#2}
       {\latexparagraph{\maybe@addperiod{#3}}}
       {\latexparagraph[#2]{\maybe@addperiod{#3}}}%
  }%
}
\newcommand{\maybe@addperiod}[1]{%
  #1\@addpunct{.}%
}
\newcommand{\Authornote}[2]{\textcolor{red}{#1: #2}}
\newcommand{\Gnote}{\Authornote{Gleb}}
\newcommand{\paren}[1]{(#1)}
\newcommand{\Paren}[1]{\left(#1\right)}
\newcommand{\Brac}[1]{\left[#1\right]}
\newcommand{\abs}[1]{\lvert#1\rvert}
\newcommand{\Abs}[1]{\left\lvert#1\right\rvert}
\newcommand{\card}[1]{\lvert#1\rvert}
\newcommand{\Card}[1]{\left\lvert#1\right\rvert}
\newcommand{\set}[1]{\{#1\}}
\newcommand{\Set}[1]{\left\{#1\right\}}
\newcommand{\norm}[1]{\lVert#1\rVert}
\newcommand{\Norm}[1]{\left\lVert#1\right\rVert}
\newcommand{\normt}[1]{\norm{#1}_2}
\newcommand{\snormt}[1]{\norm{#1}^2_2}
\newcommand{\Snorm}[1]{\Norm{#1}^2}
\newcommand{\Normo}[1]{\Norm{#1}_1}
\newcommand{\normi}[1]{\norm{#1}_\infty}
\newcommand{\Normi}[1]{\Norm{#1}_\infty}
\newcommand{\iprod}[1]{\langle#1\rangle}
\newcommand{\Iprod}[1]{\left\langle#1\right\rangle}
\newcommand{\Esymb}{\mathbb{E}}
\newcommand{\Psymb}{\mathbb{P}}
\DeclareMathOperator*{\E}{\Esymb}
\DeclareMathOperator*{\ProbOp}{\Psymb}
\renewcommand{\Pr}{\ProbOp}
\newcommand{\given}{\mathrel{}\middle\vert\mathrel{}}
\newcommand{\suchthat}{\;\middle\vert\;}
\newcommand{\from}{\colon}
\newcommand\bdot\bullet
\DeclareMathOperator{\Ind}{\mathbf 1}
\DeclareMathOperator{\Tr}{Tr}
\DeclareMathOperator*{\argmin}{arg\,min}
\DeclareMathOperator{\supp}{supp}
\DeclareMathOperator{\sign}{sign}
\newcommand{\iid}{i.i.d.\xspace}
\newcommand{\N}{\mathbb N}
\newcommand{\R}{\mathbb R}
\newcommand{\fm}{\mathfrak m}
\newcommand{\cA}{\mathcal A}
\newcommand{\cD}{\mathcal D}
\newcommand{\cE}{\mathcal E}
\newcommand{\cL}{\mathcal L}
\newcommand{\cN}{\mathcal N}
\newcommand{\cO}{\mathcal O}
\newcommand{\cW}{\mathcal W}
\newcommand{\bbS}{\mathbb S}
\renewcommand{\leq}{\leqslant}
\renewcommand{\le}{\leqslant}
\renewcommand{\geq}{\geqslant}
\renewcommand{\ge}{\geqslant}
\let\epsilon=\varepsilon
\numberwithin{equation}{section}
\newcommand\MYcurrentlabel{xxx}
\newcommand{\MYstore}[2]{%
  \global\expandafter \def \csname MYMEMORY #1 \endcsname{#2}%
}
\newcommand{\MYload}[1]{%
  \csname MYMEMORY #1 \endcsname%
}
\newcommand{\MYnewlabel}[1]{%
  \renewcommand\MYcurrentlabel{#1}%
  \MYoldlabel{#1}%
}
\newcommand{\MYdummylabel}[1]{}
\newcommand{\torestate}[1]{%
  \let\MYoldlabel\label%
  \let\label\MYnewlabel%
  #1%
  \MYstore{\MYcurrentlabel}{#1}%
  \let\label\MYoldlabel%
}
\newcommand{\restatetheorem}[1]{%
  \let\MYoldlabel\label
  \let\label\MYdummylabel
  \begin{theorem*}[Restatement of \cref{#1}]
    \MYload{#1}
  \end{theorem*}
  \let\label\MYoldlabel
}
\newcommand{\restatelemma}[1]{%
  \let\MYoldlabel\label
  \let\label\MYdummylabel
  \begin{lemma*}[Restatement of \cref{#1}]
    \MYload{#1}
  \end{lemma*}
  \let\label\MYoldlabel
}
\newcommand{\restateprop}[1]{%
  \let\MYoldlabel\label
  \let\label\MYdummylabel
  \begin{proposition*}[Restatement of \cref{#1}]
    \MYload{#1}
  \end{proposition*}
  \let\label\MYoldlabel
}
\newcommand{\restatefact}[1]{%
  \let\MYoldlabel\label
  \let\label\MYdummylabel
  \begin{fact*}[Restatement of \cref{#1}]
    \MYload{#1}
  \end{fact*}
  \let\label\MYoldlabel
}
\newcommand{\restate}[1]{%
  \let\MYoldlabel\label
  \let\label\MYdummylabel
  \MYload{#1}
  \let\label\MYoldlabel
}
\newcommand{\sse}{\subseteq}
\newcommand{\e}{\epsilon}
\newcommand{\eps}{\epsilon}
\newcommand*{\Id}{\mathrm{Id}}
\newcommand*{\Normf}[1]{\Norm{#1}_{\mathrm{F}}}
\newenvironment{algorithmbox}{\begin{mdframed}[nobreak=true]
\begin{algorithm}}{\end{algorithm}\end{mdframed}}
\newcommand{\ind}[1]{\mathbf{1}_{\Brac{#1}}}
\renewcommand{\normi}[1]{\norm{#1}_{\max}}
\renewcommand{\Normi}[1]{\Norm{#1}_{\max}}
\newcommand*{\transpose}[1]{{#1}{}^{\mkern-1.5mu\mathsf{T}}}
\newcommand{\huberloss}{\Phi}
\newcommand{\huberderivative}{\phi}
\newcommand{\huberparameter}{h}
\newcommand{\lossfunction}{\cL}
\newcommand{\noisetransformation}{f}
\newcommand{\noisepotential}{F}
\newcommand{\lossHuber}{\lossfunction_H}
\newcommand{\lossElliptical}{\lossfunction_E}
\newcommand{\betastar}{\beta^*}
\newcommand{\muhat}{\hat{\mu}}
\newcommand{\mustar}{\mu^*}
\newcommand{\Sigmaf}{\Sigma_{\noisetransformation}}
\newcommand{\Sigmahatf}{\hat{\Sigma}_{\noisetransformation}}
\newcommand{\varianceparameter}{\gamma}
\newcommand{\effrank}{\mathrm{r}}
\newenvironment{questionbox}{\begin{mdframed}[
  skipabove=\topsep,
  skipbelow=\topsep,
  leftmargin=50pt,
  rightmargin=50pt,
  hidealllines=true,
  font=\itshape,
  nobreak=true]
\begin{center}}{\end{center}\end{mdframed}}
\title{
  Robust Mean Estimation Without Moments for Symmetric Distributions\thanks{This project has received funding from the European Research Council (ERC) under the European Union's Horizon 2020 research and innovation programme (grant agreement No 815464).}
}
\author{
  Gleb Novikov\thanks{Lucerne School of Computer Science and Information Technology. This work was done at ETH Z\"urich.}
  \and David Steurer\thanks{ETH Z\"urich.}
  \and Stefan Tiegel\footnotemark[3]
}
\begin{document}

\pagestyle{empty}


\maketitle
\thispagestyle{empty} 


\begin{abstract}

We study the problem of robustly estimating the mean or location parameter without moment assumptions.
Known computationally efficient algorithms rely on strong distributional assumptions, such as sub-Gaussianity, or (certifiably) bounded moments.
Moreover, the guarantees that they achieve in the heavy-tailed setting are weaker than those for sub-Gaussian distributions with known covariance.
In this work, we show that such a tradeoff, between error guarantees and heavy-tails, is not necessary for symmetric distributions.
We show that for a large class of symmetric distributions, the same error as in the Gaussian setting can be achieved efficiently.
The distributions we study include products of arbitrary symmetric one-dimensional distributions, such as product Cauchy distributions, as well as elliptical distributions, 
a vast generalization of the Gaussian distribution.

For product distributions and elliptical distributions with known scatter (covariance) matrix, we show that given an $\varepsilon$-corrupted sample, we can with probability at least $1-\delta$ estimate its location up to error $O(\varepsilon \sqrt{\log(1/\varepsilon)})$ using $\tfrac{d\log(d) + \log(1/\delta)}{\varepsilon^2 \log(1/\varepsilon)}$ samples.
This result matches the best-known guarantees for the Gaussian distribution and known SQ lower bounds (up to the $\log(d)$ factor).
For elliptical distributions with unknown scatter (covariance) matrix, we propose a sequence of efficient algorithms that approaches this optimal error.
Specifically, for every $k \in \mathbb{N}$, we design an estimator using time and 
samples $\tilde{O}({d^k})$ achieving error $O(\varepsilon^{1-\frac{1}{2k}})$.
This matches the error and running time guarantees when assuming certifiably bounded moments of order up to $k$.
For unknown covariance, such error bounds of $o(\sqrt{\e})$ are not even known for (general) sub-Gaussian distributions.

Our algorithms are based on a generalization of the well-known filtering technique \cite{DK_book}.
More specifically, we show how this machinery can be combined with Huber-loss-based 
techniques to work with projections of the noise that behave more nicely than the initial noise.
Moreover, we show how sum-of-squares proofs can be used to obtain algorithmic guarantees even for distributions without a first moment.
We believe that this approach may find other applications in future works.

\end{abstract}

\clearpage


\microtypesetup{protrusion=false}
\tableofcontents{}
\microtypesetup{protrusion=true}

\clearpage

\pagestyle{plain}
\setcounter{page}{1}


\section{Introduction}
\label{sec:introduction}



Robust statistics \cite{huber2011robust,rousseeuw2011robust} is a central field in statistics with the goal of designing algorithms for statistical problems that are robust to a small amount of outliers, for example caused by measurement errors or corrupted data.
We model this as follows:
Samples are generated from an unknown distribution $\cD$ and then an $\e$-fraction of them is arbitrarily corrupted by an adversary with full knowledge of the underlying model and our algorithm.
We only have access to the corrupted samples.
In this work we focus on the canonical task of estimating the mean of $\cD$.
Traditionally, estimators robust to such corruptions have been computationally inefficient, requiring time exponential in the ambient dimension \cite{bernholt2006robust}, while computationally efficient estimators have incurred error scaling with the ambient dimension thus rendering them unsuitable for today's high-dimensional statistical tasks.
Recently however, a flurry of efficient estimators emerged achieving guarantees without this prohibitive dependence on the dimension and with error rates approaching those of computationally inefficient ones \cite{diakonikolas2019robust,lai2016agnostic,DBLP:conf/stoc/CharikarSV17,diakonikolas2017being,hopkins2018mixture,kothari2018robust,DK_book}.

A textbook example of this development is when $\cD$ is the multi-variate Gaussian distribution (or sub-Gaussian distributions with \emph{known} covariance).
In this setting, efficient algorithms can estimate the mean to within error $O(\e \sqrt{\log(1/\e)})$, i.e., nearly linear in $\e$ \cite{diakonikolas2019robust,diakonikolas2017being,kothari2022polynomial}.
The statiscally optimal error rate is $O(\e)$ \cite{chen2018robust}.
That is, efficient algorithms are optimal up to the additional factor of $\sqrt{\log(1/\e)}$.
Further, it is conjectured that this factor is inherent for efficient algorithms, i.e., that there is a \emph{computational-statistical gap}.
This is evidenced by known lower bounds for statistical query algorithms \cite{diakonikolas2017statistical}.
However, there are two drawbacks with this textbook example:
If the covariance matrix is unknown, it is not known how to efficiently achieve the same error for sub-Gaussian distributions, in fact, in full generality, it is not even known how to achieve error $o(\sqrt{\e})$ in this setting.
This is because known algorithms rely heavily on the algebraic structure of Gaussian moments.
Second, the assumption that the uncorrupted data belongs to a (sub)-Gaussian distribution is arguably very strong.
A more natural setting is when $\cD$ is allowed to have heavier tails.
In the setting that $\cD$ has bounded second moments it is known how to achieve error $O(\sqrt{\e})$ efficiently \cite{diakonikolas2019efficient,lai2016agnostic,DBLP:conf/stoc/CharikarSV17,diakonikolas2017being}.
Interestingly, $O(\sqrt{\e})$ seems to constitute a barrier for efficient algorithms.
In particular, while error $o(\sqrt{\e})$ is possible information-theoretically if we assume that higher-order moments are bounded, \cite{hopkins2019hard} show that bounded moments alone are likely not enough for efficient algorithms, and that additional assumptions are required.
Currently, these manifest as either assuming that there is a certificate, in sum-of-squares, for the boundedness of the moments \cite{hopkins2018mixture,kothari2018robust} or assuming the covariance is the identity (or known) \cite{diakonikolas2020outlier}.
While these approaches indeed break the $O(\sqrt{\e})$ barrier, they in many cases fall short of the $O(\e\sqrt{\log(1/\e)})$ error possible in the Gaussian setting.
In particular, they only achieve error comparable to the Gaussian case, when using $O(\log(1/\e))$ many moments.
Unfortunately, it is necessary to use this many moments, even when assuming they are certifiably bounded \cite{kothari2018robust}.
Thus, the tails of these distributions are already much lighter than, say, in the bounded covariance case.
Therefore, a natural question is:
\begin{questionbox}
    Do there exist classes of heavy-tailed distributions for which we can (efficiently) achieve the same robust error as for the Gaussian distribution?
\end{questionbox}
In this work, we answer this question by giving algorithms that use a (quasi-)polynomial number of samples (in the dimension and $1/\e$) and time polynomial in the number of samples for the broad class of symmetric product and elliptical distributions.
Moreover, in many cases, we can achieve this using (nearly) the same amount of samples as for the Gaussian distribution, recovering, e.g., the optimal dependence on the failure probability.
We remark that the distributions we consider might have arbitrarily heavy tails and we do not make any assumptions related to sum-of-squares.

\paragraph{Product and Elliptical Distributions}

Symmetry is a natural assumption in statistics with many applications to, e.g., mathematical finance and risk management \cite{lintner1975valuation,owen1983class,chamberlain1983characterization}.
In this work, we consider the following two types of symmetric distributions that are of particular interest \cite{kelker1970distribution, cambanis1981theory, fang2018symmetric}:
First, product distributions of symmetric one-dimensional distributions and spherically symmetric distributions.
These correspond to a generalization of the standard Gaussian distribution.
Examples are product Cauchy distributions and the multi-variate Student $t$-distribution with identity scatter matrix.
Second, elliptical distributions.
These correspond to a generalization of Gaussians with arbitrary (unknown) covariance matrix.
Examples include the multi-variate Student $t$-distribution, symmetric multivariate stable distributions and multivariate Laplace distributions.
For both classes, it is information-theoretically possible to obtain robust error $O(\e)$ \cite{chen2018robust,prasad2020robust}, i.e., matching that of a Gaussian.
These approaches are not known to run in faster than exponential time and prior work \cite{prasad2020robust} asked whether a similar error can be achieved efficiently.
We respond to this question, by designing algorithms that nearly, in some cases exactly, match this error in polynomial (in the number of samples) time, if the number of samples is polynomial  (in the cases of symmetric product or spherically symmetric distribution) or quasi-polynomial (in the case of elliptical distributions) in the dimension and $1/\e$.

\subsection{Problem Set-Ups and Results}


Next, we give formal definitions of the corruption model and distributions considered and state our results.
We use the following standard model for corruptions -- often referred to as the strong contaminiation model.
\begin{definition}
    Let $\bm X_1, \ldots, \bm X_n$ be \iid samples from a distribution $\cD$ and let $\e > 0$.
    We say that $Z_1, \ldots, Z_n$ are an \emph{$\e$-corruption} of $\bm X_1, \ldots, \bm X_n$, if they agree on at least an $(1-\e)$-fraction of the points.
    The remaining $\e$-fraction can be arbitrary and in particular, can be corrupted by an adversary with full knowledge of the model, our algorithm, and all problem parameters.
\end{definition}

We consider the following two classes of distributions.
\begin{definition}[Semi-Product Distributions]
    \label{def:semi_product}
    Let $\rho > 0$.
    We say a distribution $\cD$ over $\R^d$ is an \emph{$\rho$-semi-product} distribution, if for $\bm \eta \sim \cD$ it holds that
    \begin{enumerate}
        \item For all $j \in [d]$, the distribution of $\bm \eta_j$ is symmetric about 0, \label[property]{property:semi_product_symmetry}
        \item For all $j \in [d]$, $\Psymb\Paren{\abs{\bm \eta_j} \leq \rho} \geq \tfrac 1 {100}$, \label[property]{property:semi_product_small_noise}
        \item The random vectors $\Paren{\sign\paren{\bm \eta_j}}_{j=1}^d$ and $\Paren{\abs{\bm \eta_j}}_{j=1}^d$ are independent, 
        and the random variables $\sign\Paren{\bm \eta_1}, \ldots, \sign\Paren{\bm \eta_d}$ are mutually independent. \label[property]{property:semi_product_independence}
    \end{enumerate}
\end{definition}
Some remarks about the definition are in order.
First, notice that the Gaussian distribution $N(0, \rho^2 \cdot I)$ is $\Theta(\rho)$-semi-product.
Similarly, every symmetric product or spherically symmetric distribution that has covariance bounded by $\rho^2 \cdot \Id_d$ is $\Theta(\rho)$-semi-product.
In particular, the coordinates $\bm \eta_j$ need not be independent.
However, the definition allows for much heavier tails, the only requirement is that at least a $1/100$-fraction of the probability mass lies in an interval of length $\rho$ around 0\footnote{This can even further be relaxed to any $\alpha > 0$ fraction. The sample complexity and error then scale naturally with $\alpha$. See the appendices for more details.}.
In particular, it captures all spherically symmetric distributions, e.g. multi-variate $t$-distributions with identity scatter matrix, and all symmetric product distributions, e.g. the product Cauchy distribution.
Notice that in the non-robust setting the first two properties are enough to accurately estimate $\mu^*$ from samples $\mu^* + \bm\eta_1,\ldots,\mu^* + \bm\eta_n$, via adding symmetric mean zero noise with tiny variance to each coordinate and computing the entry-wise median.
We expect that \emph{some} additional assumption is necessary for efficient algorithms in the robust setting.
We show that Property 3 above is sufficient.
A different sufficient condition would be assuming that the distribution is elliptical, which we discuss next.

The second one is the class of elliptical distributions \cite{kelker1970distribution, cambanis1981theory, fang2018symmetric} a generalization of spherically symmetric distribution, which in particular can have more complex dependency structures.
\begin{definition}
    \label{def:elliptical}
    A distribution $\cD$ over $\R^d$ is \emph{elliptical}, if for $\bm \eta \sim \cD$ the following holds:
    Let $\bm U$ be uniformly distributed over the $d$-dimensional unit sphere.
    There exists a positive random variable $\bm R$, independent of $\bm U$, and a positive semi-definite matrix $\Sigma$, such that
    \begin{align*}
        \bm \eta = \bm R \Sigma^{1/2} \bm U \quad\text{and}\quad \Psymb\Paren{\bm R \leq \sqrt{2d}} \geq \tfrac 1 {100} \,.
    \end{align*}
\end{definition}
We call $\Sigma$ the \emph{scatter} matrix of the distribution (sometimes also referred to as the dispersion matrix).
In particular, the Gaussian distribution $N(0,\Sigma)$ with arbitrary $\Sigma$ is elliptical.
Further, we can choose its covariance matrix as the scatter matrix.
Indeed, we can decompose $\bm X \sim N(0,\Sigma)$ as $\bm X = {\bm R} \Sigma^{1/2} {\bm U}$, where $\bm R$ is distributed as the square root of a $\chi^2$-distribution with $d$ degrees of freedom and $\bm U$ is independent of $\bm R$ and distributed uniformly over the unit sphere.
Then by Markov's Inequality, it holds that $\Psymb\paren{\bm X \leq \sqrt{2d}} = \Psymb\paren{\bm X^2 \leq 2 d} \geq \frac 1 2$.
Elliptical distributions with scatter matrix identity correspond to spherically symmetric distributions, 
a special case of semi-product distributions.
Note that that elliptical distributions \emph{do not} capture product distributions except for the Gaussian case -- 
e.g., the product Cauchy distribution is not elliptical.

Extending the above two defintions, we say that a distribution $\cD$ is $\rho$-semi-product (resp. elliptical) \emph{with location $\mustar \in \R^d$} if samples of $\cD$ take the form $\mustar + \bm \eta$, where $\mustar \in \R^d$ is deterministic and $\bm \eta$ is $\rho$-semi-product (resp. elliptical).
Note that the location takes the place of the mean, as this might not exist for distributions of the above form.

\paragraph{Results}
Our main result for semi-product distributions is the following:
Note that we can reduce the case of elliptical distributions with known scatter matrix to the $\Theta(1)$-semi-product case, hence the theorems below also apply to this setting.
We also remark that the algorithm only receives the corrupted samples as input.
In particular, $\rho$ need not be known (and can be estimated from the corrupted samples).
See the appendices for a proof.

\begin{theorem}
    \label{thm:main_semi_product}
    Let $\mustar \in \R^d, \e, \rho > 0$ and $\cD$ be a $\rho$-semi-product distribution with location $\mustar$.
    Let $C > 0$ be a large enough absolute constant and assume that $\varepsilon \leq 1/C$ and $n \geq C \cdot \tfrac{d\log(d) + \log(1/\delta)}{ \e^2 \log(1/\e)}$.
    Then, there exists an algorithm that, given an $\e$-corrupted sample from $\cD$, runs in time $n^{O(1)}$ and outputs $\muhat \in \R^d$ such that with probability at least $1-\delta$ it holds that
    \[
        \Norm{\muhat - \mustar} \leq O\Paren{\rho \cdot \Brac{\sqrt{\frac {d\log(d) + \log(1/\delta)} {n}} + \e\sqrt{\log(1/\e)}}} \,.
    \]
\end{theorem}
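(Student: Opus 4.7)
The plan combines \emph{componentwise Huber clipping} $\phi_h$ (at threshold $h = \Theta(\rho)$) with the DKKLMS-style filter algorithm. The key structural observation is that while the noise $\bm\eta \sim \cD$ may be heavy-tailed, the clipped noise $\phi_h(\bm\eta)$ behaves like a sub-Gaussian vector with near-identity covariance: (i) by \cref{property:semi_product_symmetry} it is mean-zero; (ii) writing $\sigma_j \seteq \sign(\bm\eta_j)$ and using that $\phi_h$ is odd, for $j \ne k$ we have $\phi_h(\bm\eta_j)\phi_h(\bm\eta_k) = \sigma_j\sigma_k\phi_h(\abs{\bm\eta_j})\phi_h(\abs{\bm\eta_k})$, and by \cref{property:semi_product_independence} this has expectation zero, so the covariance of $\phi_h(\bm\eta)$ is diagonal with entries at most $h^2$; (iii) for any unit vector $v$, conditional on the magnitudes $\set{\abs{\bm\eta_j}}_j$, the projection $\iprod{v,\phi_h(\bm\eta)} = \sum_j v_j\sigma_j\phi_h(\abs{\bm\eta_j})$ is a Rademacher sum with coefficients bounded by $\abs{v_j}h$, so Hoeffding's inequality yields sub-Gaussian concentration with proxy $O(h)$. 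Finally, by \cref{property:semi_product_small_noise}, the Huber score has constant sensitivity: $\Pr(\abs{\bm\eta_j}\le h) = \Omega(1)$.

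\paragraph*{Algorithm and correctness.} The algorithm iteratively refines an estimate $\muhat$ starting from a crude initial guess (e.g., the coordinate-wise $1$D Huber M-estimator). In each outer iteration we form clipped samples $Y_i \seteq \phi_h(Z_i - \muhat)$ and run the standard DKKLMS filter on them: maintain weights $w \in [0,1]^n$ with $\sum_i w_i \ge (1-2\e)n$ and compute the weighted covariance $\tilde\Sigma(w)$ of the $Y_i$. If $\normop{\tilde\Sigma(w)} \le C\rho^2$, take a gradient step $\muhat \leftarrow \muhat + c^{-1} \sum_i w_i Y_i / \sum_i w_i$ with $c = \Omega(1)$ the Huber sensitivity; otherwise, find the top eigenvector $v$ of $\tilde\Sigma(w)$ and downweight samples with large score $\iprod{v, Y_i}^2$. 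Correctness rests on two standard pillars. (a) When $\normop{\tilde\Sigma(w)} \le C\rho^2$, the sub-Gaussian concentration of clipped projections together with the filter guarantee gives $\sum_i w_i Y_i / \sum_i w_i = \E[Y_1] + O(\rho\e\sqrt{\log(1/\e)})$; since $\E[Y_1] = c\cdot(\mustar - \muhat) + O(\norm{\mustar - \muhat}^2/h)$ for small offsets, the gradient step contracts $\norm{\muhat - \mustar}$ toward the target error. (b) When $\normop{\tilde\Sigma(w)} > C\rho^2$, clean samples contribute at most $O(\rho^2)$ to the operator norm by the structural lemma, so the excess must come from corruptions, and the standard potential-function argument shows the filter removes strictly more corrupted than clean mass.

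\paragraph*{Main obstacle and sample complexity.} The principal difficulty is the circularity: the clipping geometry depends on the current $\muhat$, and the structural lemma as stated applies cleanly only at $\muhat = \mustar$. For general offset $\bm\delta \seteq \mustar - \muhat$, one must show $\normop{\E[(Y_i-\E Y_i)(Y_i-\E Y_i)^\transpose]} = O(\rho^2)$. This is handled by a conditional decomposition: given the magnitudes $\abs{\bm\eta}$, the entries of $Y_i$ are Rademacher-independent with conditional variance at most $h^2$, so the ``within-magnitude'' contribution is $O(h^2)$; the residual ``conditional-mean'' term involves $f_j(r) \seteq \tfrac 12[\phi_h(r+\delta_j) + \phi_h(-r+\delta_j)]$, which equals $\delta_j$ on the $\Omega(1)$-probability event $\set{r \le h-\abs{\delta_j}}$ (a deterministic constant contributing no variance) and is bounded by $h$ elsewhere, yielding only a small additional variance. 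The sample complexity $n \gtrsim (d\log d + \log(1/\delta))/(\e^2\log(1/\e))$ matches the Gaussian rate and arises from the standard $\e$-net union bound on the unit sphere combined with sub-Gaussian concentration of clipped projections with proxy $O(\rho)$.
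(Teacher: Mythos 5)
Your structural observations about clipped semi-product noise (mean-zero, diagonal covariance bounded by $h^2$, conditional-Rademacher sub-Gaussianity, constant Huber sensitivity) are correct and match the paper's \cref{lem:covariance-bound-product} and \cref{lem:local-strong-convexity-product}. However, the filtering algorithm you describe has a gap that caps its achievable error at $O(\rho\sqrt{\e})$ rather than the claimed $O(\rho\,\e\sqrt{\log(1/\e)})$. Your stopping criterion is $\normop{\tilde\Sigma(w)} \le C\rho^2$, and in your step (a) you assert that this criterion, plus stability, yields weighted-mean error $O(\rho\,\e\sqrt{\log(1/\e)})$. But the identifiability bound underlying filtering (the paper's \cref{thm:identifiability_optimal}) has the form
\[
\Norm{\muhat^w - \mustar} \le O\Paren{\sqrt{\e\Paren{\Norm{\Sigma_f^w - \E f(\bm\eta^*)f(\bm\eta^*)^\top} + \e\log(1/\e)}}} + \cdots,
\]
so it is the \emph{deviation} of the empirical clipped covariance from its population value, not its operator norm, that must be small. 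Under your stopping rule you only know $\Norm{\Sigma_f^w - \E f f^\top} \le O(\rho^2)$, which gives $\sqrt{\e}\cdot\rho$. To reach near-linear error you must (i) robustly estimate the population clipped covariance $\E f(\bm\eta^*)f(\bm\eta^*)^\top$ (the paper does this coordinate-wise since it is diagonal) and stop only when $\Norm{\Sigma_f^w - \hat\Sigma} \lesssim \e\log(1/\e)\cdot\rho^2$; and (ii) replace the ``standard potential-function argument'' with the refined filter that only downweights the top-$O(\e)$ mass of scores. In the near-linear regime the clean samples contribute almost all of $\Norm{\Sigma_f^w}$, so the argument that ``the excess must come from corruptions'' fails for the plain filter; the top-fraction truncation (the paper's \cref{alg:filtering_alg_optimal}) is exactly what restores the invariant in \cref{lem:invariant_filtering_optimal}.

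A secondary point: the extra $\log d$ in the sample complexity does not come from an $\e$-net over the unit sphere, as you say, but from an $\e$-net over the offsets $\Delta = \muhat^w - \mustar$ needed to uniformly lower-bound $\sum_{i\in S_g} w_i f(\Delta - \bm\eta_i^*)f(\Delta - \bm\eta_i^*)^\top$ for all admissible weights and all small offsets simultaneously (the paper's \cref{lem:subgaussian-lower-bound}, feeding into the third stability condition of \cref{lem:goodness_conditions}). Your conditional-mean decomposition is the right intuition for the fixed-$\Delta$ bound, but the union over a net of $\Delta$'s at scale $\e/\sqrt{d}$ is where $d\log d$ enters; with only the sphere net the rate would be $d$, which the Gaussian case achieves precisely because its quadratic-loss gradient is linear and no such net is needed.
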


Our sample complexity is nearly optimal in the dependence of the error on $\varepsilon$ and $d$ and optimal in the dependence on the failure probability $\delta$, all up to constant factors.
Recall that $N(0,\rho^2 \cdot I_d)$ is a $\Theta(\rho)$-semi-product distribution and hence lower bounds for this setting apply.
The statistically optimal error in this setting is $\Omega\paren{\rho \cdot \e}$ \cite{chen2018robust} and can be achieved using $n \geq \tfrac {d + \log (1/\delta)} {\e^2}$ samples.
Note that we match this error up to the $\sqrt{\log(1/\e)}$ term, using only slightly more samples ($d$ vs $d\log d$).
It is conjectured that the larger error is necessary for efficient algorithms, as it is necessary for all efficient SQ algorithms \cite{diakonikolas2017statistical}.
It is an interesting open question to remove the additional factor of $\log d$ in our sample complexity.
Further, our algorithm nearly matches results known for the standard Gaussian distribution (up to the $\log(d)$ factor in error and sample complexity) \cite{diakonikolas2019robust,diakonikolas2017being,kothari2022polynomial}. 
We expect our algorithm to be practical, since it 
only uses one-dimensional smooth convex optimization ($O(nd)$ times), the top eigenvector computation ($O(n)$ times) and arithmetic operations.

Interestingly, we show how to achieve error scaling only with $O(\rho \cdot \e)$ using quasi-polynomially many samples:
\begin{theorem}
\label{thm:main_semi_product_quasipoly}
Let $\mustar \in \R^d, \e, \rho > 0$ and $\cD$ be a $\rho$-semi-product distribution with location $\mustar$.
Let $C > 0$ be a large enough absolute constant and assume that $\varepsilon \leq 1/C$ and $n \geq C\cdot d^{C \log\Paren{1/\e}}$.
Then, there exists an algorithm that, given an $\e$-corrupted sample from $\cD$, runs in time $n^{O(1)}$ and outputs $\muhat \in \R^d$ such that with probability at least $1-\delta$ it holds that
\[
\Norm{\muhat - \mustar} \leq O\Paren{\rho \cdot \Brac{\sqrt{\frac {d + \log(1/\delta)} {n}} + \e }} \,.
\]
\end{theorem}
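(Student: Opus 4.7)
The plan is to upgrade the proof of Theorem~\ref{thm:main_semi_product} by running an SoS-filtering algorithm at degree $k = \Theta\paren{\log(1/\e)}$, in the spirit of~\cite{kothari2018robust,hopkins2018mixture}, on the Huber-transformed residuals $\phi_h(Z_i - \muhat)$ with $h = \Theta(\rho)$. The architecture is the same as for Theorem~\ref{thm:main_semi_product}: since the raw distribution may have no moments at all, we work with the bounded function $\phi_h$ of the samples, use $\Esymb\phi_h(\bm X - \mustar) = 0$ (Property~\ref{property:semi_product_symmetry}) to characterize $\mustar$ as the zero of the population Huber score, and use the strict monotonicity of $t \mapsto \Esymb\phi_h(\bm \eta - t)$ near $t=0$ (guaranteed by Property~\ref{property:semi_product_small_noise}) to turn a bound on the empirical Huber score at $\muhat$ into a bound on $\Normt{\muhat - \mustar}$.

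The key new ingredient is an SoS-certifiable Khintchine inequality for every projection direction. By Properties~\ref{property:semi_product_symmetry} and~\ref{property:semi_product_independence}, for any $v \in \R^d$ we can write
\[
    \Iprod{v,\, \phi_h(\bm \eta)} \;=\; \sum_{j=1}^{d} \sigma_j\, v_j\, \min\bigl(\abs{\bm \eta_j},\, h\bigr),
\]
where $\sigma_j \defeq \sign(\bm \eta_j)$ are independent Rademacher signs, independent of the magnitudes. Conditional on the magnitudes, the right-hand side is a Rademacher sum with coefficients bounded by $h\abs{v_j}$, so Khintchine's inequality gives $\Esymb \Iprod{v, \phi_h(\bm\eta)}^{2k} \leq (Ck)^{k}\, h^{2k}\, \snorm{v}^{k}$. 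This inequality admits a standard degree-$2k$ SoS proof in $v$: both sides expand into monomials in which every exponent of every $v_j$ is even (hence are sums of squares of monomials), and the coefficient comparison $\binom{2k}{2\beta_1,\ldots,2\beta_d} \leq (Ck)^k \binom{k}{\beta}$ for every composition $\beta$ of $k$ shows that the difference has only non-negative coefficients and hence is SoS. This certificate can therefore be inserted into the polynomial program searching simultaneously over $\muhat$ and a filtering-weight indicator.

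Feeding the certificate into the SoS-filter and inverting its output via Property~\ref{property:semi_product_small_noise} then gives $\Normt{\muhat - \mustar} = O(\rho\e)$, while the sample complexity $n \geq d^{C\log(1/\e)}$ is exactly what is needed for the degree-$2k$ empirical moment tensor of $\phi_h(\bm\eta)$ to concentrate uniformly over unit $v \in \R^d$, and the running time is polynomial in the SoS program size, hence in $n$.

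The main obstacle is the last step. A direct application of the generic~\cite{kothari2018robust}-style analysis with the $2k$-th moment bound $M^{2k} = (Ck)^k h^{2k}$ produces error $O\bigl(M\,\e^{1-1/(2k)}\bigr) = O\bigl(\rho\sqrt{k}\,\e^{1-1/(2k)}\bigr)$, which at $k = \Theta(\log(1/\e))$ reproduces only the weaker $O\bigl(\rho\e\sqrt{\log(1/\e)}\bigr)$ bound of Theorem~\ref{thm:main_semi_product}. Removing the extra $\sqrt{\log(1/\e)}$ factor requires exploiting that the Khintchine certificate matches the moments of a Gaussian $N(0,\rho^2\Id)$ simultaneously at \emph{every} even degree up to $2k$, not merely at the top one. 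Heuristically, this lets the filter rule out outliers whose contribution to the empirical score falls outside the true Gaussian tail of the inliers, rather than only those inconsistent with a single $2k$-th moment bound, and the refined analysis recovers the missing $\sqrt{k}$. Formalising this uniform-in-degree comparison to a Gaussian inside a single SoS program is where the technical heart of the proof of Theorem~\ref{thm:main_semi_product_quasipoly} lies.
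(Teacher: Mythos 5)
Your proposal follows the paper's architecture exactly: transform the data via the entry-wise Huber derivative, exploit the sign--magnitude decoupling (Property~\ref{property:semi_product_independence}) to certify a Khintchine-type moment bound on $\langle f(\bm\eta),v\rangle$ in SoS, and run the degree-$2k$ SoS filter at $k=\Theta(\log(1/\e))$; this is precisely what \cref{sec:product} does via \cref{lem:subgaussianity-product} and \cref{thm:filtering_alg_higher_moments}. You also flag exactly the right obstruction: the sharp degree-$2k$ Khintchine constant is $((2k-1)!!)^{1/(2k)}=\Theta(\sqrt{k})$, so the generic filtering bound yields $O(\rho\sqrt{k}\,\e^{1-1/(2k)})=O(\rho\,\e\sqrt{\log(1/\e)})$ at $k=\Theta(\log(1/\e))$, reproducing \cref{thm:main_semi_product} rather than the stronger $O(\rho\e)$. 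Be aware that the paper's \cref{lem:subgaussianity-product} claims the sharper certificate $\E\langle f(\bm\eta),v\rangle^{2k}\le(2\rho)^{2k}\|v\|^{2k}$ with $\sigma = O(\rho)$ independent of $k$, but its derivation expands $\langle\sign(\bm\eta),w\rangle^{2k}$ as $\sum_{\beta\in\cE}\sign(\bm\eta)^\beta w^\beta$ with no multinomial coefficients $\binom{2k}{\beta}$. Restoring them replaces $\sum_{|\beta'|=k}v^{2\beta'}$ by $\sum_{|\beta'|=k}\binom{2k}{2\beta'}v^{2\beta'}$, and since $\max_{\beta'}\binom{2k}{2\beta'}/\binom{k}{\beta'}=(2k-1)!!$, the corrected conclusion is $\E\langle f(\bm\eta),v\rangle^{2k}\le(2\rho)^{2k}(2k-1)!!\,\|v\|^{2k}$, i.e.\ $\sigma=O(\rho\sqrt{k})$, exactly as you anticipated. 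As a sanity check, take $\bm\eta\sim N(0,\rho^2 \Id_d)$ (which is $\rho$-semi-product) and $v=d^{-1/2}\mathbbm 1$: then $\langle f(\bm\eta),v\rangle$ is asymptotically Gaussian, so its $2k$-th moment scales as $(2k-1)!!$ times the $k$-th power of its variance, and no certificate of the form $\sigma^{2k}\|v\|^{2k}$ with $\sigma$ independent of $k$ can hold.

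So the gap you identify is real, and your candid acknowledgement that you do not know how to remove the extra $\sqrt{k}$ is appropriate: as far as I can tell, the paper's written argument does not supply that missing step either. Your speculation — simultaneously compare against Gaussian moments at every even degree up to $2k$, so that the filter rules out any inconsistency with the full sub-Gaussian tail rather than a single top-degree moment bound — is a sensible direction, but it is not what \cref{sec:product} or \cref{sec:higher_moment_filtering} implements (they feed one top-degree certificate into the filter). The closest existing analogue of what you want is the stability machinery of \cref{sec:identifiability_optimal}/\cref{sec:filtering_optimal}, which tracks $\|\Sigma_f-\E f(\bm\eta^*)f(\bm\eta^*)^\top\|$ rather than a scalar moment bound, but as written it is only pushed to the $\e\sqrt{\log(1/\e)}$ rate of \cref{thm:main_semi_product}. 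To actually establish \cref{thm:main_semi_product_quasipoly}, you would need either to make the uniform-in-degree comparison rigorous, or to find a transformation $f$ whose projections genuinely beat the $(2k-1)!!$ Khintchine lower bound — neither of which your write-up (nor, as written, the paper's proof) currently provides.
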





In order to state our results for elliptical distributions, we need to introduce the notion of the \emph{effective rank} of a matrix $\Sigma$.
This is defined as $r(\Sigma) \coloneqq {\Tr \Sigma}/{\norm{\Sigma}}$ and captures the intrinsic dimensionality of the data.
Note that it is always at most $d$, but can also be significantly smaller.
We remark that we do not assume that the scatter matrix $\Sigma$ is known to the algorithm.
\begin{theorem}
    \label{thm:main_elliptical}
    Let $C > 0$ be a large enough absolute constant.
    Let $k \in \N, \e, \delta > 0, \mustar \in \R^d$ such that $\e\leq 1/C$ and assume $\cD$ is an elliptical distribution with location $\mustar$ and scatter matrix $\Sigma$ satisfying $r(\Sigma) \geq C\cdot k \log d$\footnote{A slightly weaker condition suffices, see the appendices for all details. Note that as long as $k \le O\paren{\log(1/\e)} \le O\paren{\log d}$, 
    	this assumption is mild and in particular is always satisfied if the condition number of $\Sigma$ is $O(1)$ (as $d\to \infty$).}.
    Also, let $n \geq C  \cdot \paren{r(\Sigma)/k}^{k}  \log\paren{d/\delta}$.
    There is an algorithm that, given an $\e$-corrupted sample from $\cD$, runs in time $n^{O(1)}d^{O(k)}$ and with probability at least $1- \delta$ outputs $\muhat \in \R^d$ satisfying
    \[
        \Norm{\muhat - \mustar} \leq O\Paren{\sqrt{\Norm{\Sigma}} \cdot \Brac{\sqrt{\frac{r \Paren{\Sigma} \cdot \log(d/\delta)}{n}} + \sqrt{k} \cdot\epsilon^{1-1/(2k)} }}\,.
    \]
\end{theorem}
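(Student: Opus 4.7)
The plan is to combine two ingredients: a sum-of-squares (SoS) certificate that a suitably transformed version of the samples has bounded $2k$-th moments, together with a Huber-style estimator that compensates for the fact that elliptical samples themselves may possess no moments at all.

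For a unit direction $v$, decompose $\langle v, \bm\eta\rangle = \bm R\cdot \|\Sigma^{1/2}v\|\cdot \bm W_v$, where $\bm W_v = \langle \Sigma^{1/2}v/\|\Sigma^{1/2}v\|, \bm U\rangle$ is bounded by $1$, independent of $\bm R$, and satisfies $\E \bm W_v^{2k} = O(k/d)^k$. Conditioning on the good event $\{\bm R \leq \sqrt{2d}\}$ (probability at least $1/100$ by \cref{def:elliptical}) gives $\E\bigl[\langle v,\bm\eta\rangle^{2k}\,\ind{\bm R\leq\sqrt{2d}}\bigr] \leq O(k)^k(v^\top\Sigma v)^k$, and the same bound persists for $\psi_h(\langle v,\bm\eta\rangle)$ whenever $\psi_h$ is an odd, $1$-Lipschitz Huber clipping at level $h$. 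Crucially, since $v^\top\Sigma v$ is a quadratic form in $v$, its $k$-th power is a sum of squares in $v$, which is exactly the shape of bound that certifiable-moment SoS filtering consumes.

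To exploit this on unconditioned samples, I invoke symmetry: for every clipping level and every candidate location $\mu$, $\E\psi_h\bigl(\langle v,\bm\eta + \mustar - \mu\rangle\bigr) = 0$ precisely when $\mu=\mustar$, since $\bm\eta$ is symmetric about the origin and $\psi_h$ is odd. Combined with the moment bound above and the bounded derivative of $\psi_h$, this yields the stability/resilience conditions that drive a Kothari--Steinhardt--Steurer / Hopkins--Li style SoS relaxation of size $d^{O(k)}$. Standard rounding then produces $\muhat$ whose error splits into a sampling term $\sqrt{\|\Sigma\|\cdot r(\Sigma)\log(d/\delta)/n}$ and a bias term $\sqrt{k\|\Sigma\|}\cdot\e^{1-1/(2k)}$, matching the statement. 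The effective-rank dependence $(r(\Sigma)/k)^k$ in place of $d^k$ comes from a Bernstein-type bound for the degree-$k$ tensor of the centered samples, exploiting that $\bm R\Sigma^{1/2}\bm U$ has covariance proportional to $\Sigma$.

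The step I expect to be most delicate is the interaction between the Huber transformation and the SoS filtering machinery when $\Sigma$ is unknown; in the semi-product result (\cref{thm:main_semi_product}) the role of $\Sigma$ is played by $\rho^{2}\,\Id$ and the Huber tuning is essentially trivial. Here we must either obtain a constant-factor spectral estimate $\hat\Sigma$ before running the SoS program (for instance via a preliminary known-scatter-type procedure applied to pairwise differences $\bm X_i - \bm X_j$, which have the same scatter up to a factor of $2$ and remain elliptically symmetric around $0$), or carry $\Sigma$ as an auxiliary SoS variable with its moment constraints enforced jointly with those on $\mu$. Showing that the certificates remain valid under the spectral error in $\hat\Sigma$, while preserving both the $\e^{1-1/(2k)}$ rate and the effective-rank sample complexity, is the main technical burden of the argument.
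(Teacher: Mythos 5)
Your high-level framework — apply a Huber-type clipping so the transformed samples are bounded, certify moments of the transformed samples in SoS, and filter — is indeed the one the paper uses. But the specific clipping you analyze, $\psi_h(\iprod{v,\bm\eta})$, is a one-dimensional clipping applied separately to each directional projection; as $v$ ranges over the unit sphere these quantities do not arise as $\iprod{f(\bm\eta),v}$ for any single map $f\colon\R^d\to\R^d$, so there is no estimator (no gradient of any loss in $\mu$) to which your per-direction moment bounds apply. The paper instead takes $F(x)=\huberloss_h(\norm{x})$ with $h\approx\sqrt{\Tr\Sigma}$; its gradient $f=\nabla F$ is the projection of $x$ onto the $\ell_2$-ball of radius $h$, and the quantity to be bounded is $\iprod{f(\bm\eta),v}=\huberderivative_h(\norm{\bm\eta})\,\iprod{\bm\eta,v}/\norm{\bm\eta}$, in which the clipping depends on the full norm $\norm{\bm\eta}$ rather than the projection.

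This distinction is not cosmetic. Your moment bound is for the conditioned variable on $\{\bm R\le\sqrt{2d}\}$, but the algorithm sees unconditioned samples; upgrading to an unconditional bound on $\E\,\psi_h(\iprod{v,\bm\eta})^{2k}$ costs an additive $h^{2k}\Pr[\bm R>\sqrt{2d}]$, and with the clipping level $h\approx\sqrt{\Tr\Sigma}=\sqrt{r(\Sigma)\norm{\Sigma}}$ that is needed for the loss to be locally strongly convex (a constant fraction of samples must land inside the ball), this term is $(r(\Sigma)\norm{\Sigma})^k$, far exceeding the target $(k\norm{\Sigma})^k$. The paper circumvents this with a device you do not invoke: the spherical projection $\bm\eta/\norm{\bm\eta}$ of an elliptical vector depends only on $\Sigma$ (\cref{fact:elliptical-projection-covariance}), so $\iprod{f(\bm\eta),v}$ can be compared to the analogous quantity for $\bm w\sim N(0,\Sigma)$, with the case split done on $\norm{\bm w}$ rather than on $\bm R$; this inherits Gaussian SoS certificates (\cref{lem:subgaussianity-elliptical-technical}). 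Your appeal to the covariance of the "centered samples" is also off: untransformed elliptical samples may have no covariance at all, and the matrix Bernstein argument (\cref{lem:subsampling_certifiable_phi_moments}) must be applied to the clipped vectors $f(\bm\eta_i)$, which are bounded by construction.

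Finally, you flag unknown $\Sigma$ as the main technical burden and propose pre-estimating it or carrying it as an SoS variable; neither is what the paper does, and neither is needed. The SoS certificate for the clipped moments holds with $\sigma=O(\rho\sqrt{k\norm{\Sigma}})$ directly from the elliptical structure and the norm-clipping, where the ball radius $h$ is determined from $\Tr\Sigma$ via the pairwise-difference scale estimate in \cref{par:unknown-rho}; no spectral estimate of $\Sigma$ enters. The genuinely delicate steps are verifying \cref{assump:higher_order_good_distribution} — local strong convexity of $\lossfunction^w$ (\cref{lem:strong-convexity-elliptical}) and the SoS certificate under the effective-rank assumption (\cref{lem:subgaussianity-elliptical}) — not a scatter estimate.
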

Note that in the special case when $\cD = N(0,\Sigma)$, the information-theoretically optimal error is $\Omega\paren{\sqrt{\Norm{\Sigma}} \cdot \e}$ and can be achieved using $n \geq \tfrac {r(\Sigma) + \log(1/\delta)}{\e^2}$ samples \cite{chen2018robust,minasyan2023statistically}.
We give a sequence of algorithms (nearly) approaching this error using an increasing number of samples and time.
In fact, for $k = O\paren{\log(1/\e)}$, we achieve error $O\paren{\e \sqrt{\log(1/\e)} \cdot \sqrt{\Norm{\Sigma}}}$ using at most $\tfrac{r(\Sigma)^{\log(1/\e)} \log(d/\delta)}{\e^2}$ samples and time $d^k$.
Similar to the bounded moment setting \cite{kothari2018robust,hopkins2018mixture} the parameter $k$ can be thought of as a way to trade between sample/time complexity and accuracy guarantees.
Also note that already for $k = 2$ we achieve error $O\paren{\e^{3/4} \sqrt{\Norm{\Sigma}}}$ for elliptical distributions with unknown covariance/scatter matrix in polynomial time and sample complexity, 
while even for (general) sub-Gaussian distributions with unknown covariance it is not
 known if it is possible to achieve error $o\paren{\sqrt{\e}\cdot \sqrt{\Norm{\Sigma}}}$ in polynomial time and sample complexity. In addition, if $r\paren{\Sigma}$ is much smaller than $d$, for example, $r\paren{\Sigma}\le d^{0.1}$, then we achieve error $O\paren{\e^{3/4} \sqrt{\Norm{\Sigma}}}$ with a \emph{sub-linear} number of samples $n = \tilde{O}(d^{0.2})$ in time $d^{O(1)}$.




\paragraph{Previous Algorithms and Possible Extensions}


Previously, to the best of our knowledge, only exponential time computable estimators were known, for both semi-product and elliptical distributions. The results
\cite{chen2018robust, prasad2020robust} imply that there exist estimators that achieve error $O\paren{\sqrt{\tfrac{d + \log(1/\delta)}{n}} + \e}$  
(assuming $\Norm{\Sigma}\le O(1)$ for elliptical distributions).
Both algorithms rely on brute force techniques for which it is unclear how to make them efficient.

A natural question is if we can achieve (nearly) optimal error in polynomial time (analagous to the semi-product case).
Indeed, for \emph{non-spherical} Gaussian distributions, it is known how to do this \cite{diakonikolas2019robust,diakonikolas2017being}.
These algorithms crucially rely on robustly estimating the covariance matrix first.
Similarly for elliptical distributions, it seems necessary to first robustly estimate the scatter matrix and then apply our results for the semi-product case.
However, current covariance estimation algorithms rely heavily on the algebraic relations between the Gaussian moments, and it is not known how to achieve this for other distributions where moments do not have these specific relations.
Thus, we expect this to be an interesting but challenging task.

A different direction we believe is interesting to explore is the following:
We gave algorithmic results matching in many cases what is known for the Gaussian distribution.
In the non-robust setting, there is in principle hope to go beyond this:
There are known estimators that are asymptotically normal with variance scaling with the Fisher information of the distribution \cite{stone1975adaptive}.
So there is hope for "better-than-Gaussian error" on distributions with small Fisher information.
However, these results are purely asymptotic and there are (symmetric) distributions for which we cannot hope to achieve even finite sample results scaling with the Fisher information (see, e.g., the discussions in \cite{gupta2023finite}).
To get around this issue, the recent work of \cite{gupta2023finite} introduced a related notion, called "smoothed Fisher Information" and achieved finite-sample (non-robust) error guarantees scaling with the smoothed Fisher Information in the one-dimensional case, also for symmetric distributions.
It would be interesting to see, if similar guarantees can be achieved in the robust and high-dimensional setting, too.

We further believe that similar guarantees can be achieved based on $\ell_1$-minimization techniques (instead of Huber-loss) under slightly different assumptions.
In particular, even in the non-robust setting, we need to require a small amount of density at (or sufficiently) close to the location, else any point in the region around the location with zero density, would be an $\ell_1$-minimizer.
In the case of semi-product distributions, this is without of loss of generality, since we could add Gaussian noise of the appropriate variance to each coordinate.
For elliptical distribution, this approach does not work, since the resulting distribution might no longer be elliptical.

\subsection{Related Work}
We only list the works most closely related to this work, we refer to \cite{DK_book} for a survey on the area.
Most of the literature on efficient algorithms for robust mean estimation has focussed on the setting when at least some of the moments are bounded.
In what follows we focus mainly on how the error guarantees depend on $\e$ and put other parameters, such as the dependence on the failure probability aside.
\cite{diakonikolas2019robust,diakonikolas2017being,kothari2022polynomial} show how to efficiently estimate the mean of a Gaussian distribution, or sub-Gaussian with identity-covariance, up to error $O(\e \sqrt{\log(1/\e)})$.
Assuming that the distribution has covariance bounded by identity, \cite{DBLP:conf/stoc/CharikarSV17,diakonikolas2017being,diakonikolas2020outlier} show how to achieve error $O(\sqrt{\e})$.
Similarly, for $\alpha \in [0,1]$, \cite{cherapanamjeri2022optimal} shows how to achieve error $O(\e^{\alpha/(1+\alpha)})$ when in every direction the $1+\alpha$ moments are bounded.
Note that in constrast to this, our algorithm can handle distributions without a first moment and achieves error $O(\e \sqrt{\log(1/\e)})$.
An interesting question is whether assuming bounded higher-order moments can lead to improved error guarantees.
Interestingly, \cite{hopkins2019hard} shows that this alone is likely to be not enough in order to go beyond error $\sqrt{\e}$.
So far, there have been two ways of additing additional assumptions:
First, assuming that $\cD$ has $k$-th moments bounded by $\sigma_k$ in every direction, and assuming that it is certifiable in sum-of-squares, it is possible to achieve error $O\paren{\sigma_k \e^{1-1/(2k)}}$ using $d^{O(k)}$ samples and time \cite{hopkins2018mixture,kothari2018robust,steurer2021sos}.
Second, if the $k$-th moments are bounded by $\sigma_k$, not necessarily certifiably in sum-of-squares, and the covariance matrix is a multiple of the identity (or known), \cite{diakonikolas2020outlier} achieves error $O\paren{\sigma_k \e^{1-1/(2k)}}$.

\paragraph{Other Results On Symmetric Noise Distributions and the Filtering Techniques}

Huber-loss minimization has been used to design computationally efficient estimators for a variety of problems:
Linear regression in the heavy-tailed/symmetric setting \cite{tsakonas2014convergence,pensia2020robust,sun2020adaptive,d2021consistent, d2021consistent_sparse}, as well as the robust setting \cite{pensia2020robust}.
Other works developped algorithms for PCA under symmetric noise \cite{DBLP:journals/jacm/CandesLMW11, d2021consistent_sparse,d2023higher}.

The filtering technique \cite{diakonikolas2019robust,diakonikolas2017being} (or versions thereof) have been successfully applied to robust and heavy-tailed (when the covariance exists) mean estimation \cite{dong2019quantum,diakonikolas2020outlier,hopkins2020robust,DK_book,diakonikolas2022outlier} as well as a pre-processing step in robust regression \cite{pensia2020robust}.
\section{Techniques}
\label{sec:techniques}



We will first describe the general theme of our techniques.
These ideas will yield algorithms achieving (nearly) optimal guarantees using quasi-polynomially many samples.
Note however, that using these, we can already achieve error $o(\sqrt{\e})$ using only polynomially many samples.
At the end of this section we show how to improve this to nearly linearly many (in the dimension) samples for semi-product distributions.

\paragraph{One-Dimensional Robust Estimation via Huber-Loss Minimization}
Current algorithms for robust mean estimation are only known to work for distributions with bounded moments. 
In many cases this is assumption is reasonable, and in many cases it is also necessary \cite{kothari2018robust,cherapanamjeri2022optimal}.
However, there are many other distributions that do not necessarily have any moments at all, but which one would expect to behave nicely.
Current approaches are inapplicable in these settings.
Of particular interest is the class of symmetric distributions, which might not even have a first moment - in this case we want to estimate the location parameter of the distribution, which always exists and coincides with the mean in case it exists.
In non-robust statistics (when there are no corruptions), symmetric distributions in some sense behave as nicely as the Gaussian distribution:
The minimizer of the entry-wise \emph{Huber-loss} function achieves the same guarantees for such distributions as the sample mean in the Gaussian case. The entry-wise
Huber-loss is the function $\lossHuber: \R^d \to \R$ defined as
 $\lossHuber\Paren{v} \coloneqq \sum_{j=1}^d \Phi\Paren{v_j} $, 
where $\Phi \from \R \to \R$ is a \emph{Huber penalty}:
$\Phi(x) = 
x^2/2$, if $\abs{x} \leq 1$, and
$\Phi(x) = \abs{x} - 1/2$ otherwise.
$\Phi$ is convex, and has many appealing properties, in particular with respect to symmetric distributions, since its derivative is a bounded odd Lipschitz function:
$\phi(x) := \Phi'(x) =
x$, if $\abs{x} \leq {1}$, and
$\phi(x) = \sign\Paren{x}$  otherwise.

Since the Gaussian distribution is symmetric itself, the Huber-loss estimator has the same guarantees as the sample mean in this setting.
Beyond that, the Huber-loss estimator enjoys some robustness guarantees that the sample mean does not:
Let us consider the one-dimensional case, where the Huber-loss behaves similarly to the median but has additional properties that will be useful to us, e.g., it is differentiable everywhere.
In this setting, the sample mean has unbounded error even if there is only a single corrupted sample, while the Huber-loss achieves the information theoretically optimal error $O\Paren{\e}$ when an arbitrary $\e$-fraction of the samples is corrupted. 
Moreover, these guarantees extend to arbitrary one-dimensional symmetric distributions that satisfy mild scaling assumptions.
Unfortunately, this does not directly yield a good robust estimator for the high-dimensional setting.
If we naively apply the one-dimensional estimator entry-wise, we can only guarantee an error bound of $O\paren{\e\sqrt{d}}$, which is far away from the $O(\e)$ error that is statistically possible for the Gaussian distribution and symmetric distributions.
Moreover, it is also inferior to error rates obtained by efficient estimators that use assumptions about bounded moments: Such estimators achieve error that does not depend on $d$.
In this work we show that there exist more sophisticated estimators based on the Huber-loss\footnote{We use entry-wise Huber-loss for semi-product distributions. For elliptical distributions we use another loss function, that is also related to the Huber penalty.} that achieve dimension-independent error, often matching what is possible for the Gaussian distribution, for estimating the location of symmetric distributions even in the high-dimensional robust setting.

\paragraph{Proofs of identifiability and the filtering technique}
Before describing how we use the Huber-loss in the high-dimensional setting, it will be instructive to recall how the classical approach for distributions with bounded moments works -- for simplicity, we focus on bounded second moments and sketch how this extends to higher-order moments.
Later, we will see how to modify these ideas, using the Huber-loss, to work also in the symmetric setting without moment assumptions.
In particular the version for Gaussian-like error for semi-product distribution requires several new technical ideas compared to the Gaussian setting.

At the heart lie \emph{identifiability proofs} which can be made algorithmic using either sum-of-squares \cite{kothari2018robust,hopkins2018mixture} or the filtering technique \cite{diakonikolas2019robust,diakonikolas2017being,dong2019quantum}.
For bounded covariance distributions these take the following form:
If two distributions $D_1$ and $D_2$ with means $\mu_1$, respectively $\mu_2$, and covariance matrices $\Sigma_1 \preceq \Id_d$, respectively $\Sigma_2 \preceq \Id_d$, are $\e$-close in statistical distance, then $\Norm{\mu_1-\mu_2} \le O\Paren{\sqrt{\e}}$.
In what follows, for simplicity of the discussion, we will sometimes switch between empirical and true distributions.
In robust mean estimation, we assume that $D_1$ is the empirical distribution over the corrupted samples we observe, while $D_2$ is the empirical distribution of the uncorrupted samples.
Hence, by assumption, $\Sigma_2 \preceq \Id_d$ and we wish to estimate $\mu_2$.
The above statement of identifiability asserts that as long as $\Sigma_1 \preceq \Id_d$, the empirical mean of the corrupted samples, i.e., $\mu_1$, is $O(\sqrt{\e})$-close to $\mu_2$.
Of course, $\Sigma_1 \preceq \Id_d$ might not hold, since the outliers could potentially introduce a large eigenvalue in the empirical covariance matrix.
However, since we know that these large eigenvalues must have been introduced by outliers, this gives rise to a win-win analysis:
We check if $\Sigma_1 \preceq \Id_d$ holds.
If it does, $\mu_1$ is $O(\sqrt{\e})$-close to $\mu_2$ already, if not, we compute its top eigenvector and remove samples that have large correlation with this eigenvector.
It turns out that this procedure will always remove more corrupted samples than uncorrupted ones. 
Thus, we can iterate this procedure and are guaranteed that it terminates after $O(\e n)$ iterations.
After its termination, we are left with a distribution $D_1'$, with mean $\mu_1'$, that is $O(\e)$-close in total variation distance to $D_1$ and $D_2$, and has small covariance $\Sigma_1' \preceq \Id_d$, hence $\Norm{\mu_1'-\mu_2} \le O\Paren{\sqrt{\e}}$.

Of course, for arbitrary symmetric distributions the covariance might not exist.
So neither the above proof of identifiability nor the filtering technique apply to this setting.
However, one of our main observations is that we can obtain a similar statement of identifiability also in this case.
In addition, we will later show how to adjust the filtering technique to work with our identifiability proof - this requires non-trivial modifications of the original approach.
First, observe that the identifiability statement in the bounded moment setting, can be phrased as follows:
For $j\in\set{1,2}$, the mean $\mu_j$ of a distribution $D_j$ corresponds to the minimizer of the quadratic loss:
\[
\mu_j = \argmin_{a \in \R^d} \E_{\bm y \sim D_j}  \Norm{\bm y - a }^2\,.
\]
Hence, if $D_1$, with covariance $\Sigma_1 \preceq \Id_d$, and $D_2$ 
with covariance $\Sigma_2\preceq \Id_d$, are $\e$-close in statistical distance,
then the minimizers, $\mu_1$ and $\mu_2$, of the quadratic loss are $O(\sqrt{\e})$-close to each other. 
A crucial observation is that, if a distribution $D$ is symmetric, then its location $\mu$ is the entry-wise Huber-loss minimizer\footnote{Assuming that each entry of $\bm y - \mu$ has nonzero probability mass in the interval $[-1,1]$.}
\[
\mu = \argmin_{a \in \R^d} \E_{\bm y \sim D}  \lossHuber\Paren{\bm y - a }\,.
\]
Hence, we would like to obtain a similar statement about minimizers of the Huber-loss for two $\e$-close distributions $D_1$ and $D_2$.
This will be useful since we will need to learn the location $\mu_2$ of a symmetric distribution $D_2$ from $D_1$.

\paragraph{Abstract proof of identifiability}
In order to formalize the above, consider the following abstract setting: Let $\lossfunction$ be some loss function that is differentiable, $\Omega(1)$-strongly convex and smooth (i.e. its gradient $\nabla \lossfunction$ is $1$-Lipschitz).
Let $D_1$ and $D_2$ be two distributions that are $\e$-close in the statistical distance,
Then, if for $j\in\set{1,2}$ the minimizers $\hat{\mu}_j :=  \argmin_{a \in \R^d} \E_{\bm y \sim D_j}  \lossfunction\Paren{\bm y - a }$ satisfy
$\E_{\bm y \sim D_j} \Iprod{\nabla \lossfunction\Paren{\bm y - \hat{\mu}_j} , u}^2 \le 1$ for all unit vectors $u$,
then $\norm{\hat{\mu}_1 - \hat{\mu}_2} \le O(\sqrt{\e})$.
Indeed, by strong convexity and the definition of $ \hat{\mu}_1$,
\begin{align*}
\Omega\Paren{\norm{\hat{\mu}_1 - \hat{\mu}_2}^2}
&\le \E_{\bm y \sim D_1}  \lossfunction\Paren{\bm y - \hat{\mu}_1}
 - \E_{\bm y \sim D_1}  \lossfunction\Paren{\bm y - \hat{\mu}_2} 
- \E_{\bm y \sim D_1} \Iprod{\nabla\lossfunction\Paren{\bm y - \hat{\mu}_2}, \hat{\mu}_1 - \hat{\mu}_2} 
\\&\le  \E_{\bm y \sim D_1} \Iprod{\nabla\lossfunction\Paren{\bm y - \hat{\mu}_2}, \hat{\mu}_1 - \hat{\mu}_2}\,.
\end{align*}
Notice that there is a coupling $\omega$ of $D_1$ and $D_2$ such that $\Psymb_{(\bm y_1, \bm y_2) \sim \omega}(\bm y_1 \neq \bm y_2) \leq \e$.
Since\footnote{In this discussion we assume that the operators $\E$ and $\nabla$ commute. In the analyses of the algorithms we use empirical means, and in this case they commute by linearity of the gradient.} $\E_{\bm y \sim D_2} \nabla\lossfunction\Paren{\bm y - \hat{\mu}_2} = 0$, we obtain using the Cauchy-Schwarz Inequality
\begin{align*}
\Omega\Paren{\norm{\hat{\mu}_1 - \hat{\mu}_2}^2}
&\le \E_{(\bm y_1, \bm y_2) \sim \omega} \Iprod{\nabla\lossfunction\Paren{\bm y_1 - \hat{\mu}_2} - \nabla\lossfunction\Paren{\bm y_2 - \hat{\mu}_2}, \hat{\mu}_1 - \hat{\mu}_2}
\\
&= \E_{(\bm y_1, \bm y_2) \sim \omega} \ind{\bm y_1 \neq \bm y_2}\Iprod{\nabla\lossfunction\Paren{\bm y_1 - \hat{\mu}_2} - \nabla\lossfunction\Paren{\bm y_2 - \hat{\mu}_2}, \hat{\mu}_1 - \hat{\mu}_2}
\\
&\le \sqrt{\e\E_{(\bm y_1, \bm y_2) \sim \omega} \Iprod{\nabla\lossfunction\Paren{\bm y_1 - \hat{\mu}_2} - \nabla\lossfunction\Paren{\bm y_2 - \hat{\mu}_2}, \hat{\mu}_1 - \hat{\mu}_2}^2}\,.
\end{align*}
Since $\nabla\lossfunction$ is $1$-Lipschitz, using the bounds on
$\max_{\norm{u}=1}\E_{\bm y \sim D_j} \Iprod{\nabla \lossfunction\Paren{\bm y - \hat{\mu}_j} , u}^2$ yields
\begin{align*}
&\E_{(\bm y_1, \bm y_2) \sim \omega} \Iprod{\nabla\lossfunction\Paren{\bm y_1 - \hat{\mu}_2} - \nabla\lossfunction\Paren{\bm y_2 - \hat{\mu}_2}, \hat{\mu}_1 - \hat{\mu}_2}^2
\\
&\quad= \E_{\bm y_j \sim D_j} \Iprod{\nabla\lossfunction\Paren{\bm y_1 - \hat{\mu}_2} 
	-\nabla\lossfunction\Paren{\bm y_1 - \hat{\mu}_1} 
	+\nabla\lossfunction\Paren{\bm y_1 - \hat{\mu}_1} 
	- \nabla\lossfunction\Paren{\bm y_2 - \hat{\mu}_2}, \hat{\mu}_1 - \hat{\mu}_2}^2
\\
&\quad\le
O\Paren{\E_{\bm y_j \sim D_j} \Iprod{\nabla\lossfunction\Paren{\bm y_1 - \hat{\mu}_2} 
	-\nabla\lossfunction\Paren{\bm y_1 - \hat{\mu}_1}, \hat{\mu}_1 - \hat{\mu}_2}^2
+ \Norm{\hat{\mu}_1 - \hat{\mu}_2}^2 }
\\
&\quad\le
O\Paren{\Norm{\hat{\mu}_1 - \hat{\mu}_2}^4 + \Norm{\hat{\mu}_1 - \hat{\mu}_2}^2}\,.
\end{align*}
Hence,
$
{\Norm{\hat{\mu}_1 - \hat{\mu}_2}^2} \le O\paren{ \sqrt{\e} \cdot \Norm{\hat{\mu}_1 - \hat{\mu}_2}^2 +  
	\sqrt{\e} \cdot \Norm{\hat{\mu}_1 - \hat{\mu}_2}}\,.
$
If $\e$ is small enough (smaller than some constant that depends on the strong convexity parameter), we obtain that $\norm{\hat{\mu}_1 - \hat{\mu}_2} \le O(\sqrt{\e})$.

Similar to the bounded moment setting, we can generalize this observation to the case when for some integer $k>1$ and some $m_{2k} > 0$,
$\max_{\norm{u}=1}\; \E_{\bm y_j \sim D_j}  \Iprod{\nabla \lossfunction\Paren{\bm y - \hat{\mu}_j} , u}^{2k} \le m_{2k}^{2k}$.
In this case we can use H\"older's inequality to obtain the bound $\norm{\hat{\mu}_1 - \hat{\mu}_2} \le O(m_{2k}\cdot{\e}^{1-1/(2k)})$. 
Note that in order to combine higher-order moment bounds with the filtering technique, we need an efficient procedure to certify these bounds.
Thus, previous works only applied to distributions for which such moment certificates, in sum-of-squares, exist.
We remark that in the symmetric setting we are able to obtain the necessary certificates for filtering \emph{without} making any sum-of-squares related assumption on the distribution.

\paragraph{Filtering in the symmetric setting}
We will show how to adopt the filtering technique to make the above proof of identifiability algorithmic for symmetric distributions.
Indeed, let $D_2$ be a symmetric distribution with location $\mustar$.
Assume we can choose a loss function $\lossfunction$ such that $\muhat_2 = \mustar$ and $\E_{\bm y \sim D_2} \nabla \lossfunction\Paren{\bm y - \hat{\mu}_2} \nabla \lossfunction\Paren{\bm y - \hat{\mu}_2}^\top \preceq \Id_d$.
Then, if $D_1$ is an $\e$-corruption of $D_2$, we can iteratively compute $\E_{\bm y \sim D_1} \nabla \lossfunction\Paren{\bm y - \hat{\mu}_1} \nabla \lossfunction\Paren{\bm y - \hat{\mu}_1}^\top$ and check its top eigenvalue.
If it is smaller than $1$, we know that $\muhat_1$ must be $O(\sqrt{\e})$-close to $\muhat_2 = \mustar$, if not, we can remove points which are aligned with the top eigenvector.
Similar to the classical filtering setting, we can argue that this removes more corrupted points than uncorrupted points.
Thus, after at most $O(\e n)$ iterations, we must have that $\muhat_1$ is indeed $O(\sqrt{\e})$-close to $\mustar$.
A similar argument works for higher-order moments.

It remains to verify the assumptions we used on the loss function $\lossfunction$ and its interaction with our distribution.
First, we required that $\lossfunction$ is smooth and strongly convex, which is very restrictive. 
Fortunately, we can relax the second assumption: 
For our argument it is enough if $\lossfunction$ is \emph{locally} strongly convex around the minimizer (and globally smooth).
Second, we assumed that  $\E_{\bm y \sim D_2} \nabla \lossfunction\Paren{\bm y - \hat{\mu}_2} 
\nabla \lossfunction\Paren{\bm y - \hat{\mu}_j}^\top \preceq \Id_d$. 
$\nabla\lossfunction\Paren{\bm y - \hat{\mu}_2}$ can be seen as some transformation of the distribution, 
and our condition requires that the covariance of the transformed distribution is bounded.
In general, this might not be satisfied even if the transformations are well-behaved (e.g., bounded and Lipschitz).
However, we show that for appropriate $\lossfunction$, it is indeed satisfied for elliptical and semi-product distributions.
For the sake of exposition, we focus here only on elliptical distributions.
For such distributions, we use the loss function $\lossElliptical(v) := r^2\cdot\huberloss\Paren{\Norm{v} /r}$ 
for some $r>0$.
It is not hard to see that it is globally smooth and locally strongly convex (in some neighborhood of the minimizer).
Further, $\nabla\lossElliptical(v)$ is a projection of $v$ onto the ball of radius $r$ (with center at zero).
The covariance of such a projection is bounded by the covariance of the projection onto the sphere of radius $r$.
Fortunately, spherical projections of elliptical distributions are well-behaved. 
In particular, they only depend on the scatter matrix $\Sigma$, and hence we can without loss of generality assume that the initial elliptical distribution was Gaussian.
For which we can obtain a bound on the covariance of $\nabla\lossElliptical\Paren{\bm y - \hat{\mu}_2}$.

Similarly, we show that under mild assumptions on the scatter matrix 
(that in particular are satisfied for scatter matrices with condition number $O(1)$), 
we can certify tight bounds on the moments of $\nabla\lossElliptical\Paren{\bm y - \hat{\mu}_2}$ in sum-of-squares, 
which leads to polynomial-time algorithms that obtain error $o\Paren{\sqrt{\e}}$. 
We remark that prior to this work, in the unknown covariance case, robust mean estimation with error $o\Paren{\sqrt{\e}}$
was only possible under the assumption that higher-order moments of $D_1$ are certifiably bounded in sum-of-squares.
We show that error $o\Paren{\sqrt{\e}}$ is possible for arbitrary elliptical distributions that might not even have a first moment.
Moreover, we show that by exploiting $O(\log(1/\e))$ many bounded moments of the transformed distributions, we achieve the near optimal error of $O(\e \sqrt{\log(1/\e)})$ using quasi-polynomially many samples and time polynomial in the input.

The above approach also works for semi-product distributions by choosing $\lossfunction$ to be the entry-wise Huber-loss.
This works both for the setting when $\nabla\lossfunction\Paren{\bm y - \hat{\mu}_2}$ has bounded covariance and certifiably bounded higher-order moments.
We remark that in this case we can achieve the information theoretically optimal error $O(\e)$ with quasi-polynomial number of samples in polynomial time (in the number of samples) -- this again follows by exploiting $O(\log(1/\e))$ many moments.

\subsection{Nearly optimal error for semi-product distributions using polynomially many samples}
Note that even in the standard Gaussian case, using the standard filtering approach yields (nearly) optimal algorithms only when using quasi-polynomially many samples.
Note that this seems somewhat inherent to the approach since it only uses low-degree moment information.
To reduce this to polynomially many samples, we can use a stronger identifiability statement that exists for the standard Gaussian distribution.
In particular, let $D_2 = N(\mu_2,\Id_d)$ and $D_1$ be an $\e$-corruption of $D_1$ with mean $\mu_1$ and covariance $\Sigma_1$.
Then, it holds that \cite{diakonikolas2019robust,diakonikolas2017being} (see also \cite{jerry_lecture_notes})
\[
	\Norm{\mu_1 - \mu_2} \leq O\Paren{\e \sqrt{\log(1/\e)} + \sqrt{\e \Paren{\Norm{\Sigma_1 - \Id_d} + \e \log(1/\e)}}} \,.
\]
Thus, we can hope to achieve error $O(\e \sqrt{\log(1/\e)})$ by checking if $\norm{\Sigma_1 - \Id_d} \leq O(\e \sqrt{\log(1/\e)})$ and iteratively removing points aligned with the top eigenvector of $\Sigma_1 - \Id_d$.
Indeed, a procedure very similar to this works (see e.g. \cite{jerry_lecture_notes}).
However, it is crucial that we only remove points among the top $\e$-fraction of points correlated with the top eigenvector, since otherwise we cannot ensure that we remove more corrupted than uncorrupted points.
The proof of the above uses \emph{stability conditions} of the Gaussian distribution (cf. \cite{DK_book}).
That is, the proof uses that for \iid samples $\bm \eta_1^*, \ldots, \bm \eta_n^* \sim N(\mustar,\Id_d)$ with $n$ sufficiently large, it holds that for all subsets $T$ of size $(1-10\e)n$ we have
\begin{align*}
	\Norm{\frac{1}{\Card{T}}\sum_{i\in T} {\bm \eta^*}}\leq O\Paren{\e \sqrt{\log(1/\e)}} \qquad\quad&\text{and}&\quad\quad \Norm{\frac{1}{\Card{T}}\sum_{i \in T} {\bm \eta^*} {\bm \eta^*}^\top - \Id_d } \leq O(\e \log(1/\e)) \,.
\end{align*}

We show that the above approach can be adapted to the semi-product setting.
Indeed, let $D_2$ be a semi-product distribution with location $\mustar$ and $\Sigma_2^\lossfunction = \E_{\bm y \sim D_2} \nabla \lossfunction(\bm y - \mustar) (\nabla \lossfunction(\bm y - \mustar))^\top$, where $\lossfunction$ is the entry-wise Huber-loss.
For simplicity assume that $\Sigma_2^\lossfunction = \gamma \cdot \Id_d$ for some known $\gamma$ (our approach extends to more general diagonal matrices and unknown $\gamma$ as well -- we show how to estimate all relevant parameters from the corrupted sample).
Then, if $D_1$ is an $\e$-corruption of $D_2$, we show that for $\hat{\mu}_1 :=  \argmin_{a \in \R^d} \E_{\bm y \sim D_1}  \lossfunction\Paren{\bm y - a }$ it holds that
\[
	\Norm{\muhat_1 - \mustar} \leq O\Paren{\e \sqrt{\log(1/\e)} + \sqrt{\e \Paren{\Norm{\Sigma_1^\lossfunction - \gamma\cdot\Id_d} + \e \log(1/\e)}}} \,,
\]
where $\Sigma_1^\lossfunction$ is defined analogously to $\Sigma_2^\lossfunction$.
Second, we show that for the filtering approach to work, it is enough that the \emph{transformed distribution}, i.e., $\nabla \lossfunction(\bm y - \mustar)$ for $\bm y \sim D_2$, satisfies the stability condition.
This indeed follows since for our choice of $\lossfunction$ (the entry-wise Huber-loss), 
$\nabla \lossfunction(\bm y - \mustar)$ is sub-Gaussian.

However, since we do not work with the quadratic loss anymore, there are several technical obstacles.
For the quadratic loss, the gradient is the identity function, and this fact is extensively used in the analysis of the Gaussian setting.
For the Huber-loss gradient, different arguments are needed.
To exemplify this, consider the following example -- note that we do not expect the reader to see at which point in the analysis this step is necessary, it should merely illustrate the types of problems that arise.
Let $S_g$ denote the set of uncorrupted samples, $T \sse [n]$ be a set of size at least $(1-\e)n$, and $\muhat(T)$ be the minimizer of $\lossfunction$ over samples in $T$.
In the analysis, terms of the following nature arise
\[
\frac{1}{\Card{T\cap S_g}}\sum_{i \in T\cap S_g} \nabla\lossfunction\Paren{\muhat(T) - \bm y_i^*} \nabla\lossfunction\Paren{\muhat(T)  - \bm y_i^*}^\top
\]
and we need to show that this term is bounded from below, in Loewner order, by $\paren{\gamma-{O}\paren{\e\log\Paren{1/\e}}} \Id_d$ (uniformly for all $T$ of size at least $\Paren{1-\e}n$).  
In case of the quadratic loss, this follows easily from the stability conditions by adding and subtracting $\muhat(T\cap S_g)$.

When $\lossfunction$ is  the entry-wise Huber-loss, a more sophisticated argument is required. 
To describe our argument, we assume for simplicity that the entries of $\bm \eta$ are mutually independent.
We first show that $\hat{\Delta} \coloneqq \hat{\mu}\Paren{T} - \mustar$ has entries of magnitude $O(\e)$.
Then we show that for arbitrary but fixed (that is, non-random) $\Delta$ with entries of small magnitude,
the distribution of $\nabla \lossfunction\Paren{\Delta - \bm\eta^*}$ is sub-Gaussian. We use this fact to show that with overwhelming probability
\footnote{Note that the deviation of 
	$f\Paren{\Delta - \bm \eta^*}f\Paren{\Delta - \bm \eta^*}^\top$
	from its mean scales with $\E f\Paren{\Delta - \bm \eta^*}$, which can be large. 
	But we can still show the lower bound of the form $\Paren{1-\tilde{O}(\e)}\text{Cov}\Paren{f\Paren{\Delta - \bm \eta^*}}$,
	 where $\tilde{O}(\e)$ 
does not contain the terms that scale with $\E f\Paren{\Delta - \bm \eta^*}$. This is enough for our argument.} 
\[
\frac{1}{\Card{T\cap S_g}}\sum_{i \in T\cap S_g}\nabla\lossfunction\Paren{\Delta - \bm \eta_i^*} \nabla\lossfunction\Paren{\Delta  - \bm \eta_i^*}^\top\succeq 
\Paren{1-\tilde{O}(\e)}\text{Cov}\Paren{\nabla\lossfunction\Paren{\Delta - \bm \eta^*}}\,.
\]
Then we use the fact that $\Delta$ has small entries to show that 
$\text{Cov}\Paren{\nabla\lossfunction\Paren{\Delta - \bm \eta^*}} \succeq \Paren{\gamma-O(\e)}\Id_d$.
Finally, we use an $\e$-net over all possible $\Delta$ to get the desired lower bound uniformly for all $\Delta$, including $\hat{\Delta}$. 
This last $\e$-net argument is where we incur the (possibly) sub-optimal $O\Paren{d\log d}$ term 
(instead of the optimal $O\Paren{d}$) in our sample complexity. Finally, if the entries of $\bm \eta$ are not independent, we need to use this argument conditioned on the absolute values of the entries of $\bm \eta$.

Summarizing, we have shown how to adjust the well-known filtering technique and incorporate the Huber-loss, to design robust algorithms for symmetric distributions, often matching what is known for the Gaussian distribution in (quasi-)polynomial time.

\section{Preliminaries}
\label{sec:preliminaries}

\paragraph*{Notation}

We use bold-font for random variables.
We use regular (non-bold-font) for random variables that potentially have been adversarially corrupted.
Further, we use $\gtrsim$ and $\lesssim$ to suppress absolute constants which do not depend on any other problem parameters.
For a (pseudo)-distribution $\mu$, we denote by $\supp(\mu)$ its support.

\paragraph*{Sum-of-Squares Proofs and Pseudo-Expectations}

In this section, we will introduce sum-of-squares proofs and their convex duals, so-called pseudo-distributions.

Let $p \from \R^n \to \R$ be a polynomial in formal variables $X = \Paren{X_1, \ldots, X_n}$.
We say the inequality $p(X) \geq 0$ has a sum-of-squares proof (short SoS proof) if we can write $p(X)$ as a sum of squares in $X_1, \ldots, X_n$.
Further, if every polynomial in this decomposition has degree at most $t$, we say that the sum-of-squares proof has \emph{degree-$t$} and write $\sststile{t}{X} p \geq 0$.
We write $\cA \sststile{t}{X} p \geq p'$ if $\cA \sststile{t}{X} p - p' \geq 0$ and $\cA \sststile{t}{X} p = p'$ if $\cA \sststile{t}{X} p \leq p'$ and $\cA \sststile{t}{X} p \geq p'$.
It is not hard to verify that the following composition rule holds:
If $\sststile{t}{X} p \geq p'$ and $\sststile{t'}{X} p' \geq p''$, then it also holds that $\sststile{t''}{X} p \geq p''$, where $t'' = \max\{t, t'\}$.

Next, we introduce so-called pseudo-distributions, the convex duals of sum-of-squares proofs:
For $d \in \N_{\geq 1}$, a \emph{degree-$d$ pseudo-distribution} is a finitely-supported function $\mu \from \R^n \to \R$ such that $\sum_{x \in \text{supp($\mu$)}} \mu(x) = 1$ and $\sum_{x \in \text{supp($\mu$)}} \mu(x) f^2(x) \geq 0$ for all polynomials $f$ of degree at most $d/2$.
The \emph{pseudo-expectation} corresponding to a pseudo-distribution $\mu$ is defined as the linear operator mapping a degree-$d$ polynomial $f$ to $\tilde{\E}_\mu f \coloneqq \sum_{x \in \text{supp($\mu$)}} \mu(x) f(x)$.
We say that a pseudo-distiribution $\tilde{\E}_\mu$ satisfies a set of inequalities $\Set{q_1 \geq 0, \ldots, q_m \geq 0}$ at degree $r$, if for all $S \sse [m]$ and $h$ a sum-of-squares polynomial such that $\deg\Paren{h \cdot \Pi_{j \in S}} \leq r$ it holds that
\[
    \tilde{\E}_\mu h \cdot \Pi_{j \in S} \geq 0 \,.
\]

We relate pseudo-distributions and sum-of-sqaures proofs using the following facts.
\begin{fact}
\label{fact:pE_and_sos_proofs}
Let $\mu$ be a degree-$d$ pseudo-distribution that satisfies $\cA$ and suppose that $\cA \sststile{t}{} p \geq 0$.
Let $h$ be an arbitrary sum-of-squares polynomial, if $\deg h + t \leq d$ we have $\tilde{\E}_\mu h \cdot p \geq 0$. 
In particular, we have $\tilde{\E}_\mu p \geq 0$ as long as $t \leq d$.
If $\mu$ only approximately satisfies $\cA$ it holds that $\tilde{\E}_\mu p \geq -\eta \normt{h}$ for $\eta = 2^{-n^{\Omega(d)}}$.
Further, if there is no degree-$d$ sum-of-squares proof that $\cA \sststile{t}{} p \geq 0$, then for there exists a degree-$d$ pseudo-distribution $\tilde{\E}_\mu$ satisfying $\cA$ and $\e > 0$ such that $\tilde{\E}_\mu p < 0$.
\end{fact}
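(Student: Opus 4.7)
The plan is to prove Fact 3.1 in three separable parts: exact soundness, approximate soundness, and completeness via convex duality. Each corresponds to a standard argument in the SoS toolkit, but I would present them in this order because each builds on the previous.

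For the exact soundness part, I would unpack the definition of the SoS proof $\cA \sststile{t}{} p \geq 0$. By definition, this means there exist sum-of-squares polynomials $(h_S)_{S \subseteq [m]}$ such that $p = \sum_{S} h_S \cdot \prod_{j \in S} q_j$ with $\deg(h_S \cdot \prod_{j \in S} q_j) \leq t$ for each $S$, where $q_1, \ldots, q_m$ are the defining inequalities of $\cA$. Given any sum-of-squares $h$ with $\deg h + t \leq d$, linearity of the pseudo-expectation yields
\[
  \tilde{\E}_\mu h \cdot p \;=\; \sum_{S} \tilde{\E}_\mu \bigl(h \cdot h_S\bigr) \cdot \prod_{j \in S} q_j \;\geq\; 0,
\]
where each summand is nonnegative because $h \cdot h_S$ is itself sum-of-squares, the total degree is at most $d$, and $\mu$ satisfies $\cA$ at degree $d$ (which by definition exactly asserts this nonnegativity). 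Taking $h \equiv 1$ gives the ``in particular'' statement.

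For the approximate soundness part, I would track how an $\eta$-slack in the defining inequalities propagates. If $\mu$ satisfies $\cA$ only up to additive slack $\eta$ per constraint, then each term in the expansion above incurs an error of at most $\eta$ times the Frobenius norm of the coefficient vector of $h \cdot h_S$. The number of terms is bounded by $n^{O(d)}$ and the coefficient norms are controlled by $\normt{h}$ times bit-complexity factors that are absorbed into the doubly-exponentially small $\eta = 2^{-n^{\Omega(d)}}$ stated in the fact. The main care here is choosing the slack small enough to survive the union bound over all SoS multipliers of bounded degree, which is why the bound is stated in this doubly-exponential form.

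For the completeness part (the ``further'' claim), I would invoke convex duality between the cone of degree-$d$ pseudo-expectations satisfying $\cA$ and the cone of polynomials admitting a degree-$d$ SoS derivation from $\cA$. Both are closed convex cones in finite-dimensional polynomial spaces (after restricting to degree $\le d$), and they are polar duals of each other by construction. Hence if $p$ does not lie in the SoS-derivation cone, the separating hyperplane theorem furnishes a linear functional on degree-$d$ polynomials that is nonnegative on the cone and strictly negative at $p$; renormalizing so that the functional maps $1$ to $1$ produces the desired pseudo-expectation $\tilde{\E}_\mu$ with $\tilde{\E}_\mu p < 0$. The main obstacle here is verifying closedness of the SoS-derivation cone, which is why the statement as phrased actually yields a pseudo-expectation only for the approximate version (or requires an extra argument to move from strict to non-strict violation); the standard workaround is to perturb $p$ by a small constant and use compactness of the set of normalized degree-$d$ pseudo-expectations satisfying $\cA$ modulo the approximate-satisfaction slack from the second part.
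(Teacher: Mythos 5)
The paper states this fact without proof (it is a standard item in the sum-of-squares toolkit, typically cited from lecture notes or prior works), so there is no ``paper's own proof'' to compare against. Your three-part sketch is the standard argument and is correct: the soundness part correctly unpacks the Positivstellensatz-style certificate $p = \sum_S h_S \prod_{j\in S} q_j$ and uses linearity together with the defining property of a pseudo-distribution satisfying $\cA$ at degree $d$; the approximate-soundness part correctly reduces to tracking how per-constraint slack propagates through the certificate (though this could be tightened by making the dependence on the certificate's coefficient norms explicit); and the completeness part correctly invokes conic duality and, importantly, flags the genuine subtlety that the degree-$d$ SoS-derivation cone need not be closed, which is why exact (rather than $\e$-approximate) duality requires either an Archimedean/explicit-boundedness hypothesis on $\cA$ or a perturbation argument. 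Given that the paper's Fact~3.2 explicitly assumes $\cA$ is explicitly bounded, the closedness issue you raise is indeed resolved in that regime, and your observation is the right caveat.
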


An essential fact is that pseudo-distributions approximately satisfying a system of polynomial inequalities can be found efficiently as long as the constraints have bit-complexity at most $(m+n)^{\cO(1)}$.
\begin{fact} 
\label{fact:pE_efficient_optimization}
Given any feasible system of polynomial constraints $\cA = \Set{q_1 \geq 0, \ldots, q_m \geq 0}$ \footnote{For technical reason we also assume that $\cA$ is \emph{explicitly bounded} meaning it contains a constraint of the form $\snormt{x} \leq B$ for some large number $B$. Usually we can even take this to be at least polynomial in our variables. We remark that for all the problems we are considering this can be added without changing our proofs.} over $\R^n$ whose bit-complexity satisfies the constraints mentioned above and $\eta = 2^{-n^{\Theta(d)}}$ we can in time $(n+m)^{\cO (d)}$ find a degree-$d$ pseudo-distribution that satisfies $\cA$ up to error $\eta$. \footnote{The positivity and normalization constraint can be satisfied exactly however.}
\end{fact}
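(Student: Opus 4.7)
The plan is to reduce the search for a degree-$d$ pseudo-distribution satisfying $\cA$ to a semidefinite feasibility program of size $(n+m)^{O(d)}$ and then invoke the ellipsoid method to solve it in the required time, with the stated additive error $\eta$ arising from the usual bit-complexity analysis.

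First, I would parameterize the pseudo-expectation by its moment vector. By linearity, a degree-$d$ pseudo-expectation $\pE_\mu$ is determined by the values $v_\alpha = \pE_\mu X^\alpha$ for multi-indices $\alpha$ with $|\alpha| \le d$. This is a vector in $\mathbb{R}^N$ with $N = \binom{n+d}{d} = n^{O(d)}$. The normalization constraint $\pE_\mu 1 = 1$ becomes a single linear equation $v_0 = 1$. The positivity requirement $\pE_\mu f^2 \ge 0$ for every polynomial $f$ of degree at most $d/2$ is equivalent to the \emph{moment matrix} $M(v) \in \mathbb{R}^{N'\times N'}$, with $N' = \binom{n+\lfloor d/2\rfloor}{\lfloor d/2\rfloor}$ and entries $M(v)_{\alpha,\beta} = v_{\alpha+\beta}$, being positive semidefinite. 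Satisfaction of $\cA$ at degree $r$ translates analogously into PSD constraints on the localizing matrices $M_{q_S}(v)$ associated to products $\prod_{j \in S} q_j$ for $S\subseteq[m]$ with $\deg(\prod_{j\in S} q_j) \le r$. The total number and size of these PSD constraints is $(n+m)^{O(d)}$, and all their entries are polynomials in $v$ with coefficients of bit-complexity polynomial in the input.

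Next, I would solve the resulting SDP feasibility problem by the ellipsoid method. The \emph{explicitly bounded} hypothesis on $\cA$ (the constraint $\|X\|_2^2 \le B$) yields, via its localizing constraints, an a priori bound $\|v\|_2 \le R$ with $R = B^{O(d)}$, so the feasible set lies in a ball of known radius; this is precisely what is needed to initialize the ellipsoid method. A weak separation oracle is obtained by diagonalizing each (localizing) moment matrix and returning a hyperplane through a negative eigenvector whenever the matrix has a sufficiently negative eigenvalue. Standard analysis of the ellipsoid method then certifies that in $(n+m)^{O(d)}$ iterations one obtains a point $v^\star$ such that every constraint is violated by at most $\eta = 2^{-n^{\Theta(d)}}$, which is exactly the notion of approximate satisfaction in the statement. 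The equality $v_0 = 1$ and the PSD constraints of the moment matrix itself can be enforced exactly at every step by projecting onto the corresponding affine subspace and onto the PSD cone, so the normalization and global positivity conditions hold on the nose.

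The main technical obstacle is the bit-complexity bookkeeping: one needs to argue that the separation oracle can be implemented in time polynomial in $(n+m)^{O(d)}$ and $\log(1/\eta)$, and that the ellipsoid iterates can be represented with $\mathrm{poly}(\log(1/\eta))$ many bits while still guaranteeing the claimed additive error on all polynomial constraints simultaneously. This is handled by the standard perturbation argument for semidefinite programming (cf.\ Grötschel--Lovász--Schrijver), using the explicit radius $R$ above to bound condition numbers, and yields precisely the error $\eta = 2^{-n^{\Theta(d)}}$ claimed. All other steps (writing down the moment and localizing matrices, verifying their sizes, checking that the encoding is correct) are routine polynomial-algebra manipulations once the correspondence between pseudo-expectations and moment vectors is established.
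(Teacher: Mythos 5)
The paper states this as a background fact in the preliminaries and gives no proof of its own, so there is no paper argument to compare against; it implicitly points to the standard Lasserre/Parrilo moment-SDP literature. Your sketch is the canonical argument: represent a degree-$d$ pseudo-expectation by its moment vector in $\R^{n^{O(d)}}$, encode the normalization as a linear equality, encode positivity and satisfaction of $\cA$ at degree $d$ via PSD constraints on the moment and localizing matrices, use explicit boundedness to obtain an a priori radius $R = B^{O(d)}$, and run ellipsoid with a separation oracle based on eigenvalue computations. This is correct in outline and matches what the cited fact intends.

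Two places deserve a bit more care than your two-line dismissal suggests. First, the claim that the normalization $v_0 = 1$ and the global positivity of the moment matrix can be ``enforced exactly at every step by projecting onto the corresponding affine subspace and onto the PSD cone'' is not quite right as stated: exact Euclidean projection onto the PSD cone is not a rational operation, and projecting onto the PSD cone alone need not preserve the Hankel-type structure that makes a matrix a genuine moment matrix of some (signed) operator. The cleaner post-processing is to rescale so $v_0 = 1$ exactly (positive scaling preserves PSD-ness of all the localizing matrices) and then mix a tiny fraction of a strictly feasible interior pseudo-distribution (e.g.\ the moment vector of a product measure supported in the ball $\snormt{x}\leq B$), which pushes all PSD constraints strictly positive at the cost of an $\eta$-size perturbation of the moments. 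Second, the boundedness argument needs the localizing constraints of $\snormt{x}\leq B$ to actually yield a bound on \emph{all} moments $v_\alpha$, not just the degree-$2$ ones; this requires the standard inductive argument that $\pE x^{\alpha}x^{\beta} \leq B^{|\alpha|+|\beta|}$ for pseudo-distributions satisfying the ball constraint, which is where the explicit-boundedness hypothesis is really used. With those two details filled in, the argument is complete and matches the standard account (Gr\"otschel--Lov\'asz--Schrijver plus the bit-complexity caveats emphasized by O'Donnell and Raghavendra--Weitz).
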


\section{Proof of Identifiability based on Transformed Moments}
\label{sec:identifiability_moments}

In this section, we give a formal proof of the statement of identifiability based on transformed moments that we sketched in the Techniques section.
We will give the proof assuming an abstract set of assumptions and verify that they indeed hold for semi-product and elliptical distributions in \cref{sec:product,sec:elliptical}.

We use the same notation as in \cref{sec:identifiability_optimal}, with the only difference being that we now use an abstract loss function:
Let $\noisepotential \from \R^d \to \R$ and denote by $\noisetransformation = \nabla \noisepotential$ its gradient.
Further assume that $\noisepotential$ and $\noisetransformation$ are such that $\noisetransformation$ is $1$-Lipschitz.
For $w \in \R^n$ with non-negative entries, we define $\lossfunction^w \from \R^d \to \R$ as
\[
    \lossfunction^w\paren{\mu} \coloneqq \sum_{i=1}^n w_i \noisepotential\Paren{\mu - y_i} \,.
\]
It follows that $\nabla \lossfunction^w\paren{\mu} = \sum_{i=1}^n w_i \noisetransformation\Paren{\mu - y_i}$.
Let $2k \in \N_{\geq 2}$ and define $\muhat(w) \coloneqq \min_{\mu \in \R^d} \lossfunction^w\Paren{\mu}$ as well as
\begin{align*}
    \bar{\bm \fm}_{2k} \coloneqq \max_{v \in \bbS^{d-1}} 
    \Brac{\frac{1}{n}\sum_{i=1}^n  \Iprod{\noisetransformation\Paren{ \bm \eta_i^*}, v}^{2k}}^{\frac{1}{2k}} \quad\text{and}\quad \hat{ \fm}_{2k}^w \coloneqq \max_{v \in \bbS^{d-1}} 
    \Brac{\sum_{i=1}^n w_i \Iprod{\noisetransformation\Paren{\muhat - y_i}, v}^{2k}}^{\frac{1}{2k}} \,.
\end{align*}

We will prove the following theorem.
\begin{theorem}
    \label{thm:identifiability}
    Let $\alpha \in (0,1)$ and let $k$ be a positive integer and $w \in \cW_{2\e}$.
    Suppose that $\eps^{1-\frac{1}{2k}} \le 0.01\cdot \alpha$
    and  
    \[
    \alpha\norm{\muhat^w - \mustar}^2 \le 
    10{\Iprod{\nabla \lossfunction^w \Paren{\mustar}, \mustar-\muhat^w  }}\,.
    \]
    Then
    \[\Norm{\muhat^w - \mustar} 
    \le  \frac{100}{\alpha}\cdot \Paren{{\Norm{\frac{1}{n}\sum_{i=1}^n \noisetransformation \Paren{\bm \eta_i^*} }}
    		 + {\e^{1-\frac{1}{2k}}}\cdot  \Paren{\hat{ \fm}_{2k}^w + \bar{\bm \fm}_{2k} }}\,.
    \]
\end{theorem}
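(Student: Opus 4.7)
Set $\Delta \coloneqq \muhat^w - \mustar$ and $v \coloneqq \Delta / \norm{\Delta}$. The hypothesis gives $\alpha\norm{\Delta}^2 \le 10\,\iprod{-\nabla \lossfunction^w(\mustar), \Delta} = 10\,\norm{\Delta}\, \iprod{-\nabla \lossfunction^w(\mustar), v}$, so it suffices to upper bound $\iprod{-\nabla \lossfunction^w(\mustar), v}$. I would obtain such a bound by decomposing $\nabla\lossfunction^w(\mustar) = \sum_i w_i \, \noisetransformation(\mustar - y_i)$ around the ideal (uniformly weighted, fully clean) empirical mean; concretely I write it as $I + A - B + C$, where $I \coloneqq \tfrac{1}{n}\sum_{i=1}^n \noisetransformation(\mustar - \bm y_i^*)$, $A \coloneqq \sum_{i \in S_g}(w_i - \tfrac{1}{n}) \noisetransformation(\mustar - \bm y_i^*)$, $B \coloneqq \tfrac{1}{n}\sum_{i \in S_b} \noisetransformation(\mustar - \bm y_i^*)$, and $C \coloneqq \sum_{i \in S_b} w_i \noisetransformation(\mustar - y_i)$.

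For $I$, oddness of $\noisetransformation$ (natural in the symmetric setting and satisfied by the Huber-type losses used in the paper) combined with $\mustar - \bm y_i^* = -\bm \eta_i^*$ and Cauchy--Schwarz yields $|\iprod{I,v}| \le \Norm{\tfrac{1}{n}\sum_{i} \noisetransformation(\bm \eta_i^*)}$. For $A$ and $B$, the constraint $w \in \cW_{2\e}$ with $0 \le w_i \le 1/n$ implies $\sum_{i\in S_g}(1/n - w_i) \le 2\e$ and $|S_b|/n \le \e$, so H\"older's inequality with exponents $\tfrac{2k}{2k-1}$ and $2k$ delivers $|\iprod{A,v}| \le (2\e)^{1-1/(2k)} \bar{\bm \fm}_{2k}$ and $|\iprod{B,v}| \le \e^{1-1/(2k)} \bar{\bm \fm}_{2k}$. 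The subtle piece is $C$: the moment we control, $\hat{\fm}_{2k}^w$, is defined at $\muhat^w$ rather than at $\mustar$, so I would add and subtract $\noisetransformation(\muhat^w - y_i)$ and use the $1$-Lipschitz property of $\noisetransformation$ together with $\sum_{i \in S_b} w_i \le \e$, obtaining $|\iprod{C,v}| \le \e^{1-1/(2k)} \hat{\fm}_{2k}^w + \e \norm{\Delta}$.

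Assembling the four bounds gives
\[
    \alpha \norm{\Delta}^2 \le 10 \norm{\Delta}\Bigl[\Norm{\tfrac{1}{n}\tsum_i \noisetransformation(\bm \eta_i^*)} + O\bigl(\e^{1-1/(2k)}\bigr)\bigl(\hat{\fm}_{2k}^w + \bar{\bm \fm}_{2k}\bigr)\Bigr] + 10\e\,\norm{\Delta}^2.
\]
The quantitative hypothesis $\e^{1-1/(2k)} \le 0.01\alpha$ implies $\e \le \e^{1-1/(2k)} \le 0.01\alpha$ (since $\e \in (0,1)$), so the lower-order $10\e\,\norm{\Delta}^2$ term can be absorbed into the left-hand side at the cost of a constant factor; dividing by $\norm{\Delta}$ and tracking the constants then yields $\norm{\Delta} \le \tfrac{100}{\alpha}[\cdots]$.

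The main obstacle is handling $C$: a naive application of H\"older would produce an uncontrolled moment of $\noisetransformation(\mustar - y_i)$ at the adversarial samples, and only the Lipschitz substitution bridges this to the quantity $\hat{\fm}_{2k}^w$ that we can actually certify. The price of this substitution is the $\e\norm{\Delta}^2$ lower-order term whose absorption is precisely what forces the quantitative assumption $\e^{1-1/(2k)} \le 0.01\alpha$; everything else is standard robust-H\"older book-keeping.
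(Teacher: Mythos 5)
Your proof is correct and takes essentially the same route as the paper's: the paper factors the argument through \cref{lem:gradient_bound}, which bounds $\Iprod{\nabla\lossfunction^w(\mustar), \muhat^w - \mustar}$ via the same decomposition into a clean empirical average (Cauchy--Schwarz), weight-deviation terms controlled by H\"older against $\bar{\bm \fm}_{2k}$, and a corruption term handled by adding and subtracting $\noisetransformation(\muhat^w - y_i)$ to reach $\hat{\fm}_{2k}^w$ plus a $1$-Lipschitz remainder, with the same absorption via $\eps^{1-1/(2k)} \le 0.01\alpha$. Your slightly different grouping (applying the triangle inequality before H\"older on the corrupted term) yields $\e\norm{\Delta}^2$ in place of the paper's $\e^{1-1/(2k)}\norm{\Delta}^2$ for the Lipschitz remainder, a cosmetic improvement that does not change the argument.
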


This theorem easily follows from the following lemma.
\begin{lemma}
    \label{lem:gradient_bound}
        Let $k$ be a positive integer 
    and $w \in \cW_{2\e}$.
    Then 
    \[
    \Iprod{\nabla\lossfunction^w\Paren{\mustar}, u } 
    \le 5\cdot{\norm{u} \cdot \Paren{{\Norm{\frac{1}{n}\sum_{i=1}^n \noisetransformation \Paren{\bm \eta_i^*} }} + 
    		\e^{1-\frac{1}{2k}} \cdot \Paren{\hat{ \fm}_{2k}^w + \bar{\bm \fm}_{2k}  + \norm{u}}}}\,,
    \]
    where  $u = \muhat(w) - \mustar$.
\end{lemma}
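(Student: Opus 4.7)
\textbf{Proof plan for \cref{lem:gradient_bound}.} The plan is to split $\nabla \lossfunction^w(\mustar) = \sum_{i=1}^n w_i \noisetransformation(\mustar - y_i)$ according to $S_g$ and $S_b$, pair the good portion against the unweighted empirical average $\tfrac{1}{n}\sum_i \noisetransformation(\bm\eta_i^*)$, and handle each residual via H\"older's inequality combined with either $\bar{\bm\fm}_{2k}$ or $\hat\fm_{2k}^w$. Throughout, take inner product with $v = u/\norm{u}$ (so $\norm{v}=1$), bound $\abs{\iprod{\nabla\lossfunction^w(\mustar), v}}$ by a sum of four pieces, and then multiply by $\norm{u}$ at the very end.

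First I would treat the good part. For $i\in S_g$ we have $\mustar - y_i = -\bm\eta_i^*$, so using oddness of $\noisetransformation$ (which holds because $\noisepotential$ is symmetric in our intended applications),
\begin{align*}
\sum_{i\in S_g} w_i\iprod{\noisetransformation(\mustar - y_i), v}
&= -\frac{1}{n}\sum_{i=1}^n \iprod{\noisetransformation(\bm\eta_i^*), v} + \frac{1}{n}\sum_{i\in S_b} \iprod{\noisetransformation(\bm\eta_i^*), v} + \sum_{i\in S_g}\Paren{\tfrac{1}{n}-w_i}\iprod{\noisetransformation(\bm\eta_i^*), v}\,.
\end{align*}
The first term contributes at most $\bignorm{\tfrac{1}{n}\sum_i \noisetransformation(\bm\eta_i^*)}$. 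For the second, since $\card{S_b}/n \le \e$, H\"older's inequality yields
\[
\frac{1}{n}\Bigabs{\sum_{i\in S_b}\iprod{\noisetransformation(\bm\eta_i^*), v}} \le \Paren{\card{S_b}/n}^{1-1/(2k)}\cdot\Paren{\tfrac{1}{n}\sum_{i=1}^n \iprod{\noisetransformation(\bm\eta_i^*),v}^{2k}}^{1/(2k)} \le \e^{1-1/(2k)}\,\bar{\bm\fm}_{2k}\,.
\]
For the third, observe that $a_i := \tfrac{1}{n}-w_i \ge 0$ (from $w\le\tfrac{1}{n}\mathbb{1}$) and $\sum_{i\in S_g} a_i \le \norm{w - \tfrac{1}{n}\mathbb{1}}_1 \le 2\e$. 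Applying H\"older with weights $a_i\le \tfrac{1}{n}$ gives the bound $(2\e)^{1-1/(2k)}\,\bar{\bm\fm}_{2k}\le 2\e^{1-1/(2k)}\bar{\bm\fm}_{2k}$.

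Next I would handle the bad part $\sum_{i\in S_b} w_i \iprod{\noisetransformation(\mustar - y_i), v}$. Add and subtract $\noisetransformation(\muhat^w - y_i)$. The main piece is controlled via H\"older, using $\sum_{i\in S_b} w_i \le \e$:
\[
\Bigabs{\sum_{i\in S_b} w_i \iprod{\noisetransformation(\muhat^w - y_i), v}} \le \e^{1-1/(2k)}\,\hat\fm_{2k}^w\,.
\]
The residual is controlled by the Lipschitz property: $\norm{\noisetransformation(\mustar - y_i) - \noisetransformation(\muhat^w - y_i)} \le \norm{\mustar - \muhat^w} = \norm{u}$, so this contributes at most $\e\norm{u} \le \e^{1-1/(2k)}\norm{u}$.

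Adding the four contributions, all four are bounded by a common unit-vector bound; multiplying by $\norm{u}$ gives exactly the stated inequality with constant $\le 5$. The only delicate step is the Hölder bound on the weight-shift term: the interplay between the $\ell_1$ mass $\sum_{i\in S_g} a_i \le 2\e$ and the pointwise cap $a_i \le \tfrac{1}{n}$ is what lets us extract $\e^{1-1/(2k)}\bar{\bm\fm}_{2k}$ rather than the much weaker $2\e \cdot (\max_i \norm{\noisetransformation(\bm\eta_i^*)})$. Everything else is routine algebra, and \cref{thm:identifiability} then follows by combining this bound with the local strong-convexity hypothesis $\alpha \norm{u}^2 \le 10\iprod{\nabla \lossfunction^w(\mustar), \mustar - \muhat^w}$ and dividing through by $\norm{u}$, using $\e^{1-1/(2k)}\le 0.01\alpha$ to absorb the $\e^{1-1/(2k)}\norm{u}$ term into the left-hand side.
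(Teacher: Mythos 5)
Your proof is correct and follows essentially the same approach as the paper: expand the gradient, isolate the uncorrupted-noise contribution, and use H\"older with the small total mass ($\leq \e$ or $\leq 2\e$) of the relevant weight perturbations to extract the $\e^{1-1/(2k)}$ factor, with the Lipschitz property of $\noisetransformation$ handling the $\muhat^w$-versus-$\mustar$ difference. The only difference is organizational: the paper first rewrites everything in terms of $\bm y_i^*$ (including for $i\in S_b$) and then corrects with a three-way add/subtract in the corruption term, whereas you split $S_g/S_b$ at the start and only need a two-way add/subtract on $S_b$, absorbing the third piece into the good-part decomposition; this yields a slightly sharper constant, and you also make explicit the oddness of $\noisetransformation$ (used to write $\noisetransformation(\mustar - \bm y_i^*) = -\noisetransformation(\bm\eta_i^*)$), which the paper invokes tacitly.
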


We first give the proof of \cref{thm:identifiability}.
\begin{proof}[Proof of \cref{thm:identifiability}]
    Let $u = \muhat(w) - \mustar$.
   Using \cref{lem:gradient_bound} and $\eps^{1-\frac{1}{2k}} \le 0.01\cdot\alpha$,
    cancelling the common factor of $\norm{u}$ 
    and rearranging now yields 
    \[
    \norm{u} 
    \le \frac{100}{\alpha}\cdot\Paren{{\Norm{\frac{1}{n}\sum_{i=1}^n \noisetransformation \Paren{\bm \eta_i} }}
	+ {\e^{1-\frac{1}{2k}}}\cdot  \Paren{\hat{ \fm}_{2k}^w + \bar{\bm \fm}_{2k} }}\,.
    \]
\end{proof}

Next, we prove \cref{lem:gradient_bound}.
\begin{proof}[Proof of \cref{lem:gradient_bound}]
    Again, let $u = \muhat(w) - \mustar$.
	We can decompose $\Iprod{\nabla \lossfunction^w\Paren{\mustar}, u } $ as
    \begin{align*}
        \Iprod{\nabla \lossfunction^w\Paren{\mustar}, u } 
        &= \sum_{i=1}^n w_i \Iprod{\noisetransformation\Paren{\mustar - y_i}, u} 
        \\&= \sum_{i=1}^n w_i\Iprod{\noisetransformation\Paren{\mustar - \bm y_i^*}, u} + \sum_{i=1}^n \Ind\Set{i \in S_b} \cdot w_i \Iprod{\noisetransformation\Paren{\mustar - y_i} - \noisetransformation\Paren{\mustar - \bm y_i^*}, u} \,.
    \end{align*}
    For the first sum, 
    \[
    \sum_{i=1}^n w_i \Iprod{\noisetransformation\Paren{\mustar - \bm y_i^*}, u}
    =\sum_{i=1}^n w_i \Iprod{\noisetransformation\Paren{\bm \eta_i^*}, u}
    \leq \Norm{\frac{1}{n}\sum_{i=1}^n \noisetransformation \Paren{\bm \eta_i^*} } \cdot \norm{u} + 
    \sum_{i=1}^n \Paren{w_i - \tfrac{1}{n}} \Iprod{\noisetransformation\Paren{\bm \eta_i^*}, u}\,.
    \]
    By H\"older's Inequality,
    \begin{align*}
    \sum_{i=1}^n \Paren{w_i - \tfrac{1}{n}} \Iprod{\noisetransformation\Paren{\bm \eta_i^*}, u}
    &\le \Paren{\sum_{i=1}^n \Abs{w_i - \tfrac{1}{n}}}^{1-\frac{1}{2k}}
    \Paren{\sum_{i=1}^n \Abs{w_i - \tfrac{1}{n}} 
    	\cdot\Iprod{\noisetransformation\Paren{\bm \eta_i^*}, u}^{2k}}^{\frac 1 {2k}}
    \\&\le 4\cdot \e^{1-\frac{1}{2k}} \cdot \bar{\bm \fm}_{2k}  \cdot \norm{u}\,.
    \end{align*}
    
    For the second sum, we bound using H\"older's Inequality
    \begin{align*}
        &\sum_{i=1}^n \Ind\Set{i \in S_b} \cdot w_i \cdot \Iprod{\noisetransformation\Paren{\mustar - y_i} - \noisetransformation\Paren{\mustar - \bm y_i^*}, u} \leq \e^{1-\frac{1}{2k}} \Paren{ \sum_{i=1}^n w_i \cdot \Iprod{\noisetransformation\Paren{\mustar - y_i} - \noisetransformation\Paren{\mustar - \bm y_i^*}, u}^{2k}}^{\frac{1}{2k}} \\
        &\quad= \e^{1-\frac{1}{2k}} \Paren{  \sum_{i=1}^n w_i \cdot \Iprod{\noisetransformation\Paren{\mustar - y_i} - \noisetransformation\Paren{\mustar - \bm y_i^*} - \noisetransformation\Paren{\muhat - y_i} + \noisetransformation\Paren{\muhat - y_i}, u}^{2k}}^{\frac{1}{2k}} \\
        &\quad\leq \e^{1-\frac{1}{2k}} 
        \Paren{\sum_{i=1}^n w_i  \Iprod{\noisetransformation\Paren{\bm \eta_i^*}, u}^{2k}}^{\frac{1}{2k}} 
        +\e^{1-\frac{1}{2k}}  
        \Paren{ \sum_{i=1}^n w_i  \Iprod{\noisetransformation\Paren{\muhat - y_i}, u}^{2k}}^{\frac{1}{2k}} 
        	\\&\quad\qquad\qquad\qquad\qquad\qquad\qquad\quad\;\;+ \e^{1-\frac{1}{2k}} \Paren{ \sum_{i=1}^n w_i  \Iprod{\noisetransformation\Paren{\mustar - y_i} 
        			- \noisetransformation\Paren{\muhat - y_i}, u}^{2k}}^{\frac{1}{2k}}  \,.
    \end{align*}
    We bound the three terms in the square bracket one by one.
    
    The first term can be bounded as follows
    \[
    \sum_{i=1}^n w_i  \Iprod{\noisetransformation\Paren{\bm \eta_i^*}, u}^{2k} 
    \le \sum_{i=1}^n \tfrac{1}{n}  \Iprod{\noisetransformation\Paren{\bm \eta_i^*}, u}^{2k} 
    \le \bar{\bm \fm}_{2k}^{2k} \norm{u}^{2k}\,.
    \]
    
    The second one is at most $\bar{\bm \fm}_{2k}  \norm{u}$ by definition of $\hat{ \fm}_{2k}^w$.
    For the third, we use that $\noisetransformation$ is $1$-Lipschitz to bound 
    \[
    \Paren{ \sum_{i=1}^n w_i \cdot \Iprod{\noisetransformation\Paren{\mustar - y_i} - \noisetransformation\Paren{\muhat - y_i}, u}^{2k}}^{\frac{1}{2k}} \leq \Paren{ \sum_{i=1}^n w_i \cdot \Norm{\noisetransformation\Paren{\mustar - y_i} - \noisetransformation\Paren{\muhat - y_i}}^{2k} \Norm{u}^{2k}}^{\frac{1}{2k}} \leq \Norm{u}^2 \,.
    \]
    It follows that
    \[
    \sum_{i=1}^n \Ind\Set{i \in S_b} \cdot w_i \cdot 
    \Iprod{\noisetransformation\Paren{\mustar - y_i} - \noisetransformation\Paren{\mustar - \bm y_i^*}, u} 
    \le {\e^{1-\frac{1}{2k}} \Norm{u} \cdot 
    	\Paren{ \hat{ \fm}_{2k}^w +\bar{\bm \fm}_{2k} + \Norm{u}}} \,.
    \]
    Putting all the above together, we obtain the desired bound.
\end{proof}
\section{Filtering Algorithm for Bounded Clipped Covariance}
\label{sec:second_moment_filtering}

In this section, we show that the filtering algorithm for robust mean estimation under moment constraints (cf. for example \cite{jerry_lecture_notes,DK_book}) can be adapted to also yield an efficient algorithm for robustly estimating the location parameter of symmetric distributions.
For now, we focus on the case that $k = 2$, we discuss the extension to higher moments in \cref{sec:higher_moment_filtering}.
Again, we will focus on the abstract setting and show that we can apply these results to semi-product and elliptical distributions.
We will closely follow the exposition in \cite{jerry_lecture_notes,DK_book}.

Recall that we denote $ \lossfunction^w\paren{\mu} = \sum_{i=1}^n w_i \noisepotential\Paren{\mu - y_i}$.
\begin{assumption}[Goodness Condition]
    \label{assump:good_distribution}
    Let $\noisepotential \from \R^d \rightarrow \R, \noisetransformation = \nabla \noisepotential$ be such that $f$ is 1-Lipschitz.
    Let $\mustar \in \R^d, 0 < \sigma, 0 < \alpha < 1$ and $y_1, \ldots, y_n \in \R^d$.
    Let $D$ be a distribution over $\R^d$ and $\bm \eta_1^*, \ldots, \bm \eta_n^*$ be $n$ \iid samples from $D$.
    We say that the \emph{goodness condition} holds, if
    \begin{enumerate}
        \item $\alpha\norm{\muhat^w - \mustar}^2 \leq 10\Iprod{\nabla \lossfunction^w\Paren{\mustar}, \mustar-\muhat^w}$,
        \item $\Norm{\frac{1}{n}\sum_{i=1}^n \noisetransformation\Paren{\bm \eta_i^*} } \le \frac{\alpha}{100} \sigma$,
        \item $\bar{\bm \fm}_2 = \max_{v \in \bbS^{d-1}} \sqrt{\frac{1}{n}\sum_{i=1}^n  \Iprod{\noisetransformation\Paren{ \bm \eta_i^*}, v}^{2}} \le \sigma$.
    \end{enumerate}
\end{assumption}

Under this condition we can show the following theorem.
\begin{theorem}
    \label{thm:filtering_alg}
        Let $\mustar \in \R^d, 0 < \sigma, 0 < \alpha < 1, \eps > 0$, and $\bm \eta_1^*, \ldots, \bm \eta_n^*$ be $n$ \iid samples from some distribution $D$ over $\R^d$.
        Assume that $\sqrt{\e} \leq \tfrac \alpha {1000}$.
        Let $y_1, \ldots, y_n$ be an $\e$-corruption of $\bm y_1^* = \mustar + \bm \eta_1^*, \ldots, \bm y_n^* = \mustar + \bm \eta_n^*$.
        Assume \cref{assump:good_distribution} (the goodness condition) holds, then \cref{alg:filtering_alg}, given $y_1, \ldots, y_n$ and $\sigma$, 
    	terminates after at most $2\e n +1 $ iterations and computes $\muhat$ such that 
    	\[\Norm{\muhat^w - \mustar} 
    	\le O\Paren{\frac{\Norm{\frac{1}{n}\sum_{i=1}^n \noisetransformation\Paren{\bm \eta_i^*} }}{\alpha}
    		+ \frac{\sqrt{\e}}{\alpha} \cdot  \sigma}\,.
    	\]
    Moreover each iteration can be implemented in time $n^{O(1)}d^{O(1)}$.
\end{theorem}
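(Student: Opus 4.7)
The plan is to instantiate a standard filtering scheme with the transformed ``covariance'' $M(w) := \sum_{i=1}^n w_i\, \noisetransformation\paren{\muhat^w - y_i}\noisetransformation\paren{\muhat^w - y_i}^\top$ in place of the usual sample covariance. Starting from $w = \tfrac{1}{n}\mathbb{1}$, each iteration computes $\muhat^w$ (the minimizer of the convex function $\lossfunction^w$), forms $M(w)$, and either returns $\muhat^w$ if $\normop{M(w)} \leq C\sigma^2$ for a large enough absolute constant $C$, or otherwise picks a top unit eigenvector $v$ of $M(w)$, computes scores $\tau_i := \Iprod{\noisetransformation\paren{\muhat^w - y_i}, v}^2$, and updates $w_i \leftarrow w_i\bigl(1 - \tau_i/\tau_{\max}\bigr)$.

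The central invariant to maintain is $\Delta_g(w) := \sum_{i \in S_g}\paren{\tfrac{1}{n} - w_i} \leq \Delta_b(w) := \sum_{i \in S_b}\paren{\tfrac{1}{n} - w_i}$, which inductively forces $w \in \cW_{2\e}$ because $\Delta_b \leq |S_b|/n \leq \e$. Its preservation reduces to proving strict domination of bad over good energy, $\sum_{i \in S_b} w_i \tau_i \geq \sum_{i \in S_g} w_i \tau_i$, whenever $\normop{M(w)} > C\sigma^2$. For the good side I use $1$-Lipschitzness of $\noisetransformation$ to compare $\noisetransformation\paren{\muhat^w - \bm y_i^*}$ with $\noisetransformation\paren{-\bm\eta_i^*}$, obtaining $\sum_{i \in S_g} w_i \tau_i \leq 2\sigma^2 + 2\Norm{\muhat^w - \mustar}^2$ via goodness condition~3. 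The distance $\Norm{\muhat^w - \mustar}$ is then controlled by applying \cref{thm:identifiability} with $k=1$ to the current weights (its hypothesis is exactly goodness condition~1), yielding $\Norm{\muhat^w - \mustar}^2 \leq O(\sigma^2) + O(\e/\alpha^2)\lambda^2$, where $\lambda^2 := \normop{M(w)}$. Since $\sqrt\e \leq \alpha/1000$, this last term is at most $\lambda^2/10$ once $C$ is large enough, so $\sum_{i \in S_g} w_i \tau_i \leq \lambda^2/3$ and strict domination follows.

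For the iteration count, each filter step zeros out at least the $\tau_{\max}$-achieving index, and once zeroed an index stays zero. Because the invariant forces every fully zeroed good index to contribute $1/n$ to $\Delta_g \leq \e$, at most $\e n$ good indices are ever zeroed, and at most $|S_b| \leq \e n$ bad ones, giving at most $2\e n + 1$ iterations (the $+1$ for the final check that does not filter). At termination we have $\hat{\fm}_2^w = \lambda \leq \sqrt{C}\sigma$, and \cref{thm:identifiability} with $k=1$ directly yields the claimed bound $\Norm{\muhat - \mustar} = O\bigl(\Norm{\tfrac{1}{n}\sum_i \noisetransformation\paren{\bm\eta_i^*}}/\alpha + \sqrt{\e}\,\sigma/\alpha\bigr)$. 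Per iteration, the dominant costs are the smooth convex minimization computing $\muhat^w$ (a $d$-dimensional strongly convex problem, solvable by Nesterov's method) and a top eigenvector computation on $M(w)$ expressed as an outer product of $n$ vectors in $\R^d$, both $n^{O(1)}d^{O(1)}$.

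The main obstacle is the self-referential flavor of the invariant: filter correctness needs $\Norm{\muhat^w - \mustar}^2 \ll \lambda^2$, but our only handle on $\Norm{\muhat^w - \mustar}$ is identifiability, and identifiability itself depends on $\lambda$. What rescues the argument is that the $\lambda$-dependence in identifiability carries a prefactor of $\sqrt\e/\alpha \ll 1$; bootstrapping through this slack yields a genuinely non-circular bound, at the cost of choosing the threshold constant $C$ large enough to absorb all $O(1)$ factors arising from the good-side analysis.
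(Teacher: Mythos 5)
Your proof is correct and follows essentially the same route as the paper: maintain the invariant $\Delta_g < \Delta_b$ via the good/bad score comparison, control the good-side energy through $1$-Lipschitzness of $f$ plus goodness condition~3, and bootstrap through the identifiability theorem (here with $k=1$) using the $\sqrt{\e}/\alpha$ slack to break the apparent circularity. One minor note: you correctly set the termination threshold at $\Theta(\sigma^2)$, which is what the argument (and dimensional analysis) requires; the paper's algorithm writes $100\cdot\sigma$, which appears to be a typo for $100\cdot\sigma^2$.
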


To describe the algorithm, let $C > 0$ be some universal constant and for $\mu \in \R^d, w \in \cW_\e$ let
\begin{align*}
    \muhat(w) &= \min_{\mu \in \R^d} \lossfunction^w\Paren{\mu} \,, \\
    \Sigma_\noisetransformation \Paren{w} &= \sum_{i=1}^n w_i \noisetransformation\Paren{\muhat^w - y_i}\noisetransformation\Paren{\muhat^w - y_i}^\top\,.
\end{align*}

We will use the following Filtering Algorithm:

\begin{algorithmbox}[Filtering Algorithm]
    \label{alg:filtering_alg}
    \mbox{}\\
    \textbf{Input:} $\e$-corrupted sample $y_1, \ldots, y_n$ and $\sigma > 0$.
    \mbox{}\\
    \textbf{Output:} Location estimate $\muhat$.
    \begin{itemize}
        \item Let $w^{(0)} = \tfrac{1}{n} \mathbb{1}_n$.
        \item Compute $\muhat^{(0)} = \min_{\mu \in \R^d} \lossfunction^{w^{(0)}}\Paren{\mu}$ and $\Sigma_\noisetransformation^{(0)} = \Sigma_\noisetransformation\Paren{w^{(0)}}$.
        \item Let $t=0$.
        \item \textbf{while} $\Norm{\Sigma_\noisetransformation^{(t)}} > 100\cdot \sigma$ \textbf{do}
        \begin{itemize}
            \item Compute $v^{(t)}$ the top eigenvector of $\Sigma_\noisetransformation^{(t)}$.
            \item For $i \in [n]$, compute $\tau_i^{(t)} = \Iprod{v^{(t)}, \noisetransformation\Paren{\muhat^{(t)} - y_i}}^2$.
            \item For $i \in [n]$, set $w_i^{(t+1)} = \Paren{1-\tfrac{\tau_i}{\tau_{\max}}}w_i^{(t)}$, where $\tau_{\max} = \max_{i \in [n]} \tau_i$.
            \item Compute $\muhat^{(t+1)} = \min_{\mu \in \R^d} f^{w^{(t+1)}}\Paren{\mu}$ and $\Sigma_\noisetransformation^{(t+1)} = \Sigma_\noisetransformation\Paren{w^{(t+1)}}.$
            \item $t \leftarrow t+1$.
        \end{itemize}
        \item Output $\muhat^{(t)}$.
    \end{itemize}
\end{algorithmbox}

The proof of \cref{thm:filtering_alg} is very similar to the one in \cite{jerry_lecture_notes}.
We will use the following lemma whose proof we will give at the end of this section
\begin{lemma}
    \label{lem:invariant_filtering}
    Assume that \cref{thm:filtering_alg} $\norm{\Sigma_\noisetransformation^{(t)}} > 100\cdot \sigma$ and
    \[ 
    \sum_{i \in S_g} \Paren{\frac{1}{n} - w_i^{(t)}} < \sum_{i \in S_b} \Paren{\frac{1}{n} - w_i^{(t)}}\,.
    \]
    Then 
    \[
    \sum_{i \in S_g} \Paren{\frac{1}{n} - w_i^{(t+1)}} < 
    \sum_{i \in S_b} \Paren{\frac{1}{n} - w_i^{(t+1)}}\,.
    \]
\end{lemma}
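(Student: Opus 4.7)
The plan is to reduce the claim to the single comparison
\[
\sum_{i\in S_g} w_i^{(t)}\tau_i^{(t)} \;<\; \sum_{i\in S_b} w_i^{(t)}\tau_i^{(t)}\,.
\]
This suffices because the filter decreases $w_i$ by $w_i^{(t)}\tau_i^{(t)}/\tau_{\max}$ at each index, so the increment on each side to the deficit $\sum_{S_\bullet}\Paren{1/n - w_i}$ is a positive multiple of $\sum_{i\in S_\bullet} w_i^{(t)}\tau_i^{(t)}$, and strict domination propagates the inductive invariant. Since $v^{(t)}$ is the top eigenvector of $\Sigma_{\noisetransformation}^{(t)}$, we have $\Norm{\Sigma_{\noisetransformation}^{(t)}} = \sum_i w_i^{(t)}\tau_i^{(t)}$, so it is equivalent to show $\sum_{i\in S_g} w_i^{(t)}\tau_i^{(t)} < \tfrac{1}{2}\Norm{\Sigma_{\noisetransformation}^{(t)}}$.

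To bound the good contribution, use $w_i^{(t)}\le 1/n$ on $S_g$ and extend the sum to all indices:
\[
\sum_{i\in S_g} w_i^{(t)}\tau_i^{(t)} \;\le\; \frac{1}{n}\sum_{i=1}^n \Iprod{v^{(t)},\, \noisetransformation\Paren{\muhat^{(t)} - \bm y_i^*}}^2\,.
\]
Writing $\muhat^{(t)} - \bm y_i^* = (\muhat^{(t)}-\mustar) - \bm\eta_i^*$, the $1$-Lipschitz property of $\noisetransformation$ combined with $(a+b)^2\le 2a^2+2b^2$ bounds each summand by $2\Iprod{v^{(t)}, \noisetransformation(-\bm\eta_i^*)}^2 + 2\Norm{\muhat^{(t)}-\mustar}^2$. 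By the symmetry of the noise distribution (which makes $\bm\eta_i^*$ and $-\bm\eta_i^*$ interchangeable inside the empirical second moment), the second goodness condition gives $\tfrac{1}{n}\sum_i \Iprod{v^{(t)}, \noisetransformation(-\bm\eta_i^*)}^2 \le \bar{\bm\fm}_2^2 \le \sigma^2$, so
\[
\sum_{i\in S_g} w_i^{(t)}\tau_i^{(t)} \;\le\; 2\sigma^2 + 2\Norm{\muhat^{(t)}-\mustar}^2\,.
\]

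Next I would control $\Norm{\muhat^{(t)}-\mustar}$ via \cref{thm:identifiability} with $k=1$ applied at $w = w^{(t)}$. A quick induction verifies $w^{(t)}\in\cW_{2\eps}$: the inductive invariant plus $\sum_{S_b}(1/n-w_i^{(t)})\le\eps$ yields $\sum_{S_g}(1/n-w_i^{(t)})<\eps$ and hence $\Normo{w^{(t)} - \tfrac{1}{n}\mathbb{1}}\le 2\eps$. The assumption $\sqrt\eps\le\alpha/1000$ supplies the remaining hypothesis of \cref{thm:identifiability}, with condition~1 of the goodness assumption providing its other input. Since $\hat{\fm}_2^{w^{(t)}} = \Norm{\Sigma_{\noisetransformation}^{(t)}}^{1/2}$ by construction, feeding in the goodness bounds $\Norm{\tfrac{1}{n}\sum \noisetransformation(\bm\eta_i^*)} \le \tfrac{\alpha}{100}\sigma$ and $\bar{\bm\fm}_2\le\sigma$ and squaring gives
\[
\Norm{\muhat^{(t)}-\mustar}^2 \;\lesssim\; \sigma^2 + \frac{\eps}{\alpha^2}\Norm{\Sigma_{\noisetransformation}^{(t)}}\,.
\]
Combining with the previous display, the good contribution is $\lesssim \sigma^2 + (\eps/\alpha^2)\Norm{\Sigma_{\noisetransformation}^{(t)}}$; under the iteration hypothesis that $\Norm{\Sigma_{\noisetransformation}^{(t)}}$ exceeds the threshold (so $\sigma^2\ll\Norm{\Sigma_{\noisetransformation}^{(t)}}$) and $\eps/\alpha^2\ll 1$, this is strictly smaller than $\tfrac{1}{2}\Norm{\Sigma_{\noisetransformation}^{(t)}}$, closing the induction.

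The main obstacle is the superficially circular dependence in the last step: the identifiability bound for $\Norm{\muhat^{(t)}-\mustar}$ itself contains $\Norm{\Sigma_{\noisetransformation}^{(t)}}^{1/2}$, which is exactly what we are trying to control. This resolves because \cref{thm:identifiability} attaches only the prefactor $\sqrt\eps/\alpha$ to $\Norm{\Sigma_{\noisetransformation}^{(t)}}^{1/2}$, so after squaring the coefficient in front of $\Norm{\Sigma_{\noisetransformation}^{(t)}}$ is $\eps/\alpha^2 \ll 1$ and can be safely absorbed into $\tfrac{1}{2}\Norm{\Sigma_{\noisetransformation}^{(t)}}$. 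A secondary technical point is that the goodness bound is stated for $\noisetransformation(\bm\eta_i^*)$ but the algorithm evaluates $\noisetransformation(\muhat^{(t)} - \bm y_i^*)$; the Lipschitz-shift step above is what links the two, and is exactly why a bound on $\Norm{\muhat^{(t)}-\mustar}$ is needed in the first place.
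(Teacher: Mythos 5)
Your proof follows the paper's argument essentially step for step: reduce to $\sum_{i\in S_g} w_i\tau_i < \tfrac{1}{2}\Norm{\Sigma_\noisetransformation^{(t)}}$ using $\sum_i w_i\tau_i = \Norm{\Sigma_\noisetransformation^{(t)}}$, split each term via $(a+b)^2 \le 2a^2+2b^2$ and the $1$-Lipschitz property of $\noisetransformation$, bound the centered part by $\sigma^2$ and the shift part by $\Norm{\muhat^{(t)}-\mustar}^2$, then feed in \cref{thm:identifiability} and absorb the $\e/\alpha^2$ prefactor. Two small slips worth fixing: the bound $\bar{\bm\fm}_2 \le \sigma$ is the \emph{third} goodness condition, not the second; and passing from $\langle v, \noisetransformation(-\bm\eta_i^*)\rangle^2$ to $\langle v, \noisetransformation(\bm\eta_i^*)\rangle^2$ in the empirical sum should be justified by $\noisetransformation$ being odd (true for both instantiations), not by distributional symmetry of the noise --- symmetry of the law does not give pointwise equality of the two finite-sample quantities.
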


With this in hand, we will prove \cref{thm:filtering_alg}.
\begin{proof}[Proof of \cref{thm:filtering_alg}]
    First note, that every iteration can clearly be implented to run in time $n^{O(1)}d^{O(1)}$.
    First, we will show that the algorithm terminates after at most $\lceil 2\e n\rceil$ iterations.
    We will prove it by contradiction.
    Suppose that the algorithm does not terminate after $T = \lceil 2\e n\rceil$ iterations.
    Note that the number of zero entries of $w^{(t)}$ increases by at least 1 in every iteration.
    Hence, after $T$ iterations we have set at least 
    $\e n$ entries of $w$ to zero whose index lies in $S_g$.
    By assumption that the algorithm didn't terminate and \cref{lem:invariant_filtering} it holds that 
    \[
    \e \leq \sum_{i \in S_g} \Paren{\frac{1}{n} - w_i^{(T)} }
    < \sum_{i \in S_b} \Paren{\frac{1}{n} - w_i^{(T)}} \leq \frac{\Card{S_b}}{n} 
    \leq \e\,.
    \]
    which is a contradiction.

    Next, we prove the correctness of the algorithm.
    Let $T$ be the index of the last iteration of the algorithm before termination.
    Note that by our invariant 
    \[
    \Normo{\frac{1}{n} - w^{(T)}} = \sum_{i \in S_g} \frac{1}{n} - w^{(T)}_i + \sum_{i \in S_b} \frac{1}{n} - w^{(T)}_i < 2 \sum_{i \in S_b} \frac{1}{n} - w^{(T)}_i \leq 2\e \,.
    \]
    Since also $0 \leq w^{(T)}\leq \frac{1}{n}$, 
    it follows that $w^{(T)} \in \cW_{2\e}$.
    By \cref{thm:identifiability} and since 
    $\Norm{\Sigma_\noisetransformation^{(T)}} \leq 100\cdot \sigma$ it follows that
    \[\Norm{\muhat^w - \mustar} 
\le O\Paren{\frac{\Norm{\frac{1}{n}\sum_{i=1}^n \noisetransformation\Paren{\bm \eta_i^*} }}{\alpha}
	+ \frac{\sqrt{\e}}{\alpha} \cdot  \sigma}\,.
\]
\end{proof}

Lastly, we will prove \cref{lem:invariant_filtering}.
\begin{proof}[Proof of \cref{lem:invariant_filtering}]
    For simplicity, let $w = w^{(t)}$ and $w' = w^{(t+1)}$.
    Note, that it is enough to show that $$\sum_{i \in S_g} w_i - w_i' < \sum_{i \in S_b} w_i - w_i' \,.$$
    Further, recall that $w_i' = \Paren{1 - \tfrac{\tau_i}{\tau_{\max}}} w_i$, so for all $i\in[n]$,
    $w_i - w_i'  =  \frac{1}{\tau_{\max}} \tau_i w_i$.
    Hence is enough to show that 
    \[
    \sum_{i \in S_g} \tau_i w_i < \sum_{i \in S_b} \tau_i w_i\,.
    \]
    Since $S_g$ and $S_b$ partition $[n]$ and $$\sum_{i=1}^n w_i \tau_i = \sum_{i=1}^n w_i \Iprod{v^{(t)}, \noisetransformation\Paren{\muhat^{(t)} - y_i}}^2 = \Paren{v^{(t)}}^\top \Sigma_\noisetransformation^{(t)}\Paren{v^{(t)}} = \Norm{\Sigma_\noisetransformation^{(t)}} \,,$$
    we can prove $\sum_{i \in S_g} \tau_i w_i < \sum_{i \in S_b} \tau_i w_i$ by showing that $$\sum_{i \in S_g} \tau_i w_i < \frac{\norm{\Sigma_\noisetransformation^{(t)}}}{2} \,.$$
    To this end, note that
    \begin{align*}
        \sum_{i \in S_g} \tau_i w_i &= \sum_{i \in S_g} w_i \Iprod{v^{(t)}, \noisetransformation\Paren{\muhat^{(t)} - y_i}}^2 \\
        &\le \frac{2}{n} \sum_{i \in [n]} \Iprod{v^{(t)}, \noisetransformation\Paren{\mustar - y_i}}^2 + \frac{2}{\Card{S_g}} \sum_{i \in S_g} \Iprod{v^{(t)}, \noisetransformation\Paren{\muhat^{(t)} - y_i} - \noisetransformation\Paren{\mustar - y_i}}^2 \,.
    \end{align*}
    The first term is at most $2\sigma^2$.
    For the second term we observe that since $\noisetransformation$ is $1$-Lipschitz, 
    for all $i \in [n]$ it holds that 
    \[
    \Iprod{v^{(t)}, \noisetransformation\Paren{\muhat^{(t)} - y_i} - \noisetransformation\Paren{\mustar - y_i}}^2 
    \leq \Snorm{\noisetransformation\Paren{\muhat^{(t)} - y_i} - \noisetransformation\Paren{\mustar - y_i}} 
    \leq \Snorm{\muhat^{(t)} - \mustar} \,.
    \]
    Hence, by \cref{thm:identifiability},
    \[
    \sum_{i \in S_g} \tau_i w_i \le 2\sigma^2 + \Snorm{\muhat^{(t)} - \betastar} 
    \le 2\sigma^2 + 
 \frac{100^2}{\alpha^2}\Paren{{\Norm{\frac{1}{n}\sum_{i=1}^n \noisetransformation \Paren{\bm \eta_i^*} }}
	+ {\sqrt{\e}}\cdot \Sigma_\noisetransformation^{(t)}}^2 < \frac{\Sigma_\noisetransformation^{(t)}}{2}\,,
    \]
    where we used  $\eps \lesssim \alpha^2$ and     	
    $\Norm{\frac{1}{n}\sum_{i=1}^n \noisetransformation\Paren{\bm \eta_i^*} } \lesssim \alpha \sigma\,.$
\end{proof}

\section{Filtering Algorithm for Bounded Higher-Order Clipped Moments}
\label{sec:higher_moment_filtering}

In this section, we show how to adapt the filtering algorithm to incorporate higher-order moments of the transformed noise.
We follow closely the exposition in \cite[Chapter 6.5]{DK_book}.
In particular, consider the following definition similar to the by now standard notion of ceritfiably bounded moments or certifiable subgaussianity \cite{kothari2018robust,hopkins2018mixture}.
Let $\noisetransformation \from \R^d \to \R^d$ be some function.
\begin{definition}
    \label{def:bounded_f_moments}
    Let $k, \ell \in \N$ and $\sigma > 0$.
    We say a distribution $D$ over $\R^d$ has \emph{$(2k, \ell)$-certifiabily $\sigma$-bounded $\noisetransformation$-moments} if there is a degree-$\ell$ SoS proof (cf. \cref{sec:preliminaries}), in formal variables $v$, that 
    \[
        \E_{\bm \eta \sim D} \iprod{\noisetransformation\Paren{\bm \eta}, v}^{2k} \leq \Paren{\sigma \cdot \Norm{v}}^{2k} \,.
    \]
\end{definition}

We again define a goodness condition analogous to \cref{assump:good_distribution}.
Specifically, we assume
\begin{assumption}[Higher-Order Goodness Condition]
    \label{assump:higher_order_good_distribution}
    Let $\noisepotential \from \R^d \rightarrow \R, \noisetransformation = \nabla \noisepotential$ be such that $f$ is 1-Lipschitz.
    Let $\mustar \in \R^d, 0 < \sigma, 0 < \alpha < 1, k \in \N$ and $y_1, \ldots, y_n \in \R^d$.
    Let $D$ be a distribution over $\R^d$ and $\bm \eta_1^*, \ldots, \bm \eta_n^*$ be $n$ \iid samples from $D$.
    We say that the \emph{higher-order goodness condition} holds, if
    \begin{enumerate}
        \item $\alpha\norm{\muhat^w - \mustar}^2 \leq 10\Iprod{\nabla \lossfunction^w\Paren{\mustar}, \mustar-\muhat^w}$,
        \item $\Norm{\frac{1}{n}\sum_{i=1}^n \noisetransformation\Paren{\bm \eta_i^*} } \le \frac{\alpha}{100} \sigma$,
        \item the uniform distribution over the set $\Set{\bm \eta_1^*, \ldots, \bm \eta_n^*}$ 
    	has $(2k, \ell)$-certifiabily $\sigma$-bounded $\noisetransformation$-moments. In other words,
    	         \[
    	         \sststile{\ell}{v}\frac{1}{n} \sum_{i=1}^n \iprod{\noisetransformation(\bm \eta_i^*), v}^{2k} \leq \Paren{\sigma\cdot \Norm{v}}^{2k} \,.
    	         \]
    \end{enumerate}
\end{assumption}

Specifically, we will show
\begin{theorem}
    \label{thm:filtering_alg_higher_moments}
    Let $\mustar \in \R^d, 0 < \sigma, 0 < \alpha < 1, k \in \N$, and $\bm \eta_1^*, \ldots, \bm \eta_1^*$ be $n$ \iid samples from some distribution $D$ over $\R^d$.
    Assume that $\e^{1-\tfrac 1 {2k}} \le \frac{\alpha}{1000}$.
    Let $y_1, \ldots, y_n$ be an $\e$-corruption of $\bm y_1^* = \mustar + \bm \eta_1^*, \ldots, \bm y_n^* = \mustar + \bm \eta_n^*$.
    Assume \cref{assump:higher_order_good_distribution} (the higher-order goodness condition) holds, then \cref{alg:filtering_alg_higher_moments}, given $y_1, \ldots, y_n$ and $\sigma$, 
    terminates after at most $2\e n +1 $ iterations and computes $\muhat$ such that
    \[
    \Norm{\muhat - \mustar} 
    \le O\Paren{\frac{\Norm{\frac{1}{n}\sum_{i=1}^n \noisetransformation\Paren{\bm \eta_i^*} }}{\alpha}
    	+ \frac{\e^{1-\frac{1}{2k}} }{\alpha}\cdot \sigma}\,.
    \]
    Moreover each iteration can be implemented in time $n^{O(1)}\cdot d^{O(\ell)}$.
\end{theorem}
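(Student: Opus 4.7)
The plan is to replace the spectral filtering of Section 5 by an SoS-based filtering driven by a pseudo-expectation over the unit vector $v$, following the template of \cite{DK_book}. The modified \cref{alg:filtering_alg_higher_moments} mirrors \cref{alg:filtering_alg}: at step $t$ it computes $\muhat^{(t)} = \argmin_\mu \lossfunction^{w^{(t)}}(\mu)$ and then a degree-$\ell$ pseudo-distribution $\mu^{(t)}$ on $v$ satisfying $\Norm{v}^2 = 1$ that maximizes $M^{(t)} := \pE_{\mu^{(t)}} \sum_{i=1}^n w_i^{(t)} \iprod{\noisetransformation(\muhat^{(t)} - y_i), v}^{2k}$. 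If $M^{(t)} \le C \sigma^{2k}$ for a suitable absolute constant $C$, the algorithm returns $\muhat^{(t)}$; otherwise it sets scores $\tau_i^{(t)} := \pE_{\mu^{(t)}} \iprod{\noisetransformation(\muhat^{(t)} - y_i), v}^{2k}$ and down-weights $w_i^{(t+1)} := (1 - \tau_i^{(t)}/\tau_{\max}^{(t)}) w_i^{(t)}$. By \cref{fact:pE_efficient_optimization}, each iteration runs in time $n^{O(1)} d^{O(\ell)}$.

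The heart of the argument is the analog of \cref{lem:invariant_filtering}: whenever $M^{(t)} > C\sigma^{2k}$, the good-index contribution satisfies $\sum_{i \in S_g} w_i^{(t)} \tau_i^{(t)} < \tfrac{1}{2} M^{(t)}$, so the filtering step removes strictly more weight from $S_b$ than from $S_g$. Using the SoS-provable convexity inequality $\iprod{a+b,v}^{2k} \le 2^{2k-1}(\iprod{a,v}^{2k} + \iprod{b,v}^{2k})$ together with the $1$-Lipschitzness of $\noisetransformation$, I would split $\noisetransformation(\muhat^{(t)} - y_i) = \noisetransformation(-\bm\eta_i^*) + (\noisetransformation(\muhat^{(t)} - y_i) - \noisetransformation(\mustar - y_i))$ for $i \in S_g$ to obtain
\[
\sum_{i \in S_g} w_i^{(t)} \tau_i^{(t)} \le 2^{2k-1} \pE_{\mu^{(t)}} \tfrac{1}{n}\sum_{i=1}^n \iprod{\noisetransformation(\bm\eta_i^*), v}^{2k} + 2^{2k-1} \Norm{\muhat^{(t)} - \mustar}^{2k}.
\]
The first summand is bounded by $2^{2k-1}\sigma^{2k}$ because the certifiable $\noisetransformation$-moment inequality from \cref{assump:higher_order_good_distribution} survives taking the pseudo-expectation by \cref{fact:pE_and_sos_proofs}.

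For the second summand I would apply \cref{thm:identifiability} directly to the current weights, which lie in $\cW_{2\e}$ by the same counting as in Section 5; this yields $\Norm{\muhat^{(t)} - \mustar} \le O(\alpha^{-1})\cdot(\Norm{\tfrac{1}{n}\sum_i \noisetransformation(\bm\eta_i^*)} + \e^{1-1/(2k)}(\hat{\fm}_{2k}^{w^{(t)}} + \bar{\bm\fm}_{2k}))$. The goodness condition controls the gradient term by $O(\alpha\sigma)$ and forces $\bar{\bm\fm}_{2k} \le \sigma$ (the SoS moment bound implies the actual bound), while $\hat{\fm}_{2k}^{w^{(t)}} \le (M^{(t)})^{1/(2k)}$ since any actual unit vector furnishes a valid degree-$\ell$ pseudo-distribution. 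Combining these bounds and using $\e^{1-1/(2k)} \le \alpha/1000$ gives $\Norm{\muhat^{(t)} - \mustar}^{2k} \le M^{(t)}/(100 \cdot 2^{2k})$ once $C$ is chosen large enough, which closes the invariant.

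The remainder of the proof copies Section 5 verbatim: the invariant forces termination within $2\e n + 1$ iterations because each iteration zeroes at least one weight while the $S_g$-weight removed never catches up with the $S_b$-weight removed, the latter bounded by $\e$; upon termination $w^{(t)} \in \cW_{2\e}$ and $M^{(t)} \le C\sigma^{2k}$, so \cref{thm:identifiability} yields the stated error bound. I expect the main technical obstacle to be the bootstrap in the previous paragraph: tracking the $2^{2k-1}$ factors together with the $k$-dependent constants from \cref{thm:identifiability} so that $C$ can be chosen independently of $k$. A secondary subtlety is verifying that the pseudo-expectation can legitimately be taken against polynomials depending on $\muhat^{(t)}$, which is fine because $\muhat^{(t)}$ is a fixed (if data-dependent) vector at step $t$ and the pseudo-distribution is computed after it is fixed.
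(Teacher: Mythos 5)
Your proposal is correct and follows essentially the same route as the paper's proof: the same invariant lemma driven by \cref{fact:sos_triangle} and the SoS Cauchy--Schwarz bound via Lipschitzness, the same appeal to \cref{thm:identifiability} for $\|\muhat^{(t)} - \mustar\|$, and the same counting argument for termination. The one point you make more explicit than the paper is that the pseudo-expectation should be chosen to (approximately) maximize $\pE\, p^{(t)}(v)$ so that $\hat{\fm}_{2k}^{w^{(t)}} \le (\pE\, p^{(t)}(v))^{1/(2k)}$; this is the right reading of the algorithm and is needed to close the loop, and your concern about the $k$-dependent constants is real but easily fixed by a slightly larger threshold constant $C$.
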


Additionally to the notation introduced in \cref{sec:second_moment_filtering}, consider the following polynomial in formal variables $v$ \[ p^w(v) =  \sum_{i = 1}^n w_i \iprod{ v, \noisetransformation(y_i - \muhat) }^{2k}  \,.\]
In what follows we will write $p^{w^{(t)}}$ as just $p^{(t)}$.

\begin{algorithmbox}[Filtering Algorithm]
    \label{alg:filtering_alg_higher_moments}
    \mbox{}\\
    \textbf{Input:} $\e$-corrupted sample $y_1, \ldots, y_n$ and $\sigma > 0$.
    \mbox{}\\
    \textbf{Output:} Location estimate $\muhat$.
    \begin{itemize}
        \item Let $w^{(0)} = \tfrac{1}{n}$.
        \item Compute $\muhat^{(0)} = \min_{\mu \in \R^d} \lossfunction^{w^{(0)}}\Paren{\mu}$.
        \item Let $t=0$.
        \item \textbf{while} there is no degree-$\ell$ SoS proof (in variables $v$) that 
        $p^{(t)}(v) \leq \Paren{ 100 \cdot \sigma \cdot \Norm{v}}^{2k}$ \textbf{do}
        \begin{itemize}
            \item Find a degree-$\ell$ pseudo-expectation that satisfies the constraint $\norm{v}^2 = 1$ such that
            $\tilde{\E} p^t(v) > \Paren{99\cdot \sigma }^{2k}$.
            \item For $i \in [n]$, 
            compute $\tau_i^{(t)} = \tilde{\E} \Iprod{v, \noisetransformation\Paren{\muhat^{(t)} - y_i}}^{2k}$.
            \item For $i \in [n]$, set $w_i^{(t+1)} = \Paren{1-\tfrac{\tau_i}{\tau_{\max}}}w_i^{(t)}$, where $\tau_{\max} = \max_{i \in [n]} \tau_i$.
            \item Compute $\muhat^{(t+1)} = \min_{\mu \in \R^d} f^{w^{(t+1)}}\Paren{\mu}$.
            \item $t \leftarrow t+1$.
        \end{itemize}
        \item Output $\muhat^{(t)}$.
    \end{itemize}
\end{algorithmbox}

The proof of \cref{thm:filtering_alg_higher_moments} is very similar to the one of \cref{thm:filtering_alg}.
Again, we will use the following lemma whose proof we will give at the end of this section
\begin{lemma}
    \label{lem:invariant_filtering_higher_moments}
    Assume that 
    \[
    \sum_{i \in S_g} \frac{1}{n} - w_i^{(t)} < \sum_{i \in S_b} \frac{1}{n} - w_i^{(t)}
    \]
     and $p^t(v) \leq \Paren{100 \cdot \sigma \cdot \Norm{v}^2 }^{k}$ 
     does not have a degree-$\ell$ SoS proof.
    Then also 
    \[
      \sum_{i \in S_g} \frac{1}{n} - w_i^{(t+1)} < \sum_{i \in S_b} \frac{1}{n} - w_i^{(t+1)} \,.
    \]
\end{lemma}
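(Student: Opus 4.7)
The plan is to closely follow the structure of \cref{lem:invariant_filtering}, with sum-of-squares reasoning replacing the spectral arguments specific to the second moment case. Abbreviate $w = w^{(t)}$, $\muhat = \muhat^{(t)}$, and $\Delta = \muhat - \mustar$. Exactly as in that proof, $w_i - w_i^{(t+1)} = \tau_i w_i/\tau_{\max}$ for every $i$, so the stated conclusion reduces to $\sum_{i \in S_g} w_i \tau_i < \sum_{i \in S_b} w_i \tau_i$. Since the pseudo-expectation chosen by the algorithm satisfies $\sum_i w_i \tau_i = \pE_v p^{(t)}(v) > (99\sigma)^{2k}$, it is enough to prove $2\sum_{i \in S_g} w_i \tau_i < \pE_v p^{(t)}(v)$.

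To control $\sum_{i \in S_g} w_i \tau_i$, fix $i \in S_g$ so that $y_i = \mustar + \bm\eta_i^*$. In all applications of interest $\noisepotential$ is even, so $\noisetransformation$ is odd, and the $1$-Lipschitz property gives the decomposition $\noisetransformation(\muhat - y_i) = -\noisetransformation(\bm\eta_i^*) + g_i$ for a deterministic vector $g_i$ with $\|g_i\| \leq \|\Delta\|$. Combining the SoS-provable inequalities $(a+b)^{2k} \leq 2^{2k-1}(a^{2k}+b^{2k})$ and $\iprod{v,u}^{2k} \leq (\|v\|^2\|u\|^2)^k$ with the $(2k,\ell)$-certifiably $\sigma$-bounded $\noisetransformation$-moments from item~3 of \cref{assump:higher_order_good_distribution}, and using $w_i \leq 1/n$ together with non-negativity to extend the sum from $S_g$ to all of $[n]$, I obtain as an SoS inequality in $v$:
\[
\sum_{i \in S_g} w_i \iprod{v, \noisetransformation(\muhat - y_i)}^{2k} \leq 2^{2k-1}\|v\|^{2k}\Paren{\sigma^{2k} + \|\Delta\|^{2k}}\,.
\]
Taking $\pE$ of both sides and using the constraint $\|v\|^2 = 1$ yields $\sum_{i \in S_g} w_i \tau_i \leq 2^{2k-1}(\sigma^{2k} + \|\Delta\|^{2k})$.

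It remains to bound $\|\Delta\|$. The current invariant together with $|S_b| \leq \e n$ gives $w \in \cW_{2\e}$, and item~1 of the goodness condition combined with $\e^{1-1/(2k)} \leq \alpha/1000$ makes \cref{thm:identifiability} applicable. Substituting the bounds $\|\tfrac{1}{n}\sum_i \noisetransformation(\bm\eta_i^*)\| \leq \alpha\sigma/100$ and $\bar{\bm\fm}_{2k} \leq \sigma$ provided by items~2 and~3 of the goodness condition (the latter because an SoS proof is also a true inequality), the identifiability bound simplifies to $\|\Delta\| \leq 1.1\,\sigma + \hat{\fm}_{2k}^w/10$. The main obstacle is that $\hat{\fm}_{2k}^w$ is a true maximum over unit vectors on which we have no direct handle. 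This is precisely where the pseudo-expectation from the algorithm enters: when the SDP is set up to maximize $\pE p^{(t)}(v)$ under $\|v\|^2 = 1$, the returned value is an SoS relaxation of---hence at least as large as---$\max_{\|v\|=1} p^{(t)}(v) = (\hat{\fm}_{2k}^w)^{2k}$, and therefore $\hat{\fm}_{2k}^w \leq \bigl(\pE_v p^{(t)}(v)\bigr)^{1/(2k)}$.

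Plugging this into the bound on $\|\Delta\|$ and applying $(a+b)^{2k} \leq 2^{2k-1}(a^{2k}+b^{2k})$ once more gives $\|\Delta\|^{2k} \leq 2^{2k-1}\bigl(1.1^{2k}\sigma^{2k} + \pE_v p^{(t)}(v)/10^{2k}\bigr)$. Substituting this into the bound on $\sum_{i \in S_g} w_i \tau_i$ and absorbing the remaining $\sigma^{2k}$ terms into $\pE_v p^{(t)}(v)$ via $\sigma^{2k} < \pE_v p^{(t)}(v)/99^{2k}$, the coefficient in front of $\pE_v p^{(t)}(v)$ becomes a sum of contributions of order $(2/99)^{2k}$ and $(2/10)^{2k}$, each strictly less than $1/4$ for every $k \geq 1$. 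Thus $2\sum_{i \in S_g} w_i \tau_i < \pE_v p^{(t)}(v)$ and the invariant is preserved.
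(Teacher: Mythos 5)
Your proof is correct and follows essentially the same route as the paper's: reduce the invariant to showing $\sum_{i\in S_g} w_i\tau_i < \tfrac{1}{2}\tilde{\E}p^{(t)}(v)$, bound the good-sample contribution by $2^{O(k)}\bigl(\sigma^{2k}+\|\Delta\|^{2k}\bigr)$ using the SoS triangle and Cauchy--Schwarz inequalities together with the certifiable $\noisetransformation$-moment bound, and then invoke \cref{thm:identifiability} to control $\|\Delta\|$. The one step you treat more explicitly than the paper is also the one step that genuinely needs care: the identifiability bound leaves a term $\hat{\fm}_{2k}^w$, which is a \emph{true} maximum of $p^{(t)}$ over the sphere and on which the algorithm provides no direct handle; as you observe, relating it to $\tilde{\E}p^{(t)}(v)$ requires reading the SDP step as returning a pseudo-expectation that (approximately) \emph{maximizes} $\tilde{\E}p^{(t)}(v)$ subject to $\|v\|^2=1$, whose optimal value dominates $\max_{\|v\|=1}p^{(t)}(v)=\bigl(\hat{\fm}_{2k}^w\bigr)^{2k}$ since point masses are feasible. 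The paper's writeup uses this fact silently (it simply restates the definition of $\hat{\fm}_{2k}^w$ and asserts the final numeric bound), so spelling it out is a small but real gain in rigor; the remaining differences (adding/subtracting $\noisetransformation(\mustar-y_i)$ versus using oddness of $\noisetransformation$, and the precise constants in the $2^{2k-1}$ vs.\ $4^k$ triangle-inequality bookkeeping) are cosmetic.
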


With this in hand, we will prove \cref{thm:filtering_alg_higher_moments}.
\begin{proof}[Proof of \cref{thm:filtering_alg_higher_moments}]
    First, note that by the facts in \cref{sec:preliminaries} we can compute the pseudo-expectation in each iteration in time $n^{O(1)}d^{O(l)}$ and hence, every iteration can be implemented to run in this time.
    Exactly as in the proof of \cref{thm:filtering_alg} we can conclude that the algorithm terminates after $T \leq 2\e n + 1$ iterations and that $w^T \in \cW_{2\e}.$

    Recall that 
    \[
     \hat{\fm}_{2k}^w = \max_{v \in \bbS^{d-1}} \Brac{\sum_{i=1}^n w_i \Iprod{\noisetransformation\Paren{\muhat - y_i}, v}^{2k}}^{\frac{1}{2k}} = \max_{v \in \bbS^{d-1}} \Brac{ p^w(v) }^{\frac{1}{2k}}\,.
     \]
    By \cref{thm:identifiability} it follows that
    \[
    \Norm{\muhat^{(T)} - \mustar}
        \le O\Paren{\frac{\Norm{\frac{1}{n}\sum_{i=1}^n \noisetransformation\Paren{\bm \eta_i^*} }}{\alpha} 
        	+ \frac{\e^{1-1/{(2k)}} \cdot  \Paren{\hat{\fm}_{2k}^{w^T} 
        			+ \bar{\bm\fm}_{2k}}}{\alpha}}\,. 
    \]
    Note that since the uniform distribution over $\Set{\bm \eta_1^*, \ldots, \bm \eta_n^*}$ has $(2k,\ell)$-certifiably $\sigma$-bounded $\noisetransformation$-moments, it holds that $\bar{\bm\fm}_{2k} \le \sigma$.
    Since the algorithm terminates after $T$ iterations it holds that 
    $\hat{\fm}_{2k}^{w^T} \leq 100\sigma$
    which completes the proof.
\end{proof}

Next, we will prove \cref{lem:invariant_filtering_higher_moments}.
\begin{proof}[Proof of \cref{lem:invariant_filtering_higher_moments}]
    Again, for simplicity, let $w = w^{(t)}$ and $w' = w^{(t+1)}$.
    As in the proof of \cref{lem:invariant_filtering}, it is enough to show that $$\sum_{i \in S_g} \tau_i w_i < \sum_{i \in S_b} \tau_i w_i\,.$$
    Further, 
    \[
    \sum_{i=1}^n w_i \tau_i = \sum_{i=1}^n w_i \tilde{\E} \Iprod{v, \noisetransformation\Paren{\muhat^{(t)} - y_i}}^k = \tilde{\E} p^t(v) \ge \Paren{99 \cdot \sigma }^{2k}\,.
    \]
    Using \cref{fact:sos_triangle}, we continue to bound
    \begin{align*}
        \sum_{i \in S_g} \tau_i w_i &= \sum_{i \in S_g} w_i \tilde{\E} \Iprod{v, \noisetransformation\Paren{\muhat^{(t)} - y_i}}^{2k} \\
        &\leq 4^{k} \cdot \Brac{
        	\frac{1}{{n}} \sum_{i \in [n]} \tilde{\E} \Iprod{v, \noisetransformation\Paren{\bm \eta_i^*}}^{2k} 
        	+ \frac{1}{\Card{S_g}} \sum_{i \in S_g} \tilde{\E} \Iprod{v, \noisetransformation\Paren{\muhat^{(t)} 
        			- y_i} - \noisetransformation\Paren{\mustar - y_i}}^{2k} }\,.
    \end{align*}
    The first term is at most $\sigma^{2k}$.
    For the second term, notice that by \cref{fact:sos_cs} 
    \[
    \sststile{2k}{v} \Iprod{v, \noisetransformation\Paren{\muhat^{(t)} - y_i} - \noisetransformation\Paren{\mustar - y_i}}^{2k} \leq \Norm{v}^{2k} \Norm{\noisetransformation\Paren{\muhat^{(t)} - y_i} - \noisetransformation\Paren{\mustar - y_i}}^{2k} \leq \Norm{v}^{2k} \Norm{\muhat^{(t)} - \mustar}^{2k} \,.
    \]
    Hence, the second term is at most \(\Norm{\muhat^{(t)} - \mustar}^{2k}\).
    By \cref{thm:identifiability} it follows that
\[
\Norm{\muhat^{(T)} - \mustar}
\le 100\cdot\Paren{\frac{\Norm{\frac{1}{n}\sum_{i=1}^n \noisetransformation\Paren{\bm \eta_i^*} }}{\alpha} +
	\frac{\e^{1-1/{(2k)}} \cdot  \Paren{\hat{\fm}_{2k}^{w^t} + \bar{\bm\fm}_{2k}}}{\alpha}}\,.
\]
    Recall that $\bar{\bm\fm}_{2k} \le \sigma$.
Since
    \[
\hat{\fm}_{2k}^w = \max_{v \in \bbS^{d-1}} \Brac{\sum_{i=1}^n w_i \Iprod{\noisetransformation\Paren{\muhat - y_i}, v}^{2k}}^{\frac{1}{2k}} = \max_{v \in \bbS^{d-1}} \Brac{ p^w(v) }^{\frac{1}{2k}}\,,
\]
$\eps^{1-\frac{1}{2k}} \le \frac{\alpha}{1000}$ and
$
\Norm{\frac{1}{n}\sum_{i=1}^n \noisetransformation\Paren{\bm \eta_i^*} } \le \frac{\alpha}{100} 
\sigma\,,
$
we get
     \[
     \sum_{i \in S_g} \tau_i w_i < 0.2\cdot \tilde{\E} p^t(v) + 0.1\cdot \Paren{99 \cdot \sigma }^{2k} < \tilde{\E} p^t(v) / 2 \,.
     \]
\end{proof}
\section{Product Distributions}
\label{sec:product}

As alluded to in the introduction, we will consider the following (slight) generalization of semi-product distributions which allows for even less probability mass around the location.
Setting $\alpha = \tfrac 1 {100}$ recovers the definition from the introduction.
\begin{definition}[Semi-Product Distributions (generalized form)]
	\label{def:semi_product_general}
	Let $\rho,\alpha > 0$.
	We say a distribution $\cD$ over $\R^d$ is an \emph{$(\alpha,\rho)$-semi-product} distribution, if for $\bm \eta \sim \cD$ it holds that
	\begin{enumerate}
		\item For all $j \in [d]$, the distribution of $\bm \eta_j$ is symmetric about 0, \label[property]{property:semi_product_symmetry_full}
		\item For all $j \in [d]$, $\Psymb\Paren{\abs{\bm \eta_j} \leq \rho} \geq \alpha$, \label[property]{property:semi_product_small_noise_full}
		\item The random vectors $\Paren{\sign\paren{\bm \eta_j}}_{j=1}^d$ and $\Paren{\abs{\bm \eta_j}}_{j=1}^d$ are independent, 
		and the random variables $\sign\Paren{\bm \eta_1}, \ldots, \sign\Paren{\bm \eta_d}$ are mutually independent. \label[property]{property:semi_product_independence_full}
	\end{enumerate}
\end{definition}

We will prove the following two (slight) generaliztations of \cref{thm:main_semi_product} and \cref{thm:main_semi_product_quasipoly}.
Setting $\alpha = \tfrac 1 {100}$ recovers these theorems.
\begin{theorem}
	\label{thm:main_semi_product_full}
	Let $\mustar \in \R^d, \e, \rho, \alpha > 0$ and $\cD$ be a $(\alpha,\rho)$-semi-product distribution with location $\mustar$.
	Let $C > 0$ be a large enough absolute constant and assume that $\sqrt[3]{\e} \leq \alpha/C$ and $n \geq C \cdot \tfrac{d\log(d) + \log(1/\delta)}{\alpha^4 \e^2 \log(1/\e)}$.
	Then, there exists an algorithm that, given an $\e$-corrupted sample from $\cD$ and $\alpha$, runs in time $n^{O(1)}$ and outputs $\muhat \in \R^d$ such that with probability at least $1-\delta$ it holds that
	\[
	\Norm{\muhat - \mustar} \leq O\Paren{\rho \cdot \Brac{\sqrt{\frac {d\log(d) + \log(1/\delta)} {\alpha^4n}} + \frac \e {\alpha^{3}} \sqrt{\log(1/\e)}}} \,.
	\]
\end{theorem}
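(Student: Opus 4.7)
The plan is to run a Huber-loss-based filter with the scaled entry-wise loss $\noisepotential(v) = \rho^2\sum_{j=1}^d \Phi(v_j/\rho)$, modified so that the stopping criterion mimics the Gaussian covariance-gap filter of \cite{diakonikolas2019robust,diakonikolas2017being} rather than the plain ``bounded second moment'' filter of \cref{sec:second_moment_filtering}. Under \cref{property:semi_product_independence_full}, the true transformed second-moment matrix $\Sigma^*_\noisetransformation = \E \noisetransformation(\bm \eta^*)\noisetransformation(\bm \eta^*)^\top$ is diagonal (signs are independent of magnitudes and of one another, and each coordinate of $\noisetransformation$ is odd), so we can hope to robustly estimate its diagonal $\gamma^* \in \R^d$ coordinate-by-coordinate. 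A one-dimensional trimmed-variance estimator applied to each coordinate of $\noisetransformation$ on the corrupted samples yields $\hat\gamma$ with $\normi{\hat\gamma - \gamma^*} \le O(\e)$, and the scale $\rho$ can itself be estimated up to constant factors from the entry-wise interquartile range; thus the algorithm need not be given $\rho$ or $\gamma^*$.

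The core loop maintains weights $w \in \cW_{2\e}$, computes $\muhat^w$ by minimizing the coordinate-separable $\lossfunction^w$ via one-dimensional smooth convex optimization, forms $\Sigma_\noisetransformation(w) = \sum_i w_i \noisetransformation(\muhat^w - y_i)\noisetransformation(\muhat^w - y_i)^\top$, and checks whether $\Norm{\Sigma_\noisetransformation(w) - \diag(\hat\gamma)} \le O(\e \log(1/\e))$; if so we output $\muhat^w$, otherwise we down-weight the top-$O(\e)$-fraction of points most aligned with the top eigenvector of $\Sigma_\noisetransformation(w) - \diag(\hat\gamma)$. The identifiability statement supporting this stopping rule is the refinement
\[
\Norm{\muhat^w - \mustar} \lesssim \tfrac{\rho}{\alpha^2}\cdot \e \sqrt{\log(1/\e)} + \tfrac{\rho}{\alpha^3}\cdot\sqrt{\e\,\Paren{\Norm{\Sigma_\noisetransformation(w) - \diag(\hat\gamma)} + \e\log(1/\e)}},
\]
which I would prove by refining \cref{thm:identifiability}: split the H\"older bound on $\sum_i w_i \iprod{\noisetransformation(\muhat^w - y_i),u}^2$ into a diagonal piece $u^\top\diag(\hat\gamma)u$ (controlled by the coordinate-wise sub-Gaussianity of $\noisetransformation(\bm \eta^*)$, which saves a factor of $\sqrt{\log(1/\e)}$ over the crude $L^2$ bound) plus a spectral-gap piece bounded by the stopping threshold.

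The main obstacle is the stability lower bound on the good-sample transformed second moment: uniformly over all weight vectors $w$ the algorithm might reach, one needs
\[
\frac{1}{\Card{S_g \cap \supp(w)}} \sum_{i \in S_g \cap \supp(w)} \noisetransformation(\muhat^w - \bm y_i^*)\noisetransformation(\muhat^w - \bm y_i^*)^\top \succeq (1 - \tilde O(\e))\diag(\gamma^*).
\]
Because $\noisetransformation$ is non-linear, one cannot simply add and subtract as in the quadratic-loss Gaussian case. My plan is: first, invoke \cref{thm:filtering_alg} as a warm-up to conclude $\normi{\muhat^w - \mustar} \le O(\rho\sqrt{\e}/\alpha)$ throughout the loop; second, condition on the magnitudes $\abs{\bm \eta_{ij}^*}$, after which by \cref{property:semi_product_independence_full} the vector $\noisetransformation(\Delta - \bm \eta_i^*)$ has independent sub-Gaussian coordinates for every \emph{deterministic} $\Delta$; third, establish the covariance lower bound for each such fixed $\Delta$ of small entries via matrix Bernstein plus a continuity argument (since $\abs{\Delta_j} \ll \rho$ the conditional covariance deviates from $\diag(\gamma^*)$ by only $O(\e)$ per coordinate); and finally, pass from a fixed $\Delta$ to the data-dependent $\Delta = \muhat^w - \mustar$ via an $\e$-net on $\{\Delta : \normi{\Delta} \le O(\rho\sqrt{\e}/\alpha)\}$ combined with the $1$-Lipschitz continuity of $\noisetransformation$. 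The metric entropy of this net is $O(d\log d)$ at the relevant scale, which together with the matrix-Bernstein tail accounts for the $d\log d + \log(1/\delta)$ factor in the sample complexity and the single $\log d$ overhead over the information-theoretic optimum. Given the stability bound, termination in $O(\e n)$ iterations and the stated error bound follow from the filtering accounting already used in \cref{thm:filtering_alg}.
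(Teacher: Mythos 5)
Your high-level plan — covariance-gap stopping criterion, coordinate-wise estimation of $\diag(\gamma^*)$, identifiability refinement of \cref{thm:identifiability} à la \cref{thm:identifiability_optimal}, and a stability lower bound proved by conditioning on magnitudes, applying matrix-Bernstein for fixed $\Delta$, and an $\e$-net over $\Delta$ — matches the paper's argument (\cref{sec:identifiability_optimal,sec:filtering_optimal,lem:goodness_conditions,lem:subgaussian-lower-bound}) closely. However, there is a quantitative gap in your step 4a that would break the final error bound.

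You propose to control $\normi{\muhat^w-\mustar}$ by a ``warm-up'' call to \cref{thm:filtering_alg}, which yields $\Norm{\muhat^w-\mustar}\le O(\rho\sqrt{\e}/\alpha)$ and hence $\normi{\muhat^w-\mustar}\le O(\rho\sqrt{\e}/\alpha)$. That bound is too weak for the covariance-perturbation step. Conditioning on the magnitudes and using that $\noisetransformation$ is $1$-Lipschitz and bounded by $O(\rho)$, the perturbation of a diagonal entry of the conditional second moment when shifting by $\Delta_j$ is of order $\abs{\Delta_j}\cdot\rho$, \emph{not} $O(\rho^2\e)$ as you assert ``(since $\abs{\Delta_j}\ll\rho$ the conditional covariance deviates \ldots by only $O(\e)$ per coordinate)''. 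With $\abs{\Delta_j}=O(\rho\sqrt{\e}/\alpha)$ this perturbation is $O(\rho^2\sqrt{\e}/\alpha)$, which swamps the stopping threshold $O(\rho^2\e\log(1/\e)/\alpha^3)$ and, propagated through the identifiability bound, gives $\Norm{\muhat-\mustar}\lesssim \rho\,\e^{3/4}/\alpha^{2}$ rather than the claimed $\rho\,\e\sqrt{\log(1/\e)}/\alpha^3$. To fix this you need $\normi{\muhat^w-\mustar}\le O(\rho\e/\alpha)$ \emph{coordinate-wise}, which cannot be obtained from any Euclidean bound on $\muhat^w-\mustar$. The paper gets this exactly because the entry-wise Huber loss is \emph{separable}: $\muhat^w_j$ is the minimizer of the one-dimensional weighted Huber loss $\sum_i w_i\Phi_{2\rho}(\mu_j-y_{ij})$, and a one-dimensional analysis (Hoeffding on $\sum_i w_i\phi(\bm\eta_{ij})$ together with local strong convexity, as in the proof of \cref{lem:strong-convexity-product} with $\tau\asymp\e\sqrt{n}$, union-bounded over the $d$ coordinates) yields $\abs{\muhat^w_j-\mustar_j}=O(\rho\e/\alpha)$ uniformly over $w\in\cW_{2\e}$. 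This separability-exploiting per-coordinate bound is the missing ingredient; once you have it, your $\e$-net is taken over $[-O(\rho\e/\alpha),O(\rho\e/\alpha)]^d$ and the rest of your outline goes through as in \cref{lem:goodness_conditions} and \cref{lem:invariant_filtering_optimal}.
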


\begin{theorem}
	\label{thm:main_semi_product_quasipoly_full}
	Let $\mustar \in \R^d, \e, \alpha, \rho > 0$ and $\cD$ be a $(\alpha,\rho)$-semi-product distribution with location $\mustar$.
	Let $C > 0$ be a large enough absolute constant and assume that $\varepsilon \leq \alpha/C$ and $n \geq \tfrac C {\alpha^4} \cdot d^{C \log\Paren{1/\e}}$.
	Then, there exists an algorithm that, given an $\e$-corrupted sample from $\cD$ and $\alpha$, runs in time $n^{O(1)}$ and outputs $\muhat \in \R^d$ such that with probability at least $1-\delta$ it holds that
	\[
	\Norm{\muhat - \mustar} \leq O\Paren{\rho \cdot \Brac{\sqrt{\frac {d + \log(1/\delta)} {\alpha^4 n}} + \frac \e {\alpha^2} }} \,.
	\]
\end{theorem}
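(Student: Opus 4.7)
The plan is to apply \cref{thm:filtering_alg_higher_moments} with the scaled entry-wise Huber potential $F(x) := \rho^2 \sum_{j=1}^d \Phi(x_j/\rho)$, whose gradient $f(x)_j = \rho\,\phi(x_j/\rho)$ is bounded by $\rho$, symmetric around the origin, and $1$-Lipschitz; the sum-of-squares degree will be set to $k = \ell = \lceil C_0 \log(1/\varepsilon)\rceil$ for a sufficiently large absolute constant $C_0$, so that $\varepsilon^{1-1/(2k)} = \Theta(\varepsilon)$. What remains is to verify the three clauses of \cref{assump:higher_order_good_distribution} with probability at least $1-\delta$ for an appropriate $\sigma = O(\rho/\alpha)$ (up to factors of $\sqrt{\log(1/\varepsilon)}$ that get absorbed into the constants in the final error bound), under the stated sample complexity.

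The first clause $\alpha\|\muhat^w-\mustar\|^2 \le 10\,\langle \nabla \lossfunction^w(\mustar),\mustar-\muhat^w\rangle$ follows from local strong convexity of the reweighted empirical Huber loss. The Hessian of $F$ at $x$ is the diagonal matrix $\mathrm{diag}(\mathbbm{1}[|x_j|\le\rho])$, so by \cref{property:semi_product_small_noise_full} the expected Hessian at $\mustar-\bm\eta^*$ dominates $\alpha\,\Id_d$. A coordinate-wise Chernoff bound, a net argument over $\mu$ in an $O(\rho/\alpha)$-ball around $\mustar$, and control over $w\in\cW_{2\varepsilon}$ together yield $\nabla^2 \lossfunction^w(\mu) \succeq (\alpha/2)\,n\,\Id_d$ uniformly, which upgrades to the required quadratic inequality via a standard Bregman-divergence argument (restricting $\muhat^w$ to this ball through a preliminary coarse coordinate-wise median estimate). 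The second clause follows from conditioning on $(|\bm\eta_{i,j}^*|)$: by \cref{property:semi_product_symmetry_full,property:semi_product_independence_full} the coordinates of $f(\bm\eta_i^*)$ become independent zero-mean random variables bounded by $\rho$, so sub-Gaussian concentration gives $\|\tfrac{1}{n}\sum_i f(\bm\eta_i^*)\| \le O(\rho\sqrt{(d+\log(1/\delta))/n})$ with probability $\ge 1-\delta/3$, which is below $\alpha\sigma/100$ for our choice of $\sigma$ and $n$.

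The crux is the third, SoS-moment clause. Conditional on the absolute values, each $f(\bm\eta_i^*)$ has the form $\rho\,D_i\,\bm\xi_i$ for a (random) diagonal matrix $D_i$ with entries in $[-1,1]$ and an independent Rademacher vector $\bm\xi_i$; the sum-of-squares Khintchine inequality then produces, in formal variables $v$, a degree-$2k$ SoS proof of $\E_{\bm\xi}\langle f(\bm\eta_i^*),v\rangle^{2k} \le (Ck)^k\rho^{2k}\|v\|^{2k}$ almost surely. Averaging over $i$ and upgrading this \emph{expected} certificate to an \emph{empirical} one by controlling the deviation of the symmetric $2k$-tensor $\tfrac{1}{n}\sum_i f(\bm\eta_i^*)^{\otimes 2k}$ from its conditional mean --- via matrix Bernstein in the $d^{O(k)}$-dimensional space of symmetric tensors --- yields the SoS bound whenever $n \gtrsim d^{Ck}\log(1/\delta)$. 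This last step is the main obstacle and is precisely what forces $n \ge d^{\Theta(\log(1/\varepsilon))}$; any looser concentration would blow up the exponent and destroy the quasi-polynomial sample complexity. Once all three clauses hold, \cref{thm:filtering_alg_higher_moments} produces an estimator with the stated error in time $n^{O(1)}\cdot d^{O(k)} = n^{O(1)}$.
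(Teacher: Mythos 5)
Your high-level plan matches the paper exactly: apply \cref{thm:filtering_alg_higher_moments} at SoS degree $2k = \Theta(\log(1/\e))$ to the entry-wise Huber loss, and verify the three clauses of \cref{assump:higher_order_good_distribution}. Clauses 1 and 2 are handled in essentially the same spirit as the paper's \cref{lem:local-strong-convexity-product,lem:strong-convexity-product,lem:gradient-bound-product} (the paper avoids a net over $\mu$ and a preliminary median estimate via a convexity bootstrap in \cref{lem:strong-convexity-product}, but the underlying idea is the same). The matrix Bernstein step you use to pass from population to empirical moments is exactly the paper's \cref{lem:subsampling_certifiable_phi_moments}. So the structure is right.

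The genuine gap is in your treatment of the $\sqrt{k}$ factor in clause 3. Your Khintchine-based bound produces $\sigma = O(\rho\sqrt{k}) = O(\rho\sqrt{\log(1/\e)})$, and you then write that this factor ``gets absorbed into the constants in the final error bound.'' That is not correct: $\sqrt{\log(1/\e)}$ diverges as $\e \to 0$, so it is a growing factor, not a constant. Plugging $\sigma = O(\rho\sqrt{\log(1/\e)})$ into \cref{thm:filtering_alg_higher_moments} with $k = \Theta(\log(1/\e))$ yields error $O(\rho\,\e\sqrt{\log(1/\e)}/\alpha)$, not the claimed $O(\rho\,\e/\alpha^2)$. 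This is a meaningful loss: \cref{thm:main_semi_product_full} already achieves error $O(\rho\,\e\sqrt{\log(1/\e)})$ with only polynomially many samples, and the entire purpose of \cref{thm:main_semi_product_quasipoly_full} is to shave the $\sqrt{\log(1/\e)}$ by paying quasi-polynomial sample complexity. With the Khintchine constant in place, your argument would give a quasi-polynomial-sample algorithm no better (up to $\alpha$ factors) than the polynomial-sample one. The paper's \cref{lem:subgaussianity-product} instead conditions on $\abs{\bm\eta_{ij}}$, expands $\E\iprod{\sign(\bm\eta),\, \text{abs}(f(\bm\eta))\circ v}^{2k}$ over even multi-indices, and uses the deterministic coordinate-wise bound $\abs{f(\bm\eta)_j}\le 2\rho$ to claim $(2k,2k)$-certifiable $O(\rho)$-bounded $f$-moments with \emph{no} $\sqrt{k}$ factor. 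To prove the theorem as stated you would have to reproduce a bound of that strength; the black-box Rademacher Khintchine constant $(2k-1)!!^{1/(2k)} \asymp \sqrt{k}$ is not strong enough.
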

We remark that if, the algorithm receives $\rho$ as input, the error guarantee improves to
\[
\Norm{\muhat - \mustar} \leq O\Paren{\rho \cdot \Brac{\sqrt{\frac {d + \log(1/\delta)} {\alpha^2 n}} + \frac \e {\alpha} }} \,.
\]
Note that for the setting of $\alpha = 1/100$ discussed in the introduction, both guarantees are the same (up to constant factors).
Further, if $\rho$ is given as input, it is not necessary to receive $\alpha$ as input.
We emphasize, that $\rho$ need not be known, and can be estimated from the corrupted samples (given $\alpha$).

We also remark, that our upper bound on the fraction of corruptions is qualitatively tight, since $\e$ can be at most roughly $\alpha$:
Consider a mixture distribution that with probability $1-\alpha$ outputs a sample from $N(\mu^*,\sigma^2 \Id_d)$ for $\sigma^2$ arbitrarily large and with probability $\alpha$ outputs a sample from $N(\mu^*,\Id_d)$.
This is $(\alpha,\Theta(1))$-semi-product.
Yet, if an adversary changes the $N(\mu^*,\Id_d)$ component to $N(\mu^*,\sigma^2 \Id_d)$, we cannot hope to achieve error better than $\sigma^2 \sqrt{\tfrac {d + \log(1/\delta)} n}$, which can be arbitrarily large.
We did not attempt to optimize the range of $\e$ allowed for our algorithms.

In this section we prove that $(\alpha,\rho)$-semi-product distributions satisfy \cref{assump:good_distribution} 
with $\sigma = O(\rho)$  and 
with $F$ equals to the entry-wise Huber loss function with parameter $2\rho$:
\[
F(x)
= \sum_{j=1}^d \Phi_{2\rho}\Paren{x_j}
\]
with probability $1-\delta$ as long as $n \gtrsim \frac{d + \log(1/\delta)}{\alpha^4}$.
Moreover, if in addition $n \gtrsim \tfrac 1 {\alpha^4} \cdot d^k \cdot\log\Paren{d/\delta}$, 
then \cref{assump:higher_order_good_distribution} is also satisfied.
We can use this to prove \cref{thm:main_semi_product_quasipoly} as follows:
Let $\bm \eta_1^*, \ldots \bm \eta_n^*$ follow an $(\alpha,\rho)$-semi-product distribution.
We will see in \cref{lem:gradient-bound-product} that for any $n$, it holds with probability at least $1-\delta$ that $\Norm{\tfrac 1 n \sum_{i=1}^n f(\bm \eta_i^*)} \leq \rho \cdot \sqrt{\tfrac {d + \log(1/\delta)} n}$.
Hence, by \cref{thm:filtering_alg_higher_moments} (or \cref{alg:filtering_alg} for $k = 1$) it follows that as long as $n \gtrsim \tfrac 1 {\alpha^4} \cdot d^k \log(d/\delta)$ (or $n \gtrsim \frac{d + \log(1/\delta)}{\alpha^4}$ for $k = 1$) our estimator achieves error
\[
	O\Paren{\frac{\Norm{\frac{1}{n}\sum_{i=1}^n \noisetransformation\Paren{\bm \eta_i^*} }}{\alpha}
    	+ \frac{\e^{1-\frac{1}{2k}} }{\alpha}\cdot \sigma} = O\Paren{\rho \cdot \Brac{\sqrt{\frac {d+\log(1/\delta)}{\alpha^4 n}} + \frac{\e^{1-\tfrac 1 {2k}}}{\alpha}}} \,.
\]
Letting $k = O(\log(1/\e))$ proves \cref{thm:main_semi_product_quasipoly}.
Note that similarly to \cref{thm:main_elliptical}, for $1 \leq k \leq O(\log(1/\e))$ we get a sequence of algorithms interpolating between error $\sqrt{\e}$ and $\e$ (using roughly $d^k$ samples).

Note that $f = \nabla F$ is indeed $1$-Lipschitz since the derivative of the Huber loss is Lipschitz.
In the next to lemmas we prove the first assumption from \cref{assump:good_distribution}.

\begin{lemma}
	\label{lem:local-strong-convexity-product}
	Let $w \in \cW_{2\e}$ for $\e \lesssim \alpha $. 
	Suppose that $n \gtrsim \log(d) / \alpha$.
Then with probability $1-\exp\Paren{-\Omega\Paren{\alpha n}}$ for all $u \in \R^d$ such that $\normi{u} \le \rho$,
\[
\lossfunction^w\Paren{\mustar+ u} - \lossfunction^w\Paren{\mustar} -
\Iprod{\nabla{\lossfunction^w}\Paren{\mustar}, u} \ge \frac{\alpha}{4}\cdot \norm{u}^2\,.
\]
\end{lemma}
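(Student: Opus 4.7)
The plan is to exploit the separable structure of $F$ together with the ``quadratic zone'' of the Huber penalty. Writing $\xi_{i,j} \coloneqq \mu^*_j - y_{i,j}$ (which equals $-\bm\eta^*_{i,j}$ when $i \in S_g$), the Bregman-type quantity decomposes as
\[
\lossfunction^w(\mustar + u) - \lossfunction^w(\mustar) - \Iprod{\nabla\lossfunction^w(\mustar), u} = \sum_{i=1}^n w_i \sum_{j=1}^d B_{i,j}(u_j),
\]
where $B_{i,j}(t) \coloneqq \Phi_{2\rho}(\xi_{i,j} + t) - \Phi_{2\rho}(\xi_{i,j}) - \phi_{2\rho}(\xi_{i,j})\,t$. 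Convexity of $\Phi_{2\rho}$ gives $B_{i,j}(u_j) \ge 0$ for every $i,j$, so we may freely discard terms and lower-bound by summing only over a favorable subset.

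The key quantitative input is that $\Phi_{2\rho}$ has second derivative identically $1$ on $[-2\rho,2\rho]$. Hence, whenever $|\xi_{i,j}| \le \rho$ and $|u_j| \le \rho$, both $\xi_{i,j}$ and $\xi_{i,j}+u_j$ lie in $[-2\rho,2\rho]$, and Taylor expansion yields $B_{i,j}(u_j) = u_j^2/2$. Restricting to $i \in S_g$ (so $\xi_{i,j} = -\bm\eta^*_{i,j}$) and to samples where $|\bm\eta^*_{i,j}| \le \rho$, we obtain
\[
\lossfunction^w(\mustar + u) - \lossfunction^w(\mustar) - \Iprod{\nabla\lossfunction^w(\mustar), u} \;\ge\; \tfrac{1}{2}\sum_{j=1}^d u_j^2 \cdot \Bigl(\sum_{i \in S_g} w_i\,\mathbf{1}\bigl[|\bm\eta^*_{i,j}| \le \rho\bigr]\Bigr).
\]
It thus suffices to show that for every $j \in [d]$, the weighted count $\sum_{i \in S_g} w_i\,\mathbf{1}[|\bm\eta^*_{i,j}| \le \rho]$ is at least $\alpha/2$.

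This is where I use concentration. By \cref{property:semi_product_small_noise_full}, for each fixed $j$, $\mathbf{1}[|\bm\eta^*_{i,j}| \le \rho]$ is a Bernoulli with mean $\ge \alpha$. A standard Chernoff bound gives $\frac{1}{n}\sum_{i=1}^n \mathbf{1}[|\bm\eta^*_{i,j}|\le \rho] \ge \tfrac{9}{10}\alpha$ with probability $1 - \exp(-\Omega(\alpha n))$, and a union bound over $j \in [d]$ retains this event whenever $n \gtrsim \log(d)/\alpha$. To pass from the uniform weights $\tfrac{1}{n}\mathbf{1}$ to arbitrary $w \in \cW_{2\e}$, note that
\[
\Bigl|\sum_{i=1}^n \bigl(w_i - \tfrac{1}{n}\bigr)\mathbf{1}[|\bm\eta^*_{i,j}| \le \rho]\Bigr| \;\le\; \Bignormo{w - \tfrac{1}{n}\mathbf{1}} \;\le\; 2\e,
\]
and removing the contribution of $i \in S_b$ costs at most $|S_b|/n \le \e$. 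Combining, each coordinatewise count is at least $\tfrac{9}{10}\alpha - 3\e \ge \alpha/2$ under the hypothesis $\e \lesssim \alpha$. Substituting back yields the claimed $\tfrac{\alpha}{4}\snorm{u}$ lower bound.

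The argument is essentially mechanical; no step is really an obstacle. The only point requiring mild care is that the assumption $\normi{u} \le \rho$ is used precisely to keep both arguments of $\Phi_{2\rho}$ inside the quadratic zone on the favorable samples — this is what lets us convert the global, nonsmooth Huber loss into a clean coordinatewise quadratic lower bound. The extra factor of $2$ in the Huber parameter (i.e.\ using $\Phi_{2\rho}$ rather than $\Phi_\rho$) is exactly what buys this alignment between the $|\bm\eta^*_{i,j}| \le \rho$ event and the condition $|u_j| \le \rho$.
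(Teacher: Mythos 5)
Your proof is correct and follows essentially the same route as the paper's: decompose the Bregman difference coordinatewise, discard unfavorable terms via convexity of the Huber penalty, exploit that $\Phi_{2\rho}$ is exactly quadratic on $[-2\rho,2\rho]$ for the good samples with $|\bm\eta^*_{i,j}|\le\rho$, and finish with a per-coordinate Chernoff bound (losing $O(\e)$ from the weight perturbation in $\cW_{2\e}$ and from $S_b$) plus a union bound over $j\in[d]$. Your writeup is slightly more explicit about why the condition $\normi{u}\le\rho$ and the choice of Huber parameter $2\rho$ align to keep both arguments in the quadratic zone, but the argument is the same.
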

\begin{proof}
	Let $j \in [d]$ and $\zeta_i = \mu - y_i$. 
	\begin{align*}
	\lossfunction_j^w\Paren{\mustar_j + u_j} - \lossfunction_j^w\Paren{\mustar_j} -
	 \Paren{\lossfunction_j^w}'\Paren{\mustar_j} \cdot u_j 
	&= \sum_{i=1}^n w_i \Brac{\Phi\Paren{\zeta_{ij} + u_j} 
		- \Phi\Paren{\zeta_{ij} } 
		- u_j \cdot \phi\Paren{\zeta_{ij} }}\,.
	\end{align*}
Let $M$ be the set of uncorrupted samples such that 
$\abs{\zeta_{ij}} = \abs{\bm \eta_{ij}}\le h/2$. Note that for $i\in M$, 
 $\Phi\Paren{\zeta_{ij} + u_j}  = \frac{1}{2}\abs{\zeta_{ij} + u_j}^2$ and $\Phi\Paren{\zeta_{ij} + u_j}  =  \frac{1}{2}\abs{\zeta_{ij} }^2$.
Hence
\[
\sum_{i\in M} w_i \Brac{\Phi\Paren{\zeta_{ij} + u_j} 
	- \Phi\Paren{\zeta_{ij} } 
	- u_j \cdot \phi\Paren{\zeta_{ij} }}
= \frac{1}{2}\sum_{i\in M} w_i u_j^2\,.
\]
And by convexity,
\[
\sum_{i\notin M} w_i \Brac{\Phi\Paren{\Norm{\zeta_i + u}} - \Phi\Paren{\Norm{\zeta_i}} - 
	\Iprod{{\noisetransformation}\Paren{\zeta_i}, u}} \ge 0\,.
\]

Since $w \in \cW_{2\e}$,
\[
\sum_{i \in M} \Abs{w_i - \frac{1}{n}} \le 2\e\,.
\]
Hence
\[
\sum_{i \in M} w_i \ge \frac{\card{M}}{n} - 2\e\,.
\]
By a Chernoff bound, 
\[
\card{M} \ge 0.9\alpha - \e \ge \alpha / 2
\]
with probability $1-\exp\Paren{-\alpha n}$. The result follows from a union bound over all $j\in [d]$.  
\end{proof}

\begin{lemma}
	\label{lem:strong-convexity-product}
	Let $w \in \cW_{2\e}$ for $\e \lesssim \alpha$,
	and suppose that $n\gtrsim \log(d) /\alpha^2$.
	Then with probability $1-\exp\Paren{-\Omega\Paren{\alpha^2 n}}$,
	\[
		{\Iprod{\nabla \lossfunction^w \Paren{\mustar}, \mustar-\muhat^w }}\ge \frac{\alpha}{4}\norm{\muhat^w - \mustar}^2\,.
	\]
\end{lemma}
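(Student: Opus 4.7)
The plan is to reduce the claim to Lemma~\ref{lem:local-strong-convexity-product}: set $u \coloneqq \muhat^w - \mustar$ and observe that, once $\normi{u} \le \rho$ has been established, Lemma~\ref{lem:local-strong-convexity-product} gives $\lossfunction^w(\mustar + u) - \lossfunction^w(\mustar) \ge \Iprod{\nabla \lossfunction^w(\mustar), u} + \tfrac{\alpha}{4}\norm{u}^2$. Combined with the optimality $\lossfunction^w(\muhat^w) \le \lossfunction^w(\mustar)$ this immediately yields the desired inequality $\Iprod{\nabla \lossfunction^w(\mustar), \mustar - \muhat^w} \ge \tfrac{\alpha}{4}\norm{u}^2$. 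Thus the entire argument reduces to establishing $\normi{\muhat^w - \mustar} \le \rho$ with the advertised probability.

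The key enabler is that the entry-wise Huber loss is separable: $\lossfunction^w(\mu) = \sum_{j=1}^d \Psi_j(\mu_j)$ with $\Psi_j(t) \coloneqq \sum_i w_i \Phi(t - y_{ij})$. Hence the minimizer is computed componentwise, $\muhat^w_j = \argmin_t \Psi_j(t)$, and since $\Psi_j$ is convex it is enough to show $\Psi'_j(\mustar_j + \rho) > 0$ (and, by symmetry, $\Psi'_j(\mustar_j - \rho) < 0$) to conclude that $|\muhat^w_j - \mustar_j| < \rho$. A union bound over $j \in [d]$ then yields the entry-wise bound.

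The main technical step, and the place the symmetry assumption is really used, is to lower bound $\Psi'_j(\mustar_j + \rho) = \sum_i w_i \phi(\rho - (y_{ij} - \mustar_j))$. I would split the sum into three buckets. First, for $i \in M$ (uncorrupted with $\abs{\bm \eta^*_{ij}} \le \rho$), the argument of $\phi$ lies in $[0, 2\rho]$ and $\phi$ acts linearly there, so the contribution equals $\sum_{i \in M} w_i (\rho - \bm \eta^*_{ij})$. The deterministic part $\rho \sum_{i \in M} w_i \ge \rho(\card{M}/n - 2\e)$ is at least $\rho\alpha/4$ by the Chernoff lower bound $\card{M} \ge \alpha n/2$ combined with $w \in \cW_{2\e}$; the random remainder $\sum_{i \in M} w_i \bm \eta^*_{ij}$ has magnitude at most $O(\rho(\sqrt{\log(d)/n} + \e))$ by Hoeffding over the uniform signs supplied by \cref{property:semi_product_independence_full}. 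Second, for $i \in S_g \setminus M$, a short case analysis exploiting \cref{property:semi_product_independence_full} gives $\E_\sigma[\phi(\rho - \sigma \abs{\bm \eta^*_{ij}})] \ge 0$ (the $\sigma = -1$ branch saturates at $+2\rho$ while the $\sigma = +1$ branch is bounded below by $-2\rho$), so another Hoeffding bound over the signs keeps this contribution above $-O(\rho(\sqrt{\log(d)/n} + \e))$. Third, the corrupted samples contribute at most $2\rho \sum_{i \in S_b} w_i \le 2\rho\e$ in absolute value, using only $\abs{\phi} \le 2\rho$.

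Summing the three pieces gives $\Psi'_j(\mustar_j + \rho) \ge \rho\alpha/4 - O(\rho(\sqrt{\log(d)/n} + \e))$, which is strictly positive under the hypotheses $\e \lesssim \alpha$ and $n \gtrsim \log(d)/\alpha^2$. A union bound over $j \in [d]$ together with the analogous bound at $\mustar_j - \rho$ delivers $\normi{\muhat^w - \mustar} < \rho$ with probability $1 - \exp(-\Omega(\alpha^2 n))$; crucially, the only randomness invoked (Chernoff on $\card{M}$ and Hoeffding over the signs) is independent of the particular $w \in \cW_{2\e}$, so a single good sample event suffices. The hardest aspect is precisely the sign-conditioning argument on the $S_g \setminus M$ bucket, where it is the full strength of \cref{property:semi_product_independence_full}, and not just the marginal symmetry of each coordinate, that prevents a potentially large bias from $\{\abs{\bm \eta^*_{ij}} > \rho\}$-samples from spoiling the lower bound on $\Psi'_j(\mustar_j + \rho)$.
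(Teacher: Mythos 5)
Your proof is correct and reduces to the same central lemma (\cref{lem:local-strong-convexity-product}) via the same intermediate claim $\normi{\muhat^w - \mustar} < \rho$, but you reach that intermediate claim by a genuinely different argument. The paper rescales: it writes $\muhat_j(w) - \mustar_j = t\,u_j'$ with $|u_j'|\le\rho$, uses convexity to see that the rescaled point $\mustar_j + u_j'$ has loss at most $\lossfunction_j^w(\mustar_j)$, applies local strong convexity \emph{at $\mustar_j$ toward $u_j'$}, and thereby extracts $|u_j'|\le\frac{4}{\alpha}\bigl|(\lossfunction_j^w)'(\mustar_j)\bigr|$. The one quantity to control is then $(\lossfunction_j^w)'(\mustar_j)$, which is a weighted sum of $\phi(\bm\eta^*_{ij})$ (plus an $O(\rho\e)$ corruption term); crucially $\phi(\bm\eta^*_{ij})$ is already mean zero by coordinate symmetry, so a single Hoeffding bound over the Rademacher signs closes the argument. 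Your approach instead invokes the first-order condition for the componentwise minimizer: show $\Psi_j'(\mustar_j + \rho) > 0$ and $\Psi_j'(\mustar_j - \rho) < 0$. This is cleaner conceptually (monotone derivative plus a sign check) but lands you at a quantity, $\sum_i w_i\,\phi(\rho - (y_{ij}-\mustar_j))$, which is \emph{not} mean zero, so you need the three-bucket decomposition: the $M$ bucket contributes a deterministic positive shift $\rho\sum_{i\in M}w_i \gtrsim \rho\alpha$; the $S_g\setminus M$ bucket requires the observation $\E_\sigma\phi(\rho - \sigma|\bm\eta^*_{ij}|)\ge 0$; and $S_b$ is absorbed as $O(\rho\e)$. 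What the paper's route buys is that it only ever needs the mean-zero property at $\mustar$, avoiding the case analysis you do in bucket two; what your route buys is that it sidesteps the $(u_j', t)$ rescaling bookkeeping and makes the monotonicity picture explicit. Both are valid under the stated hypotheses $\e\lesssim\alpha$, $n\gtrsim\log(d)/\alpha^2$.

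Two small points worth tightening. First, your sentence that the Hoeffding step is ``independent of the particular $w$'' is slightly loose as written: the weighted sum $\sum_{i\in M}w_i\bm\eta^*_{ij}$ does depend on $w$. The fix, which you evidently intend (the $+\e$ in your $O(\rho(\sqrt{\log d/n}+\e))$ bound gives it away, and the paper does the same), is to replace $w_i$ by $1/n$ at cost $\le 2\e\cdot\sup|\phi|$ and then Hoeffding the uniform average; this should be said explicitly. Second, when invoking Hoeffding you should be explicit that you first condition on the magnitudes $\{|\bm\eta^*_{ij}|\}_i$ (which also fixes $M$), since it is only after that conditioning that \cref{property:semi_product_independence_full} delivers independent Rademacher signs; the bound then holds unconditionally after integrating out.
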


\begin{proof}
	Fix $j \in d$ and let $u_j = \mustar_j - \muhat_j(w)$.
	Let $u_j'$ and $t$ be such that
	\begin{align*}
	\abs{u_j'} &\leq \rho\,,\\ 
	\muhat_j(w) &= \mustar_j + t\cdot u_j'\,.
	\end{align*}
	It follows that $t = \max\Set{1,\abs{u_j}}$.
	Clearly, $\muhat_j(w)$ is the (unique) minimizer of $\lossfunction_j^w$.
	Let $\muhat_j' = \mustar_j + u_j'$, by convexity of $\lossfunction_j^w$ it holds that $\lossfunction_j^w(\muhat_j') \leq \lossfunction_j^w(\mustar_j)$.
	Since $\abs{u_j'} \leq \rho$ it follows by \cref{lem:local-strong-convexity-product} that 
	\[
	\lossfunction_j^w \Paren{\mustar_j} \geq \lossfunction_j^w\Paren{\muhat_j'} \geq \lossfunction_j^w \Paren{\mustar_j}  + \Paren{\lossfunction_j^w}'\Paren{\mustar_j} \cdot \paren{-u_j'} 
	+ \frac{\alpha}{4} \Paren{u_j'}^2
	\]
	with probability at least $1-\exp\Paren{-\Omega\Paren{\alpha n}}$.
	Rearranging yields $$\abs{u_j'} \le 
	\frac{4}{\alpha} \cdot \Paren{\lossfunction_j^w}'\Paren{\mustar_j} \,.$$
	We next examine
	\begin{align*}
	\Paren{\lossfunction_j^w}'\Paren{\mustar_j} &=
	\sum_{i=1}^n w_i \phi\Paren{\Paren{ y_i}_j -\mu_j^*} \\
	&= \sum_{i=1}^n w_i \phi\Paren{\bm \eta_{ij}} 
	+ \sum_{i \in S_b}^n w_i \Brac{\phi\Paren{\Paren{ y_i}_j -\mu_j^*} - 
		\phi\Paren{\Paren{\bm y_i^*}_j -\mu_j^*}} \,.
	\end{align*}
	By Hoeffding's inequality, 
	\[
	\Abs{\sum_{i=1}^n w_i \phi\Paren{\bm \eta_{ij}}}
	\le 	\Abs{\frac{1}{n}\sum_{i=1}^n \phi\Paren{\bm \eta_{ij}}} + 2\rho\eps
	\le 2\rho\tau/\sqrt{n}+ 4\rho\eps
	\]
	with probability at least $1-\exp\paren{-\tau^2/2}$. 
	
	The second term can be bounded as follows:
	\[   
	\sum_{i\in S_b} w_i \cdot\Brac{\phi\Paren{\mu_j^* - \Paren{ y_i}_j} - \phi\Paren{\mu_j^* - \Paren{\bm y_i^*}_j}} \le  4\rho \e\,,
	\]
	where we used that for all $x,y \in \R$ it holds that $\abs{\phi(x) - \phi(y)} \leq 2\rho$ 
	and $ \sum_{i\in S_b} w_i  \leq 2\e$.
	Putting everything together we obtain 
	\[
	\abs{u_j'} \le \rho\frac{2\tau}{\alpha \sqrt{n}} + \frac{4\rho\e}{\alpha}
	\]
	with probability $1-\exp\paren{-\tau^2/2}$. For $\tau = \alpha\sqrt{n}/10$, $\abs{u_j'} < \rho$.
	Hence, $t = 1$ since otherwise $$\abs{u_j} = \abs{\muhat_j(w) - \mustar_j} = t \cdot \abs{u_j'} < \abs{u_j} \,.$$
	It follows that $u_j' = u_j$ and thus $\abs{\muhat_j(w) - \mustar_j} < \rho$.
	
	By \cref{lem:local-strong-convexity-product}
	\[
		\lossfunction^w \Paren{\mustar} \geq \lossfunction^w\Paren{\muhat} 
		\geq \lossfunction^w \Paren{\mustar}  +\Iprod{\nabla {\lossfunction^w}\Paren{\mustar}, -u}
	+ \frac{\alpha}{4} \Norm{u}^2\,,
	\]
	Hence
	\[
	\Iprod{\nabla {\lossfunction^w}\Paren{\mustar},u} \ge \frac{\alpha}{4} \Norm{u}^2\,.
	\]
\end{proof}

In the following lemma we show that Assumption 2 of \cref{assump:good_distribution} is satisfied for $\sigma = O(\rho)$ 
with probability $1-\delta$ as long as $n \gtrsim \frac{d + \log(1/\delta)}{\alpha^2}$.
\begin{lemma}\label{lem:gradient-bound-product}
With probability at least $1-\delta$,
\[
    	\Norm{\frac{1}{n}\sum_{i=1}^n \noisetransformation\Paren{\bm \eta_i} } \le 10\rho\sqrt{\frac{d + \log\Paren{1/\delta}}{n}}\,.
\]
\end{lemma}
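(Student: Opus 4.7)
The plan is to exploit the sign/magnitude decomposition from \cref{property:semi_product_independence_full} to reduce the claim to a concentration bound for a sum of independent Rademacher vectors. Recall that $f = \nabla F$ is the entry-wise Huber gradient with parameter $2\rho$, so $f(\bm\eta_i)_j = \phi_{2\rho}(\bm\eta_{ij})$ with $|\phi_{2\rho}(x)| \le 2\rho$ and $\phi_{2\rho}$ odd. Hence I can write
\[
    f(\bm\eta_i)_j \;=\; \sign(\bm\eta_{ij}) \cdot \phi_{2\rho}(|\bm\eta_{ij}|) \;=:\; \sigma_{ij}\, a_{ij},
\]
where $a_{ij} \in [0, 2\rho]$ and $\sigma_{ij} \in \{\pm 1\}$. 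By \cref{property:semi_product_independence_full}, the signs $\sigma_{i1},\ldots,\sigma_{id}$ are mutually independent and independent of $(a_{i1},\ldots,a_{id})$; by \cref{property:semi_product_symmetry_full}, each $\sigma_{ij}$ is Rademacher. Since the samples $\bm\eta_i$ are i.i.d., the full collection $\{\sigma_{ij}\}_{i\in[n],\,j\in[d]}$ is mutually independent Rademacher and jointly independent of the magnitudes $\{a_{ij}\}$.

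I would then condition on the magnitudes $\{a_{ij}\}$ and let $Z := \sum_{i=1}^n f(\bm\eta_i)$. For any fixed unit vector $v \in \mathbb S^{d-1}$, the scalar
\[
    \Iprod{Z, v} \;=\; \sum_{i=1}^n \sum_{j=1}^d v_j\, a_{ij}\, \sigma_{ij}
\]
is a weighted sum of independent Rademacher variables, and hence sub-Gaussian with variance proxy $\sum_{i,j} v_j^2 a_{ij}^2 \le 4\rho^2 n$. Hoeffding's inequality gives $\Pr[\,|\Iprod{Z,v}| \ge s\,] \le 2\exp(-s^2/(8\rho^2 n))$, which is independent of the conditioning and hence unconditional.

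Next I would take a $\tfrac{1}{2}$-net $\mathcal N$ of $\mathbb S^{d-1}$ of size $|\mathcal N| \le 5^d$, use $\|Z\| \le 2 \max_{v \in \mathcal N} \Iprod{Z, v}$, and union-bound:
\[
    \Pr\Paren{\|Z\| \ge 2s} \;\le\; 2 \cdot 5^d \cdot \exp\Paren{-\frac{s^2}{8\rho^2 n}}.
\]
Setting $s^2 = 8\rho^2 n\bigl(d\log 5 + \log(2/\delta)\bigr)$ makes the right-hand side at most $\delta$, yielding $\|Z\| \le O\bigl(\rho\sqrt{n(d+\log(1/\delta))}\bigr)$ with probability at least $1-\delta$. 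Dividing by $n$ gives $\|\tfrac{1}{n}\sum_i f(\bm\eta_i)\| \le O\bigl(\rho\sqrt{(d+\log(1/\delta))/n}\bigr)$; a careful bookkeeping of the constants (e.g.\ using a finer net and the sharper Hoeffding constant $2\rho^2 n$) absorbs the logarithmic factors and yields the stated constant $10$.

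There is no real obstacle here beyond carefully invoking \cref{property:semi_product_independence_full}: the only thing to check is that the within-sample mutual independence of the signs combined with across-sample i.i.d.\ sampling indeed produces jointly mutually independent Rademachers across all $nd$ coordinates, which is immediate. Everything else is a standard sub-Gaussian $\epsilon$-net argument.
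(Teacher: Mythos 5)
Your proof is correct and follows the same route as the paper's: decompose $f(\bm\eta_i)_j = \sign(\bm\eta_{ij})\cdot\phi_{2\rho}(|\bm\eta_{ij}|)$, condition on the magnitudes, apply Hoeffding to the resulting weighted Rademacher sum for each direction $v$, and finish with a $\tfrac12$-net and union bound. Your write-up is if anything a bit more careful about the conditioning and the variance proxy than the paper's (which has a minor notational slip writing $u_j\cdot|\bm\eta_{ij}|$ where it means $u_j\cdot\phi_{2\rho}(|\bm\eta_{ij}|)$), but there is no substantive difference.
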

\begin{proof} 
	Let  $u\in \R^d$ be some fixed vector.
	Since $\sign\Paren{\bm\eta}$ is independent of
	$\text{abs}\Paren{\bm\eta}$, 
	random variables $ \tilde{\bm u}_{ij} = u_j\cdot \abs{\bm\eta_{ij}}$ 
	are independent of $\sign\Paren{\bm\eta}$.
	By Hoeffding's inequality
		\[
	\Abs{\Iprod{\frac{1}{n}\sum_{i=1}^n \noisetransformation\Paren{{\bm \eta_i}}, u}}
	=\Abs{\frac{1}{n}\sum_{i=1}^n\sum_{j=1}^d 
		\sign\Paren{\bm \eta_{ij}} \bm \tilde{\bm u}_{ij}}
	\le 2\rho\cdot \norm{u} \cdot \tau/\sqrt{n}
	\]
	with probability at least $1-\exp\paren{-\tau^2/2}$. Let $\cN_{1/2}$ be a $1/2$-net of size $6^d$ in the Euclidean unit $d$-dimensional ball. Then
	\[
	\sup_{\norm{u}\le 1}{\Abs{\Iprod{\frac{1}{n}\sum_{i=1}^n \noisetransformation\Paren{{\bm \eta_i}}, u}}} 
	\le \sup_{\norm{u}\le 1/2}{\Abs{\Iprod{\frac{1}{n}\sum_{i=1}^n \noisetransformation\Paren{{\bm \eta_i}}, u}}} + 
	\sup_{{u}\in \cN_{1/2} 1/2}{\Abs{\Iprod{\frac{1}{n}\sum_{i=1}^n \noisetransformation\Paren{{\bm \eta_i}}, u}}}\,.
	\]
	Hence by union bound,
	\[
	\Norm{\frac{1}{n}\sum_{i=1}^n \noisetransformation\Paren{\bm \eta_i} } 
	=\sup_{\norm{u}\le 1}{\Abs{\Iprod{\frac{1}{n}\sum_{i=1}^n \noisetransformation\Paren{{\bm \eta_i}}, u}}} 
	\le 2\sup_{{u}\in \cN_{1/2}}{\Abs{\Iprod{\frac{1}{n}\sum_{i=1}^n \noisetransformation\Paren{{\bm \eta_i}}, u}}}
	\le 4\rho\sqrt{\frac{2\ln(6) \cdot d + \log\Paren{1/\delta}}{n}}\,.
	\]
\end{proof}

In the following lemma we show that Assumption 3 of \cref{assump:good_distribution} is satisfied for $\sigma = O(\rho)$ 
with probability $1-\delta$ as long as $n \gtrsim {d + \log(1/\delta)}$.
\begin{lemma}\label{lem:covariance-bound-product}
	Let $\delta \in (0,1)$ and suppose that $n \ge {d + \log\Paren{1/\delta}}$.
	Then
	\[
	\Norm{\frac{1}{n}\sum_{i=1}^n \noisetransformation\Paren{\bm \eta_i}  \Paren{\noisetransformation\Paren{\bm \eta_i}}^\top } 
	\le O\Paren{\rho^2}
	\]
	with probability at least $1-\delta$.
\end{lemma}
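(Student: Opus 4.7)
The plan is to leverage \cref{property:semi_product_independence_full}: since $\phi$ is odd and $|\phi(x)|\le 2\rho$ (the Huber penalty has parameter $2\rho$), one can write $f(\bm\eta_i)_j=\bm s_{ij}\bm a_{ij}$ where $\bm s_{ij}=\sign(\bm\eta_{ij})$ and $\bm a_{ij}=\phi(|\bm\eta_{ij}|)\in[-2\rho,2\rho]$. Conditioning on the absolute values $\bm A=(\bm a_{ij})$, the signs $\{\bm s_{ij}\}_{i\in[n],j\in[d]}$ are mutually independent Rademachers (independence across $i$ from the iid assumption, independence across $j$ within a sample and independence of signs from $\bm A$ both from \cref{property:semi_product_independence_full}). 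Consequently, $\E f(\bm\eta)f(\bm\eta)^{\top}$ is diagonal with spectral norm at most $4\rho^2$, so it suffices to show the same bound for $\|M-\E M\|$, where $M=\tfrac{1}{n}\sum_{i}f(\bm\eta_i)f(\bm\eta_i)^{\top}$.

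For any fixed unit vector $u\in\bbS^{d-1}$, conditional on $\bm A$ the random variable $Z_i(u)\coloneqq\iprod{u,f(\bm\eta_i)}=\sum_{j}u_j\bm a_{ij}\bm s_{ij}$ is a Rademacher sum with sub-Gaussian parameter $\sqrt{\sum_{j}u_j^2\bm a_{ij}^2}\le 2\rho$. Since this bound is deterministic in $\bm A$, $Z_i(u)$ is unconditionally sub-Gaussian of parameter $O(\rho)$, and hence $Z_i(u)^2$ is sub-exponential with Orlicz norm $O(\rho^2)$. The $Z_i(u)$ are independent across $i$, so by Bernstein's inequality for centered sub-exponentials, there exist absolute constants $c,c'>0$ such that for every fixed $u\in\bbS^{d-1}$,
\[
\Pr\Bigl[\,\bigl|u^{\top}Mu-u^{\top}(\E M)u\bigr|>c\rho^2\,\Bigr]\le 2\exp(-c' n)\mper
\]

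Finally, taking a $1/4$-net $\cN$ of $\bbS^{d-1}$ of cardinality at most $9^d$ and using $\|M-\E M\|\le 2\max_{u\in\cN}|u^{\top}(M-\E M)u|$ (valid since $M-\E M$ is symmetric), a union bound yields $\|M\|\le O(\rho^2)$ with probability at least $1-2\cdot 9^{d}\exp(-c' n)\ge 1-\delta$, provided $n\ge C(d+\log(1/\delta))$ for a sufficiently large absolute constant $C$. The only real subtlety is conceptual: although $\bm\eta$ itself may have no moments, the transformed vector $f(\bm\eta)$ is automatically sub-Gaussian with parameter $O(\rho)$ thanks to the boundedness of $\phi_{2\rho}$ combined with the sign independence, after which the proof reduces to a textbook sub-Gaussian covariance concentration bound.
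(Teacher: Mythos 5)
Your proof is correct and follows essentially the same route as the paper: both establish that $\noisetransformation(\bm\eta)$ is $O(\rho)$-sub-Gaussian by conditioning on the absolute values, exploiting that the signs are independent Rademachers (\cref{property:semi_product_independence_full}) and that the Huber derivative is bounded by $2\rho$, and then invoke sub-Gaussian empirical covariance concentration. The only difference is cosmetic: the paper cites a textbook result (Theorem 6.5 in \cite{wainwright_2019}) for the final concentration step, whereas you unpack it via Bernstein's inequality for sub-exponential $Z_i(u)^2$ plus a $1/4$-net over $\bbS^{d-1}$; this is precisely the proof behind that textbook theorem, so there is no substantive difference.
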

\begin{proof}
	Let $u\in \R^d$ be a fixed vector.
	Since the random vector $\sign\Paren{\bm\eta}$ is independent of
	$\text{abs}\Paren{\bm\eta}$, 
	the random vector $\text{abs}\Paren{\noisetransformation\Paren{\bm \eta}} \circ u$ 
	with entries $u_j\cdot \abs{\bm\eta\paren{j}}$ 
	is independent of $\sign\Paren{\bm\eta}$. Since the entries of $\sign\Paren{\bm\eta}$ are iid Rademacher, 
	for every positive integer $k$,
	\begin{align*}
	\sup_{\norm{u}\le 1} \E\Iprod{\noisetransformation\Paren{\bm\eta}, u}^{2k} 
	&\le \sup_{\norm{u}\le 1} \E\Iprod{\sign\Paren{\bm\eta},  \text{abs}\Paren{\noisetransformation\Paren{\bm \eta}} \circ u}^{2k} 
	\\&\le \sup_{\norm{u}\le 2\rho} \E\Iprod{\sign\Paren{\bm\eta}, u}^{2k} 
	\\&\le \Paren{2\rho}^{2k} \sup_{\norm{u}\le 1} \E\Iprod{\sign\Paren{\bm\eta}, u}^{2k} 
	\\&\le\Paren{2\rho}^{2k} \,.
	\end{align*}
	Hence $\noisetransformation\Paren{\bm\eta}$ is $2\rho$-subgaussian and the result follows from concentration of empirical covariance for subgaussian distributions (see Theorem 6.5 in \cite{wainwright_2019}). 
\end{proof}

In the following lemma we show that Assumption 3 of \cref{assump:higher_order_good_distribution} 
is satisfied for $\sigma = O(\rho)$ 
with probability $1-\delta$ as long as $n \gtrsim d^k \cdot\log\Paren{d/\delta}$.
\begin{lemma}\label{lem:subgaussianity-product}
	Let $k$ be a positive integer and suppose that $n \ge 10\cdot d^k \cdot\log\Paren{d/\delta}$. 
	Then with probability at least $1-\delta$, uniform distribution over the set $\Set{\bm \eta_1,\ldots,\bm \eta_n}$ 
	has $(2k, 2k)$-certifiable $4\rho$-bounded $\noisetransformation$-moments. 
\end{lemma}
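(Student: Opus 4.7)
The plan is to reduce the SoS-certificate to a spectral norm bound on a $d^k\times d^k$ PSD matrix and to control that norm by exploiting the Rademacher sign structure of semi-product distributions. First, I would condition on the absolute values $\{|\bm\eta_{ij}|\}_{i,j}$: by \cref{property:semi_product_independence_full}, the signs $\sigma_{ij}=\sign(\bm\eta_{ij})$ are then mutually independent Rademacher variables. Since $\phi$ is odd and $|\phi(\cdot)|\le 2\rho$, this yields the Hadamard decomposition $\noisetransformation(\bm\eta_i)=\sigma_i\circ B_i$ with $B_i\in[0,2\rho]^d$ deterministic given the absolute values. All subsequent arguments would be conditional on the $B_i$, reducing the task to showing an SoS bound on an average of squares of degree-$k$ Rademacher polynomials.

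Second, using the identity $\iprod{X,v}^{2k}=(v^{\otimes k})^\top X^{\otimes k}(X^{\otimes k})^\top v^{\otimes k}$, I would form the PSD matrix
\[
M := \tfrac{1}{n}\sum_{i=1}^n \noisetransformation(\bm\eta_i)^{\otimes k}\bigparen{\noisetransformation(\bm\eta_i)^{\otimes k}}^\top \in \R^{d^k\times d^k},
\]
so that $\tfrac{1}{n}\sum_i \iprod{\noisetransformation(\bm\eta_i),v}^{2k}=(v^{\otimes k})^\top M v^{\otimes k}$. If $M\preceq (4\rho)^{2k}\,I_{d^k}$, then spectrally decomposing $(4\rho)^{2k}I-M=\sum_r u_r u_r^\top$ (with eigenvectors scaled by square roots of eigenvalues) produces the degree-$2k$ SoS identity
\[
(4\rho)^{2k}\|v\|^{2k} - \frac{1}{n}\sum_i \iprod{\noisetransformation(\bm\eta_i),v}^{2k} = \sum_r \iprod{v^{\otimes k}, u_r}^2,
\]
using that $\|v\|^{2k}=\iprod{v^{\otimes k},v^{\otimes k}}$ and that each $\iprod{v^{\otimes k},u_r}$ is a polynomial of degree $k$ in $v$. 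The whole task thus reduces to the spectral bound $\Normop{M}\le (4\rho)^{2k}$.

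Third, I would split $\Normop{M}\le \Normop{\E_\sigma M}+\Normop{M-\E_\sigma M}$. The $(I,J)$-entry of $\E_\sigma M$ vanishes unless every coordinate appears an even number of times across $I\cup J$; the surviving entries are bounded by $(2\rho)^{2k}$, and Khintchine's inequality $\E_\sigma(\sum_j\sigma_j a_j)^{2k}\le (2k-1)!!\,\|a\|_2^{2k}$ together with $|B_{ij}|\le 2\rho$ yields $\Normop{\E_\sigma M}\le (C\rho)^{2k}$ for an absolute constant $C$. For the deviation I would apply the matrix Bernstein inequality: each scalar $\iprod{\noisetransformation(\bm\eta_i)^{\otimes k},w}$ is a degree-$k$ multilinear polynomial in the iid Rademacher $\sigma_i$ with coefficients of magnitude at most $(2\rho)^k$, and by Bonami's hypercontractive inequality its $q$-th moments grow only as $q^{k/2}$, giving sub-Weibull tails of order $2/k$. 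Combining this scalar concentration with matrix Bernstein (using that $M$ is supported on the symmetric subspace of $(\R^d)^{\otimes k}$) gives $\Normop{M-\E_\sigma M}\le \Normop{\E_\sigma M}$ as soon as $n\ge 10\cdot d^k\log(d/\delta)$, establishing $\Normop{M}\le (4\rho)^{2k}$.

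The main obstacle is the matrix concentration in the third step. A naive Bernstein argument is too loose by roughly a factor of $d^{k/2}$: the worst-case norm $\|\noisetransformation(\bm\eta_i)^{\otimes k}\|_2\le (2\rho\sqrt d)^k$ dwarfs the target operator-norm bound $(4\rho)^{2k}$. The saving must come from the Rademacher cancellations -- in expectation all monomials with an odd coordinate count vanish -- and hypercontractivity of low-degree Rademacher polynomials is the tool that quantitatively translates these cancellations into a tight concentration bound for the squared polynomial contributions forming $M$. Working out a matrix Bernstein inequality with the right variance proxy (or, equivalently, combining scalar hypercontractive concentration with a careful net argument on the symmetric tensor subspace) is where the technical heavy lifting lies.
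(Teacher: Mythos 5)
Your overall scheme mirrors the paper's: flatten to the $d^k\times d^k$ PSD matrix $M=\tfrac1n\sum_i f(\bm\eta_i)^{\otimes k}\bigl(f(\bm\eta_i)^{\otimes k}\bigr)^\top$, observe that the SoS certificate is equivalent to the spectral bound $\normop M\le(4\rho)^{2k}$, use the Rademacher structure from \cref{property:semi_product_independence_full}, and concentrate. But the third step misdiagnoses the obstacle. You assert that ``a naive Bernstein argument is too loose by roughly a factor of $d^{k/2}$'' and propose Bonami hypercontractivity as a repair. This detour is unnecessary: the hypothesis $n\ge 10\,d^k\log(d/\delta)$ is there precisely to absorb the $d^k$ factor. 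The paper first proves the population-level certificate $\sststile{2k}{v}\E\iprod{f(\bm\eta),v}^{2k}\le(2\rho\norm{v})^{2k}$ by an exact even-monomial computation, then invokes \cref{lem:subsampling_certifiable_phi_moments}, which applies the matrix Bernstein inequality of \cref{fact:matrix_bernstein} with worst-case norm $L=\max_i\normop{M_i}\le(2\rho)^{2k}d^k$ and variance proxy $B=\normop{\E M^2}\le L\cdot\normop{\E M}\le L\sigma^{2k}$. Plugging in $n\gtrsim d^k\log(d/\delta)$ gives a deviation of $O(\rho^{2k}+\sigma^k\rho^k)$, hence $\normop M\le\sigma^{2k}+O(\rho^{2k}+\sigma^k\rho^k)\le(\sigma+\rho)^{2k}$. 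Plain matrix Bernstein therefore suffices; the Rademacher cancellations you hope to exploit via hypercontractivity are already captured in the variance proxy through $\normop{\E M}\le\sigma^{2k}$, not through sharper scalar tail bounds.

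A secondary inconsistency: the Khintchine bound you wrote, $\E_\sigma\bigl(\sum_j\sigma_j a_j\bigr)^{2k}\le(2k-1)!!\,\normt{a}^{2k}$, carries the factor $(2k-1)!!=k^{\Theta(k)}$. Taking $2k$-th roots, this yields $\normop{\E_\sigma M}\le\bigl(\Theta(\sqrt k)\,\rho\bigr)^{2k}$, not $(C\rho)^{2k}$ for an absolute constant $C$, so as written it does not reach the $k$-independent constant $4\rho$ in \cref{lem:subgaussianity-product}. The paper does not cite Khintchine as a black box but instead carries out an explicit even-multi-index counting to obtain the constant; if you want to match that constant you will need to track the multinomial combinatorics at the same level of detail rather than invoke the classical inequality.
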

\begin{proof}
	First we show that the random vector $\bm\eta$ has 
	$(2k, 2k)$-certifiable $2\rho$-bounded $\noisetransformation$-moments. 
	Let's call a multi-index $\beta$ \emph{even} if every element in $\beta$ appears even number of times.
	Note that $\beta$ is even iff $\beta = 2\beta'$ for a multi-index $\beta'$ of size $k$.
	Denote by $\cE$ the set of all even multi-indices of size $2k$.
	Let $v_1,\ldots, v_d$ be variables, and
	consider the polynomial $\E \Iprod{\noisetransformation\Paren{\bm \eta}, v}^{2k}$. It follows that
	\begin{align*}
		\sststile{2k}{v} \E \Iprod{\noisetransformation\Paren{\bm \eta}, v}^{2k} 
	&= \E \E\Brac{\Iprod{\sign\Paren{\bm \eta}, \text{abs}\Paren{\noisetransformation\Paren{\bm\eta}} \circ v}^{2k} 
		\given \text{abs}\Paren{\noisetransformation\Paren{\bm\eta}}} 
	\\&=  \E\sum_{\beta\in\cE}\E\Brac{ \sign\Paren{\bm \eta}^\beta
		\Paren{\text{abs}\Paren{\noisetransformation\Paren{\bm\eta}} \circ v}^\beta
		\given  \text{abs}\Paren{\noisetransformation\Paren{\bm\eta}} } 
	\\&=\E\sum_{\beta\in\cE} \Paren{\text{abs}\Paren{\noisetransformation\Paren{\bm\eta}} \circ v}^\beta
	\\&\le \Paren{2\rho}^{2k} \sum_{\beta\in\cE} v^\beta
	\\&= \Paren{2\rho}^{2k} \sum_{\card{\beta'}=k} v^{2\beta'}
	\\&= \Paren{2\rho}^{2k} \norm{v}^{2k}\,.
	\end{align*}
	Hence $\bm\eta$ has 
	$(2k, 2k)$-certifiable $2\rho$-bounded $\noisetransformation$-moments. 
	Since $\forall x\in \R^d$, $\Norm{\noisetransformation\Paren{x}} \le 2\rho\sqrt{d}$, 
	the result follows from \cref{lem:subsampling_certifiable_phi_moments}.
\end{proof}

\section{Elliptical Distributions}
\label{sec:elliptical}

In this section we use a different normalization than the one described in the introduction. Concretely, we assume that 
$\Tr(\Sigma) = d$ and that for some $\rho > 0$ and $\alpha \in (0,1)$, $\Pr\Brac{\bm R \le \rho\sqrt{d}} \ge \alpha$. 
For $\alpha = 1/2$, this is the same model as in the Introduction, since we can multiply $\Sigma^{1/2}$ by 
$s = \sqrt{d/\Tr\Paren{\Sigma}}$ and divide $\bm R$ by $s$. 
Hence the new parameter $\rho$ written in terms of the parameters defined in the Introduction is equal to 
$\sqrt{2}/s = \sqrt{2 \Tr(\Sigma) / d}$. Note that $\rho \Norm{\Sigma}$ in new notation is equal to $\sqrt{2} \Norm{\Sigma}$ in the notation used in the introduction.
Note that the effective rank, $\effrank(\Sigma) = \tfrac {\Tr \Sigma} {\norm{\Sigma}}$, is the same in both parametrizations.

We assume that $\rho$ is known. In the case of unknown $\rho$ and $\alpha \ge \Omega(1)$, 
we can estimate $\rho$ using the corrupted samples and obtain the same guarantees as for known $\rho$ (up to a constant factor). 
We describe how to achiev this in \cref{par:unknown-rho}.

We prove that elliptical distributions satisfy \cref{assump:good_distribution} 
and with, $\sigma = O(\rho\sqrt{\Norm{\Sigma}})$
and $F : \R^d \to \R$ defined as
\[
F(x) = \Phi_{20\rho\sqrt{d}}\Paren{\Norm{x}}\,,
\]
with probability $1-\delta$ as long as $n \gtrsim \frac{\effrank(\Sigma)\log(d/\delta)}{\alpha^4}$. 
Moreover, if in addition $n \gtrsim \tfrac 1 {\alpha^4} (\tfrac {\effrank(\Sigma)} k)^k \cdot\log\Paren{d/\delta}$, 
then \cref{assump:higher_order_good_distribution} is also satisfied with $\sigma = O(\rho \sqrt{k \norm{\Sigma}})$.

Going back to the parametrization of the introduction and letting $\alpha = \tfrac 1 2$, this means the following:
For $n \gtrsim \frac{\effrank(\Sigma)\log(d/\delta)}{\alpha^4}$, elliptical distributions satisfy \cref{assump:good_distribution} with $\sigma = O(\sqrt{\norm{\Sigma}})$.
Further, if $n \gtrsim \tfrac 1 {\alpha^4} (\tfrac {\effrank(\Sigma)} k)^k \cdot\log\Paren{d/\delta}$, they satisfy \cref{assump:higher_order_good_distribution} with $\sigma = O(\sqrt{k \norm{\Sigma}})$.
Further, for $\bm \eta_1^*, \ldots, \bm \eta_n^*$ following an elliptical distribution with the above parameters, we will in \cref{lem:gradient-bound-elliptical} show that for $n \geq \log(1/\delta)$ it holds with probability at least $1-\delta$ that $\norm{\frac 1 n \sum_{i=1}^n f(\bm \eta_i^*)} \leq O(\sqrt{\tfrac {\Tr(\Sigma) \cdot \log(d/\delta)} n})$.
Hence, by \cref{thm:filtering_alg_higher_moments} (or \cref{thm:filtering_alg} for $k = 1$) our estimator achieves error
\[
	O\Paren{\Norm{\frac{1}{n}\sum_{i=1}^n \noisetransformation\Paren{\bm \eta_i^*} }
    	+ \e^{1-\frac{1}{2k}} \cdot \sigma} = O\Paren{\sqrt{\norm{\Sigma}} \cdot \Brac{\sqrt{\frac{\effrank(\Sigma)\log(d/\delta)} n} + \sqrt{k}\e^{1-\frac{1}{2k}}}} \,.
\]
This gives the proof of Theorem 1.6.

Note that $f = \nabla F$ sends $x$ to $\phi_{20\rho \sqrt{d}}(\norm{x}) \cdot \tfrac x {\norm{x}}$ and hence is indeed $1$-Lipschitz since it is the projection of $x$ onto the $\ell_2$-ball of radius $20\rho\sqrt{d}$.
In the next three lemmas we prove the first assumption from \cref{assump:good_distribution}.

\begin{lemma}\label{lem:elliptical-inside-ball}
	Suppose that $d \ge 100$. Then
	\[
	\Pr\Brac{\Norm{\bm R \Sigma^{1/2} \bm u}\le h/2} \ge 0.8\alpha\,,
	\]
	where $h = 20\rho\sqrt{d}$. 
\end{lemma}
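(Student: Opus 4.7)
The plan is to decouple the scalar $\bm R$ and the direction $\bm u$ using their independence, and control each factor separately. Since $\bm R > 0$, we have the factorization
\[
    \Norm{\bm R \Sigma^{1/2}\bm u} = \bm R \cdot \Norm{\Sigma^{1/2}\bm u}\,,
\]
so it suffices to exhibit an event of probability at least $0.8\alpha$ on which $\bm R \le \rho\sqrt{d}$ and $\Norm{\Sigma^{1/2}\bm u} \le 10$ simultaneously, since then $\Norm{\bm R \Sigma^{1/2}\bm u} \le 10\rho\sqrt{d} = h/2$.

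The first event, $\{\bm R \le \rho\sqrt{d}\}$, has probability at least $\alpha$ directly by the assumption on the radial distribution. For the second, I would use that $\bm u$ is uniform on the unit sphere, so $\E[\bm u \bm u^\top] = \tfrac{1}{d} \Id_d$. Under the normalization $\Tr(\Sigma) = d$ used in this section, this gives
\[
    \E \Norm{\Sigma^{1/2}\bm u}^2 \;=\; \E \bm u^\top \Sigma \bm u \;=\; \tfrac{1}{d}\Tr(\Sigma) \;=\; 1\,.
\]
Markov's inequality then yields $\Pr\bigl[\Norm{\Sigma^{1/2}\bm u}^2 > 100\bigr] \le 1/100$, i.e.\ $\Pr\bigl[\Norm{\Sigma^{1/2}\bm u} \le 10\bigr] \ge 0.99$.

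Finally, independence of $\bm R$ and $\bm u$ gives
\[
    \Pr\Brac{\bm R \le \rho\sqrt{d} \;\wedge\; \Norm{\Sigma^{1/2}\bm u} \le 10} \;=\; \Pr\Brac{\bm R \le \rho\sqrt{d}} \cdot \Pr\Brac{\Norm{\Sigma^{1/2}\bm u} \le 10} \;\ge\; 0.99\,\alpha \;\ge\; 0.8\,\alpha\,,
\]
and on this event $\Norm{\bm R \Sigma^{1/2}\bm u} \le h/2$, proving the claim.

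I do not anticipate any real obstacle. The mild hypothesis $d \ge 100$ is not actually used in the Markov-based argument above; the authors may instead intend a concentration estimate for $\Norm{\Sigma^{1/2}\bm u}^2$ around its mean (which only improves the constant), but for the stated threshold $0.8\alpha$ the crude Markov bound combined with independence is sufficient.
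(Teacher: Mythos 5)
Your proof is correct, and it is genuinely simpler than the one in the paper. You exploit the factorization $\Norm{\bm R\Sigma^{1/2}\bm u} = \bm R\cdot\Norm{\Sigma^{1/2}\bm u}$ and the built-in independence of $\bm R$ and $\bm u$ (from the definition of an elliptical distribution), then control the directional factor with a one-line Markov bound using $\E\Norm{\Sigma^{1/2}\bm u}^2 = \tfrac{1}{d}\Tr(\Sigma) = 1$ under the section's normalization $\Tr(\Sigma)=d$. That gives $\Pr[\Norm{\Sigma^{1/2}\bm u}\le 10]\ge 0.99$, and multiplying by $\Pr[\bm R\le\rho\sqrt d]\ge\alpha$ is enough. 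The paper instead takes a detour through a coupled Gaussian $\bm g\sim N(0,4\rho^2\Sigma)$, written as $\bm g=\bm R'\Sigma^{1/2}\bm u$ with $\bm R'$ independent of $\bm R$, then applies a Gaussian quadratic-form concentration inequality to bound $\Norm{\bm g}$ and a $\chi^2$-concentration bound to get $\Pr[\bm R'\ge\rho\sqrt d]\ge 0.99$, and finally combines the two events. The Gaussian detour is what actually uses the hypothesis $d\ge 100$ (for the $\chi^2$ tail to be small enough); your Markov route does not need it, as you correctly observed. Both approaches land at the same $0.8\alpha$, but yours is shorter, uses only the spherical mean of $\bm u$, and avoids invoking any external concentration facts.
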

\begin{proof}
	Let $\bm g \sim N(0, 4\rho^2\Sigma)$ 
	be a $d$-dimensional Gaussian vector such that $\bm g = \bm R' \Sigma^{1/2}\bm u$, 
	where $\bm R'$ is independent of $\bm R$ and has distribution of the norm of Gaussian vector $N(0, 2\rho^2\Id)$. 
	Note that $\Tr\Paren{\E \bm g \bm g^\top} = 4\rho^2 d$. Applying \cref{fact:quadratic-form-gaussian} with $t = 2$, we get
	\[
	\Pr\Brac{\Norm{\bm g} \ge 10\rho\sqrt{d}} \le \exp(-2) \le 0.15\,.
	\]
	Since
	\[
	\Pr\Brac{\bm R' \ge \rho\sqrt{d}} \ge 0.99\,,
	\]
	we get
	\[
	\Pr\Brac{\Norm{\bm g}\le 10\rho\sqrt{d}\,, \bm R' \ge \rho\sqrt{d}} \ge 0.8\,.
	\]
	Since $\bm R$ is independent of $\bm g$ and $\bm R'$, we get
	\begin{align*}
	\Pr\Brac{\Norm{\bm R \Sigma^{1/2} \bm u}\le 10\sqrt{d}} 
	&\ge \Pr\Brac{\Norm{\bm R \Sigma^{1/2} \bm u}\le \Norm{\bm g}\,, \Norm{\bm g} \le 10\rho\sqrt{d}} 
	\\&=  \Pr\Brac{\bm R \le \bm R'\,, \Norm{\bm g} \le 10\rho\sqrt{d}} 
	\\&\ge  \Pr\Brac{\bm R \le \rho \sqrt{d} \,, \bm R' \ge \rho \sqrt{d}  \,, \Norm{\bm g} \le 10\rho\sqrt{d}}
	\\&\ge 0.8\alpha\,.
	\end{align*}
\end{proof}

\begin{lemma}
	\label{lem:local-strong-convexity-elliptical}
	Let $w \in \cW_{2\e}$ for $\e \lesssim \alpha $.
	Then with probability $1-\exp\Paren{-\Omega\Paren{\alpha n}}$ for all $u \in \R^d$ such that $\norm{u} \le \rho\sqrt{d}$,
	\[
	\lossfunction^w\Paren{\mustar+ u} - \lossfunction^w\Paren{\mustar} -
	\Iprod{\nabla{\lossfunction^w}\Paren{\mustar}, u} \ge \frac{\alpha}{4}\cdot \norm{u}^2\,.
	\]
\end{lemma}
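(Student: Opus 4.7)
The plan is to reduce the statement to the quadratic behavior of $F$ in a neighborhood of the origin, exactly as in the proof of \cref{lem:local-strong-convexity-product}, but using the norm-concentration statement from \cref{lem:elliptical-inside-ball} in place of entry-wise symmetry. Let $h = 20\rho\sqrt{d}$ and write $\zeta_i = \mustar - y_i$, so that for an uncorrupted sample $\zeta_i = -\bm\eta_i^*$. Let $M$ be the set of uncorrupted indices $i$ with $\norm{\bm\eta_i^*} \le h/2$. By \cref{lem:elliptical-inside-ball}, each uncorrupted sample lies in $M$ with probability at least $0.8\alpha$, so by a Chernoff bound (using $\e \lesssim \alpha$, so at least $(1-\e)n$ samples are uncorrupted) we have $\card{M} \ge 0.7\alpha n$ with probability $1 - \exp(-\Omega(\alpha n))$.

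First I would argue that on the event $\card{M} \ge 0.7\alpha n$, the indices in $M$ contribute enough weight. Since $w \in \cW_{2\e}$ means $0 \le w_i \le 1/n$ and $\normo{w - \mathbbm{1}/n} \le 2\e$, we get
\[
\sum_{i \in M} w_i \ge \frac{\card{M}}{n} - 2\e \ge 0.7\alpha - 2\e \ge 0.6\alpha\,,
\]
using $\e \lesssim \alpha$ with a sufficiently small constant.

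Next, for any fixed $u$ with $\norm{u} \le \rho\sqrt{d} = h/20$ and any $i \in M$, the triangle inequality gives $\norm{\zeta_i} \le h/2 \le h$ and $\norm{\zeta_i + u} \le h/2 + h/20 \le h$. On this region $F(v) = \Phi_h(\norm{v}) = \tfrac{1}{2}\norm{v}^2$ is exactly quadratic with Hessian $\Id_d$, so
\[
F(\zeta_i + u) - F(\zeta_i) - \Iprod{\nabla F(\zeta_i), u} = \tfrac{1}{2}\norm{u}^2\,.
\]
For every remaining index (corrupted or uncorrupted), convexity of $F$ (which follows since $\Phi_h$ is convex and nondecreasing on $[0,\infty)$ and $\norm{\cdot}$ is convex) gives that the corresponding second-order remainder is nonnegative. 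Summing with weights $w_i$,
\[
\lossfunction^w(\mustar + u) - \lossfunction^w(\mustar) - \Iprod{\nabla \lossfunction^w(\mustar), u} \ge \tfrac{1}{2}\Paren{\sum_{i \in M} w_i}\norm{u}^2 \ge 0.3\alpha\norm{u}^2 \ge \tfrac{\alpha}{4}\norm{u}^2\,.
\]

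The key observation that makes the statement uniform in $u$ is that the set $M$ is defined entirely in terms of the samples $\bm\eta_i^*$ and is independent of $u$; once $\card{M} \ge 0.7\alpha n$ holds, the inequality is valid simultaneously for every $u$ with $\norm{u} \le \rho\sqrt{d}$, so no union bound over $u$ is required. The only genuinely probabilistic step is the Chernoff bound producing the lower bound on $\card{M}$, which is what yields the $1 - \exp(-\Omega(\alpha n))$ failure probability.
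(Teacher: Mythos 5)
Your proof is correct and follows the paper's argument essentially verbatim: isolate the set $M$ of uncorrupted samples in the ball of radius $h/2$, lower-bound $\card{M}$ via \cref{lem:elliptical-inside-ball} plus a Chernoff bound while subtracting the adversarial budget, observe that $F = \Phi_h(\norm{\cdot})$ is exactly quadratic on the relevant range when $\norm{u} \le h/20$ so each $i \in M$ contributes exactly $\tfrac12 w_i \norm{u}^2$, and invoke convexity for the remaining indices. One small wording caveat worth tightening: saying ``each uncorrupted sample lies in $M$ with probability $\ge 0.8\alpha$'' implicitly treats membership in $S_g$ as random, whereas the adversary chooses $S_g$ after seeing the samples; the clean version (which both you and the paper intend) is to Chernoff over all $n$ samples to get at least, say, $0.75\alpha n$ inside the ball of radius $h/2$, and then subtract the at-most $\e n$ that the adversary can remove.
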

\begin{proof}
	Let $\zeta_i = \mu - y_i$. 
\[	
\lossfunction^w\Paren{\mustar+ u} - \lossfunction^w\Paren{\mustar} -
	\Iprod{{\lossfunction^w}\Paren{\mustar}, u} 
	= \sum_{i=1}^n w_i \Brac{\noisetransformation\Paren{\Norm{\zeta_i + u}} - \noisetransformation\Paren{\Norm{\zeta_i}} - 
	\Iprod{{\noisetransformation}\Paren{\zeta_i}, u} }
\]
Let $M$ be the set of uncorrupted samples such that 
$\norm{\zeta_i} = \norm{\bm \eta_i}\le h/2$. Note that for $i\in M$, 
 $\noisetransformation\Paren{\Norm{\zeta_i + u}} = \frac{1}{2}\Norm{\zeta_i + u}^2$ and $\noisetransformation\Paren{\Norm{\zeta_i}} =  \frac{1}{2}\Norm{\zeta_i }^2$.
Hence
\[
\sum_{i\in M} w_i \Brac{\noisetransformation\Paren{\Norm{\zeta_i + u}} - \noisetransformation\Paren{\Norm{\zeta_i}} - 
	\Iprod{{\noisetransformation}\Paren{\zeta_i}, u}} = \frac{1}{2}\sum_{i\in M} w_i \norm{u}^2\,.
\]
And by convexity,
\[
\sum_{i\notin M} w_i \Brac{\noisetransformation\Paren{\Norm{\zeta_i + u}} - \noisetransformation\Paren{\Norm{\zeta_i}} - 
	\Iprod{{\noisetransformation}\Paren{\zeta_i}, u}} \ge 0\,.
\]

Since $w \in \cW_{2\e}$,
\[
\sum_{i \in M} \Abs{w_i - \frac{1}{n}} \le 2\e\,.
\]
Hence
\[
\sum_{i \in M} w_i \ge \frac{\card{M}}{n} - 2\e\,.
\]

By \cref{lem:elliptical-inside-ball} and a Chernoff bound, 
\[
\card{M} \ge 0.6\alpha - \e \ge \alpha / 2
\]
with probability $1-\exp\Paren{-\alpha n}$. 
\end{proof}

\begin{lemma}
	\label{lem:strong-convexity-elliptical}
	Let $w \in \cW_{2\e}$ for $\e \lesssim \alpha$,
	and suppose that $n\gtrsim \log(d) /\alpha^2$.
	Then with probability $1-\exp\Paren{-\Omega\Paren{\alpha^2 n}}$,
	\[
	{\Iprod{\nabla \lossfunction^w \Paren{\mustar}, \mustar-\muhat^w }}\ge \frac{\alpha}{4}\norm{\muhat^w - \mustar}^2\,.
	\]
\end{lemma}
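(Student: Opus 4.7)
The proof mirrors the strategy of \cref{lem:strong-convexity-product}. Let $u = \muhat(w) - \mustar$. The plan is to first show that $\norm{u} \le \rho\sqrt{d}$, which brings us inside the region where \cref{lem:local-strong-convexity-elliptical} applies, and then to conclude by combining local strong convexity with the fact that $\muhat(w)$ is the (global) minimizer of the convex function $\lossfunction^w$.

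The bound $\norm{u} \le \rho\sqrt{d}$ is established by a rescaling argument. Suppose for contradiction that $\norm{u} > \rho\sqrt{d}$, and set $u' = \rho\sqrt{d}\cdot u/\norm{u}$, $\mu' = \mustar + u'$. Since $\mu'$ lies on the segment from $\mustar$ to $\muhat(w)$ and $\muhat(w)$ minimizes $\lossfunction^w$, convexity gives $\lossfunction^w(\mu') \le \lossfunction^w(\mustar)$. Because $\norm{u'} = \rho\sqrt{d}$, \cref{lem:local-strong-convexity-elliptical} applies at $u'$ and yields
\[
0 \ge \Iprod{\nabla\lossfunction^w(\mustar), u'} + \tfrac{\alpha}{4}\norm{u'}^2,
\]
so by Cauchy--Schwarz, $\norm{u'} \le (4/\alpha)\,\norm{\nabla\lossfunction^w(\mustar)}$.

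It remains to bound $\norm{\nabla\lossfunction^w(\mustar)}$ strictly below $\alpha\rho\sqrt{d}/4$, giving $\norm{u'} < \rho\sqrt{d}$, a contradiction. Decompose
\[
\nabla\lossfunction^w(\mustar) = \sum_i w_i\, f(-\bm\eta_i^*) + \sum_{i\in S_b} w_i\bigl(f(\mustar-y_i) - f(-\bm\eta_i^*)\bigr).
\]
The corruption term has norm at most $4\e h = 80\e\rho\sqrt{d}$, using $\norm{f(\cdot)}\le h = 20\rho\sqrt{d}$ and $\sum_{i\in S_b} w_i \le 2\e$. For the first sum, $w \in \cW_{2\e}$ gives $\norm{\sum_i w_i f(-\bm\eta_i^*)} \le \norm{\tfrac1n\sum_i f(-\bm\eta_i^*)} + 2\e h$. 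By the elliptical symmetry $\bm\eta^* \stackrel{d}{=} -\bm\eta^*$, the vector $f(-\bm\eta^*)$ has zero mean; combined with $\e \lesssim \alpha$ it suffices to show
\[
\Norm{\tfrac1n \sum_{i=1}^n f(-\bm\eta_i^*)} \le \tfrac{\alpha}{100}\,\rho\sqrt{d}
\]
with probability $1-\exp(-\Omega(\alpha^2 n))$.

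Once $\norm{u}\le\rho\sqrt{d}$ is secured, applying \cref{lem:local-strong-convexity-elliptical} directly to $u$ together with $\lossfunction^w(\muhat(w))\le\lossfunction^w(\mustar)$ yields
\[
0 \ge \Iprod{\nabla\lossfunction^w(\mustar), u} + \tfrac{\alpha}{4}\norm{u}^2,
\]
which is equivalent to the claimed inequality. The main technical obstacle is the concentration inequality above: a naive vector Hoeffding via an $\tfrac12$-net on $\mathbb{S}^{d-1}$ only gives $n \gtrsim d/\alpha^2$, so to obtain the stated $n\gtrsim \log(d)/\alpha^2$ we exploit the structure of $f$ under the elliptical model, namely that $f(-\bm\eta^*)$ has covariance controlled by $\rho^2\norm{\Sigma}$ (so its effective dimension is proportional to the effective rank, not $d$), and apply a variance-aware concentration bound (matrix/vector Bernstein) whose deviation scales with $\sqrt{\log d/n}$ times $\rho\sqrt{d}$ rather than $\sqrt{d/n}$ times $\rho\sqrt{d}$.
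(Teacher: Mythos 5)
Your proof is correct and follows essentially the same route as the paper's: rescale $u$ to $u'$ of norm $\rho\sqrt{d}$, use convexity of $\lossfunction^w$ plus the local strong convexity of \cref{lem:local-strong-convexity-elliptical} to get $\norm{u'} \le \tfrac{4}{\alpha}\norm{\nabla\lossfunction^w(\mustar)}$, bound the gradient by splitting off the corruption term and concentrating the clean term, and then apply local strong convexity once more at $u$ itself. The one place where your reasoning is over-engineered is the concentration step. You claim that a naive net argument only gives $n \gtrsim d/\alpha^2$ and that to get $n \gtrsim \log(d)/\alpha^2$ one must exploit the scatter-matrix structure (covariance of $f(\bm\eta^*)$ bounded by $\rho^2\norm{\Sigma}$). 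This is not necessary: the paper simply applies the vector Bernstein inequality (\cref{fact:vector_bernstein}) with the crude variance proxy $L^2$, $L = 20\rho\sqrt{d}$. The deviation then already scales as $L\sqrt{\log d/n} = \rho\sqrt{d}\cdot\sqrt{\log d/n}$, and the $2d$ prefactor of Bernstein is absorbed by $n\gtrsim\log(d)/\alpha^2$. The improvement from $d$ to $\log d$ is intrinsic to the matrix/vector Bernstein inequality, not a consequence of using the elliptical covariance bound. (Your tighter variance bound would of course also work, it just is not what the proof requires here; it is used elsewhere, in \cref{lem:covariance-bound-elliptical}.) Also note that your bound of $80\e\rho\sqrt{d}$ on the corruption term is the correct constant; the paper writes $40\e\rho\sqrt{d}$, which implicitly uses $\norm{f(x)-f(y)}\le 20\rho\sqrt{d}$ where one has only $\le 40\rho\sqrt{d}$, a harmless constant slip.
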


\begin{proof}
	Let $u = \mustar - \muhat(w)$.
	Let $u'$ and $t$ be such that
	\begin{align*}
	\norm{u'} &\leq \rho\sqrt{d}\,,\\ 
	\muhat(w) &= \mustar + t\cdot u'\,.
	\end{align*}
	It follows that $t = \max\Set{1,\norm{u}}$.
	Clearly, $\muhat(w)$ is the (unique) minimizer of $\lossfunction^w$.
	Let $\muhat' = \mustar + u'$, by convexity of $\lossfunction^w$ it holds that $\lossfunction^w(\muhat') \leq \lossfunction^w(\mustar)$.
	Since $\norm{u'} \leq \rho\sqrt{d}$ it follows by \cref{lem:local-strong-convexity-elliptical} that 
	\[
	\lossfunction^w \Paren{\mustar} 
	\geq \lossfunction^w\Paren{\muhat'} 
	\geq \lossfunction^w \Paren{\mustar}  
	-\Iprod{\nabla{\lossfunction^w}\Paren{\mustar}, u} 
	+ \frac{\alpha}{4} \Norm{u'}^2
	\]
	with probability at least $1-\exp\Paren{-\Omega\Paren{\alpha n}}$.
	Rearranging yields 
	\[
	\norm{u'} \le 
	\frac{4}{\alpha} \cdot \Norm{\nabla\lossfunction^w\Paren{\mustar}} \,.
	\]
	We next examine
	\begin{align*}
	 {\nabla\lossfunction^w\Paren{\mustar}} &=
	\sum_{i=1}^n w_i \noisetransformation\Paren{\Paren{ y_i} -\mu^*} \\
	&= \sum_{i=1}^n w_i \noisetransformation\Paren{\bm \eta_{i}} 
	+ \sum_{i \in S_b}^n w_i \Brac{\noisetransformation\Paren{\Paren{ y_i} -\mu^*} - 
		\noisetransformation\Paren{\Paren{\bm y_i^*} -\mu^*}} \,.
	\end{align*}
	By vector Bernstein inequality \cref{fact:vector_bernstein}, 
	\[
	\Norm{\sum_{i=1}^n w_i \noisetransformation\Paren{\bm \eta_{i}}}
	\le 	\Norm{\frac{1}{n}\sum_{i=1}^n \noisetransformation\Paren{\bm \eta_{i}}} + 40\e\rho\sqrt{d}
	\le 200 \rho\sqrt{d}\cdot \Paren{\tau/\sqrt{n}+ \tau^2/n} + 40\e\rho\sqrt{d}
	\]
	with probability at least $1-\exp\paren{-\tau^2/2}$. 
	
	The second term can be bounded as follows:
	\[   
	\sum_{i\in S_b} w_i \cdot\Brac{\noisetransformation\Paren{\mu^* - \Paren{ y_i}} - \noisetransformation\Paren{\mu^* - \Paren{\bm y_i^*}}} \le  40\e\rho\sqrt{d}\,,
	\]
	where we used that for all $x,y \in \R$ it holds that $\abs{\noisetransformation(x) - \noisetransformation(y)} \leq 20\rho\sqrt{d}$ 
	and $ \sum_{i\in S_b} w_i  \leq 2\e$.
	Putting everything together we obtain 
	\[
	\norm{u'} \le 200\frac{\rho\sqrt{d}}{\alpha}\cdot \Paren{\tau/\sqrt{n}+ \tau^2/n}+ \frac{40\e\rho\sqrt{d}}{\alpha}
	\]
	with probability $1-\exp\paren{-\tau^2/2}$. For $\tau = \alpha\sqrt{n}/1000$, $\abs{u'} < \rho$.
	Hence, $t = 1$ since otherwise 
	\[
	\norm{u} = \norm{\muhat(w) - \mustar} = t \cdot \norm{u'} < \norm{u} \,.
	\]
	It follows that $u' = u$ and thus $\norm{\muhat(w) - \mustar} < \rho\sqrt{d}$.
	
	By \cref{lem:local-strong-convexity-elliptical}
	\[
	\lossfunction^w \Paren{\mustar} \geq \lossfunction^w\Paren{\muhat} 
	\geq \lossfunction^w \Paren{\mustar}  +\Iprod{\nabla {\lossfunction^w}\Paren{\mustar}, -u}
	+ \frac{\alpha}{4} \Norm{u}^2\,,
	\]
	Hence
	\[
	\Iprod{\nabla {\lossfunction^w}\Paren{\mustar},u} \ge \frac{\alpha}{4} \Norm{u}^2\,.
	\]
\end{proof}

In the following lemma we show that Assumption 2 of \cref{assump:good_distribution} is satisfied for 
$\sigma = O(\rho\Norm{\Sigma})$ with probability $1-\delta$ as long as $n \gtrsim \frac{d\log(d/\delta)}{\alpha^2}$.

\begin{lemma}\label{lem:gradient-bound-elliptical}
	Let $\delta \in \Paren{0,1}$ and suppose that $n \ge \log\Paren{1/\delta}$. 
	Then with probability at least $1-\delta$,
	\[
	\Norm{\frac{1}{n}\sum_{i=1}^n \noisetransformation\Paren{\bm \eta_i} } \le O\Paren{\rho\sqrt{\frac{d \log\Paren{d/\delta}}{n}}}\,.
	\]
\end{lemma}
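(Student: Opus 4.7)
The approach is to apply the vector Bernstein inequality (\cref{fact:vector_bernstein}) to the iid centered bounded random vectors $\noisetransformation\Paren{\bm \eta_1}, \ldots, \noisetransformation\Paren{\bm \eta_n}$, using the symmetry of elliptical distributions to guarantee mean zero and the truncation built into $\noisetransformation$ to obtain a deterministic norm bound that controls both the variance and boundedness parameters.

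First I would verify that $\E \noisetransformation\Paren{\bm \eta} = 0$. Since $\bm \eta = \bm R \Sigma^{1/2}\bm U$ with $\bm U$ uniformly distributed on the unit sphere and independent of $\bm R$, the vector $-\bm U$ has the same distribution as $\bm U$, so $-\bm \eta$ and $\bm \eta$ are equal in distribution. The map $\noisetransformation\Paren{x} = \phi_h\Paren{\Norm{x}} \cdot x/\Norm{x}$ (with $h = 20\rho\sqrt{d}$) is odd, hence $\noisetransformation\Paren{\bm \eta}$ is symmetric about the origin and in particular has mean zero.

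Next I would read off the boundedness and variance parameters. Since $\phi_h(t) = \min\Paren{\abs{t},h}\cdot\sign(t)$, we have $\Norm{\noisetransformation\Paren{x}} = \min\Paren{\Norm{x}, h} \le h$ for every $x \in \R^d$, so $\Norm{\noisetransformation\Paren{\bm \eta}} \le h$ holds deterministically. This immediately gives the trace bound $\Tr \E\Brac{\noisetransformation\Paren{\bm \eta}\noisetransformation\Paren{\bm \eta}^\top} = \E\Norm{\noisetransformation\Paren{\bm \eta}}^2 \le h^2 = 400\rho^2 d$, which is all the second-moment information the Bernstein bound requires.

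Finally, applying \cref{fact:vector_bernstein} to the iid zero-mean vectors $\noisetransformation\Paren{\bm \eta_i}$ with $B = O\Paren{\rho\sqrt{d}}$ and variance proxy $\sigma^2 = O\Paren{\rho^2 d}$, exactly as in the derivation used inside the proof of \cref{lem:strong-convexity-elliptical}, yields, with probability at least $1-\delta$,
\[
\Norm{\frac{1}{n}\sum_{i=1}^n \noisetransformation\Paren{\bm \eta_i}} \lesssim \rho\sqrt{d}\cdot\Paren{\sqrt{\frac{\log(d/\delta)}{n}} + \frac{\log(d/\delta)}{n}} \lesssim \rho \sqrt{\frac{d\log(d/\delta)}{n}}\,,
\]
where in the last step the assumption $n \ge \log\Paren{1/\delta}$ is used to absorb the second term into the first. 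I do not anticipate a substantial obstacle here: elliptical symmetry gives the zero-mean property for free, and the deterministic truncation at $h = 20\rho\sqrt{d}$ makes both Bernstein parameters tight enough to yield the advertised rate without any finer control on the covariance of $\noisetransformation\Paren{\bm \eta}$.
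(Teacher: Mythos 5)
Your proof matches the paper's, which disposes of this lemma in a single sentence as ``a direct consequence of vector Bernstein inequality \cref{fact:vector_bernstein}.'' You supply the needed verifications: the elliptical symmetry of $\bm\eta$ together with the oddness of $\noisetransformation$ gives $\E\noisetransformation(\bm\eta) = 0$, and the fact that $\noisetransformation$ is the projection onto the ball of radius $h = 20\rho\sqrt{d}$ gives the deterministic bound $\Norm{\noisetransformation(\bm\eta_i)} \le h$ that plays the role of $L$ in \cref{fact:vector_bernstein}. One small imprecision in the last step: the hypothesis $n \ge \log(1/\delta)$ does not by itself ensure $\log(d/\delta)/n \le 1$, so it does not quite license absorbing the second Bernstein term $L\log(d/\delta)/n$ into the first; the conclusion is nevertheless correct, since whenever $n < \log(d/\delta)$ the claimed bound $O\bigl(\rho\sqrt{d\log(d/\delta)/n}\bigr)$ already exceeds the deterministic bound $\Norm{\tfrac1n\sum_i \noisetransformation(\bm\eta_i)} \le 20\rho\sqrt{d}$, so the statement holds vacuously in that regime. (Also note that \cref{fact:vector_bernstein} as stated in the paper only takes the single parameter $L$, with no separate variance proxy, but since you set both to the same order this does not change anything.)
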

\begin{proof}
	The lemma is a direct consequence of vector Bernstein inequality \cref{fact:vector_bernstein}.
\end{proof}

In the following lemma we show that Assumption 3 of \cref{assump:good_distribution} is satisfied for $\sigma = O(\rho)$ 
with probability $1-\delta$ as long as $n \gtrsim {d\log(d/\delta)}$.

\begin{lemma}\label{lem:covariance-bound-elliptical}
	Let $\delta \in (0,1)$ and suppose that $n \ge \effrank(\Sigma)\log\Paren{d/\delta}$.
	Then
	\[
	\Norm{\frac{1}{n}\sum_{i=1}^n \noisetransformation\Paren{\bm \eta_i}  \Paren{\noisetransformation\Paren{\bm \eta_i}}^\top } 
	\le O\Paren{\rho^2\Norm{\Sigma}}
	\]
	with probability at least $1-\delta$.
\end{lemma}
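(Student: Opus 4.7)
The plan is to decompose the task into (i) a population-level bound on the operator norm of $\E f(\bm \eta) f(\bm \eta)^\top$, and (ii) a concentration argument for the empirical average. Throughout, I will exploit the elliptical structure $\bm \eta = \bm R\, \Sigma^{1/2}\bm u$ with $\bm u$ uniform on the sphere and $\bm R \perp \bm u$, together with the fact that $f$ is the radial clipping to the ball of radius $h = 20\rho\sqrt d$, so $\Norm{f(x)} \le h$ for every $x$.

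For (i), I would write $f(\bm \eta) = \phi_h(\Norm{\bm \eta}) \cdot \Sigma^{1/2}\bm u / \Norm{\Sigma^{1/2}\bm u}$ and use $|\phi_h|\le h$ together with $\bm R\perp \bm u$ to bound
\[
\E f(\bm \eta)f(\bm \eta)^\top \preceq h^2 \cdot M, \qquad M := \E\Brac{\frac{\Sigma^{1/2}\bm u\bm u^\top \Sigma^{1/2}}{\Norm{\Sigma^{1/2}\bm u}^2}}.
\]
Switching to the Gaussian representation $\bm u = \bm g/\Norm{\bm g}$ with $\bm g\sim N(0,\Id)$ turns this into $M = \E[\bm h \bm h^\top / \Norm{\bm h}^2]$ for $\bm h \sim N(0,\Sigma)$. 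For any unit $v$, Hanson--Wright applied to $\Norm{\bm h}^2$ gives $\Norm{\bm h}^2 \geq \Tr(\Sigma)/2 = d/2$ with probability at least $1 - \exp(-c\cdot \effrank(\Sigma))$. On this event I bound $\iprod{v,\bm h}^2/\Norm{\bm h}^2 \le (2/d)\iprod{v,\bm h}^2$, whose expectation is at most $2\Norm{\Sigma}/d$; on the complementary event, I use the trivial bound $\iprod{v,\bm h}^2/\Norm{\bm h}^2 \le 1$, which is absorbed by the $\Norm{\Sigma}/d$ term under the mild effective rank assumption. This yields $\Norm{M}\le O(\Norm{\Sigma}/d)$ and hence $\Norm{\E f(\bm\eta)f(\bm\eta)^\top} \le O(h^2 \Norm{\Sigma}/d) = O(\rho^2 \Norm{\Sigma})$.

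For (ii), I apply the matrix Bernstein inequality to the centered matrices $Y_i = f(\bm \eta_i)f(\bm \eta_i)^\top - \E f(\bm \eta_i)f(\bm \eta_i)^\top$. Boundedness gives $\Norm{Y_i}\le 2h^2 = O(\rho^2 d)$. For the variance proxy I use
\[
\Norm{\sum_{i=1}^n \E Y_i^2} \le n\, \Norm{\E \Norm{f(\bm \eta)}^2 f(\bm \eta) f(\bm \eta)^\top} \le n\, h^2\, \Norm{\E f(\bm \eta)f(\bm \eta)^\top} = O\bigl(n\rho^4 d\,\Norm{\Sigma}\bigr),
\]
so matrix Bernstein delivers
\[
\Norm{\tfrac1n\sum_i Y_i} \le O\Paren{\rho^2\sqrt{\tfrac{d\,\Norm{\Sigma}\log(d/\delta)}{n}} + \tfrac{\rho^2 d\log(d/\delta)}{n}}
\]
with probability $1-\delta$. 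For $n \geq \effrank(\Sigma) \log(d/\delta) = (d/\Norm{\Sigma})\log(d/\delta)$ both terms are $O(\rho^2 \Norm{\Sigma})$. Combining with the population bound from (i) gives $\Norm{\frac 1n \sum f(\bm \eta_i) f(\bm \eta_i)^\top} \le O(\rho^2\Norm{\Sigma})$, as claimed.

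The main obstacle is the population bound in step (i): naively, $\Norm{f}\le h$ only gives $\Norm{\E ff^\top}\le h^2 = O(\rho^2 d)$, which is worse than the target by a factor of $\effrank(\Sigma)$. The refined bound genuinely requires exploiting that the ``directional'' part $\Sigma^{1/2}\bm u/\Norm{\Sigma^{1/2}\bm u}$ has covariance of order $\Sigma/d$ rather than $\Id$, and the Hanson--Wright step handling the denominator $\Norm{\bm h}^2$ is where the effective-rank condition enters. The concentration piece in step (ii) is then essentially routine given the $\Norm{f}\le h$ bound.
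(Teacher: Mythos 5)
Your argument is correct and follows the same two-step outline as the paper's proof: a population bound $\Norm{\E f(\bm\eta)f(\bm\eta)^\top} \le O(\rho^2\Norm{\Sigma})$, followed by a bounded-matrix Bernstein concentration step (the latter is essentially identical to the paper's application of \cref{fact:covariance-estimation-general}). The difference is in how the population bound is obtained. The paper dominates $f(\bm\eta)$ by the full spherical projection $s(\bm\eta) = h\,\bm\eta/\Norm{\bm\eta}$ with $h=20\rho\sqrt d$ and then cites \cref{fact:elliptical-projection-covariance}, which gives $\Norm{\E s(\bm\eta)s(\bm\eta)^\top}\le h^2\Norm{\Sigma}/\Tr(\Sigma) = O(\rho^2\Norm{\Sigma})$ in one line and with no side conditions. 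You re-derive the same bound from scratch by passing to the Gaussian representation $\bm h\sim N(0,\Sigma)$ of the spherical projection and controlling the denominator $\Norm{\bm h}^2$ via concentration plus truncation. Both routes work; yours is more self-contained, the paper's is shorter. One refinement worth noting: you say the exceptional-event contribution is absorbed ``under the mild effective rank assumption,'' but no such assumption is actually needed and none appears in the lemma statement. The bad event has probability at most $\exp(-c\,\effrank(\Sigma))$, and you need it to be $O(\Norm{\Sigma}/\Tr(\Sigma)) = O(1/\effrank(\Sigma))$; since $r e^{-cr}$ is bounded by an absolute constant for all $r\ge 1$ and $\effrank(\Sigma)\ge 1$ always, the absorption holds unconditionally. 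Making that explicit keeps your proof in line with the lemma's hypotheses and with the paper's proof, which also imposes no effective-rank condition.
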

\begin{proof}
	Let $s\Paren{\bm \eta} = \frac{h}{\Norm{\bm \eta}}\bm \eta = \frac{h}{\Norm{\noisetransformation\Paren{\bm\eta}}}\noisetransformation\Paren{\bm\eta}$, 
	where $h=20\rho\sqrt{d}$. Note that $\Norm{s\Paren{\bm \eta}} \ge \Norm{\noisetransformation\Paren{\bm \eta}}$. By \cref{fact:elliptical-projection-covariance} and since by our parametrization $\Tr \Sigma = d$, it follows that
	\[
	\sup_{\norm{u}\le 1} \E\Iprod{\noisetransformation\Paren{\bm\eta}, u}^{2} \le \sup_{\norm{u}\le 1} \E\Iprod{s\Paren{\bm\eta}, u}^{2} 
	\le \Paren{20\rho}^2 \Norm{\Sigma}\,.
	\]
	
	Let $t = \rho^2 \norm{\Sigma} \cdot \sqrt{\tfrac {\effrank(\Sigma) \log (d/\delta)} n} \leq \rho^2 \norm{\Sigma}$.
	It follows by \cref{fact:covariance-estimation-general} that
	\[
	\Norm{\frac{1}{n}\sum_{i=1}^n \noisetransformation\Paren{\bm \eta_i}  \Paren{\noisetransformation\Paren{\bm \eta_i}}^\top } 
	\le O\Paren{\rho^2\Norm{\Sigma} + t}
	\le O\Paren{\rho^2\Norm{\Sigma}}\,,
	\]
	with probability at least $1-\delta$, where we also used that $t = \rho^2 \norm{\Sigma} \cdot \sqrt{\tfrac {\effrank(\Sigma) \log (d/\delta)} n} \leq \rho^2 \norm{\Sigma}$ when applying \cref{fact:covariance-estimation-general}.
\end{proof}

In the following lemma we show that Assumption 3 of \cref{assump:higher_order_good_distribution} is satisfied for 
$\sigma = O(\rho\Norm{\Sigma})$ with probability $1-\delta$ as long as $n \gtrsim {r(\Sigma)^k\log(d/\delta)}$.

\begin{lemma}\label{lem:subgaussianity-elliptical}
	Let $k$ be a positive integer and suppose that $n \ge 10\cdot \effrank(\Sigma)^k \cdot\log\Paren{d/\delta}$ and 
		\[
	\Norm{\Sigma}_F \le \frac{\Tr\Paren{\Sigma}}{10 \sqrt{k\log d}}\,.
	\]
	Then with probability at least $1-\delta$, uniform distribution over the set $\Set{\bm \eta_1,\ldots,\bm \eta_n}$ 
	has $(2k, 2k)$-certifiable $O\Paren{\rho\sqrt{k\Norm{\Sigma}}}$-bounded $\noisetransformation$-moments. 
\end{lemma}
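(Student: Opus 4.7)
The plan is to prove the lemma in two steps: first establish that the (population) elliptical distribution has $(2k,2k)$-certifiably $O\Paren{\rho\sqrt{k\Norm{\Sigma}}}$-bounded $\noisetransformation$-moments, and then transfer this SoS certificate to the uniform distribution on $\Set{\bm\eta_1,\ldots,\bm\eta_n}$ via the subsampling lemma invoked in the proof of \cref{lem:subgaussianity-product}, making use of the deterministic bound $\Norm{\noisetransformation(x)} \le h = 20\rho\sqrt{d}$.

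For the population step, note that since $\noisetransformation$ is the projection onto the Euclidean ball of radius $h$, we can write $\noisetransformation(\bm\eta) = c(\bm\eta)\bm\eta$ for a scalar $c(\bm\eta) \in \brac{0, h/\Norm{\bm\eta}}$. Letting $s(\bm\eta) = h\bm\eta/\Norm{\bm\eta}$ denote the spherical projection, the identity $\Iprod{s(\bm\eta),v}^{2k} - \Iprod{\noisetransformation(\bm\eta),v}^{2k} = \bigl(\paren{h/\Norm{\bm\eta}}^{2k} - c(\bm\eta)^{2k}\bigr)\Iprod{\bm\eta,v}^{2k}$ exhibits the difference as a nonnegative scalar times an even power of a linear form, which is SoS of degree $2k$ in $v$. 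Hence it suffices to bound $\E\Iprod{s(\bm\eta),v}^{2k}$ in SoS. Since spherical projections of elliptical distributions depend only on $\Sigma$, $s(\bm\eta)$ has the same distribution as $h\Sigma^{1/2}\bm u/\Norm{\Sigma^{1/2}\bm u}$ with $\bm u$ uniform on $S^{d-1}$. I then decompose by the good event $E = \Set{\Norm{\Sigma^{1/2}\bm u}^2 \ge 1/2}$: on $E$ the pointwise SoS inequality $\Iprod{s(\bm\eta),v}^{2k} \le (2h^2)^k \Iprod{\bm u,\Sigma^{1/2}v}^{2k}$ together with the classical spherical moment bound $\E\Iprod{\bm u,w}^{2k} \le (Ck/d)^k \Norm{w}^{2k}$ and the degree-$2$ PSD certificate $v^\top \Sigma v \le \Norm{\Sigma}\Norm{v}^2$ yields $\E\brac{\Ind_E\Iprod{s(\bm\eta),v}^{2k}} \le O\Paren{\rho^2 k \Norm{\Sigma}}^k \Norm{v}^{2k}$ in SoS. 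On $\bar E$ the SoS Cauchy--Schwarz inequality gives $\Iprod{s(\bm\eta),v}^{2k} \le h^{2k}\Norm{v}^{2k}$, and applying Hanson--Wright to $\bm u^\top\Sigma\bm u$ under the Frobenius hypothesis $\Norm{\Sigma}_F \le \Tr(\Sigma)/(10\sqrt{k\log d})$ yields $\Pr\brac{\bar E} \le d^{-\Omega(k)}$; combined with $h^{2k} = O(\rho^2 d)^k$ and $\Norm{\Sigma} \ge \Tr(\Sigma)/d = 1$ in the present parametrization, this produces a contribution of the same order as the good-event term, completing the population SoS certificate.

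The main obstacle is the empirical transfer step: obtaining sample complexity $n \gtrsim r(\Sigma)^k \log(d/\delta)$ rather than the crude $d^k\log(d/\delta)$ that a black-box application of the subsampling lemma with an $\epsilon$-net over the full $d$-dimensional unit sphere of test vectors $v$ would yield. The refined bound must exploit that $\noisetransformation(\bm\eta)$ always lies in the range of $\Sigma^{1/2}$, and that under the Frobenius hypothesis on $\Sigma$ this range is effectively $r(\Sigma)$-dimensional; concretely, after the change of variables $v \mapsto \Sigma^{1/2}v$ the relevant covering argument can be carried out effectively in $r(\Sigma)$ dimensions, with the contribution of the bad event $\bar E$ absorbed into the $d^{-\Omega(k)}$ tail already established above. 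All remaining details are routine adaptations of the proof of \cref{lem:subgaussianity-product}.
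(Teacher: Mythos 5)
Your two-step plan (population SoS bound, then transfer to the empirical distribution via subsampling) is exactly the structure of the paper's proof, and your population argument closely tracks \cref{lem:subgaussianity-elliptical-technical}: the paper also reduces $\noisetransformation(\bm\eta)$ to the spherical projection $s(\bm\eta)$ pointwise in SoS, represents $s(\bm\eta)$ via a Gaussian $\bm w \sim N(0,\Sigma)$, and splits on whether $\Norm{\bm w}^2 > \Tr(\Sigma)/2$, invoking \cref{fact:quadratic-form-gaussian} to bound the bad event by $d^{-2k}$ and \cite{KS17} for certifiable sub-Gaussianity of $\langle\bm w, v\rangle$. Your version with $\bm u$ uniform on the sphere and the event $\Norm{\Sigma^{1/2}\bm u}^2 \ge 1/2$ is a cosmetic variant and would go through.

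The genuine gap is in your final paragraph on the empirical transfer. You mischaracterize \cref{lem:subsampling_certifiable_phi_moments} as an $\epsilon$-net argument over the $d$-dimensional sphere; it is actually a matrix Bernstein argument (\cref{fact:matrix_bernstein}) applied to $\bm M_i = \noisetransformation(\bm\eta_i)^{\otimes k}\bigl(\noisetransformation(\bm\eta_i)^{\otimes k}\bigr)^\top$, and that lemma \emph{already} contains the refined sample complexity for elliptical distributions as its final paragraph. The improvement from $d^k$ to $(\effrank(\Sigma)/k)^k$ is a bookkeeping identity, not a covering argument: since $\Tr\Sigma = d = \effrank(\Sigma)\Norm{\Sigma}$ and $\sigma = \Theta(\rho\sqrt{k\Norm{\Sigma}})$, the deterministic operator-norm bound $\Norm{\bm M_i} \le (\rho\sqrt{d})^{2k}$ is exactly $\Theta\bigl(\sigma^{2k}(\effrank(\Sigma)/k)^k\bigr)$, so the Bernstein deviation is already small relative to $\sigma^{2k}$ once $n \gtrsim (\effrank(\Sigma)/k)^k\log(d/\delta)$. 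Your proposed fix --- change variables $v\mapsto\Sigma^{1/2}v$ and cover ``in $\effrank(\Sigma)$ dimensions'' --- conflates effective rank with rank: a full-rank $\Sigma$ can have tiny $\effrank(\Sigma) = \Tr\Sigma/\Norm{\Sigma}$, in which case the range of $\Sigma^{1/2}$ is all of $\R^d$ and there is no low-dimensional subspace to cover. Making your route rigorous would at minimum require a multi-scale (Dudley-type) argument over the image ellipsoid of $\Sigma^{1/2}$, and it is not a ``routine adaptation'' of \cref{lem:subgaussianity-product}, whose sample complexity is the crude $d^k\log(d/\delta)$ precisely because it applies only the generic part of the subsampling lemma.
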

\begin{proof}
	By \cref{lem:subgaussianity-elliptical-technical}, $\bm\eta$ has 
	$(2k, 2k)$-certifiable $O\Paren{\rho\sqrt{k\Norm{\Sigma}}}$-bounded $\noisetransformation$-moments. 
	Since $\forall x\in \R^d$, $\Norm{\noisetransformation\Paren{x}} \le 20\rho\sqrt{d}$
	the result follows from \cref{lem:subsampling_certifiable_phi_moments}.
\end{proof}

\subsection{Unknown \texorpdfstring{$\rho$}{Scale}}\label{par:unknown-rho}
If $\rho$ is not known and if $\alpha = 0.01$ (or arbitrary constant), 
we can first divide the $n$ samples into $2$ subsamples of size at least $n' = \lfloor n/2\rfloor$, 
and use the first half to learn $\rho$, and the second half for mean estimation.

To do this, let us take the first $n'$ samples and again divide them into two subsamples of size at least 
$n'' = \lfloor n'/2\rfloor$, and subtract the second half from the first half (if $n'$ is odd, let us ignore the last sample). 
Then we get $n''$ ($\eps$-corrupted) iid copies $\bm \xi_1,\ldots \bm \xi_{n''}$
of an elliptically distributed vector with location 
$0$ and the scatter matrix $\Sigma$ as the original distribution. 
Moreover, $\bm R$ for this distribution satisfies $\Pr\Brac{\bm R \le 2\rho\sqrt{d}} \ge \alpha^2$. 
Let us compute and sort the norms of the samples, and let us take the maximal value $\bm b$ among the smallest $\alpha^2 / 2$ norms.
If $\eps \le 0.01 \alpha$, then by Chernoff bound with probability at least $1-\exp(-\Omega(\alpha n))$, 
$\bm b \le 2\rho\sqrt{d}$ and
$\Pr\Brac{\bm R \le \bm b \given \bm b} \ge \alpha^2/4$. 

Now, we take the second $n'$ samples, divide them into two subsamples of size at least 
$n'' = \lfloor n'/2\rfloor$, and add the second half to the first half (if $n'$ is odd, let us ignore the last sample). 
Since the initial noise distribution is symmetric about zero, 
we get ($\eps$-corrupted) iid samples  $2\mu + \bm \zeta_1,\ldots, 2\mu + \bm \zeta_{n''}$, where $\bm \zeta_i$ have
the same distribution as $\bm \xi_i$ described above.
Hence we can use $h = 20\bm b$ in our proofs and get the same guarantees as if we used $h = 20\rho\sqrt{d}$ 
(up to a constant factor $O(1/\alpha)$).
\section{Proof of Identifiability for Near-Optimal Error}
\label{sec:identifiability_optimal}

In this section, we will prove the statement of identifiability underlying \cref{thm:main_semi_product_full}.
We first introduce some notation.
We use similar notation as in \cite{jerry_lecture_notes}.
Let $D$ be an $(\alpha, \rho)$-semi-product distribution with location $\mustar \in \R^d$.
Let $\bm y_1^*, \ldots, \bm y_n^*$ be $n$ \iid samples from $D$ and let $y_1, \ldots, y_n$ be an $\e$ corruption thereof.
Further, let $S_g = \Set{i \suchthat y_i = \bm y_i^*}$ be the set of uncorrupted indices and $S_b = [n] \setminus S_g$ be the set of corrupted ones.
Let $$\cW_\e = \Set{w \in \R^n \suchthat 0 \leq w \leq \tfrac 1 n \mathbbm{1}, \Normo{w - \tfrac 1 n \mathbbm{1}} \leq \e}\,,$$ where $\mathbbm{1}$ is the all-ones vector and the inequalities between vectors are entry-wise.
Throughout the note we assume that $\e \lesssim \alpha$.
Note, that otherwise faithfully recovering $\mustar$ is impossible: For each coordinate, with high probability there is only an $\Theta\Paren{\alpha}$-fraction of the samples where the noise is small. If $\e \gtrsim \alpha$, the adversary could corrupt precisely this fraction and erase all information about this coordinate.

Further, for $\huberparameter >0$, let $\noisepotential_\huberparameter \from \R^d \to \R$ be defined as $x \mapsto \noisepotential_\huberparameter(x) = \sum_{j=1}^d \huberloss_\huberparameter(x_j)$, where $\huberloss_\huberparameter$ is the Huber loss
\[
    \huberloss(t) = \begin{cases}
        \tfrac 1 2 t^2\,,&\quad\text{if } \abs{t} \leq h\,,\\
        h \cdot (\abs{x} - h)\,,&\quad\text{otherwise.}
    \end{cases}
\]
and denote by $\noisetransformation = \nabla \noisepotential$ its gradient.
Note that in our case $\noisetransformation$ is the derivative of the Huberloss applied entry-wise. 
The derivative of the Huber loss is $\huberderivative \from \R \to \R,$
\[
    x \mapsto \begin{cases}
        t\,,&\quad\text{if } \abs{t} \leq h\,,\\
        h \cdot \sign(t)\,,&\quad\text{otherwise.}
    \end{cases}
\]
For $w \in \R^n$ with non-negative entries, we define $\lossfunction^w \from \R^d \to \R$ as
\[
    \lossfunction^w\paren{\mu} \coloneqq \sum_{i=1}^n w_i \noisepotential\Paren{\mu - y_i} \,.
\]
It follows that $\nabla \lossfunction^w\paren{\mu} = \sum_{i=1}^n w_i \noisetransformation\Paren{\mu - y_i}$.
We let
\[
    \muhat^w = \argmin_{\mu \in \R^d} \lossfunction^w\paren{\mu} \,.
\]
and $\Sigmahatf^w = \sum_{i=1}^n w_i f(y_i - \muhat^w)f(y_i - \muhat^w)^\top$.

Similar as in \cref{par:unknown-rho}, by looking at differences of pairs of samples, we can assume that we have access to the ($\e$-corrupted) samples from the noise distribution, by replacing $\alpha$ by $\alpha^2$ in the definition of the semi-product distribution.
This allows us to compute the following relevant parameters:
Note that
$\E_{\bm \eta \sim D} f(\bm \eta^* )f(\bm \eta^* )^\top$ is diagonal. 
We can estimate each diagonal entry $\sigma^2_j$ up to a factor $(1\pm O(\e + \sqrt{\log(1/\delta) /n}))$ with probability $1/\delta$ by using $1$-dimensional robust mean estimation. We will assume that $\E_{\bm \eta \sim D} f(\bm \eta^* )f(\bm \eta^* )^\top$ is known, the error $O(\e)$ does not affect our results, since our error bounds will be larger than this.
Similarly, when $\alpha$ is known, we can also compute $\rho$ as in \cref{par:unknown-rho}.

\paragraph*{Identifiability}

Our aim will be to prove the following theorem
\begin{theorem}
    \label{thm:identifiability_optimal}
    Let $w \in \cW_\e$, $\delta\in(0,1)$, and assume that $\e \leq \alpha/100$, $\rho = 1$ and $n \geq\tfrac {d\log(d)} {\alpha^4} + \log(1/\delta)$.
    In the above let $\huberparameter = 2$.
    Then with probability at least $1-\delta$ it holds that
    \[
        \Norm{\muhat^w - \mustar} \leq O\Paren{ \sqrt{\frac{d \log(d)+\log(1/\delta)}{\alpha^2 n}} + \frac {1} {\alpha^{3/2}} \cdot \sqrt{\e \Paren{\Norm{\Sigmaf^w - \E f\Paren{\bm\eta^*}f\Paren{\bm\eta^*}^\top}+ \e\log(1/\e)}} } \,.
    \]
\end{theorem}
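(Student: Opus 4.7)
The plan is to mimic the classical Gaussian-style identifiability argument, but applied to the \emph{transformed} noise $f(\bm\eta^*)$ rather than $\bm\eta^*$ itself. I would start from the local strong convexity of Lemma~7.4: setting $u = \mustar - \muhat^w$, this reduces the task to bounding $\Iprod{\nabla \lossfunction^w(\mustar), u}$ from above by $O(\alpha \|u\|)$ times the claimed error. Since $f$ is odd (the Huber derivative $\phi$ is odd) and $\bm y_i^* = \mustar + \bm\eta_i^*$ for $i \in S_g$, we have $f(\mustar - \bm y_i^*) = -f(\bm\eta_i^*)$, so I decompose
\[
\nabla \lossfunction^w(\mustar) \;=\; -\sum_{i \in S_g} w_i\, f(\bm\eta_i^*) \;+\; \sum_{i \in S_b} w_i\, f(\mustar - y_i),
\]
and bound the two pieces separately.

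For the \emph{good-sample} term, Properties~1 and~3 of the semi-product distribution imply that $f(\bm\eta^*) = \sign(\bm\eta^*)\circ\min(\abs{\bm\eta^*}, h)$ is centred and, conditional on $\abs{\bm\eta^*}$, has independent Rademacher signs, hence is sub-Gaussian with proxy $h = 2$. A standard sub-Gaussian stability argument for sums weighted by $w \in \cW_\e$ (any $(1-\e)$-fraction of samples concentrates with deviation $\e\sqrt{\log(1/\e)}$), combined with an $\e$-net over the unit sphere, yields with probability $1-\delta$
\[
\sup_{w \in \cW_\e}\,\Bignorm{\sum_{i \in S_g} w_i\, f(\bm\eta_i^*)} \;\lesssim\; \sqrt{\tfrac{d\log d + \log(1/\delta)}{n}} + \e\sqrt{\log(1/\e)},
\]
which after pairing with $u$ contributes the first summand of the theorem (after dividing by $\alpha$ at the end).

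For the \emph{bad-sample} term I use Cauchy--Schwarz,
\[
\Iprod{\sum_{i \in S_b} w_i\, f(\mustar - y_i), u} \;\le\; \sqrt{\e}\cdot\sqrt{\sum_{i \in S_b} w_i\, \Iprod{f(\mustar - y_i), u}^2},
\]
and then $1$-Lipschitzness of $f$ to swap $\mustar$ for $\muhat^w$ inside $f$ at additive cost $O(\|u\|^2)$. The resulting matrix $\sum_{i \in S_b} w_i\, f(\muhat^w - y_i)f(\muhat^w - y_i)^\top$ equals $\Sigmaf^w - \sum_{i \in S_g} w_i\, f(\muhat^w - y_i)f(\muhat^w - y_i)^\top$. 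A second Lipschitz swap on the good portion brings it within an $O(\|u\|^2)\Id$ operator-norm error of $\sum_{i \in S_g} w_i\, f(\bm\eta_i^*)f(\bm\eta_i^*)^\top$, which by matrix stability of the sub-Gaussian vectors $f(\bm\eta_i^*)$ is in turn within $O(\e\log(1/\e))\Id$ of $\E f(\bm\eta^*)f(\bm\eta^*)^\top$ uniformly in $w \in \cW_\e$. Combining,
\[
\sum_{i \in S_b} w_i\, \Iprod{f(\mustar - y_i), u}^2 \;\lesssim\; \Paren{\Norm{\Sigmaf^w - \E f(\bm\eta^*)f(\bm\eta^*)^\top} + \e\log(1/\e)}\|u\|^2 + \|u\|^4.
\]
Plugging this into the Cauchy--Schwarz bound and then into Lemma~7.4, the resulting $\sqrt{\e}\,\|u\|^2$ cross-term is absorbed into $\tfrac{\alpha}{4}\|u\|^2$ since $\e \le \alpha/100$, and rearranging yields the theorem, with the $\alpha^{-3/2}$ factor arising from combining the $\alpha^{-1}$ of inverting strong convexity with an extra $\alpha^{-1/2}$ tracked through the bound on the bad-sample piece.

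The main technical obstacle is the uniform matrix stability of the transformed-noise second moment after the Lipschitz swap: unlike the quadratic-loss setting where one may simply add and subtract $\muhat(T \cap S_g)$, the non-linearity of $f$ forces the argument outlined in the Techniques section, where one first shows that $\hat{\Delta} := \muhat^w - \mustar$ has small coordinates, then proves the lower bound $(1-\tilde{O}(\e))\text{Cov}(f(\Delta - \bm\eta^*))$ for \emph{deterministic} $\Delta$, and finally promotes this to a uniform-over-$\Delta$ statement via an $\e$-net. It is precisely this covering step that costs the $\log d$ factor in the final sampling rate $\sqrt{(d\log d + \log(1/\delta))/(\alpha^2 n)}$.
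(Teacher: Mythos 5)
The overall skeleton of your argument matches the paper's proof of Theorem~\ref{thm:identifiability_optimal}: apply the strong-convexity bound of Lemma~\ref{lem:strong-convexity-product}, decompose $\nabla\lossfunction^w(\mustar)$ into contributions from good and bad samples, apply Cauchy--Schwarz to the bad contribution, swap $\mustar$ for $\muhat^w$ via 1-Lipschitzness, and rewrite $\sum_{i\in S_b} w_i\iprod{f(\muhat^w-y_i),u}^2$ as $u^\top\Sigmaf^w u$ minus the good-sample quadratic form. (Your decomposition directly into $S_g$ and $S_b$ is a small variant of the paper's ``full sum plus bad-sample correction'' split and saves you what the paper calls Term~A; that part is fine.)

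The gap is in how you handle the ``second Lipschitz swap'' on the good portion. You claim swapping $f(\muhat^w-\bm y_i^*)=f(u-\bm\eta_i^*)$ for $-f(\bm\eta_i^*)$ gives an $O(\|u\|^2)\Id$ operator-norm error, and you then carry a $+\|u\|^4$ additive term into the bad-sample bound. This is wrong. Writing $f(u-\bm\eta_i^*)=-f(\bm\eta_i^*)+d_i$ with $\|d_i\|\le\|u\|$, the rank-one correction contains cross terms $f(-\bm\eta_i^*)d_i^\top + d_if(-\bm\eta_i^*)^\top$, whose operator norm is $O(\|u\|)$ (not $O(\|u\|^2)$), and whose contribution to the quadratic form $u^\top(\cdot)u$ is of order $\|u\|\cdot\sqrt{\sum w_i\iprod{f(\bm\eta_i^*),u}^2}\cdot\|u\|=O(\|u\|^3)$, not $O(\|u\|^4)$. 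Feeding $\|u\|^3$ through the $\sqrt{\e}$ Cauchy--Schwarz factor and dividing by the strong-convexity term $\alpha\|u\|^2$ leaves a term of order $\sqrt{\e/\|u\|}/\alpha$, which only absorbs when $\|u\|\gtrsim\e/\alpha^2$. So the swap-based route can only conclude $\|u\|\lesssim\e/\alpha^2$, which is strictly weaker than the theorem's $\e\sqrt{\log(1/\e)}/\alpha^{3/2}$ whenever $\alpha\lesssim 1/\log(1/\e)$ (a regime the theorem permits, since it only assumes $\e\le\alpha/100$). This is precisely why the paper does \emph{not} swap on the good term: Lemma~\ref{lem:goodness_conditions}(3) instead lower-bounds $\sum_{i\in S_g}w_i f(\muhat^w-\bm y_i^*)f(\muhat^w-\bm y_i^*)^\top$ directly, first establishing the entrywise bound $\normi{\muhat^w-\mustar}\le O(\e/\alpha)$ from the proof of Lemma~\ref{lem:strong-convexity-product}, then concentrating $\sum w_i' f(\bm\eta_i^*+\Delta)f(\bm\eta_i^*+\Delta)^\top$ at \emph{fixed} small $\Delta$ (using independence of signs conditioned on magnitudes, via Lemma~\ref{lem:subgaussian-lower-bound}), and finally covering the cube $[-10\e/\alpha,10\e/\alpha]^d$ with an $\e$-net. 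Your final paragraph does name this machinery, but it contradicts your earlier claim that the Lipschitz swap suffices -- you need to pick one, and only the $\e$-net argument actually delivers the theorem. (A minor additional inaccuracy: the paper's stability bound on $\Norm{\sum_{i\in S_g}w_i f(\bm\eta_i^*)}$ via Fact~\ref{fact:subgaussian-stability} is $O(\e\sqrt{\log(1/\e)}+\sqrt{(d+\log(1/\delta))/n})$ with no $\log d$; the $\log d$ enters only through the net over $\Delta$ in Lemma~\ref{lem:goodness_conditions}(3).)
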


In our proof, we will use the following properties analagous to \emph{stability-conditions} used before in the literature (cf. for example \cite{DK_book}).
\begin{lemma}
    \label{lem:goodness_conditions}
    Suppose that the conditions of \cref{thm:identifiability_optimal} hold.
    Then with probability at least $1-\delta$ the following events hold
    \begin{enumerate}
        \item $\Norm{\sum_{i=1}^n w_i f\Paren{\bm \eta^*_i}} \leq O\Paren{\e \sqrt{\log(1/\e)} + \sqrt{\tfrac {d + \log(1/\delta)} n}}$,
        \item $\Norm{\sum_{i \in S_g} w_i f\Paren{\bm \eta^*} f\Paren{\bm \eta^*}^\top 
        	- \E f\Paren{\bm \eta^*_i}f\Paren{\bm\eta^*_i}^\top} 
        \leq O\Paren{\e \log(1/\e) + \sqrt{\tfrac {d + \log(1/\delta)} n}}$,
         \item $\sum_{i \in S_g} w_i f\Paren{\muhat^w - \bm y_i^*} f\Paren{\muhat^w - \bm y_i^*}^\top \succeq 
        \E f\Paren{\bm\eta^*}f\Paren{\bm\eta^*}^\top -O\Paren{\e \log(1/\e) / \alpha + \sqrt{\frac{d\log d + \log\Paren{1/\delta }}{n} } } \cdot \Id_d$.
    \end{enumerate}
\end{lemma}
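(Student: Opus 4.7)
The plan is to establish (1) and (2) via standard sub-Gaussian stability arguments, and to establish (3) via an $\e$-net argument over the shift $\hat\Delta \coloneqq \muhat^w - \mustar$. I take $\rho = 1$ throughout, as in the theorem statement, and write $\Sigma^* \coloneqq \E f(\bm\eta^*) f(\bm\eta^*)^\top$.

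For (1) and (2), the key observation is that by the reasoning already present in \cref{lem:covariance-bound-product} and \cref{lem:subgaussianity-product}, conditional on the magnitudes $(|\bm\eta^*_j|)_j$ the vector $f(\bm\eta^*) = (\huberderivative(\bm\eta^*_j))_j$ has independent bounded symmetric coordinates (Property 3 of \cref{def:semi_product_general}), so $f(\bm\eta^*)$ is $O(1)$-sub-Gaussian. Given this, both (1) and (2) are the standard resilience/stability conditions for iid $O(1)$-sub-Gaussian vectors (cf. Chapter 2 of \cite{DK_book} or the arguments in \cite{jerry_lecture_notes}): I reduce the supremum over $w \in \cW_\e$ to a union bound over extreme points of $\cW_\e$, which correspond to subsets $T \subseteq [n]$ with $|T| \ge (1-\e) n$, apply scalar sub-Gaussian concentration in a fixed direction $u \in \bbS^{d-1}$, and close with an $\eta$-net over the sphere. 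The factors $\e\sqrt{\log(1/\e)}$ and $\e \log(1/\e)$ arise because the log-cardinality of the family of subsets is $\Theta(\e n \log(1/\e))$, which must be balanced against the sub-Gaussian tail.

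The substantive part is (3). First, I establish the coordinatewise bound $\|\hat\Delta\|_\infty \le O(\e/\alpha + 1/(\alpha\sqrt n))$: this is essentially the per-coordinate argument embedded in the proof of \cref{lem:strong-convexity-product}, combining scalar Hoeffding applied to $(\lossfunction_j^w)'(\mustar_j)$ with the local strong convexity of the one-dimensional Huber loss (Property 2 of \cref{def:semi_product_general}) and a union bound over $j \in [d]$. Second, I fix a deterministic $\Delta$ with $\|\Delta\|_\infty \le 1$ and, uniformly in $w \in \cW_\e$, I show
\begin{align*}
\sum_{i \in S_g} w_i f(\Delta - \bm\eta_i^*) f(\Delta - \bm\eta_i^*)^\top &\succeq \E f(\Delta - \bm\eta^*) f(\Delta - \bm\eta^*)^\top \\
&\quad - O\Paren{\e \log(1/\e) + \sqrt{(d + \log(1/\delta'))/n}} \cdot \Id_d \,,
\end{align*}
exactly as in (2), since conditional on $(|\bm\eta^*_j|)_j$ the vector $f(\Delta - \bm\eta^*)$ still has independent bounded coordinates and hence is $O(1)$-sub-Gaussian.

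Third, I discretize by taking an $\eta$-net $\cN$ of mesh $\eta = \Theta(\e/\alpha)$ over $\{\Delta \in \R^d : \|\Delta\|_\infty \le 1\}$, so $|\cN| \le \exp(O(d \log d))$. Setting $\delta' = \delta/|\cN|$ and taking a union bound yields the matrix inequality simultaneously at every $\Delta \in \cN$. For the random $\hat\Delta$, I round to its nearest net point; since $f$ is $1$-Lipschitz, the discretization incurs a spectral-norm error of at most $O(\eta) = O(\e/\alpha)$, which is absorbed into the stated budget. This net is precisely what introduces the (possibly suboptimal) $\log d$ factor in the sample complexity. Finally, to replace the deterministic expectation $\E f(\hat\Delta - \bm\eta^*) f(\hat\Delta - \bm\eta^*)^\top$ by the target $\Sigma^*$ up to $O(\|\hat\Delta\|_\infty) \cdot \Id_d = O(\e/\alpha) \cdot \Id_d$, I again invoke Property 3: conditional independence of the sign variables makes the off-diagonal entries of $\E f(\Delta - \bm\eta^*) f(\Delta - \bm\eta^*)^\top$ factorize to $O(|\Delta_j||\Delta_k|)$, while the diagonal entries differ from those of $\Sigma^*$ by at most $O(\|\Delta\|_\infty)$ via a short case analysis separating the linear and saturated regimes of $\huberderivative$. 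The main obstacle is precisely the data-dependence of $\hat\Delta$ in (3); the net argument resolves it but is responsible for the extra $\log d$ factor, which a finer chaining argument could potentially remove.
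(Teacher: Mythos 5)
Your proof shares the paper's overall architecture — sub-Gaussian stability for parts (1)–(2) and an $\e$-net over the random shift $\hat\Delta$ for part (3) — but the net parameters in your step for (3) do not work out, and you glide over a subtlety in the lower bound for a fixed nonzero $\Delta$.

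On the net: you take a mesh-$\Theta(\e/\alpha)$ net over $\{\|\Delta\|_\infty \le 1\}$ and claim size $\exp(O(d\log d))$. This is wrong twice over. First, a mesh-$\eta$ $\ell_\infty$-net of $[-1,1]^d$ has $(\Theta(1/\eta))^d$ points, so $\log|\cN| = d\log(\alpha/\e)$, which is not $O(d\log d)$ in general (the theorem only asserts $\e^{1/3} \lesssim \alpha$, so $\alpha/\e$ can be superpolynomial in $d$). Second, and more importantly, mesh $\Theta(\e/\alpha)$ is far too coarse. For two net points $\Delta, \Delta'$ with $\|\Delta-\Delta'\|_\infty \le \eta$, Lipschitzness gives $\|f(\Delta - \bm\eta_i) - f(\Delta'-\bm\eta_i)\|_2 \le \sqrt{d}\,\eta$, and since $\|f(\cdot)\|_2 \le 2\sqrt{d}$, the operator norm of the difference of outer products is $O(d\eta)$, not $O(\eta)$. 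To keep the discretization error at the budgeted $O(\e/\alpha)$ you need $\eta = O\Paren{\e/(\alpha\cdot \poly(d))}$, and the net should be over the cube $[-O(\e/\alpha), O(\e/\alpha)]^d$ where $\hat\Delta$ actually lives (which you yourself established). Then $\log|\cN| = O(d\log d)$ as required, and this is precisely the source of the $d\log d$ sample complexity. The paper controls the conversion through $\Normi{\cdot}$ and a tighter mesh; you lose a factor of $d$ (or at least $\sqrt{d}$) by asserting the spectral-norm error is $O(\eta)$.

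On the per-$\Delta$ lower bound: you say it follows ``exactly as in (2),'' but (2) concerns zero-mean sub-Gaussian vectors, whereas for $\Delta \neq 0$ the transformed vectors $f(\Delta - \bm\eta^*)$ have nonzero conditional mean. One must control the cross terms $\sum_i w_i(\bm\xi_i a_i^\top + a_i\bm\xi_i^\top)$ where $a_i = \E[f(\Delta - \bm\eta_i^*)\mid \bm A]$, uniformly over $w$ and over a net of unit directions. The paper isolates exactly this step as \cref{lem:subgaussian-lower-bound}; it is not a restatement of (2). Your sketch should either invoke such a lemma explicitly or supply the cross-term argument; without it, the claimed lower bound is not justified.
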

\begin{proof}
By \cref{lem:covariance-bound-product}, $f(\bm \eta^*)$ is $O(1)$-sub-Gaussian. 
Hence the first two inequalities follow from \cref{fact:subgaussian-stability}.
  
Let us prove the third one. First, let $\Delta \in [-10\e/\alpha,10\e/\alpha]^d$ be an arbitrary fixed (non-random) vector. 
Let $\bm A = \Set{\Abs{\bm \eta_{ij}^*} \suchthat i\in[n]\,, j\in [d]}$. Consider the random matrix
\[
\bm M(i) = f\Paren{\bm \eta^*_i + \Delta} f\Paren{\bm \eta^*_i + \Delta}^\top  - 
\E \Brac{f\Paren{\bm \eta^*_i + \Delta} \given \bm A } 
\transpose{\E \Brac{f\Paren{\bm \eta^*_i + \Delta} \given \bm A }}
\]
Since for $l, j \in [d]$ such that $l\neq j$,  $\sign(\bm\eta_{ij})$ and $\sign(\bm\eta_{il})$ are independent given  $\bm A$,
$
\E\Brac{\bm M(i)_{lj}\given \bm A } = 0\,.
$
Note that for all $j\in[d]$, $\E\Brac{\bm M(i)_{jj}\given \bm A }$ is equal to
\begin{align*}
\E\Brac{f\Paren{\Paren{\bm \eta^*_i}_j}^2 
- f\Paren{\Paren{\bm \eta^*_i}_j} \Brac{f\Paren{\Paren{\bm \eta^*}_j + \Delta_j} -f\Paren{\Paren{\bm \eta^*_i}_j} }
+ \Brac{f\Paren{\Paren{\bm \eta^*}_j + \Delta_j} -f\Paren{\Paren{\bm \eta^*_i}_j} }^2\given \bm A }\,.
\end{align*}
Since $f$ is $1$-Lipschitz and is bounded by $2$, 
\begin{align*}
\Abs{\E\Brac{f\Paren{\Paren{\bm \eta^*_i}_j} \Paren{f\Paren{\Paren{\bm \eta^*_i}_j + \Delta_j} -f\Paren{\Paren{\bm \eta^*_i}_j} } 
		\given \bm A } }
\le 2\Abs{\Delta_j} \le 20\e/\alpha\,.
\end{align*}
Hence we get
\[
\E  \Brac{\bm M(i)  \given \bm A }  \succeq  
\E  \Brac{f\Paren{\bm\eta^*_i}f\Paren{\bm\eta^*_i}^\top \given \bm A }   - O\Paren{\e / \alpha} \cdot \Id_d\,.
\]

Denote $\bm D = \frac{1}{n}\sum_{i=1}^n\E \Brac{f\Paren{\bm \eta^*_i}f\Paren{\bm \eta^*_i}^\top \given \bm A }$.
Since the entries of $\sign\Paren{\bm\eta^*}$ are mutually independent given $\bm A$, the entries of $f\Paren{\bm \eta^* + \Delta} = f\Paren{\sign\Paren{\bm\eta^*} \cdot\Abs{\bm \eta_{ij}^*}  + \Delta} $ are also mutually independent given $\bm A$.
Since the entries are bounded by $O(1)$, by \cref{lem:subgaussian-lower-bound}, for all $\delta \in (0,1)$,
\begin{align*}
\Pr\Brac
{
\sum_{i=1}^n w_i' f\Paren{\bm \eta^*_i + \Delta} f\Paren{\bm \eta^*_i + \Delta}^\top 
\succeq \bm D
 - O\Paren{\frac{\e}{\alpha} + \e\log(1/\e) + \sqrt{\frac{d + \log\Paren{\frac{1}{\delta} }}{n} } } \cdot \Id_d 
\given \bm A
}
\ge 1 - \delta/10\,,
\end{align*}
where $w_i' = w_i \ind{i\in S_g}$.
Taking expectations from both sides, we get
\[
\Pr\Brac
{
\sum_{i=1}^n w_i'  f\Paren{\bm \eta^*_i + \Delta} f\Paren{\bm \eta^*_i + \Delta}^\top 
	\succeq \bm D
	- O\Paren{\frac{\e}{\alpha}  + \e\log(1/\e) + \sqrt{\frac{d + \log\Paren{\frac{1}{\delta} }}{n} } }  \cdot \Id_d 
}
\ge 1 - \delta/10\,.
\]
Let $\cN_{\e'}$ be an $\e'$-net in $[-10\e/\alpha,10\e/\alpha]^d$ of size at most $\Paren{\frac{20\e}{\e' \alpha}}^d$. Since $f$ is $1$-Lipschitz and is bounded by $1$, 
\[
\Normi{f\Paren{\bm \eta^*_i + \Delta} w_if\Paren{\bm \eta^*_i + \Delta}^\top - f\Paren{\bm \eta^*_i + \Delta'} f\Paren{\bm \eta^*_i + \Delta'}^\top}  
\le 2\Normi{\Delta - \Delta'}\,.
\]
SInce the spectral norm can be only by at most $\sqrt{d}$ factor larger than the infinity norm, by taking $\e' = \e \sqrt{d} / (100\alpha)$ and using a union bound, we get that with probability $1-\delta/10$, for all $\Delta \in [-10\e/\alpha,10\e/\alpha]^d$ (including $\muhat^w$),
\[
	\sum_{i=1}^n w_i'  f\Paren{\bm \eta^*_i + \Delta} f\Paren{\bm \eta^*_i + \Delta}^\top 
	\succeq \bm D
	- O\Paren{\frac{\e}{\alpha} + \sqrt{\frac{d\log d + \log\Paren{\frac{1}{\delta} }}{n} } +  \frac{d\log d + \log\Paren{\frac{1}{\delta}}}{n} } \cdot \Id_d\,.
\]

Note that $\bm D$ is a diagonal matrix, and its diagonal entries are
\[
\bm D_{jj} =  \frac{1}{n}\sum_{i=1}^n f\Paren{\bm \eta^*_{ij}}^2\,.
\]
Since $f(\bm \eta^*_j)$ are $O(1)$-sub-Gaussian, with probability at least $1-\delta/10$,
\[
\Norm{   \bm D     	- \E f\Paren{\bm \eta^*}f\Paren{\bm\eta^*}^\top} \le O\Paren{\sqrt{\frac{\log\Paren{d/\delta}}{n}}}\,.
\]
Hence we get the desired bound.
\end{proof}

\begin{proof}[Proof of \cref{thm:identifiability_optimal}]
By \cref{lem:local-strong-convexity-product}, $\tfrac \alpha 4 \Norm{\muhat^w - \mustar}^2 \leq \Iprod{\nabla \lossfunction^w (\mustar), \mustar - \muhat^w}$. 
    Hence, it is enough to bound $\Iprod{\nabla \lossfunction^w (\mustar), \mustar - \muhat^w}$.
    Let $u = \muhat^w - \mustar$.
    Observe that using the Cauchy-Schwarz Inequality, it follows that
    \begin{align*}
        \Iprod{\nabla \lossfunction^w (\mustar), u} 
        &= \sum_{i=1}^n w_i \Iprod{\noisetransformation\Paren{\mustar - y_i}, u} 
        \\&= \sum_{i = 1}^n w_i \Iprod{\noisetransformation\Paren{\mustar - \bm y_i^*}, u} + \sum_{i \in S_b} w_i \Iprod{\noisetransformation\Paren{\mustar - y_i} - \noisetransformation\Paren{\mustar - \bm y_i^*}, u} \\
        &\leq \Norm{\sum_{i = 1}^n w_i \noisetransformation\Paren{\mustar - \bm y_i^*}} \cdot \Norm{u} + \sqrt{\e} \cdot \sqrt{\sum_{i \in S_b} w_i \Iprod{\noisetransformation\Paren{\mustar - y_i} - \noisetransformation\Paren{\mustar - \bm y_i^*}, u}^2} \,.
    \end{align*}
    Note that by the first point of \cref{lem:goodness_conditions} it holds that $\Norm{\sum_{i = 1}^n w_i \noisetransformation\Paren{\mustar - \bm y_i^*}} \leq O(\e \sqrt{\log(1/\e)} +  \sqrt{\tfrac{d+\log(1/\delta)}{n}})$.
    For the second term, we can bound
    \begin{align*}
        &\sum_{i \in S_b} w_i \Iprod{\noisetransformation\Paren{\mustar - y_i} - \noisetransformation\Paren{\mustar - \bm y_i^*}, u}^2 \\
        &\lesssim \underbrace{\sum_{i \in S_b} w_i \Iprod{\noisetransformation\Paren{\mustar - \bm y_i^*}, u}^2}_{\text{Term A}} + \underbrace{\sum_{i \in S_b} w_i \Iprod{\noisetransformation\Paren{\muhat^w - y_i}, u}^2}_{\text{Term B}} + \underbrace{\sum_{i \in S_b} w_i \Iprod{\noisetransformation\Paren{\mustar - y_i} - \noisetransformation\Paren{\muhat^w - y_i}, u}^2}_{\text{Term C}} \,.
    \end{align*}
    Term A is at most $O\Paren{\e \log(1/\e) + \sqrt{\tfrac{d+\log(1/\delta)} n}} \cdot \norm{u}^2$ by combining the second point of \cref{lem:goodness_conditions} with \cref{fact:subgaussian-stability}.
    To bound Term C, we use that $\noisetransformation$ is 1-Lipschitz to observe
    \begin{align*}
        &\sum_{i \in S_b} w_i \Iprod{\noisetransformation\Paren{\mustar - y_i} - \noisetransformation\Paren{\muhat^w - y_i}, u}^2 \\
        &\leq \Norm{u}^2 \cdot \sum_{i \in S_b} w_i \Norm{\noisetransformation\Paren{\mustar - y_i} - \noisetransformation\Paren{\muhat^w - y_i^*}}^2 \leq \Norm{u}^2 \cdot \sum_{i \in S_b} w_i \Norm{\mustar - \muhat^w}^2 \leq \e \cdot \Norm{u}^4 \,.
    \end{align*}

    It remains to bound Term B.
    To this end, we use the third point of \cref{lem:goodness_conditions} and obtain
    \begin{align*}
        &\sum_{i \in S_b} w_i \Iprod{\noisetransformation\Paren{\muhat^w - y_i}, u}^2 = \sum_{i =1}^n w_i \Iprod{\noisetransformation\Paren{\muhat^w - y_i}, u}^2 - \sum_{i \in S_g} w_i \Iprod{\noisetransformation\Paren{\muhat^w - \bm y_i^*}, u}^2 \\
        &\leq
        \Paren{\Norm{\Sigmahatf - \E f\Paren{\bm\eta^*}f\Paren{\bm\eta^*}^\top} + O\Paren{\e \log(1/\e)/\alpha} + 
        	\sqrt{\frac{d\log d + \log\Paren{\frac{1}{\delta} }}{n} }  } \cdot \Norm{u}^2 \,.
    \end{align*}

    Putting the above bounds together, we have shown that
    \begin{align*}
        \Iprod{\nabla \lossfunction^w (\mustar), u} \le 
        \norm{u} \cdot O\Paren{\frac{\e \sqrt{\log(1/\e)}}{\sqrt{\alpha}}+  \sqrt{\tfrac{d\log(d)+\log(1/\delta)}{n}}  + \e \norm{u} + 
        	\sqrt{\e \Norm{\Sigmahatf - \E f\Paren{\bm\eta^*}f\Paren{\bm\eta^*}^\top} } }\,.
    \end{align*}
    (Note that we also use $\sqrt{\e} \cdot \sqrt{\e \log(1/\e) + \sqrt{\tfrac {d+\log(1/\delta)} n}} \leq \e \sqrt{\log(1/\e)} + \sqrt{\tfrac {d+\log(1/\delta)} n}$.)
    Since $\Iprod{\nabla \lossfunction^w (\mustar), u} \geq \tfrac \alpha 4 \Norm{u}^2$ and $\e \lesssim  \alpha$, rearranging and dividing by $\Norm{u}$ yields that
    \[
        \Norm{u} \le O\Paren{ \sqrt{\frac{d \log(d)+\log(1/\delta)}{\alpha^2 n}} + \frac {1} {\alpha^{3/2}} \cdot \sqrt{\e \Paren{\Norm{\Sigmahatf - \E f\Paren{\bm\eta^*}f\Paren{\bm\eta^*}^\top}+ \e\log(1/\e)}} } \,.
    \] 
\end{proof}
\section{Filtering Algorithm for Near-Optimal Error}
\label{sec:filtering_optimal}
In this section, we will use the proof of identifiability from \cref{sec:identifiability_optimal} to obtain an algorithm which efficiently recovers $\mustar$ up to (nearly) optimal error.
In particular, we show that
\begin{theorem}
    \label{thm:main_filtering_optimal}
    Let $C > 0$ be a large enough absolute constant.
    Let $\e, \alpha > 0$ be such that $\sqrt[3]{\e} \leq \tfrac{\alpha}{C}$. 
    Let $D$ be an $(\alpha,\rho)$-product distribution with location $\mustar \in \R^d$.
    Let $n \geq C\cdot\tfrac {d\log(d)+\log(1/\delta)} {\alpha^6\e^2 \log(1/\e)}$.
    There exists an algorithm running im time $n^{O(1)}d^{O(1)}$, such that given an $\e$-corrupted sample of $D$ of size $n$ and $\rho$\footnote{We remark that in the case of unknown $\rho$ but known $\alpha$, we can first estimate $\rho$ from the corrupted samples similar as in \cref*{par:unknown-rho}, at the cost of a worse dependence on $\alpha$.}, with probability at least $1-\delta$ it outputs $\muhat$ satisfying
    \[
        \Norm{\mustar - \muhat} \leq O\Paren{\rho \cdot \Brac{\sqrt{\frac {d\log(d)+ \log(1/\delta)} {\alpha^4n}}+ \frac {\e \sqrt{\log(1/\e)}} {\alpha^{3}} }} \,.
    \]
\end{theorem}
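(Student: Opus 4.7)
The plan is to combine the identifiability result of \cref{thm:identifiability_optimal} with a \emph{quantile} (top-$\e$-fraction) filtering procedure that iteratively reduces the spectral deviation of the weighted transformed second moment. Recall that \cref{thm:identifiability_optimal} bounds $\Norm{\muhat^w - \mustar}$ in terms of $\Norm{\Sigmaf^w - \E f(\bm\eta^*)f(\bm\eta^*)^\top}$ for any $w \in \cW_\e$. Hence it suffices to produce a $w^{(T)} \in \cW_{2\e}$ for which this deviation is at most $O(\e \log(1/\e))$. As a preprocessing step, exactly as discussed at the end of \cref{sec:identifiability_optimal}, we will robustly estimate the diagonal matrix $\E f(\bm\eta^*)f(\bm\eta^*)^\top$ from the corrupted samples entrywise via one-dimensional robust mean estimation on pairwise differences, obtaining $\hat D$ with $\Normi{\hat D - \E f(\bm\eta^*)f(\bm\eta^*)^\top} \lesssim \e\sqrt{\log(1/\e)}$; this error is absorbed into our target bound.

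The algorithm initializes $w^{(0)} = \tfrac 1 n \mathbbm{1}$. At iteration $t$ it computes $\muhat^{(t)} = \argmin_{\mu} \lossfunction^{w^{(t)}}(\mu)$ via one-dimensional convex optimization coordinate by coordinate, and forms $\Sigmahatf^{(t)} = \sum_{i} w_i^{(t)} f(y_i - \muhat^{(t)}) f(y_i - \muhat^{(t)})^\top$. If $\Norm{\Sigmahatf^{(t)} - \hat D} \leq C_1 \cdot \e \log(1/\e)$ for a suitable constant $C_1$, the algorithm terminates and outputs $\muhat^{(t)}$. Otherwise, it computes the top eigenvector $v^{(t)}$ of $\Sigmahatf^{(t)} - \hat D$ and the scores $\tau_i^{(t)} = \Iprod{v^{(t)}, f(y_i - \muhat^{(t)})}^2$. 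Unlike the basic filter of \cref{alg:filtering_alg}, which would only yield error $O(\sqrt{\e})$, we downweight only the samples whose scores lie in the \emph{top $2\e n$ block}: sort the $\tau_i^{(t)}$ in decreasing order, let $\tau^*$ be the $2\e n$-th largest value, and update $w_i^{(t+1)} = \Paren{1 - \max\{0, \tau_i^{(t)} - \tau^*\}/(\tau^{(t)}_{\max} - \tau^*)} w_i^{(t)}$. The restriction to the top block is what allows the error bound to scale nearly linearly with $\e$.

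The core invariant is that $\sum_{i \in S_g}(\tfrac 1 n - w_i^{(t)}) < \sum_{i \in S_b}(\tfrac 1 n - w_i^{(t)})$ is preserved, which together with $\normo{w^{(t)} - \tfrac 1 n \mathbbm{1}} \leq 2\e$ ensures $w^{(T)} \in \cW_{2\e}$. To verify this, we use the stability-style conditions of \cref{lem:goodness_conditions}: because $f(\bm\eta^*)$ is $O(\rho)$-sub-Gaussian (\cref{lem:covariance-bound-product}), for every subset of $S_g$ of size at least $(1-3\e)|S_g|$, the empirical second moment of $f(\bm\eta^*)$ deviates from $\E f(\bm\eta^*)f(\bm\eta^*)^\top$ by at most $O(\e \log(1/\e))$ in spectral norm. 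Consequently, the good samples in the top $2\e n$ block contribute at most $O(\e \log(1/\e))$ to $\tsum_i w_i \tau_i = \Iprod{v^{(t)}, \Sigmahatf^{(t)} v^{(t)}}$; when the termination check fails, this quantity exceeds $C_1 \e \log(1/\e) + \hat D_{\max}$, so the bad samples must carry the majority of the $\tau$-mass in the top block, and hence more bad than good weight is removed. Because at least one new zero entry is created each iteration and at most $\e n$ good entries can be zeroed before violating the invariant, the algorithm terminates within $O(\e n)$ iterations.

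The main obstacle will be uniformly controlling $\Sigmahatf^{(t)}$ along the trajectory, since $\muhat^{(t)}$ depends on the evolving weights. Writing $\Sigmahatf^{(t)}$ as a good-sample part plus a bad-sample part, the good part must be lower-bounded by $\E f(\bm\eta^*)f(\bm\eta^*)^\top - O(\e \log(1/\e))\Id_d$ \emph{simultaneously} for all shifts $\Delta = \muhat^{(t)} - \mustar$ with $\normi{\Delta} \leq O(\e/\alpha)$ (the a priori bound we get from \cref{lem:local-strong-convexity-product} applied to $w^{(t)} \in \cW_{2\e}$). This is precisely what the third part of \cref{lem:goodness_conditions} provides, via a union bound over an entrywise $\e$-net of $[-O(\e/\alpha), O(\e/\alpha)]^d$; it is the source of the extra $\log d$ factor in the sample complexity. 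Once termination is established, applying \cref{thm:identifiability_optimal} to $w^{(T)}$ with $\Norm{\Sigmahatf^{(T)} - \E f(\bm\eta^*)f(\bm\eta^*)^\top} \leq O(\e \log(1/\e))$ yields
\[
\Norm{\muhat^{(T)} - \mustar} \lesssim \rho \cdot \Brac{\sqrt{\tfrac{d\log d + \log(1/\delta)}{\alpha^4 n}} + \tfrac{\e\sqrt{\log(1/\e)}}{\alpha^3}}\mper
\]
Each iteration costs $n^{O(1)} d^{O(1)}$ (one-dimensional Huber minimization per coordinate, one top-eigenvector computation, and sorting), giving the claimed total runtime.
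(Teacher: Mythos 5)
Your overall strategy matches the paper's almost exactly: the same preprocessing (robustly estimating the diagonal matrix $\E f(\bm\eta^*)f(\bm\eta^*)^\top$ entrywise), the same quantile (top-$2\e$-weight-block) filter driven by the top eigenvector of $\Sigmahatf^{(t)}-\hat D$, the same invariant that more bad than good weight is removed each round, and the final appeal to \cref{thm:identifiability_optimal} upon termination. The minor variant in the reweighting rule (subtracting the threshold $\tau^*$ before normalizing) is fine — it removes at least as much weight per index, and the key score-mass inequality on the block still implies the invariant — but the proposal has a genuine quantitative gap.

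The termination threshold $C_1\,\e\log(1/\e)$ is too small; the paper uses $\Theta(\e\log(1/\e)/\alpha^3)$, and that factor of $1/\alpha^3$ is load-bearing. The reason is twofold. First, your claim that good samples in the filtered block contribute at most $O(\e\log(1/\e))$ to the score mass omits an additional term: because $\muhat^{(t)}$ depends on $w^{(t)}$, the block's contribution also contains $\sum_{i\in S_g\cap T}w_i\Norm{\muhat^{(t)}-\mustar}^2\lesssim \tfrac{\e^2}{\alpha^3}\Norm{\Sigmahatf^{(t)}-\E ff^\top}$, which \cref{thm:identifiability_optimal} provides and which the entrywise $\e$-net argument does not remove (the net only handles measure concentration uniformly over shifts, not the deterministic magnitude of $\muhat^{(t)}-\mustar$). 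Second, and more critically, the lower bound on the bad samples' score mass requires $\Norm{\muhat^{(t)}-\mustar}\lesssim\alpha^{-3/2}\sqrt{\e\,\Norm{\Sigmahatf^{(t)}-\E ff^\top}}$ to be a small fraction of $\Norm{\Sigmahatf^{(t)}-\E ff^\top}$, which forces $\Norm{\Sigmahatf^{(t)}-\E ff^\top}\gtrsim\e/\alpha^3$ whenever the filter has not yet terminated. Your threshold $C_1\,\e\log(1/\e)$ does not guarantee this once $\alpha$ is a small constant (e.g. when $\alpha\asymp\e^{1/3}$, $\e/\alpha^3$ is $\Theta(1)$, which $C_1\,\e\log(1/\e)$ does not dominate), so the invariant-preservation argument collapses in that regime. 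Replace the threshold by $C\e\log(1/\e)/\alpha^3$ and the argument goes through, at the cost of the $\alpha^{-3}$ in the final error bound (which is exactly what the theorem statement claims).

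Beyond that, the structure of the invariant-preservation proof needs the full three-part bookkeeping that the paper carries out in \cref{lem:invariant_filtering_optimal}: (i) an upper bound on $\sum_{i\in S_g\cap T}w_i\tau_i$ that includes the $\tfrac{\e^2}{\alpha^3}\Norm{\Sigmahatf-\E ff^\top}$ term, (ii) a lower bound on $\sum_{i\in S_b}w_i\tau_i$ obtained by estimating $\sum_{i\in S_g}w_i\tau_i$ around $\gamma=\E\iprod{f(\bm\eta^*),v}^2$ using the second stability condition and \cref{thm:identifiability_optimal}, and (iii) an upper bound on the leakage $\sum_{i\in S_b\setminus T}w_i\tau_i$ using the fact that $\tau$-values outside the top block are small. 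Your sketch asserts (ii) directly as ``bad samples must carry the majority of the $\tau$-mass in the top block'' without the intermediate estimate around $\gamma$, and this is precisely where the dependence on $\Norm{\muhat^{(t)}-\mustar}$ — and hence on the $\alpha^{-3}$ threshold — enters.
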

The algorithm we propose is based on the by-now-standard filtering approach.
We closely follow the exposition in \cite{jerry_lecture_notes}.
To describe the algorithm, let $C > 0$ be some universal constant and for $\mu \in \R^d, w \in \cW_\e$ let
\begin{align*}
    \muhat\Paren{w} = \min_{\mu \in \R^d} \lossfunction^w\Paren{\mu} \,,\quad&&\quad \Sigma_\noisetransformation\Paren{w} = \sum_{i=1}^n w_i \noisetransformation\Paren{\muhat\Paren{w} - y_i}\noisetransformation\Paren{\muhat\Paren{w} - y_i}^\top\,.
\end{align*}
Note that since we assume $\rho$ to be known, we can without loss of generality assume that $\rho = 1$ by scaling.
Our algorithm is the following.
\begin{algorithmbox}[Filtering Algorithm]
    \label{alg:filtering_alg_optimal}
    \mbox{}\\
    \textbf{Input:} $\e$-corrupted sample $y_1, \ldots, y_n$ and $\rho, \varianceparameter > 0$.
    \mbox{}\\
    \textbf{Output:} Location estimate $\muhat$.
    \begin{itemize}
        \item Let $w^{(0)} = \tfrac{1}{n} \mathbbm{1}_n$.
        \item Compute $\muhat^{(0)} = \min_{\mu \in \R^d} \lossfunction^{w^{(0)}}\Paren{\mu}$ and $\Sigma_\noisetransformation^{(0)} = \Sigma_\noisetransformation\Paren{w^{(0)}}$.
        \item Let $t=0$.
        \item \textbf{while} $\norm{\Sigma_\noisetransformation^{(t)} - \E f\Paren{\bm\eta^*}f\Paren{\bm\eta^*}^\top} > C \e \log(1/\e) / \alpha^3$ \textbf{do}
        \begin{itemize}
            \item Compute $v^{(t)}$ the top eigenvector of $\Sigma_\noisetransformation^{(t)} - \E f\Paren{\bm\eta^*}f\Paren{\bm\eta^*}^\top$.
            \item For $i \in [n]$, compute $\tau_i^{(t)} = \Iprod{v^{(t)}, \noisetransformation\Paren{\muhat^{(t)} - y_i}}^2$.
            \item Sort the $\tau_i$. Assume that $\tau_1 \geq \ldots \geq \tau_n$.
            \item Let $N$ be the smallest index such that $\sum_{i\leq N} w_i^{(t)} > 2\e$ and let $\tau_{\max} = \max_{i \in [n]} \tau_i$.
            \item For $1\leq i \leq N$, set $w_i^{(t+1)} = \paren{1-\tfrac{\tau_i}{\tau_{\max}}} \cdot w_i^{(t)}$ and for $N < i \leq n$, set $w_i^{(t+1)} = w_i^{(t)}$.
            \item Compute $\muhat^{(t+1)} = \min_{\mu \in \R^d} f^{w^{(t+1)}}\Paren{\mu}$ and $\Sigma_\noisetransformation^{(t+1)} = \Sigma_\noisetransformation\Paren{w^{(t+1)}}.$
            \item $t \leftarrow t+1$.
        \end{itemize}
        \item Output $\muhat^{(t)}$.
    \end{itemize}
\end{algorithmbox}

We will use the following lemma whose proof we will give at the end of this section
\begin{lemma}
    \label{lem:invariant_filtering_optimal}
    Assume that $\norm{\Sigma_\noisetransformation^{(t)} - \E f\Paren{\bm\eta^*}f\Paren{\bm\eta^*}^\top} > C\e \log(1/\e) / \alpha^3$ and
    \[ 
    \sum_{i \in S_g} \Paren{\frac{1}{n} - w_i^{(t)}} < \sum_{i \in S_b} \Paren{\frac{1}{n} - w_i^{(t)}}\,.
    \]
    Then 
    \[
    \sum_{i \in S_g} \Paren{\frac{1}{n} - w_i^{(t+1)}} < 
    \sum_{i \in S_b} \Paren{\frac{1}{n} - w_i^{(t+1)}}\,.
    \]
\end{lemma}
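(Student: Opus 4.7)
The plan is to follow the standard soft-filter invariant, tailored to the top-$N$-by-mass-$2\epsilon$ thresholding and to the quadratic scores in the loss gradient $f$. First, since $w_i^{(t)} - w_i^{(t+1)} = (\tau_i/\tau_{\max}) w_i^{(t)}$ for $i \leq N$ and $0$ otherwise, and $\tau_{\max} > 0$, the invariant reduces to proving
\[
  A \;:=\; \sum_{i \in S_g,\, i \leq N} w_i^{(t)} \tau_i \;<\; \sum_{i \in S_b,\, i \leq N} w_i^{(t)} \tau_i \;=:\; B'.
\]

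Second, I would derive a lower bound on the total $\tau$-mass and an upper bound on the good $\tau$-mass. Writing $\bar B = \E f(\bm\eta^*) f(\bm\eta^*)^\top$, the while-loop hypothesis $\|\Sigma_f^{(t)} - \bar B\| > C\epsilon\log(1/\epsilon)/\alpha^3$ together with $v^{(t)}$ being the top eigenvector of $\Sigma_f^{(t)} - \bar B$ gives
\[
  \sum_i w_i^{(t)} \tau_i \;=\; (v^{(t)})^\top \Sigma_f^{(t)} v^{(t)} \;\geq\; (v^{(t)})^\top \bar B v^{(t)} + C\epsilon\log(1/\epsilon)/\alpha^3.
\]
On the good side, point~2 of \cref{lem:goodness_conditions} controls $\sum_{i \in S_g} w_i^{(t)} f(\bm\eta_i^*) f(\bm\eta_i^*)^\top - \bar B$ in operator norm. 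Since $f(\muhat^{(t)} - \bm y_i^*) = -f(\bm\eta_i^*) + \delta_i$ with $\|\delta_i\|\leq \|\muhat^{(t)} - \mustar\|$ (by oddness and $1$-Lipschitzness of $f$), a $(1+s)/(1+1/s)$ splitting together with the bound on $\|\muhat^{(t)} - \mustar\|$ from \cref{thm:identifiability_optimal} yields $\sum_{i \in S_g} w_i^{(t)} \tau_i \leq (v^{(t)})^\top \bar B v^{(t)} + O(\epsilon\log(1/\epsilon))$. Subtracting, $\sum_{i \in S_b} w_i^{(t)} \tau_i \geq \Omega(\epsilon\log(1/\epsilon)/\alpha^3)$ once $C$ is chosen sufficiently large.

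Third, I would bound $A$ directly and handle the truncation. Since $T_g = S_g \cap [N]$ has $\sum_{i \in T_g} w_i^{(t)} \leq 2\epsilon + 1/n$, sub-Gaussianity of $f(\bm\eta^*)$ combined with \cref{fact:subgaussian-stability} applied to small subsets (and the same $(1+s)$ argument for the shift) gives $A \leq O(\epsilon\log(1/\epsilon))$. For the truncation, the descending ordering of $\tau_i$ forces $\tau_i \leq \tau_N$ for $i > N$, and $\tau_N \leq (A+B')/\sum_{i\leq N} w_i^{(t)} \leq (A+B')/(2\epsilon)$ because $\tau_N$ is dominated by the mass-weighted average of the top-$N$ scores. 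Hence $\sum_{i \in S_b,\, i > N} w_i^{(t)} \tau_i \leq \epsilon \tau_N \leq (A+B')/2$, and together with the bad-total bound $B' + (A+B')/2 \geq \Omega(\epsilon\log(1/\epsilon)/\alpha^3)$, which combined with $A \leq O(\epsilon\log(1/\epsilon))$ forces $B' \geq \Omega(\epsilon\log(1/\epsilon)/\alpha^3) \gg A$ provided $\alpha$ is below an absolute constant.

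The main obstacle will be the good-side upper bound in Step~2 (and analogously for $A$ in Step~3): replacing $f(\muhat^{(t)} - \bm y_i^*)$ by $-f(\bm\eta_i^*)$ introduces an additive error of order $\|\muhat^{(t)} - \mustar\|^2$ per sample, and \cref{thm:identifiability_optimal} bounds this only in terms of the very quantity $\|\Sigma_f^{(t)} - \bar B\|$ that the filter is still trying to reduce. Closing this loop requires choosing the splitting parameter $s$ and leveraging the sample-complexity hypothesis $n \gtrsim (d\log d + \log(1/\delta))/(\alpha^6 \epsilon^2 \log(1/\epsilon))$ so that the induced dependence on $\|\Sigma_f^{(t)} - \bar B\|$ is absorbed by the large constant $C$ in the while-loop threshold. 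This is the same self-referential bookkeeping already present inside the proof of \cref{thm:identifiability_optimal}, and I would reuse those estimates verbatim here.
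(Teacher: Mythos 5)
Your reduction to comparing $A = \sum_{i \in S_g \cap T} w_i\tau_i$ against $B' = \sum_{i \in S_b \cap T} w_i\tau_i$, and your truncation bookkeeping via $\tau_N \le (A+B')/\sum_{i\le N} w_i$, both match the paper's structure and are fine. However, there is a genuine gap in how you close the self-referential loop, and as written the argument fails whenever $\|\Sigma_f^{(t)} - \bar B\|$ substantially exceeds the while-loop threshold. You only use the weak lower bound $\sum_i w_i^{(t)}\tau_i \ge (v^{(t)})^\top\bar B v^{(t)} + C\epsilon\log(1/\epsilon)/\alpha^3$, rather than the exact identity $\sum_i w_i^{(t)}\tau_i = (v^{(t)})^\top\bar B v^{(t)} + \|\Sigma_f^{(t)} - \bar B\|$ (which holds because $v^{(t)}$ attains the top eigenvalue). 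This throws away the fact that the bad mass grows \emph{linearly} in $\|\Sigma_f^{(t)} - \bar B\|$. Simultaneously, your claim $A \le O(\epsilon\log(1/\epsilon))$ is false: Theorem~\ref{thm:identifiability_optimal} only gives $\|\hat\mu^{(t)} - \mu^*\|^2 \lesssim \alpha^{-3}(\epsilon\|\Sigma_f^{(t)}-\bar B\| + \epsilon^2\log(1/\epsilon))$, so $A$ contains an $O(\tfrac{\epsilon^2}{\alpha^3}\|\Sigma_f^{(t)}-\bar B\|)$ term that can dwarf $\epsilon\log(1/\epsilon)$. You acknowledge this in your last paragraph, but the proposed fix of ``absorbing'' it by enlarging the threshold constant $C$ does not work: raising $C$ only \emph{lower}-bounds $\|\Sigma_f^{(t)}-\bar B\|$, it does not cap it, so for $\|\Sigma_f^{(t)}-\bar B\|$ of order $(\alpha^3/\epsilon)\cdot(\text{threshold})$ your $A$ can be as large as your derived lower bound on the total bad mass, and $A < B'$ is not forced.

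The correct closure, which the paper uses, is to keep both sides scaling linearly in $\|\Sigma_f^{(t)}-\bar B\|$ with very different constants: derive $\sum_{i\in S_b\cap T}w_i\tau_i \ge 0.9\|\Sigma_f^{(t)}-\bar B\|$ (from the exact identity, the concentration of $\sum_{S_g}w_i\tau_i$ around $\gamma = \E\langle f(\bm\eta^*),v\rangle^2$, and the truncation estimate), and pair it with $A \le O(\epsilon\log(1/\epsilon) + \tfrac{\epsilon^2}{\alpha^3}\|\Sigma_f^{(t)}-\bar B\|)$. Then $A < B'$ because (i) $\tfrac{\epsilon^2}{\alpha^3} \le \epsilon/C^3$ is a tiny absolute constant under $\sqrt[3]{\epsilon}\le\alpha/C$, so the $\|\Sigma_f^{(t)}-\bar B\|$-scaling part of $A$ is a negligible fraction of $0.9\|\Sigma_f^{(t)}-\bar B\|$; and (ii) the offset $O(\epsilon\log(1/\epsilon))$ is dominated because the threshold forces $\|\Sigma_f^{(t)}-\bar B\| > C\epsilon\log(1/\epsilon)/\alpha^3$ with $C$ large. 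The two terms of $A$ must be beaten for two different reasons, and neither reason is ``the constant $C$ absorbs the dependence.'' Once you replace your two weak estimates with these sharper ones, the rest of your argument goes through.
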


With this in hand, we will prove \cref{thm:main_filtering_optimal}.
\begin{proof}[Proof of \cref{thm:main_filtering_optimal}]
    First note, that every iteration can clearly be implented to run in time $n^{O(1)}d^{O(1)}$.
    We will show that the algorithm terminates after at most $\lceil 2\e n\rceil$ iterations.
    Assume towards a contradiction that the algorithm does not terminate after $T = \lceil 2\e n\rceil$ iterations.
    Note that the number entries of $w^{(t)}$ that are equal to 0 increases by at least 1 in every iteration.
    Hence, after $T$ iterations we have set at least 
    $\e n$ entries of $w$ to zero whose index lies in $S_g$.
    By assumption that the algorithm didn't terminate and \cref{lem:invariant_filtering_optimal} it holds that 
    \[
    \e \leq \sum_{i \in S_g} \Paren{\frac{1}{n} - w_i^{(T)} }
    < \sum_{i \in S_b} \Paren{\frac{1}{n} - w_i^{(T)}} \leq \frac{\Card{S_b}}{n} 
    \leq \e\,.
    \]
    which is a contradiction.

    Next, we prove the correctness of the algorithm.
    Let $T$ be the index of the last iteration of the algorithm before termination.
    Note that by our invariant 
    \[
    \Normo{\frac{1}{n} - w^{(T)}} = \sum_{i \in S_g} \frac{1}{n} - w^{(T)}_i + \sum_{i \in S_b} \frac{1}{n} - w^{(T)}_i < 2 \sum_{i \in S_b} \frac{1}{n} - w^{(T)}_i \leq 2\e \,.
    \]
    Since also $0 \leq w^{(T)}\leq \frac{1}{n}$, 
    it follows that $w^{(T)} \in \cW_{2\e}$.
    By \cref{thm:identifiability_optimal} and since 
    $\norm{\Sigma_\noisetransformation^{(T)} - \E f\Paren{\bm\eta^*}f\Paren{\bm\eta^*}^\top} \leq C \e \log(1/\e) / \alpha^3$ it follows that
    \[\Norm{\muhat^T - \mustar} 
\leq O\Paren{\rho \cdot \Brac{\sqrt{\frac {d\log(d)+ \log(1/\delta)} n}+ \frac {\e \sqrt{\log(1/\e)}} {\alpha^{3}} }}\,.
\]
\end{proof}

Lastly, we will prove \cref{lem:invariant_filtering_optimal}.
\begin{proof}[Proof of \cref{lem:invariant_filtering_optimal}]
    On a high level, we will show that each iteration of the filtering algorithm removes relatively more weight from corrupted samples than from uncorrupted ones.
    For simplicity, let $w = w^{(t)}$ and $w' = w^{(t+1)}$.
    Also, let $\muhat = \muhat^{(t)}, \Sigmaf = \Sigmaf^{(t)}$, and $v = v^{(t)}$.
    Note, that it is enough to show that $$\sum_{i \in S_g} w_i - w_i' < \sum_{i \in S_b} w_i - w_i' \,.$$
    Let $N$ be the smallest index such that $\sum_{i \leq N} w_i > \e$ and denote by $T = \Set{1,\ldots,N}$
    Further, recall that $w_i' = \Paren{1 - \tfrac{\tau_i}{\tau_{\max}}} w_i$ if $i \in T$ and $w_i' = w_i$ otherwise.
    Thus, for $i \in T$, it holds that $w_i - w_i'  =  \frac{1}{\tau_{\max}} \tau_i w_i$, while for $i \not\in T$ it holds that $w_i - w_i' = 0$.
    Hence, the above condition is equivalent to 
    \[
    \sum_{i \in S_g \cap T} w_i \tau_i < \sum_{i \in S_b \cap T} w_i \tau_i\,.
    \]
    We will show that the above is indeed true in two steps.
    We will show that
    \begin{enumerate}
        \item $\sum_{i \in S_g \cap T} w_i \tau_i \le O\Paren{ \e \log(1/\e) + \frac{\e^2}{\alpha^3}\cdot \Norm{\Sigmaf - \E f\Paren{\bm\eta^*}f\Paren{\bm\eta^*}^\top}}$,
        \item $\sum_{i \in S_b \cap T} w_i \tau_i \ge \Omega\Paren {\norm{\Sigma_\noisetransformation - \E f\Paren{\bm\eta^*}f\Paren{\bm\eta^*}^\top} }$.
    \end{enumerate}
    This implies our claim since by assumption $\norm{\Sigma_\noisetransformation - \E f\Paren{\bm\eta^*}f\Paren{\bm\eta^*}^\top} > C \e \log(1/\e) / \alpha^3 > C \e \log(1/\e)$ and $\tfrac {\e^2} {\alpha^3} \leq \tfrac {\e} {\alpha^3}$ is sufficiently small.

    \paragraph*{Upper Bounding the Contribution of $S_g \cap T$}
    Recall that $n \geq \tfrac {d\log(d)+\log(1/\delta)}{\alpha^6\e^2 \log(1/\e)}$.
    Thus, by definition of $\tau_i$ and since $f$ is 1-Lipschitz it follows that
    \begin{align*}
        \sum_{i \in S_g \cap T} w_i \tau_i &= \sum_{i \in S_g \cap T} w_i \iprod{f(\muhat - \bm y_i^*),v}^2 
        \\&\le O\Paren{  \sum_{i \in S_g \cap T} w_i \iprod{f(\mustar - \bm y_i^*),v}^2 + \sum_{i \in S_g \cap T} w_i \iprod{f(\muhat - \bm y_i^*) - f(\mustar - \bm y_i^*),v}^2 }
        \\&\le  O\Paren{\e \log(1/\e) +\sum_{i \in S_g \cap T} w_i \norm{\muhat - \mustar}^2 }
        \\&\le O\Paren{\e \log(1/\e) + \e \cdot \Paren{\frac{1}{\alpha^3} \cdot \Brac{\e \Norm{\Sigmaf - \E f\Paren{\bm\eta^*}f\Paren{\bm\eta^*}^\top} +\e^2 \log(1/\e)}} }
        \\&\le O\Paren{\e \log(1/\e) + \frac{\e^2}{\alpha^3}\cdot \Norm{\Sigmaf - \E f\Paren{\bm\eta^*}f\Paren{\bm\eta^*}^\top} }\,,
    \end{align*}
    where in the second inequality we used the last property of \cref{fact:subgaussian-stability} and in the third inequality we used \cref{thm:identifiability_optimal}.

    \paragraph{Lower Bounding the Contribution of $S_b \cap T$}
    Our strategy will be to first show that the contribution of all points in $S_b$, also those not necessarily in $T$, must be large and then show that the contribution of points in $S_b \setminus T$ is in fact small.
    Together they will yield our claim.
    Note that by definition of $v$, it holds that
    \[
        \sum_{i=1}^n w_i \tau_i = v^\top\Paren{\Sigmaf - \E f\Paren{\bm\eta^*}f\Paren{\bm\eta^*}^\top}v +  \E \Iprod{ f\Paren{\bm\eta^*}, v}^2 = \E \Iprod{ f\Paren{\bm\eta^*}, v}^2 + \Norm{\Sigmaf - \E f\Paren{\bm\eta^*}f\Paren{\bm\eta^*}^\top} \,.
    \]
    Let $\gamma :=  \E \Iprod{ f\Paren{\bm\eta^*}, v}^2$.
    Note that $\gamma = O(1)$.
    By the second point in \cref{lem:goodness_conditions}, Cauchy--Schwarz inequality and \cref{thm:identifiability_optimal},
    \begin{align*}
        \Abs{\sum_{i \in S_g} w_i \tau_i - \gamma} &= 
        \Abs{\sum_{i \in S_g} w_i \iprod{f(\muhat - y_i),v}^2 -\gamma}
        \\&\le  \Abs{\sum_{i \in S_g} w_i \iprod{f(\mustar - \bm y_i^*),v}^2 - \gamma} + \Abs{\sum_{i\in S_g} w_i \iprod{f(\muhat - \bm y_i^*) - f(\mustar - \bm y_i^*),v}^2}
       \\&\qquad+2\Abs{
        	\sum_{i \in S_g} w_i \iprod{f(\mustar - \bm y_i^*),v} \Paren{\iprod{f(\muhat - \bm y_i^*),v} -\iprod{f(\mustar - \bm y_i^*),v} }}
        \\&\leq  O\Paren{{\e \log(1/\e)} + \Norm{\muhat - \mustar}^2+ \sqrt{\gamma}\cdot \Norm{\muhat - \mustar}  }
        	\\& \leq   O\Paren{\e \log(1/\e)  + { \frac{\e }{\alpha^3} }\Norm{\Sigmaf - \E f\Paren{\bm\eta^*}f\Paren{\bm\eta^*}^\top} + \frac{\e^2\log(1/\e)}{\alpha^3}} \\
            &\leq O\Paren{{ \frac{\e }{\alpha^3} }\Norm{\Sigmaf - \E f\Paren{\bm\eta^*}f\Paren{\bm\eta^*}^\top} + \frac{\e^2\log(1/\e)}{\alpha^3} + \Norm{\hat{\mu} - \mu^*}} \,.
    \end{align*}
    Hence, since $\Norm{\Sigmaf - \E f\Paren{\bm\eta^*}f\Paren{\bm\eta^*}^\top}$ is at least a sufficiently large multiple of $\e \log (1/\e) / \alpha^3$ and $\tfrac \e {\alpha^2}$ is a sufficiently small constant, it follows that the first two terms are at most $0.0001 { \Norm{\Sigmaf - \E f\Paren{\bm\eta^*}f\Paren{\bm\eta^*}^\top} }$.
    The second term is at most
    \[
        O\Paren{\sqrt{\frac{\e }{\alpha^3} \Norm{\Sigmaf - \E f\Paren{\bm\eta^*}f\Paren{\bm\eta^*}^\top} }} \,.
    \]
    We claim that this is at most $0.0001 { \Norm{\Sigmaf - \E f\Paren{\bm\eta^*}f\Paren{\bm\eta^*}^\top} }$ as well.
    Indeed, if it were larger, we would have that ${ \Norm{\Sigmaf - \E f\Paren{\bm\eta^*}f\Paren{\bm\eta^*}^\top} } = O(\tfrac \e {\alpha^3})$, a contradiction.
    Thus, we get that 
    \[
        \sum_{i\in S_b} w_i \tau_i \ge 0.99 { \Norm{\Sigmaf - \E f\Paren{\bm\eta^*}f\Paren{\bm\eta^*}^\top} }\,.
    \]

    It remains to upper bound the contribution of indices $i$ in $S_b \setminus T$. Note that
    \[
        \sum_{i \in S_b \cap T} w_i \tau_i = \sum_{i \in S_b}w_i \tau_i - \sum_{i \in S_b \setminus T} w_i \tau_i \,.
    \]
    We claim that the scores for $i \not\in T$ cannot be too large.
    Indeed,  since $\sum_{i \in T} w_i > 2\e$ and $\sum_{i \in S_b \cap T} w_i \leq \e$ it follows that $\sum_{i \in S_g \cap T} w_i \geq \e$.
    Thus, since we assume the $\tau_i$ are sorted in decreasing order it follows that for $i \not \in T$ it holds that
    \[
        \tau_i \leq \frac 1 {\sum_{i \in S_g \cap T} w_i} \sum_{i \in S_g \cap T} w_i \tau_i \leq O\Paren{ \log(1/\e) + \frac{\e}{\alpha^3}\cdot \Norm{\Sigmaf - \E f\Paren{\bm\eta^*}f\Paren{\bm\eta^*}^\top} }\,.
    \]
    It follows that
    \[
        \sum_{i \in S_b \setminus T} w_i \tau_i \leq O\Paren{\e \log(1/\e) + \frac{\e^2}{\alpha^3}\cdot \Norm{\Sigmaf - \E f\Paren{\bm\eta^*}f\Paren{\bm\eta^*}^\top} }
    \]
    Putting everything together we have shown that
    \[
        \sum_{i\in S_b \cap T} w_i \tau_i \ge 0.9 \Norm{\Sigmaf - \E f\Paren{\bm\eta^*}f\Paren{\bm\eta^*}^\top}
    \]
    as desired.
    
    Finally, it is easy to see that if we can only compute an estimation $\hat{v}$ of $v$ such that 
    \[
     \hat{v}^\top\Paren{\Sigmaf - \E f\Paren{\bm\eta^*}f\Paren{\bm\eta^*}^\top}\hat{v}
      = \Norm{\Sigmaf - \E f\Paren{\bm\eta^*}f\Paren{\bm\eta^*}^\top} \pm O(\e) \,,
      \]
      the proof still works, so we can use the $O(\e)$-close estimation $\hat{\Sigma}$ of  $\E f\Paren{\bm\eta^*}f\Paren{\bm\eta^*}^\top$ and compute the top eigenvector of $\Sigmaf - \hat{\Sigma}$.
\end{proof}

\newpage

\phantomsection
\addcontentsline{toc}{section}{References}
\bibliographystyle{amsalpha}
\bibliography{bib/custom,bib/dblp,bib/mathreview,bib/scholar}

\newcommand{\etalchar}[1]{$^{#1}$}
\providecommand{\bysame}{\leavevmode\hbox to3em{\hrulefill}\thinspace}
\providecommand{\MR}{\relax\ifhmode\unskip\space\fi MR }
\providecommand{\MRhref}[2]{%
  \href{http://www.ams.org/mathscinet-getitem?mr=#1}{#2}
}
\providecommand{\href}[2]{#2}
\begin{thebibliography}{CLMW11}

\bibitem[Bec09]{bechar2009bernstein}
Ikhlef Bechar, \emph{A bernstein-type inequality for stochastic processes of
  quadratic forms of gaussian variables}, arXiv preprint arXiv:0909.3595
  (2009).

\bibitem[Ber06]{bernholt2006robust}
Thorsten Bernholt, \emph{Robust estimators are hard to compute}, Tech. report,
  Technical report, 2006.

\bibitem[CGR18]{chen2018robust}
Mengjie Chen, Chao Gao, and Zhao Ren, \emph{Robust covariance and scatter
  matrix estimation under huber’s contamination model}, The Annals of
  Statistics \textbf{46} (2018), no.~5, 1932--1960.

\bibitem[Cha83]{chamberlain1983characterization}
Gary Chamberlain, \emph{A characterization of the distributions that imply
  mean—variance utility functions}, Journal of Economic Theory \textbf{29}
  (1983), no.~1, 185--201.

\bibitem[CHS81]{cambanis1981theory}
Stamatis Cambanis, Steel Huang, and Gordon Simons, \emph{On the theory of
  elliptically contoured distributions}, Journal of Multivariate Analysis
  \textbf{11} (1981), no.~3, 368--385.

\bibitem[CLMW11]{DBLP:journals/jacm/CandesLMW11}
Emmanuel~J. Cand{\`{e}}s, Xiaodong Li, Yi~Ma, and John Wright, \emph{Robust
  principal component analysis?}, J. {ACM} \textbf{58} (2011), no.~3,
  11:1--11:37.

\bibitem[CSV17]{DBLP:conf/stoc/CharikarSV17}
Moses Charikar, Jacob Steinhardt, and Gregory Valiant, \emph{Learning from
  untrusted data}, {STOC}, {ACM}, 2017, pp.~47--60.

\bibitem[CTBJ22]{cherapanamjeri2022optimal}
Yeshwanth Cherapanamjeri, Nilesh Tripuraneni, Peter Bartlett, and Michael
  Jordan, \emph{Optimal mean estimation without a variance}, Conference on
  Learning Theory, PMLR, 2022, pp.~356--357.

\bibitem[DHL19]{dong2019quantum}
Yihe Dong, Samuel Hopkins, and Jerry Li, \emph{Quantum entropy scoring for fast
  robust mean estimation and improved outlier detection}, Advances in Neural
  Information Processing Systems \textbf{32} (2019).

\bibitem[DK22]{DK_book}
Ilias Diakonikolas and Daniel Kane, \emph{Algorithmic high-dimensional robust
  statistics}, \url{http://www.iliasdiakonikolas.org/ars-book.pdf}, 2022.

\bibitem[DKK{\etalchar{+}}17]{diakonikolas2017being}
Ilias Diakonikolas, Gautam Kamath, Daniel~M Kane, Jerry Li, Ankur Moitra, and
  Alistair Stewart, \emph{Being robust (in high dimensions) can be practical},
  International Conference on Machine Learning, PMLR, 2017, pp.~999--1008.

\bibitem[DKK{\etalchar{+}}19]{diakonikolas2019robust}
Ilias Diakonikolas, Gautam Kamath, Daniel Kane, Jerry Li, Ankur Moitra, and
  Alistair Stewart, \emph{Robust estimators in high-dimensions without the
  computational intractability}, SIAM Journal on Computing \textbf{48} (2019),
  no.~2, 742--864.

\bibitem[DKLP22]{diakonikolas2022outlier}
Ilias Diakonikolas, Daniel Kane, Jasper Lee, and Ankit Pensia,
  \emph{Outlier-robust sparse mean estimation for heavy-tailed distributions},
  Advances in Neural Information Processing Systems \textbf{35} (2022),
  5164--5177.

\bibitem[DKP20]{diakonikolas2020outlier}
Ilias Diakonikolas, Daniel~M Kane, and Ankit Pensia, \emph{Outlier robust mean
  estimation with subgaussian rates via stability}, Advances in Neural
  Information Processing Systems \textbf{33} (2020), 1830--1840.

\bibitem[DKS17]{diakonikolas2017statistical}
Ilias Diakonikolas, Daniel~M Kane, and Alistair Stewart, \emph{Statistical
  query lower bounds for robust estimation of high-dimensional gaussians and
  gaussian mixtures}, 2017 IEEE 58th Annual Symposium on Foundations of
  Computer Science (FOCS), IEEE, 2017, pp.~73--84.

\bibitem[DKS19]{diakonikolas2019efficient}
Ilias Diakonikolas, Weihao Kong, and Alistair Stewart, \emph{Efficient
  algorithms and lower bounds for robust linear regression}, Proceedings of the
  Thirtieth Annual ACM-SIAM Symposium on Discrete Algorithms, SIAM, 2019,
  pp.~2745--2754.

\bibitem[dLN{\etalchar{+}}21]{d2021consistent_sparse}
Tommaso d'Orsi, Chih-Hung Liu, Rajai Nasser, Gleb Novikov, David Steurer, and
  Stefan Tiegel, \emph{Consistent estimation for pca and sparse regression with
  oblivious outliers}, Advances in Neural Information Processing Systems
  \textbf{34} (2021), 25427--25438.

\bibitem[dNNS23]{d2023higher}
Tommaso d'Orsi, Rajai Nasser, Gleb Novikov, and David Steurer, \emph{Higher
  degree sum-of-squares relaxations robust against oblivious outliers},
  Proceedings of the 2023 Annual ACM-SIAM Symposium on Discrete Algorithms
  (SODA), SIAM, 2023, pp.~3513--3550.

\bibitem[dNS21]{d2021consistent}
Tommaso d’Orsi, Gleb Novikov, and David Steurer, \emph{Consistent regression
  when oblivious outliers overwhelm}, International Conference on Machine
  Learning, PMLR, 2021, pp.~2297--2306.

\bibitem[DTV16]{DTV16b}
Alexander D\"urre, David~E. Tyler, and Daniel Vogel, \emph{{On the eigenvalues
  of the spatial sign covariance matrix in more than two dimensions}},
  Statistics \& Probability Letters \textbf{111} (2016), no.~C, 80--85.

\bibitem[Fan18]{fang2018symmetric}
Kai~Wang Fang, \emph{Symmetric multivariate and related distributions}, Chapman
  and Hall/CRC, 2018.

\bibitem[Fra04]{frahm2004generalized}
Gabriel Frahm, \emph{Generalized elliptical distributions: theory and
  applications}, Ph.D. thesis, Universit{\"a}t zu K{\"o}ln, 2004.

\bibitem[GLP23]{gupta2023finite}
Shivam Gupta, Jasper~CH Lee, and Eric Price, \emph{Finite-sample symmetric mean
  estimation with fisher information rate}, The Thirty Sixth Annual Conference
  on Learning Theory, PMLR, 2023, pp.~4777--4830.

\bibitem[HL18]{hopkins2018mixture}
Samuel~B Hopkins and Jerry Li, \emph{Mixture models, robustness, and sum of
  squares proofs}, Proceedings of the 50th Annual ACM SIGACT Symposium on
  Theory of Computing, 2018, pp.~1021--1034.

\bibitem[HL19]{hopkins2019hard}
\bysame, \emph{How hard is robust mean estimation?}, Conference on Learning
  Theory, PMLR, 2019, pp.~1649--1682.

\bibitem[HLZ20]{hopkins2020robust}
Sam Hopkins, Jerry Li, and Fred Zhang, \emph{Robust and heavy-tailed mean
  estimation made simple, via regret minimization}, Advances in Neural
  Information Processing Systems \textbf{33} (2020), 11902--11912.

\bibitem[Hub11]{huber2011robust}
Peter~J Huber, \emph{Robust statistics}, International encyclopedia of
  statistical science, Springer, 2011, pp.~1248--1251.

\bibitem[Kel70]{kelker1970distribution}
Douglas Kelker, \emph{Distribution theory of spherical distributions and a
  location-scale parameter generalization}, Sankhy{\=a}: The Indian Journal of
  Statistics, Series A (1970), 419--430.

\bibitem[KMZ22]{kothari2022polynomial}
Pravesh~K Kothari, Peter Manohar, and Brian~Hu Zhang, \emph{Polynomial-time
  sum-of-squares can robustly estimate mean and covariance of gaussians
  optimally}, International Conference on Algorithmic Learning Theory, PMLR,
  2022, pp.~638--667.

\bibitem[KS17a]{DBLP:journals/corr/abs-1711-11581}
Pravesh~K. Kothari and David Steurer, \emph{Outlier-robust moment-estimation
  via sum-of-squares}, CoRR \textbf{abs/1711.11581} (2017).

\bibitem[KS17b]{KS17}
\bysame, \emph{Outlier-robust moment-estimation via sum-of-squares}, CoRR
  \textbf{abs/1711.11581} (2017).

\bibitem[KSS18]{kothari2018robust}
Pravesh~K Kothari, Jacob Steinhardt, and David Steurer, \emph{Robust moment
  estimation and improved clustering via sum of squares}, Proceedings of the
  50th Annual ACM SIGACT Symposium on Theory of Computing, 2018,
  pp.~1035--1046.

\bibitem[Li19]{jerry_lecture_notes}
Jerry Li, \emph{Robustness in machine learning, lecture notes},
  \url{https://jerryzli.github.io/robust-ml-fall19.html}, 2019, Lecture 5.

\bibitem[Lin75]{lintner1975valuation}
John Lintner, \emph{The valuation of risk assets and the selection of risky
  investments in stock portfolios and capital budgets}, Stochastic optimization
  models in finance, Elsevier, 1975, pp.~131--155.

\bibitem[LRV16]{lai2016agnostic}
Kevin~A Lai, Anup~B Rao, and Santosh Vempala, \emph{Agnostic estimation of mean
  and covariance}, 2016 IEEE 57th Annual Symposium on Foundations of Computer
  Science (FOCS), IEEE, 2016, pp.~665--674.

\bibitem[MSS16]{DBLP:conf/focs/MaSS16}
Tengyu Ma, Jonathan Shi, and David Steurer, \emph{Polynomial-time tensor
  decompositions with sum-of-squares}, {FOCS}, {IEEE} Computer Society, 2016,
  pp.~438--446.

\bibitem[MZ23]{minasyan2023statistically}
Arshak Minasyan and Nikita Zhivotovskiy, \emph{Statistically optimal robust
  mean and covariance estimation for anisotropic gaussians}, arXiv preprint
  arXiv:2301.09024 (2023).

\bibitem[OR83]{owen1983class}
Joel Owen and Ramon Rabinovitch, \emph{On the class of elliptical distributions
  and their applications to the theory of portfolio choice}, The Journal of
  Finance \textbf{38} (1983), no.~3, 745--752.

\bibitem[PBR20]{prasad2020robust}
Adarsh Prasad, Sivaraman Balakrishnan, and Pradeep Ravikumar, \emph{A robust
  univariate mean estimator is all you need}, International Conference on
  Artificial Intelligence and Statistics, PMLR, 2020, pp.~4034--4044.

\bibitem[PJL20]{pensia2020robust}
Ankit Pensia, Varun Jog, and Po-Ling Loh, \emph{Robust regression with
  covariate filtering: Heavy tails and adversarial contamination}, arXiv
  preprint arXiv:2009.12976 (2020).

\bibitem[RHRS11]{rousseeuw2011robust}
Peter~J Rousseeuw, Frank~R Hampel, Elvezio~M Ronchetti, and Werner~A Stahel,
  \emph{Robust statistics: the approach based on influence functions}, John
  Wiley \& Sons, 2011.

\bibitem[ST21]{steurer2021sos}
David Steurer and Stefan Tiegel, \emph{Sos degree reduction with applications
  to clustering and robust moment estimation}, Proceedings of the 2021 ACM-SIAM
  Symposium on Discrete Algorithms (SODA), SIAM, 2021, pp.~374--393.

\bibitem[Sto75]{stone1975adaptive}
Charles~J Stone, \emph{Adaptive maximum likelihood estimators of a location
  parameter}, The annals of Statistics (1975), 267--284.

\bibitem[SZF20]{sun2020adaptive}
Qiang Sun, Wen-Xin Zhou, and Jianqing Fan, \emph{Adaptive huber regression},
  Journal of the American Statistical Association \textbf{115} (2020), no.~529,
  254--265.

\bibitem[TJSO14]{tsakonas2014convergence}
Efthymios Tsakonas, Joakim Jald{\'e}n, Nicholas~D Sidiropoulos, and Bj{\"o}rn
  Ottersten, \emph{Convergence of the huber regression m-estimate in the
  presence of dense outliers}, IEEE Signal Processing Letters \textbf{21}
  (2014), no.~10, 1211--1214.

\bibitem[Tro15]{DBLP:journals/ftml/Tropp15}
Joel~A. Tropp, \emph{An introduction to matrix concentration inequalities},
  Foundations and Trends in Machine Learning \textbf{8} (2015), no.~1-2,
  1--230.

\bibitem[Wai19]{wainwright_2019}
Martin~J. Wainwright, \emph{High-dimensional statistics: A non-asymptotic
  viewpoint}, Cambridge Series in Statistical and Probabilistic Mathematics,
  Cambridge University Press, 2019.

\end{thebibliography}

\appendix

\section{Concentration Bounds}

Throughout this section we will use the following versions of the Matrix Bernstein Inequality.
A proof of a slightly more general version can be found in \cite[Corollary 6.2.1]{DBLP:journals/ftml/Tropp15}.
\begin{fact}
	\label{fact:matrix_bernstein}
	Let $L, B > 0, \bm M \in \R^{d \times d}$ be a symmetric random matrix and $\bm M_1, \ldots, \bm M_n$ i.i.d. copies of $\bm M$. 
	Suppose that  $\Norm{\bm M} \leq L$.
	Further, assume that $\Norm{\E \bm M^2} \leq B$.
	Then the estimator $\bar{\bm M} = \frac{1}{n} \sum_{i=1}^n \bm M_i$ satisfies for all $t >0$ 
	\[ \Psymb\Paren{\Norm{\bar{\bm M} - \E \bm M} \geq t } \leq 2d \cdot \exp \Paren{-\frac{t^2 n}{2B+2Lt}} \,.\]
\end{fact}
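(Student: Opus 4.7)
The plan is to derive this as a standard application of Tropp's matrix Laplace-transform method. First I would center the summands: set $\bm Y_i = \bm M_i - \E \bm M$, so that the $\bm Y_i$ are i.i.d., mean-zero, symmetric, satisfy $\Norm{\bm Y_i} \le 2L$ (from $\Norm{\bm M} \le L$ and the triangle inequality), and, using the matrix identity $\E \bm Y_i^2 = \E \bm M^2 - (\E \bm M)^2 \preceq \E \bm M^2$, also $\Norm{\E \bm Y_i^2} \le B$. Writing $\bm S = \sum_{i=1}^n \bm Y_i$, we have
\[ \Pr\Paren{\Norm{\bar{\bm M} - \E \bm M} \ge t} \le \Pr\Paren{\lambda_{\max}(\bm S) \ge nt} + \Pr\Paren{\lambda_{\max}(-\bm S) \ge nt}, \]
which accounts for the factor of $2$ in the final bound. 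Since $-\bm S$ satisfies the same hypotheses as $\bm S$, it suffices to prove a one-sided bound on $\lambda_{\max}(\bm S)$.

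Next I would invoke the matrix Chernoff step: for any $\theta > 0$, Markov applied to the trace exponential gives
\[ \Pr\Paren{\lambda_{\max}(\bm S) \ge s} \le e^{-\theta s}\, \E \Tr \exp(\theta \bm S). \]
The crucial non-commutative ingredient is Lieb's concavity theorem, which yields the subadditivity master inequality
\[ \E \Tr \exp(\theta \bm S) \le \Tr \exp\Paren{\sum_{i=1}^n \log \E \, e^{\theta \bm Y_i}}. \]
This reduces the problem to bounding the matrix cumulant generating function of a single bounded, mean-zero term.

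For that single-term bound I would use the scalar estimate $e^{\theta x} \le 1 + \theta x + \tfrac{\theta^2 x^2 / 2}{1 - \theta |x|/3}$, valid for $|\theta x| < 3$, and lift it to matrices via the spectral calculus (legitimate because both sides are functions of a single Hermitian matrix). Combining with $\E \bm Y_i = 0$ and the operator inequality $\log(\Id + A) \preceq A$ produces
\[ \log \E \, e^{\theta \bm Y_i} \preceq \frac{\theta^2/2}{1 - 2L\theta/3}\, \E \bm Y_i^2 \qquad \text{for } 0 < \theta < 3/(2L). \]
Plugging back and using $\Norm{\E \bm Y_i^2} \le B$,
\[ \Pr\Paren{\lambda_{\max}(\bm S) \ge nt} \le d \cdot \exp\Paren{-n\theta t + \frac{n \theta^2 B}{1 - 2L\theta/3}}. \]

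Finally I would optimize $\theta$: the choice $\theta = t/(2B + 2Lt)$ lies in the admissible range and, after absorbing the $1 - 2L\theta/3$ factor into a slightly weaker constant, yields $\Pr(\lambda_{\max}(\bm S) \ge nt) \le d\exp(-t^2 n/(2B+2Lt))$. A union bound with the symmetric estimate for $-\bm S$ produces the claimed $2d$ prefactor. The only genuinely hard step is the Lieb/subadditivity bound, which has to be used as a black box; without it, the naive matrix Chernoff argument fails because $e^{\theta(\bm Y_i + \bm Y_j)} \ne e^{\theta \bm Y_i} e^{\theta \bm Y_j}$ in the noncommutative setting. Everything else is a direct matrix analogue of the classical scalar Bernstein proof.
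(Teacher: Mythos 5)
The paper does not actually prove this fact; it just cites Corollary~6.2.1 of Tropp's monograph \cite{DBLP:journals/ftml/Tropp15}. Your outline is the correct path and reproduces exactly the argument in the cited source: center to $\bm Y_i = \bm M_i - \E\bm M$, reduce to a one-sided eigenvalue bound, apply the Laplace-transform method with Lieb's concavity theorem for the master subadditivity inequality, and control a single matrix cumulant generating function via the scalar estimate lifted by spectral calculus. The deductions $\Norm{\bm Y_i}\le 2L$, $\Norm{\E \bm Y_i^2}\le B$, and $\log \E e^{\theta \bm Y_i} \preceq \tfrac{\theta^2/2}{1-2L\theta/3}\,\E\bm Y_i^2$ for $0<\theta<3/(2L)$ are all right (modulo a dropped factor of $1/2$ in your displayed bound on the trace exponential, which you clearly intended).

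The one genuine problem is the final optimization. With your choice $\theta = t/(2B+2Lt)$ and the correct single-term bound, the exponent works out to
\[
-\frac{nt^2}{2B+2Lt}\Paren{1 - \frac{3B}{12B+8Lt}}\,,
\]
and $1 - \tfrac{3B}{12B+8Lt}$ ranges in $[3/4,1)$; it never reaches $1$. So ``absorbing the $1-2L\theta/3$ factor into a slightly weaker constant'' does not deliver the stated $2d\exp\paren{-\tfrac{nt^2}{2B+2Lt}}$ --- it delivers something like $2d\exp\paren{-\tfrac{3}{4}\cdot\tfrac{nt^2}{2B+2Lt}}$ in the worst regime. The fix is to take $\theta = t/(B + 2Lt/3)$ instead, which is admissible (it is $<3/(2L)$) and makes the exponent collapse exactly to $-\tfrac{nt^2}{2B+4Lt/3}$; that is Tropp's bound, and since $4Lt/3 \le 2Lt$ it is strictly stronger than the exponent claimed in the fact, which therefore follows. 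So the plan is right and the missing step is a short computation, but as written the last line does not go through.
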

\begin{fact}
	\label{fact:vector_bernstein}
	Let $\bm x_1,\ldots, x_n$ be iid $d$-dimensional random vectors such that $\Norm{\bm x_1} \le L$ with probability $1$. Then
	\[ \Psymb\Paren{\Norm{\bar{\bm v} - \E \bm v} \geq t } \leq 2d \cdot \exp \Paren{-\frac{t^2 n}{2L^2+2Lt}} \,.\]
\end{fact}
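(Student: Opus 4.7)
The plan is to deduce the vector Bernstein inequality from the matrix Bernstein inequality already stated as \cref{fact:matrix_bernstein} via the standard Hermitian dilation trick. Setting $\bar{\bm x} = \tfrac{1}{n}\sum_{i=1}^n \bm x_i$ (which I read as the quantity denoted $\bar{\bm v}$ in the statement, with $\E\bm v = \E\bm x_1$), the key observation is that vector deviation can be recast as operator-norm deviation of a block symmetric matrix whose off-diagonal blocks encode the vectors.

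Concretely, I would define, for each $i \in [n]$, the symmetric $(d+1)\times(d+1)$ random matrix
\[
    \bm M_i \;=\; \begin{pmatrix} 0 & \bm x_i^{\top} \\ \bm x_i & 0 \end{pmatrix},
\]
and verify three properties: (i) $\Norm{\bm M_i} = \Norm{\bm x_i}$, so the almost-sure bound $\Norm{\bm x_i}\le L$ gives $\Norm{\bm M_i}\le L$; (ii)
\[
    \bm M_i^2 \;=\; \begin{pmatrix} \Norm{\bm x_i}^2 & 0 \\ 0 & \bm x_i \bm x_i^{\top} \end{pmatrix},
\]
so that $\Norm{\E \bm M_i^2} = \max\bigl(\E\Norm{\bm x_i}^2,\; \Norm{\E \bm x_i \bm x_i^{\top}}\bigr) \le L^2$; and (iii) the block structure is preserved by averaging and by subtracting the mean, so that $\Norm{\tfrac{1}{n}\sum_i \bm M_i - \E \bm M_1} = \Norm{\bar{\bm x} - \E \bm x_1}$.

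With these in hand, I apply \cref{fact:matrix_bernstein} with parameters $L$ and $B = L^2$ to the sequence $\bm M_1,\ldots,\bm M_n$ of iid symmetric $(d+1)\times(d+1)$ matrices. This yields
\[
    \Psymb\Paren{\Norm{\bar{\bm x} - \E \bm x_1} \ge t} \;\le\; 2(d+1)\cdot \exp\Paren{-\frac{t^2 n}{2L^2 + 2Lt}},
\]
and since $2(d+1)\le 2d$ is the claimed form up to an absorbable constant (or, equivalently, one can pad to a $2d\times 2d$ dilation to obtain the bound exactly as stated), the conclusion follows.

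There is no real obstacle here: the proof is a textbook reduction and the only mild subtlety is the dimension factor in the prefactor, which is a matter of using either the $(d+1)\times (d+1)$ dilation and noting $2(d+1)\le 2d$ is essentially tight, or equivalently working with the $2d\times 2d$ block-diagonal dilation $\operatorname{diag}(\bm N_i,-\bm N_i)$ where $\bm N_i$ is a compressed version of $\bm M_i$. I would spend one line noting this and then cite \cref{fact:matrix_bernstein}.
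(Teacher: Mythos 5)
The paper states this fact without proof, deferring (together with \cref{fact:matrix_bernstein}) to Tropp's monograph, so there is no paper-internal argument to compare against. Your reduction via the Hermitian dilation is the standard textbook route, and the three verifications are all correct: $\Norm{\bm M_i} = \Norm{\bm x_i} \le L$ almost surely; $\E\bm M_i^2$ is block diagonal with blocks $\E\Norm{\bm x_i}^2$ and $\E\bm x_i\bm x_i^{\top}$, both of operator norm at most $L^2$; and since the dilation commutes with averaging and centering, the deviation $\Norm{\tfrac{1}{n}\sum_i\bm M_i - \E\bm M_1}$ equals $\Norm{\bar{\bm x} - \E\bm x_1}$.

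The only real flaw is in the bookkeeping for the prefactor. The dilation lives in dimension $d+1$, so applying \cref{fact:matrix_bernstein} gives the prefactor $2(d+1)$, and your assertion ``since $2(d+1)\le 2d$'' is simply backwards: $2(d+1)=2d+2>2d$, so the dilation yields a \emph{slightly weaker} bound than the one stated. Your suggested fallback of a $2d\times 2d$ block-diagonal dilation $\diag(\bm N_i,-\bm N_i)$ does not help either --- applying matrix Bernstein to $2d\times 2d$ matrices would produce a prefactor $4d$, which is worse still. The honest conclusion of your argument is the bound with $2(d+1)$ in place of $2d$; since these differ by at most a factor of two (for $d\ge 1$) and the fact is only ever invoked inside a logarithm, the discrepancy is harmless, but you should state it openly (or just carry the $2(d+1)$) rather than assert an inequality in the wrong direction.
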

\begin{fact}
	\label{fact:covariance-estimation-general}
	Let $L > 0, \bm v\in \R^{d}$ be a random vector and 
	$\bm v_1, \ldots, \bm v_n$ i.i.d. copies of $\bm v$.     Suppose that  $\Norm{\bm v} \leq L$.
	Then the estimator $\bar{\bm \Sigma} = \frac{1}{n} \sum_{i=1}^n \bm x_i \bm x_i^\top$ satisfies for all $t >0$ 
	\[ \Psymb\Paren{\Norm{\bar{\bm \Sigma}  - \E \bm x_1 x_1^\top} \geq t } \leq 2d \cdot \exp \Paren{-\frac{t^2 n}{2L\Norm{\Sigma}+2Lt}} \,.\]
\end{fact}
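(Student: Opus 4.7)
The plan is to reduce to the matrix Bernstein inequality (\cref{fact:matrix_bernstein}) applied to the symmetric random matrix $\bm M = \bm v \bm v^\top - \Sigma$, where $\Sigma = \E \bm v \bm v^\top$. Note that $\bar{\bm \Sigma} - \Sigma = \tfrac{1}{n}\sum_{i=1}^n \bm M_i$ with $\bm M_i = \bm v_i \bm v_i^\top - \Sigma$ i.i.d. and mean zero, so verifying the two hypotheses of \cref{fact:matrix_bernstein} (a uniform operator-norm bound and a bound on $\bignorm{\E \bm M^2}$) immediately yields the desired tail inequality.

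First I would establish the uniform bound. Since $\Norm{\bm v_i \bm v_i^\top} = \Norm{\bm v_i}^2 \leq L^2$ and $\Norm{\Sigma} \leq L^2$ (as $\Sigma \preceq L^2 \Id_d$), the triangle inequality gives $\Norm{\bm M_i} \leq 2 L^2$. This controls the "$L$" parameter in \cref{fact:matrix_bernstein}.

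Next I would control $\bignorm{\E \bm M^2}$. The key identity is
\[
    (\bm v \bm v^\top)^2 = \Norm{\bm v}^2 \cdot \bm v \bm v^\top \preceq L^2 \cdot \bm v \bm v^\top,
\]
so taking expectations gives $\E (\bm v \bm v^\top)^2 \preceq L^2 \Sigma$. Since $\E \bm M^2 = \E (\bm v \bm v^\top)^2 - \Sigma^2 \preceq \E (\bm v \bm v^\top)^2$ is PSD, we obtain $\bignorm{\E \bm M^2} \leq L^2 \Normop{\Sigma}$. Substituting these two bounds into \cref{fact:matrix_bernstein} gives a tail of the claimed form (with the factors of $L$ in the denominator of the exponent matching the stated expression up to constants absorbed into the assumed definition of $L$).

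The steps are routine given matrix Bernstein; the only mild subtlety is the use of the identity $(\bm v \bm v^\top)^2 = \Norm{\bm v}^2 \bm v \bm v^\top$, which lets one bound $\bignorm{\E \bm M^2}$ by $L^2 \Normop{\Sigma}$ rather than the looser $L^4$ that would come from merely using $\Norm{\bm v \bm v^\top} \leq L^2$. This improvement is what introduces the factor $\Normop{\Sigma}$ in place of $L^2$ in the variance term of the exponent, and is the only nontrivial ingredient of the argument.
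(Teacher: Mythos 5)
The paper states this as a Fact without proof, so there is no in-text argument to compare against; your derivation via matrix Bernstein (\cref{fact:matrix_bernstein}) applied to $\bm M = \bm v\bm v^\top - \Sigma$ is correct and is the standard route. You correctly isolate the key ingredient, namely the rank-one identity $(\bm v\bm v^\top)^2 = \Norm{\bm v}^2\,\bm v\bm v^\top \preceq L^2\,\bm v\bm v^\top$, which yields $\Norm{\E\bm M^2}\le L^2\Norm{\Sigma}$ rather than the lossy $L^4$; this is precisely what makes the variance term in the tail scale with $\Norm{\Sigma}$. You also correctly flag the one real discrepancy: with the hypothesis $\Norm{\bm v}\le L$, these bounds produce $L^2$ (not $L$) in the denominator of the exponent, so the Fact as printed either has a typo or implicitly normalizes $L$ to bound $\Norm{\bm v}^2$; either reading is harmless for the downstream use in \cref{lem:covariance-bound-elliptical}, which only needs the order of magnitude.
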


We will use this to prove that sampling preserves bounded $\noisetransformation$-moments.
\begin{lemma}
	\label{lem:subsampling_certifiable_phi_moments}
	Let $k, \ell,n,d \in \N_{\geq 1}$, $\sigma, \rho >0$, $\delta \in(0,1)$. Let 
	$\nu$ be a distribution with $(2k, \ell)$-certifiable $\sigma$-bounded $\noisetransformation$-moments, 
	and $\bm \eta_1, \ldots, \bm \eta_n \sim_{\iid}\nu$.
	Suppose that $n \ge 10\cdot d^k\cdot \log \Paren{d/\delta}$ 
	and that $\forall x\in \R^d$, $\Norm{\noisetransformation\Paren{x}} \le \rho\sqrt{d}$.
	
	Then with probability at least $1-\delta$, uniform distribution over the set $\Set{\bm \eta_1,\ldots,\bm \eta_n}$ 
	has $(2k, \ell)$-certifiable $\Paren{\sigma + \rho}$-bounded $\noisetransformation$-moments. 
	In other words, with probability at least $1-\delta$,
	\[\sststile{\ell}{v} \Paren{v^{\otimes k}}^\top 
	\Paren{\frac{1}{n} \sum_{i=1}^n \noisetransformation(\bm \eta_i)^{\otimes k}\Paren{\noisetransformation(\bm \eta_i)^{\otimes k}}^\top} 
	v^{\otimes k} 
	\leq \Paren{\Paren{\sigma + \rho} \cdot \Norm{v}}^{2k} \,.\]
	
	If $\bm \eta_1, \ldots, \bm \eta_n$ follow an elliptical distribution with scatter matrix $\Sigma$ and $\sigma = O(\sqrt{k\norm{\Sigma}})$, we only need $n \gtrsim (\tfrac {\effrank(\Sigma)} k)^k \cdot \log(d/\delta)$.\footnote{Here we use the parametrization from \cref{sec:elliptical}, where $\Tr \Sigma = d$} 
	Further, the uniform distribution has $2\sigma$-bounded $\noisetransformation$-moments.
\end{lemma}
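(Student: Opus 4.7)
The plan is to view the empirical $2k$-th moment tensor $\bar{\bm M} := \tfrac{1}{n}\sum_i \noisetransformation(\bm\eta_i)^{\otimes k}(\noisetransformation(\bm\eta_i)^{\otimes k})^\top$ as a matrix in $\R^{d^k\times d^k}$, show via matrix concentration that it is close in operator norm to its expectation $\Sigma := \E \bm M_1$, and then transfer this operator-norm bound into an SoS certificate using the fact that $\sststile{2k}{v} (v^{\otimes k})^\top A\, v^{\otimes k} \leq \normop{A}\cdot \|v\|^{2k}$ holds for any symmetric matrix $A$.

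First, I would apply \cref{fact:covariance-estimation-general} to the random vectors $\bm X_i := \noisetransformation(\bm \eta_i)^{\otimes k} \in \R^{d^k}$, which have norm at most $\|\noisetransformation(\bm \eta_i)\|^k \leq \rho^k d^{k/2}$ almost surely. Viewing $\bar{\bm M}$ as the empirical covariance of the $\bm X_i$'s, this yields with probability at least $1-\delta$ a bound $\normop{\bar{\bm M} - \Sigma} \leq t$ for a suitable deviation parameter $t$ as soon as $n \gtrsim d^k\log(d/\delta)$, where the $d^k$ arises from the log-dimension factor of matrix Bernstein applied to $d^k\times d^k$ matrices.

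Second, I would convert this operator-norm bound into an SoS certificate. For any symmetric matrix $A \in \R^{d^k\times d^k}$, writing the spectral decomposition of the PSD matrix $\normop{A}\, I - A$ as $\sum_j \lambda_j u_j u_j^\top$ with $\lambda_j \geq 0$, one has
\[
\normop{A}\|v\|^{2k} - (v^{\otimes k})^\top A \, v^{\otimes k} \;=\; \sum_j \lambda_j \iprod{u_j, v^{\otimes k}}^2 \,,
\]
which is a manifest sum of squares of degree-$k$ polynomials in $v$. Applying this to $A = \bar{\bm M} - \Sigma$ and combining with the hypothesis $\sststile{\ell}{v} (v^{\otimes k})^\top \Sigma\, v^{\otimes k} \leq \sigma^{2k}\|v\|^{2k}$ gives, at degree $\max(\ell, 2k) = \ell$,
\[
\sststile{\ell}{v}\; (v^{\otimes k})^\top \bar{\bm M}\, v^{\otimes k} \;\leq\; (\sigma^{2k} + t)\,\|v\|^{2k}\,.
\]
I would then pick $t \leq \rho^{2k}$, so that by the binomial identity $(\sigma+\rho)^{2k} \geq \sigma^{2k} + \rho^{2k}$ the right-hand side is dominated by $((\sigma+\rho)\|v\|)^{2k}$, as required. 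For the elliptical refinement, I would replace the crude norm bound $\|\bm X_i\| \leq \rho^k d^{k/2}$ by the sharper radial analysis afforded by \cref{fact:elliptical-projection-covariance}: since $\noisetransformation(\bm \eta_i)$ is a radial clipping of $\bm R\, \Sigma^{1/2} \bm U$, its $k$-th tensor has typical squared norm controlled by $(k\|\Sigma\|)^k$ rather than $\rho^{2k}d^k$, which replaces the $d^k$ factor in the sample complexity by $(\effrank(\Sigma)/k)^k$ and tightens the output certificate to $2\sigma$-boundedness.

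The main obstacle is controlling the variance term of matrix Bernstein tightly enough to match the claimed sample complexity: the SoS hypothesis only bounds $(v^{\otimes k})^\top \Sigma\, v^{\otimes k}$ on product tensors $v^{\otimes k}$ and does not directly give an operator-norm bound on $\Sigma$ as a $d^k\times d^k$ matrix. In the generic case one must carefully balance this against the almost-sure bound $L = \rho^k d^{k/2}$ and the allowed deviation $t \leq \rho^{2k}$; in the elliptical case one exploits the radial/scatter-matrix structure to obtain a sharper variance bound that pays only the effective rank rather than the ambient dimension.
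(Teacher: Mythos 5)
Your overall strategy — view $\bar{\bm T}_k = \tfrac1n\sum_i \noisetransformation(\bm\eta_i)^{\otimes k}(\noisetransformation(\bm\eta_i)^{\otimes k})^\top$ as the empirical covariance of the random vectors $\noisetransformation(\bm\eta_i)^{\otimes k}\in\R^{d^k}$, control $\normop{\bar{\bm T}_k - T_k}$ with matrix Bernstein, and then absorb the deviation into the SoS certificate via the spectral decomposition of a PSD matrix — is exactly what the paper does; the paper's \cref{fact:sos_spectral_upper_bound} is precisely your conversion step.

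Two places where the details do not quite line up. First, you cannot simply ``pick $t\le\rho^{2k}$'': with $n\asymp d^k\log(d/\delta)$ samples, matrix Bernstein gives a deviation of order $\rho^{2k}+\sigma^k\rho^k$, and when $\sigma>\rho$ the second term dominates. Insisting on $t\le\rho^{2k}$ would actually force $n\gtrsim (\sigma/\rho)^{2k}\,d^k\log(d/\delta)$, worse than the stated bound. The fix is to allow the deviation up to $\rho^{2k}+\sigma^k\rho^k$ and use the sharper binomial inequality $(\sigma+\rho)^{2k}\ge \sigma^{2k}+\binom{2k}{k}\sigma^k\rho^k+\rho^{2k}$, which is what the paper's calculation implicitly does. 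Second, in the elliptical refinement the improvement does \emph{not} come from a sharper almost-sure bound on $\Norm{\noisetransformation(\bm\eta_i)^{\otimes k}}$: since $\noisetransformation$ projects onto the ball of radius $R = 20\rho\sqrt d$ and an elliptical $\bm\eta$ typically lands on the boundary, the almost-sure bound $\Norm{\noisetransformation(\bm\eta_i)^{\otimes k}}\le (20\rho\sqrt d)^k$ cannot be improved. What improves is the arithmetic: with $\sigma^{2k}=\Theta\paren{(\rho^2 k\Norm\Sigma)^k}$ and $\Tr\Sigma = d$ one has $\rho^{2k}d^k = \sigma^{2k}\cdot(\effrank(\Sigma)/k)^k$, so taking $n\gtrsim(\effrank(\Sigma)/k)^k\log(d/\delta)$ already drives the Bernstein deviation down to $O(\sigma^{2k})$, giving the $2\sigma$-boundedness.

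Finally, the obstacle you flag at the end is genuine and worth taking seriously: the Bernstein variance parameter involves $\normop{T_k}$ on all of $\R^{d^k}$, while the $(2k,\ell)$-certifiable-moment hypothesis only controls $(v^{\otimes k})^\top T_k\, v^{\otimes k}$, i.e.\ the quadratic form restricted to rank-one symmetric tensors. The paper's proof asserts $\normop{T_k}\le\sigma^{2k}$ at this point without justification, and this inequality does not follow from the SoS hypothesis in general (for example, for a spherically-symmetric $X$ the tensor $u\propto\text{vec}(\Id)^{\otimes k/2}$ witnesses $\normop{T_k}\gg \sup_v \E\iprod{v,X}^{2k}/\Norm{v}^{2k}$). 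You correctly identify this as the crux, though neither you nor the paper supplies the extra argument that closes it; in the paper's two concrete instantiations one must check by hand that $\normop{T_k}$ is in fact controlled (via the independent-sign structure for semi-product, and via the near-deterministic norm of the projected samples for elliptical), which does not come for free from certifiable moments alone.
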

\begin{proof}
	Let $T_k = \E \noisetransformation(\bm \eta)^{\otimes k} \Paren{\noisetransformation(\bm \eta)^{\otimes k}}^\top$.
	Note that $\Set{\bm \eta}$ having $(2k, \ell)$-certifiable $\sigma$-bounded $\noisetransformation$-moments is equivalent to the fact that the following SoS proof exists 
	\[\sststile{\ell}{v} \Paren{v^{\otimes k}}^\top T_k v^{\otimes k} \leq \Paren{\sigma \cdot \Norm{v}}^{2k} \,.\]
	Denote $\bar{\bm T}_k = \frac{1}{n} \sum_{i=1}^n \noisetransformation(\bm \eta_i)^{\otimes k} \Paren{\noisetransformation(\bm \eta_i)^{\otimes k}}^\top$.
	Notice that by \cref{fact:sos_spectral_upper_bound} 
	\[
	\sststile{\ell}{v} \Paren{v^{\otimes k}}^\top \bar{\bm T}_k v^{\otimes k} 
	= \Paren{v^{\otimes k}}^\top \Brac{T_k + \bar{\bm T}_k - T_k} v^{\otimes k} 
	\leq \Norm{v}^{2k} \cdot \Paren{ \sigma^{2k} + \Norm{\bar{\bm T}_k - T_k}} \,.
	\]
	
	Hence, it is enough to bound $\Norm{\bar{\bm T}_k - T_k}$.
	Our strategy will be to use \cref{fact:matrix_bernstein}.
	To this end, let $\bm M = \noisetransformation(\bm \eta)^{\otimes k} \Paren{\noisetransformation(\bm \eta)^{\otimes k}}^\top$ and for $i\in [n]$, let $\bm M_i = \noisetransformation(\bm \eta_i)^{\otimes k} \Paren{\noisetransformation(\bm \eta_i)^{\otimes k}}^\top$.
	This implies that $T_k = \E \bm M$ and $\bar{\bm T}_k = \tfrac{1}{n} \sum_{i=1}^n \bm M_i$.
	Note that since the entries of $\noisetransformation$ are bounded by $h$ in magnitude it follows that for each $i$ 
	\[\Norm{\bm M_i } = \Norm{\noisetransformation(\bm \eta_i)}^{2k} \leq \rho^{2k}d^k \,.\]
	Further, 
	\[
	\bm M^2 
	= \Norm{\noisetransformation(\bm \eta)}^{2k} \cdot \noisetransformation(\bm \eta)^{\otimes k} \Paren{\noisetransformation(\bm \eta)^{\otimes k}}^\top 
	\preceq \rho^{2k}d^k\cdot \noisetransformation(\bm \eta)^{\otimes k} \Paren{\noisetransformation(\bm \eta)^{\otimes k}}^\top\,.
	\]
	Since $\Norm{\E \noisetransformation(\bm \eta)^{\otimes k} \Paren{\noisetransformation(\bm \eta)^{\otimes k}}^\top} \leq \sigma^{2k}$,
	it holds that 
	\[
	\Norm{\E \bm M^2} 
	\le \rho^{2k}d^k\cdot \Norm{\E \noisetransformation(\bm \eta)^{\otimes k} \Paren{\noisetransformation(\bm \eta)^{\otimes k}}^\top} 
	\leq \Paren{\rho\sigma}^{2k}d^k \,. 
	\]
	By \cref{fact:matrix_bernstein},
	\begin{align}
	\label{eq:subsample_certifiable}
	\Psymb\Paren{\Norm{\bar{\bm T}_k - T_k} 
		\ge \frac{10\rho^{2k}d^k\cdot {\log\Paren{d/\delta}}}{n} + 
		\sigma^k\sqrt{\frac{10\rho^{2k}d^k\cdot{\log\Paren{d/\delta}}}{n} } } \le \delta\,.
	\end{align}
	Hence with probability at least $1-\delta$,
	\[
	\sststile{\ell}{v} \Paren{v^{\otimes k}}^\top \bar{\bm T}_k v^{\otimes k} 
	\leq \Norm{v}^{2k} \cdot \Paren{ \sigma^{2k} +  \rho^{2k} + \rho^k \sigma^k}
	\le  \Norm{v}^{2k} \cdot \Paren{\sigma + \rho}^{2k}
	\]
	
	To see the improved sample complexity for elliptical distributions, note that since $\effrank(\Sigma) = \tfrac {\Tr(\Sigma)} {\norm{\Sigma}} = \tfrac d {\norm{\Sigma}}$ and $\sigma = O(\rho \sqrt{k \norm{\Sigma}})$, it follows that
	\[
	\rho^{2k}d^k = \rho^{2k} \norm{\Sigma}^k \effrank(\Sigma)^k \cdot \frac{k^k}{k^k} = \sigma^{2k} \cdot \Paren{\frac {\effrank(\Sigma)} k}^k \,.
	\]
	Thus, if $n \gtrsim (\tfrac {\effrank(\Sigma)} k)^k \cdot \log(d/\delta)$, then $\Norm{\bar{\bm T}_k - T_k}$ in \cref*{eq:subsample_certifiable} is at most $\sigma^{2k}$ with probability at least $1-\delta$.
\end{proof}

\begin{fact}\label{fact:elliptical-projection-covariance}\cite{DTV16b}
	Let $\bm \eta$ be an elliptically distributed vector with location $0$ and scatter matrix $\Sigma$.
	Let $s\Paren{\bm \eta} = \frac{\sqrt{\Tr\Paren{\Sigma}}}{\Norm{\bm \eta}}\bm \eta$. Then 
	\[
	\Norm{\E s\Paren{\bm \eta}{s\Paren{\bm \eta}}^\top} \le \Norm{\Sigma}\,.
	\]
\end{fact}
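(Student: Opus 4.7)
The plan is to first reduce to the Gaussian case via the elliptical representation, then exploit the rotational symmetry of $\Sigma$ to reduce the spectral-norm bound to a scalar inequality about ratios of iid chi-squared terms, which I would prove by a swap-and-symmetrize argument. For the reduction: writing $\bm \eta = \bm R \Sigma^{1/2} \bm U$ with $\bm R > 0$ independent of $\bm U$ uniform on the sphere, the positive scalar $\bm R$ cancels in $s(\bm \eta) = \sqrt{\Tr(\Sigma)} \cdot \Sigma^{1/2} \bm U / \Norm{\Sigma^{1/2} \bm U}$, so $s(\bm \eta)$ has the same distribution as $s(\bm g)$ for $\bm g \sim N(0, \Sigma)$, and it suffices to prove $\Norm{\E s(\bm g) s(\bm g)^\top} \le \Norm{\Sigma}$; equivalently $\Norm{M} \le \Norm{\Sigma}/\Tr(\Sigma)$ for the matrix $M := \E \bm g \bm g^\top / \Norm{\bm g}^2$.

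Next I would observe that for any orthogonal $R$ commuting with $\Sigma$, $R \bm g$ has the same distribution as $\bm g$ and hence $R M R^\top = M$, which forces $M$ to act as a scalar on each eigenspace of $\Sigma$. Writing $\Sigma$ in its eigenbasis with eigenvalues $\lambda_1 \ge \cdots \ge \lambda_d$ and $\bm g = \Lambda^{1/2} \bm z$ for $\bm z \sim N(0, I)$, $M$ is diagonal with entries $m_k = \E[\lambda_k \bm z_k^2 / \sum_l \lambda_l \bm z_l^2]$, so the target becomes $\max_k m_k \le \lambda_1 / \Tr(\Sigma)$. Setting $n_k = \E[\bm z_k^2/\sum_l \lambda_l \bm z_l^2]$ so that $m_k = \lambda_k n_k$, the main claim is that $n_1 \le n_k$ for every $k$. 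To see this, swap $\bm z_1$ with $\bm z_k$ (which preserves the joint distribution) to rewrite $n_1 = \E[\bm z_k^2/D']$, where $D = \lambda_1 \bm z_1^2 + \lambda_k \bm z_k^2 + R$ and $D' = \lambda_1 \bm z_k^2 + \lambda_k \bm z_1^2 + R$ with $R = \sum_{l \ne 1,k} \lambda_l \bm z_l^2$; averaging this with $n_k = \E[\bm z_k^2/D]$ then replaces the cross-term $(\bm z_k^2 - \bm z_1^2)$ by $(\bm z_1^2 - \bm z_k^2)^2$ and produces
\[
n_1 - n_k = -\tfrac{1}{2}(\lambda_1 - \lambda_k) \,\E \frac{(\bm z_1^2 - \bm z_k^2)^2}{D D'} \le 0.
\]
Combined with the trace identity $\sum_k \lambda_k n_k = 1$, this forces $n_1 \cdot \Tr(\Sigma) \le \sum_k \lambda_k n_k = 1$, hence $m_1 = \lambda_1 n_1 \le \lambda_1 / \Tr(\Sigma)$. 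A parallel swap computation (whose key identity is $m_1 - m_k = \E[\bm z_k^2 (\lambda_1 - \lambda_k)((\lambda_1 + \lambda_k) \bm z_1^2 + R)/(D D')] \ge 0$) shows $m_1 \ge m_k$ for every $k$, so $\max_k m_k = m_1 \le \lambda_1/\Tr(\Sigma)$ as required.

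The main obstacle I anticipate is recognizing that the natural Loewner-order strengthening $\Tr(\Sigma) M \preceq \Sigma$ is in fact false: already for $\Sigma = \diag(4, 1)$ one computes $m_1 = 2/3$ and $m_2 = 1/3$, so the second diagonal entry of $\Tr(\Sigma) M$ equals $5/3$, exceeding $\lambda_2 = 1$. Consequently the bound must be proved only at the level of the spectral norm, which is what forces the two-stage route of separately identifying the top eigendirection of $M$ (it coincides with that of $\Sigma$) and then bounding its eigenvalue via the trace normalization $\sum_k \lambda_k n_k = 1$, rather than by a pointwise inequality on each eigenvalue.
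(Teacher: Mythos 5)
The paper does not prove this statement itself; it is quoted as Fact~B.4 with a citation to [DTV16b], so there is no in-paper proof to compare against. That said, your proof is correct and self-contained, and it is worth verifying carefully.

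Your reduction to the Gaussian case is right: in the representation $\bm\eta = \bm R\,\Sigma^{1/2}\bm U$ the positive scalar $\bm R$ cancels in $s(\bm\eta)$, so $s(\bm\eta) \overset{d}{=} s(\bm g)$ for $\bm g \sim N(0,\Sigma)$, and the claim becomes $\Norm{M} \le \Norm{\Sigma}/\Tr(\Sigma)$ with $M = \E\bigl[\bm g\bm g^\top/\Norm{\bm g}^2\bigr]$. The off-diagonal entries $M_{jk}$ (in the eigenbasis of $\Sigma$, with $\bm g = \Lambda^{1/2}\bm z$) vanish by the sign flip $\bm z_j \mapsto -\bm z_j$, so $M$ is diagonal with $m_k = \lambda_k n_k$, $n_k = \E[\bm z_k^2/\sum_l\lambda_l\bm z_l^2]$. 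Your swap-and-symmetrize computations check out: with $D=\sum_l\lambda_l\bm z_l^2$ and $D'$ the swap of $\bm z_1\leftrightarrow\bm z_k$, one gets
\[
n_1 - n_k \;=\; -\tfrac{1}{2}(\lambda_1-\lambda_k)\,\E\frac{(\bm z_1^2-\bm z_k^2)^2}{DD'} \;\le 0\,,
\qquad
m_1 - m_k \;=\; (\lambda_1-\lambda_k)\,\E\frac{\bm z_k^2\bigl[(\lambda_1+\lambda_k)\bm z_1^2 + R\bigr]}{DD'} \;\ge 0\,,
\]
and combined with the trace identity $\sum_k\lambda_k n_k = 1$, the first gives $\Tr(\Sigma)\,n_1 \le 1$, hence $m_1 \le \lambda_1/\Tr(\Sigma)$, while the second pins down $\Norm{M}=m_1$. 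Your counterexample $\Sigma=\diag(4,1)$ (giving $m_1=2/3$, $m_2=1/3$, so $\Tr(\Sigma)m_2 = 5/3 > \lambda_2$) correctly demonstrates that the naive Loewner strengthening $\Tr(\Sigma)M\preceq\Sigma$ fails, which is indeed the structural reason why the argument has to isolate the top eigendirection and handle it separately rather than proceeding entrywise. One minor point worth flagging: you implicitly assume $\Sigma \succ 0$; for merely PSD $\Sigma$ with $\lambda_k = 0$ the quantity $n_k$ may be infinite, but then $m_k = 0$ trivially and those indices drop out of both inequalities, so the argument extends with no real difficulty.
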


\begin{fact}\label{fact:quadratic-form-gaussian}\cite{bechar2009bernstein}
	Let $\bm x \sim N(0, \Sigma)$.
	Then for all $t>0$,
	\[
	\Pr\Brac{\Norm{\bm x}^2 \le \Tr(\Sigma) - \sqrt{2t}\Normf{\Sigma}  } \le \exp\paren{-t}
	\]
	and 
	\[
	\Pr\Brac{\Norm{\bm x}^2 \ge \Tr(\Sigma) + \sqrt{2t}\Normf{\Sigma}  + t\Norm{\Sigma}} \le \exp\paren{-t}\,.
	\]
\end{fact}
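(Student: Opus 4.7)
\textbf{Proof proposal for \cref{fact:quadratic-form-gaussian}.}

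The plan is to use the standard Cramér--Chernoff method combined with the explicit moment generating function of a Gaussian quadratic form (this is essentially the Laurent--Massart bound generalized to arbitrary $\Sigma$). First I would diagonalize: write $\Sigma = U \Lambda U^\top$ with $\Lambda = \diag(\lambda_1, \ldots, \lambda_d)$, and observe that $\Norm{\bm x}^2 = \sum_{i=1}^d \lambda_i \bm z_i^2$, where $\bm z_i$ are i.i.d.\ standard normals. Note $\sum_i \lambda_i = \Tr(\Sigma)$, $\sum_i \lambda_i^2 = \Normf{\Sigma}^2$, and $\max_i \lambda_i = \Norm{\Sigma}$, so the bound should fall out of a careful estimate of $\log \E \exp(s(\Norm{\bm x}^2 - \Tr\Sigma))$ in these three parameters.

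For the upper tail, I would compute
\[
\log \E \exp\Paren{s\bigparen{\Norm{\bm x}^2 - \Tr\Sigma}} = \sum_{i=1}^d \Brac{-s\lambda_i - \tfrac{1}{2}\log(1 - 2s\lambda_i)}
\]
for $0 < s < 1/(2\Norm{\Sigma})$, and then use the elementary inequality $-x - \tfrac12 \log(1-2x) \le \tfrac{x^2}{1 - 2x}$ (valid for $x \in [0, 1/2)$, proved by a Taylor expansion/series comparison) termwise to get the clean bound
\[
\log \E \exp\Paren{s\bigparen{\Norm{\bm x}^2 - \Tr\Sigma}} \;\le\; \frac{s^2 \Normf{\Sigma}^2}{1 - 2s\Norm{\Sigma}}.
\]
Then I would apply the Chernoff inequality with $u = \sqrt{2t}\Normf{\Sigma} + t\Norm{\Sigma}$ and pick $s = u / \bigparen{2\Normf{\Sigma}^2 + 2u \Norm{\Sigma}}$, or equivalently $s$ such that $2s\Norm{\Sigma}/(1-2s\Norm{\Sigma})$ balances the linear and quadratic contributions; a short calculation verifies $-su + s^2 \Normf{\Sigma}^2/(1 - 2s\Norm{\Sigma}) \le -t$, which gives the upper tail.

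For the lower tail, which has the simpler form without the $t\Norm{\Sigma}$ term, I would repeat the argument for $s < 0$. The key simplification is that for $x \le 0$ one has the stronger inequality $-x - \tfrac{1}{2}\log(1 - 2x) \le x^2$ (again by series comparison, with no denominator), so
\[
\log \E \exp\Paren{s\bigparen{\Norm{\bm x}^2 - \Tr\Sigma}} \;\le\; s^2 \Normf{\Sigma}^2 \qquad \text{for all } s \le 0.
\]
Optimizing $su + s^2 \Normf{\Sigma}^2$ at $s = -u/(2\Normf{\Sigma}^2)$ with $u = \sqrt{2t}\Normf{\Sigma}$ yields the claimed lower tail $\exp(-t)$. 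The main (very mild) obstacle is verifying the two elementary $\log$-inequalities cleanly and choosing $s$ to match the particular form $\sqrt{2t}\Normf{\Sigma} + t\Norm{\Sigma}$ in the statement; everything else is mechanical. Alternatively, one can simply invoke \cite{bechar2009bernstein} as the statement already does.
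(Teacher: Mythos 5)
The paper supplies no proof for this fact---it simply cites \cite{bechar2009bernstein}---so your Cram\'er--Chernoff derivation is a genuine substitute, and its skeleton is the right one: diagonalize $\Sigma$, compute the log-MGF of $\sum_i \lambda_i \bm z_i^2$, bound it termwise via the two elementary inequalities on $-x - \tfrac12\log(1-2x)$ (both of which you state correctly and which check out by series comparison or differentiation), and run Chernoff. The gap is in the final calibration. For the lower tail, your bound $\log \E\exp\bigparen{s(\Norm{\bm x}^2 - \Tr\Sigma)} \le s^2\Normf{\Sigma}^2$ for $s\le 0$, optimized at $s=-u/(2\Normf{\Sigma}^2)$, gives exponent $-u^2/(4\Normf{\Sigma}^2)$, which at $u=\sqrt{2t}\Normf{\Sigma}$ equals $-t/2$, not $-t$. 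The same factor-of-two slip occurs in the upper tail: writing $\psi(s)=s^2\Normf{\Sigma}^2/(1-2s\Norm{\Sigma}) = \tfrac{s^2 v}{2(1-cs)}$ with $v=2\Normf{\Sigma}^2$, $c=2\Norm{\Sigma}$, the standard sub-gamma calculus gives $\psi^*(u)\ge t$ exactly when $u\ge \sqrt{2vt}+ct = 2\sqrt{t}\Normf{\Sigma}+2t\Norm{\Sigma}$, so the smaller threshold $\sqrt{2t}\Normf{\Sigma}+t\Norm{\Sigma}$ only yields $\psi^*(u)\ge t/2$. Your chosen $s=u/(2\Normf{\Sigma}^2+2u\Norm{\Sigma})$ does not certify the claimed $-t$; the ``short calculation'' you gesture at does not close.

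The deeper point is that the constants as printed in the fact cannot hold. Take $\Sigma=\Id_d$, so $\Norm{\bm x}^2\sim\chi^2_d$; by the CLT, $\Pr\Brac{\chi^2_d \le d-\sqrt{2td}}$ tends as $d\to\infty$ to the standard normal lower tail at $-\sqrt{t}$, which behaves like $(2\pi t)^{-1/2}e^{-t/2}$ and exceeds $e^{-t}$ already for moderate $t$ (e.g.\ at $t=4$ the limit is about $0.023 > e^{-4}\approx 0.018$). So the stated fact is a mis-transcription of the Laurent--Massart/B\'echar bound, whose correct thresholds at probability $e^{-t}$ are $2\sqrt{t}\Normf{\Sigma}+2t\Norm{\Sigma}$ (upper) and $2\sqrt{t}\Normf{\Sigma}$ (lower); equivalently, the right-hand sides in the fact should be $e^{-t/2}$. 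Your method proves exactly that corrected form. Every invocation of this fact elsewhere in the paper has comfortable constant-factor slack, so nothing downstream breaks, but both your calibration step and the statement's constants should be fixed to match.
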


\begin{fact}\label{fact:subgaussian-stability}\cite{DK_book}
	Let $\bm \xi_1, \ldots,\bm \xi_n$ be independent zero mean 
	$1$-sub-Gaussian $d$-dimensional vectors. Let $\e,\delta\in (0,1)$, and suppose that $n \ge d + \log(1/\delta)$.
	Let $\cW_{\e} = \Set{w\in \R^d \suchthat 0\le w_i \le 1/n\,, \sum_{i=1}^n \Abs{w_i - 1/n} \le \e}$.
	Then, with probability $1-\delta$, for every $w\in \cW_{\e}$,
	\[
	\Norm{\sum_{i=1}^n w_i \bm \xi_i} \le O\Paren{\e\sqrt{\log(1/\e)} + \sqrt{\frac{d + \log(1/\delta) }{n}} }\,,
	\]
	and
	\[
	\Norm{\sum_{i=1}^n w_i\bm \xi_i {\bm \xi_i }^\top -\frac{1}{n}\sum_{i=1}^n \E \bm \xi{\bm \xi}^\top } \le O\Paren{\e\,{\log(1/\e)} + \sqrt{\frac{d + \log(1/\delta) }{n}} }\,.
	\]
	Similarly, with probability $1-\delta$ it holds that for every set $T \sse [n]$ with$\card{T} \leq \e n$ and every $w \in \cW_e$ it holds that 
	\[
		\Norm{\sum_{i \in T} w_i \xi_i \xi_i^\top} \leq O\Paren{\e\,{\log(1/\e)} + \sqrt{\frac{d + \log(1/\delta) }{n}} }\,.
	\]
\end{fact}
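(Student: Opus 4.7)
\medskip

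\noindent\textbf{Proof plan for \cref{fact:subgaussian-stability}.} My plan is to reduce each spectral/vector norm to a supremum over a finite $\eps$-net of unit vectors, and then for each fixed direction $v$ control both the uniform-weight term and the weight-perturbation term via standard scalar sub-Gaussian tail bounds. Writing any $w \in \cW_\eps$ as $w_i = 1/n - u_i$ with $u_i \in [0, 1/n]$ and $\sum_i |u_i| \le \eps$, I will decompose
\[
    \sum_{i=1}^n w_i \bm \xi_i \;=\; \tfrac 1 n \sum_{i=1}^n \bm \xi_i \;-\; \sum_{i=1}^n u_i \bm \xi_i,
\]
and similarly for the second-moment statement. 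The first piece is controlled by standard concentration of sums of independent centered sub-Gaussian vectors (Bernstein/Hanson--Wright combined with an $\eps$-net, giving the $\sqrt{(d + \log(1/\delta))/n}$ rate). The second piece is the main place where the $\eps\sqrt{\log(1/\eps)}$ (resp.\ $\eps\log(1/\eps)$) contribution appears and is the crux of the argument.

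The key observation for the perturbation term is a uniform ``top-$\eps n$'' bound: since $0 \le u_i \le 1/n$ and $\sum_i u_i \le \eps$,
\[
    \sup_{u} \Bigabs{\sum_{i=1}^n u_i \iprod{v, \bm \xi_i}} \;\le\; \tfrac 1 n \sum_{i \in T_v} \abs{\iprod{v, \bm \xi_i}},
\]
where $T_v \subseteq [n]$ is the set of the $\lceil \eps n \rceil$ largest values of $|\iprod{v, \bm \xi_i}|$. For a single fixed $v$, the scalar variables $\bm Z_i \coloneqq \iprod{v, \bm \xi_i}$ are independent $1$-sub-Gaussian, so by a standard truncation/tail computation (integrating the tail $\Pr[|\bm Z| > t] \le 2 e^{-t^2/2}$ against the counting measure of the top $\eps n$ values) one obtains
\[
    \tfrac 1 n \sum_{i \in T_v} \abs{\bm Z_i} \;\le\; O\!\Paren{\eps \sqrt{\log(1/\eps)} + \sqrt{\tfrac{\log(1/\delta)}{n}}}
\]
with probability at least $1 - \delta$. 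The analogous estimate for the square $\bm Z_i^2$ (using that $\bm Z_i^2$ is sub-exponential) gives the $\eps \log(1/\eps)$ rate needed for the second and third statements. Union-bounding over an $\eps$-net of size $e^{O(d)}$ on $\bbS^{d-1}$ and taking $n \gtrsim d + \log(1/\delta)$ absorbs the $d$ into the additive $\sqrt{d/n}$ term. The third statement follows directly by applying the top-$\eps n$ bound with $T$ in place of $T_v$, since $\sum_{i \in T} w_i \iprod{v,\bm\xi_i}^2 \le \tfrac{1}{n}\sum_{i \in T} \bm Z_i^2$ and $|T| \le \eps n$.

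The main technical obstacle is ensuring that the top-$\eps n$ estimate holds \emph{uniformly} over $v$ on the net, rather than just pointwise. The cleanest route is to avoid a union bound on the random set $T_v$ and instead bound the deterministic quantity $\sup_v \tfrac 1 n \sum_{i \in T_v} g(\bm Z_i(v))$ via a Lipschitz/monotonicity argument in $v$: the map $v \mapsto (|\iprod{v,\bm\xi_i}|)_i$ is coordinatewise Lipschitz on the sphere, and the top-$k$ sum is a Lipschitz function of this vector, so an $\eps$-net discretization loses only an additive term comparable to $\sqrt{d/n}$. Combining the scalar concentration on the net, the Lipschitz discretization error, and the standard concentration of the uniform-weight average then yields all three displayed bounds simultaneously with probability $1 - \delta$.
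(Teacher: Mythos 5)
The paper does not give its own proof of this fact; it cites it to \cite{DK_book}. Your proposal is a largely correct reconstruction of the standard argument: the decomposition $w_i = 1/n - u_i$ into a uniform average plus a perturbation, the reduction of the perturbation term to a top-$\eps n$ sum, the scalar truncation argument using $\Pr[|\bm Z| > t] \le 2e^{-t^2/2}$ (giving $\eps\sqrt{\log(1/\eps)}$) and its sub-exponential analogue for $\bm Z_i^2$ (giving $\eps\log(1/\eps)$), and the $\eps$-net union bound with $\log|\cN| = O(d)$ absorbed into the $\sqrt{(d+\log(1/\delta))/n}$ term --- all of this is right and is in essence how these stability statements are proved in the literature.

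However, the ``main technical obstacle'' you flag in the last paragraph is not actually an obstacle, and the Lipschitz route you propose to resolve it does not quite close. The standard trick is to apply the net to the \emph{dual characterization of the norm}: the deterministic inequality
$\|x\| \le 2\max_{v\in\cN_{1/2}}\langle v, x\rangle$
with $x = \sum_i w_i \bm\xi_i$ converts the spectral/vector norm into a maximum over a fixed finite set of directions. You then bound, for each of the $6^d$ directions $v\in\cN_{1/2}$, the scalar quantity $\sup_{w\in\cW_\eps}|\sum_i w_i\langle v,\bm\xi_i\rangle|$ using your decomposition and top-$\eps n$ bound, and union-bound over $\cN_{1/2}$. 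This never requires extending the top-$\eps n$ functional $F(v)$ from the net to all of $\bbS^{d-1}$: the extension to the sphere is already handled \emph{before} $F$ appears, by the $1/2$-net norm bound, so there is no discretization error in $F$ to control.

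By contrast, if you do go the route you describe --- bound $\max_{v\in\cN_\gamma}F(v)$ and then argue by Lipschitzness of $F$ that the extension to the sphere costs only an additive term --- you run into a real slack. The Lipschitz constant of $F$ on the sphere is $\eps\max_i\|\bm\xi_i\| \asymp \eps\sqrt{d}$ (with high probability), so the discretization error is $\eps\gamma\sqrt{d}$. With a constant-resolution net this is $\eps\sqrt{d}$, far larger than the target $\sqrt{d/n}$; making it small forces $\gamma \lesssim 1/\sqrt{nd}$, which inflates $\log|\cN_\gamma|$ to $\Theta(d\log n)$ and thus pollutes the union-bound term to $\sqrt{d\log(n)/n}$, a $\sqrt{\log n}$ factor worse than the stated bound. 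So as written, your proposed resolution of the uniformity step would not recover the clean $\sqrt{(d+\log(1/\delta))/n}$ rate. Replacing that final paragraph with the dual-norm net argument fixes the gap and in fact makes the proof shorter.
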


\begin{lemma}\label{lem:subgaussian-lower-bound}
	Let $\e,\delta\in (0,1)$ and suppose that  $n \ge d\log d + \log(1/\delta)$.
	Let $\bm \xi_1, \ldots,\bm \xi_n$ be independent zero mean 
	$1$-sub-Gaussian $d$-dimensional vectors, and let $a_1,\ldots, a_n\in \R^d$ be fixed (non-random) vectors 
	such that $\norm{a_i} \le 100\e \sqrt{d} /\alpha$ for some $\alpha\in(0,1)$.  
	Suppose in addition that $\Norm{\bm \xi_i} \le 10\sqrt{d}$ with probability 1.
	Let $\cW_{\e} = \Set{w\in \R^d \suchthat 0\le w_i \le 1/n\,, \sum_{i=1}^n \Abs{w_i - 1/n} \le \e}$.
	Then, with probability $1-\delta$, for every $w\in \cW_{\e}$,
	\[
	\sum_{i=1}^n w_i\Paren{\bm \xi_i + a_i}\Paren{\bm \xi_i + a_i}^\top 
	\succeq \frac{1}{n}\sum_{i=1}^n \E \bm \xi{\bm \xi}^\top
	- O\Paren{\e \log (1/\e)  + \e / \alpha 
		+ \sqrt{\frac{d\log\Paren{{d}} + \log(1/\delta)} {n} }}\,.
	\]
\end{lemma}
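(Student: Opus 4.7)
The plan is to expand
\[
\sum_i w_i(\bm\xi_i + a_i)(\bm\xi_i + a_i)^\top
= \underbrace{\sum_i w_i \bm\xi_i\bm\xi_i^\top}_{(I)}
+ \underbrace{\sum_i w_i\bigl(\bm\xi_i a_i^\top + a_i \bm\xi_i^\top\bigr)}_{(II)}
+ \underbrace{\sum_i w_i a_i a_i^\top}_{(III)} ,
\]
drop the PSD piece $(III)$, and bound $(I)$ and $(II)$ in L\"owner order separately. For $(I)$, I would invoke the covariance version of sub-Gaussian stability (Fact~\ref{fact:subgaussian-stability}), which, uniformly over $w \in \cW_\e$, gives with probability $\ge 1-\delta/2$ the lower bound $(I) \succeq \tfrac{1}{n}\sum_i \E\bm\xi_i\bm\xi_i^\top - O\!\bigl(\e\log(1/\e) + \sqrt{(d+\log(1/\delta))/n}\bigr)\,\Id_d$. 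The whole content of the lemma is thus reduced to producing an operator-norm bound of the form $\bigl\|\sum_i w_i \bm\xi_i a_i^\top\bigr\| \lesssim \e/\alpha + \sqrt{(d\log d + \log(1/\delta))/n}$ uniformly over $w \in \cW_\e$, since then $(II) \succeq -2\bigl\|\sum_i w_i \bm\xi_i a_i^\top\bigr\|\,\Id_d$.

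To establish the cross-term bound I would pass to an $\tfrac{1}{2}$-net $\cN$ of $\bbS^{d-1}$ (of cardinality $\le 9^d$) and bound $v^\top\bigl(\sum_i w_i \bm\xi_i a_i^\top\bigr) u$ for each pair $u,v \in \cN$. For fixed $u,v,w$, writing $b_i := \langle a_i, u\rangle$, this is a sum of independent mean-zero, $1$-sub-Gaussian coordinates $\langle\bm\xi_i, v\rangle$ with deterministic coefficients $w_i b_i$; Hoeffding yields sub-Gaussian concentration with parameter $\bigl(\sum_i w_i^2 b_i^2\bigr)^{1/2} \le \|a_i\|_{\max}/\sqrt{n} \le 100\,\e\sqrt{d/n}/\alpha$. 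To make the estimate uniform in $w \in \cW_\e$ I would exploit linearity in $w$: the sup of a linear functional over the convex compact $\cW_\e$ is attained at an extreme point, which corresponds to zeroing out at most $\e n$ of the uniform weights; union-bounding over the $\binom{n}{\e n} \le (e/\e)^{\e n}$ such subsets and over $\cN^2$ contributes $\log\tfrac{1}{\delta'} \lesssim d\log d + \e n \log(1/\e) + \log(1/\delta)$ to the tail exponent, and the $\e n\log(1/\e)$ cost is what produces the $\e \log(1/\e)$ error after combining with Hoeffding's $1/\sqrt{n}$ scaling.

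The hard part will be pushing the cross-term bound down to $\e/\alpha$ rather than the naive $\e\sqrt{d}/\alpha$ that one gets from applying Hoeffding with the $\ell_2$ assumption $\|a_i\| \le 100\e\sqrt{d}/\alpha$ and a sphere net. Two mechanisms are available to absorb the spurious $\sqrt{d}$: (a)~exploit the additional structure present in the intended application (Lemma~\ref{lem:goodness_conditions}), namely that $a_i = \E[f(\bm\eta_i^* + \Delta)\mid\bm A]$ inherits $\|a_i\|_\infty \le |\Delta|_\infty \le O(\e/\alpha)$ from $1$-Lipschitzness and oddness of $\phi$, which tightens the concentration input $\sum_i w_i^2 b_i^2$ via the coordinate-wise bound; or (b)~perform a weighted AM--GM trade-off $2\langle\bm\xi_i,v\rangle\langle a_i,v\rangle \ge -\lambda\langle\bm\xi_i,v\rangle^2 - \lambda^{-1}\langle a_i,v\rangle^2$ before summing, choosing $\lambda$ so that the $(1-\lambda)$ loss from $(I)$ and the $(1/\lambda-1)\sum_i w_i a_i a_i^\top$ loss compensated by $(III)$ balance out.

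Putting the pieces together, with probability $\ge 1-\delta$ one obtains $(I) + (II) + (III) \succeq \tfrac{1}{n}\sum_i \E\bm\xi_i\bm\xi_i^\top - C\bigl(\e\log(1/\e) + \e/\alpha + \sqrt{(d\log d + \log(1/\delta))/n}\bigr)\,\Id_d$ uniformly in $w \in \cW_\e$, which is exactly the claimed bound. I expect the argument to be parallel in structure to the proof of the covariance part of Fact~\ref{fact:subgaussian-stability}, with the cross-term estimate as the principal new ingredient.
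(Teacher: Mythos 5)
Your overall plan — expand into $(I)+(II)+(III)$, bound $(I)$ via \cref{fact:subgaussian-stability}, use a net of the sphere and a union bound over the extreme points of $\cW_\e$, and bring back $(III)$ to absorb the cross term — is in the same spirit as the paper. But the specific mechanism you offer for absorbing $(II)$ does not close the gap you correctly flag, so the argument has a genuine hole. Your mechanism~(b), the per-summand AM–GM $2\langle\bm\xi_i,v\rangle\langle a_i,v\rangle \ge -\lambda\langle\bm\xi_i,v\rangle^2-\lambda^{-1}\langle a_i,v\rangle^2$, yields $(I)+(II)+(III)\succeq(1-\lambda)(I)+(1-1/\lambda)(III)$; optimizing $\lambda$ against the a priori bounds $(I)\preceq O(1)\Id$ and $(III)\preceq r^2\Id$ with $r=100\e\sqrt d/\alpha$ gives a loss of order $r=\e\sqrt d/\alpha$, i.e.\ exactly the spurious $\sqrt d$ you are trying to remove, not $\e/\alpha$. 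Your mechanism~(a) replaces the stated $\ell_2$ hypothesis by an $\ell_\infty$ one that the lemma does not assume, and even granting it, $\|a_i\|_\infty\le O(\e/\alpha)$ does not improve $|\langle a_i,u\rangle|$ for a generic unit vector $u$, so the Hoeffding scale $(\sum_i w_i^2\langle a_i,u\rangle^2)^{1/2}$ is unchanged.

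What the paper does, and what is missing from your sketch, is to apply the AM–GM \emph{after} Hoeffding, at the level of the deterministic scalar $B=\|A_S u\|$ (rows of $A_S$ are $a_i$, $i\in S$): for a fixed extreme point $S$ and net point $u$, Hoeffding gives $\sum_{i\in S}2\langle\bm\xi_i,u\rangle\langle a_i,u\rangle\ge -C B\sqrt{\log(1/\delta')}$, and the $(III)$ contribution on the same $(S,u)$ is exactly $B^2$; since $-cB+B^2\ge -c^2/4$ holds pointwise in $B\ge 0$, the combined quantity is $\ge -O(\log(1/\delta'))$ with no dependence on $\max_i\|a_i\|$ and hence no $\sqrt d$. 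Union-bounding over $\lesssim\binom{n}{\e n}$ subsets and a $(0.01/d)$-net of the sphere (size $(300d)^d$, which is where the $d\log d$ in the sample-size requirement and final bound comes from) then gives the $\e\log(1/\e)+\tfrac{d\log d+\log(1/\delta)}{n}$ terms. The $\e/\alpha$ term arises only in the Lipschitz extension off the net, and there the almost-sure bound $\|\bm\xi_i\|\le 10\sqrt d$ — a hypothesis your proposal never invokes — is essential: the deviation is $O(\max_i\|\bm\xi_i\|\cdot r\cdot\|u-u'\|)=O(\sqrt d\cdot\e\sqrt d/\alpha\cdot 1/d)=O(\e/\alpha)$. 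A $1/2$-net for a bilinear form, as you propose, is both far too coarse for this Lipschitz step and inapplicable once $(III)$ is kept, since the object to control is the scalar quadratic $u\mapsto u^\top(\cdot)u$, not an operator norm.
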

\begin{proof}
	Note that 
	\begin{align*}
	\sum_{i=1}^n w_i\Paren{\bm \xi_i + a_i}\Paren{\bm \xi_i + a_i}^\top 
	&= 
	\sum_{i=1}^n w_i{\bm \xi_i}{\bm \xi_i}^\top + 
	\sum_{i=1}^n w_i\Paren{\bm \xi_i a_i^\top + 
		a_i\bm \xi_i^\top + a_ia_i^\top}\,.
	\end{align*}
	
	Let us bound $u{\sum_{i=1}^n w_i\Paren{\bm \xi_i a_i^\top + 
			a_i\bm \xi_i^\top + a_ia_i^\top}}u^\top$ 
	for all unit vectors $u\in \R^d$. 
	Note that since $\cW_\e$ is a polytope, it is enough to bound this value only for $w$ that are its vertices, that is, the indicators of sets of size at least $\Paren{1-\e}n$, normalized by $1/n$. Let $S\subset [n]$ be an arbitrary fixed (non-random) set, and let $u\in \R^d$ be an arbitrary fixed (non-random) unit vector. 
	Then 
	\[
	u^\top\sum_{i\in S} \tfrac{1}{n}\Paren{\bm \xi_i a_i^\top + 
		a_i\bm \xi_i^\top + a_ia_i^\top} u = 
	\tfrac{1}{n}\sum_{i\in S} 2\iprod{\bm \xi_i, u} \iprod{\bm a_i, u} + \iprod{\bm a_i, u}^2\,.
	\] 
	By Hoeffding's inequality, with probability at least $1-\delta$,
	\[
	\sum_{i\in S} 2\iprod{\bm \xi_i, u} \iprod{\bm a_i, u} + \iprod{\bm a_i, u}^2 \ge 
	-O\Paren{\Norm{A_S u} \sqrt{\log\Paren{1/\delta}}} + \Norm{A_S u}^2\ge -O\Paren{\log(1/\delta)}\,.
	\]
	where $A_S\in \R^{\Card{S}\times d}$ 
	is the matrix with rows $a_i$ for $i\in S$.
	
	By union bound over all sets of size at least $\Paren{1-\e}n$, with probability 
	$1-\delta$, for every set $S$ of size at least $\Paren{1-\e}n$,
	\[
	\sum_{i\in S} 2\iprod{\bm \xi_i, u} \iprod{\bm a_i, u} + \iprod{\bm a_i, u}^2 \ge
	-O\Paren{\e n \log\Paren{1/\e} + \log(1/\delta)}\,.
	\]
	
	Let $\cN$ be an $\Paren{0.01/d}$-net in the $d$-dimensional unit ball of size $\Card{\cN} \le \Paren{300d}^{d}$. 
	Then by union bound over $\cN$, we get that with  with probability 
	$1-\delta$, for every set $S$ of size at least $\Paren{1-\e}n$, and for every $u\in \cN$,
	\[
	\sum_{i\in S} 2\iprod{\bm \xi_i, u} \iprod{\bm a_i, u} + \iprod{\bm a_i, u}^2 \ge
	-O\Paren{\e n \log\Paren{1/\e} + d\log(d)  + \log(1/\delta)}\,.
	\]
	
	Now, if some unit $u$ is $\Paren{0.01/d}$-close to $u'\in \cN$, we get 
	\begin{align*}
	\tfrac{1}{n}\sum_{i\in S} 2\iprod{\bm \xi_i, u} \iprod{\bm a_i, u} 
	&\ge 	
	\tfrac{1}{n}\sum_{i\in S} 2\iprod{\bm \xi_i, u'} \iprod{\bm a_i, u'} - 
	O\Paren{\max_{i} \norm{\bm \xi_i}\cdot r\cdot  \norm{u-u'}}
	\\&\ge - O\Paren{\e \log (1/\e) 
		+\e/\alpha
		+ {\frac{d\log\Paren{d} + \log(1/\delta)} {n} }}\,.
	\end{align*}
	
	Using the concentration of $\sum_{i=1}^n w_i{\bm \xi_i}{\bm\xi_i}^\top$ from \cref{fact:subgaussian-stability}, we get the desired bound.
\end{proof}
\section{Sum-of-Squares Toolkit}
\label{sec:sos_toolkit}

The following fact can be found in \cite[Lemma A.2]{DBLP:journals/corr/abs-1711-11581}
\begin{fact}
    \label{fact:sos_triangle}
    For all $k \in \N_{\geq 1}$ it holds that \[\sststile{2k}{X, Y} \Paren{X+Y}^{2k} \leq 2^{2k-1} \cdot\Paren{X^{2k} + Y^{2K}} \,.\]
\end{fact}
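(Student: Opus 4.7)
The plan is induction on $k$. For the base case $k=1$, the identity
\[ 2(X^2 + Y^2) - (X+Y)^2 = (X-Y)^2 \]
is a degree-$2$ sum-of-squares certificate, exactly matching the bound $2^{2 \cdot 1 - 1} = 2$.

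For the inductive step, the key organizing principle is that multiplying an SoS inequality $A \leq B$ by an SoS polynomial $C$ preserves SoS-ness of the certificate: $C \cdot (B-A)$ is a product of sums of squares and hence again a sum of squares. I would split $(X+Y)^{2k+2} = (X+Y)^2 \cdot (X+Y)^{2k}$ and chain two multiplications. First, multiply the base case $(X+Y)^2 \leq 2(X^2+Y^2)$ by the manifestly SoS polynomial $X^{2k}+Y^{2k} = (X^k)^2 + (Y^k)^2$; next, multiply the inductive hypothesis $(X+Y)^{2k} \leq 2^{2k-1}(X^{2k}+Y^{2k})$ by the SoS polynomial $(X+Y)^2$. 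Composing these yields
\[ (X+Y)^{2k+2} \leq 2^{2k}\,(X^2+Y^2)(X^{2k}+Y^{2k}) \]
with a degree-$(2k+2)$ SoS certificate.

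The remaining step, and the one with actual algebraic content, is to bound $(X^2+Y^2)(X^{2k}+Y^{2k})$ by $2(X^{2k+2}+Y^{2k+2})$ in SoS. Expanding the difference and factoring gives the identity
\[ 2(X^{2k+2}+Y^{2k+2}) - (X^2+Y^2)(X^{2k}+Y^{2k}) = (X^2-Y^2)(X^{2k}-Y^{2k}) = (X^2-Y^2)^2 \sum_{j=0}^{k-1} (X^j Y^{k-1-j})^2, \]
which is manifestly SoS of degree $2k+2$, being a product of two sums of squares of total degree $4 + 2(k-1)$. Combining with the previous step closes the induction with constant $2^{2k+1} = 2^{2(k+1)-1}$. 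The only subtlety is to verify at each multiplicative step that the polynomial we multiply by is itself SoS (not merely nonnegative on $\mathbb{R}$), which holds for $(X^2+Y^2)$, $(X^{2k}+Y^{2k})$, and $(X+Y)^2$; the main obstacle is spotting the clean factorization above, without which one would be tempted to reach for a weighted AM-GM / Young-type inequality whose SoS version is considerably more painful to write down for general $k$.
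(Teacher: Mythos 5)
Your proof is correct. The paper itself does not prove this fact -- it simply cites Lemma~A.2 of arXiv:1711.11581 (Kothari--Steinhardt--Steurer) -- so there is no in-paper argument to compare against, but your inductive certificate checks out in every detail: the base case $2(X^2+Y^2)-(X+Y)^2=(X-Y)^2$ is degree~$2$; multiplying the inductive hypothesis by the square $(X+Y)^2$ and the base case by $X^{2k}+Y^{2k}=(X^k)^2+(Y^k)^2$ both preserve SoS-ness at degree $2k+2$; and the final identity
\[
2(X^{2k+2}+Y^{2k+2}) - (X^2+Y^2)(X^{2k}+Y^{2k}) = (X^2-Y^2)^2 \sum_{j=0}^{k-1} \bigl(X^j Y^{k-1-j}\bigr)^2
\]
is a product of two sums of squares of total degree $2k+2$, and the constants compose to $2^{2k-1}\cdot 2\cdot 2 = 2^{2(k+1)-1}$ as required. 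The one thing worth emphasizing, which you already flagged, is that SoS proofs compose under multiplication only when the multiplier is itself SoS (not merely globally nonnegative), and all three multipliers you use -- $(X+Y)^2$, $X^2+Y^2$, $X^{2k}+Y^{2k}$ -- are manifestly so. A more common one-shot route is to note $(X+Y)^{2k}\le 2^k(X^2+Y^2)^k$ via a telescoping factorization of $\bigl(2(X^2+Y^2)\bigr)^k-\bigl((X+Y)^2\bigr)^k$, then iterate AM--GM on $(X^2+Y^2)^k$; your induction avoids having to organize that telescope and is arguably the cleaner argument at the stated degree.
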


The next fact shows that sum-of-squares captures the Cauchy-Schwarz Inequality, a proof can be found in \cite[Lemma A.1]{DBLP:conf/focs/MaSS16}
\begin{fact}
    \label{fact:sos_cs}
    For all $n \in \N_{\geq 1}$ it holds that \[\sststile{2}{X_1, Y_1, \ldots, X_n, Y_n} \Paren{\sum_{i=1}^n X_i Y_i}^2 \leq \Paren{\sum_{i=1}^n X_i^2} \Paren{\sum_{j=1}^n Y_i^2}\]
\end{fact}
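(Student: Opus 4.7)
The plan is to give an explicit sum-of-squares certificate via Lagrange's identity, that is, via the polynomial identity
\[
\Paren{\sum_{i=1}^n X_i^2}\Paren{\sum_{j=1}^n Y_j^2} - \Paren{\sum_{i=1}^n X_i Y_i}^2 \;=\; \frac{1}{2}\sum_{i,j=1}^n (X_i Y_j - X_j Y_i)^2.
\]
Once this identity is established, the claim is immediate: the right-hand side is a sum of squares of the polynomials $\tfrac{1}{\sqrt{2}}(X_i Y_j - X_j Y_i)$, each of which has degree $2$ in the formal variables $X_1, Y_1, \ldots, X_n, Y_n$, so the inequality admits a degree-$2$ SoS proof in precisely the sense of the paper's definition (every polynomial appearing in the SoS decomposition has degree at most $2$).

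To verify the identity I would simply expand both sides. The left-hand side equals $\sum_{i,j} X_i^2 Y_j^2 - \sum_{i,j} X_i X_j Y_i Y_j$. For the right-hand side, expanding $(X_i Y_j - X_j Y_i)^2 = X_i^2 Y_j^2 - 2 X_i X_j Y_i Y_j + X_j^2 Y_i^2$ and summing over $i,j \in [n]$ gives $2\sum_{i,j} X_i^2 Y_j^2 - 2\sum_{i,j} X_i X_j Y_i Y_j$, where the symmetry $i \leftrightarrow j$ has been used to combine the first and third sums. Dividing by $2$ matches the left-hand side, so the identity holds as a genuine equality of polynomials in $X_1, \ldots, X_n, Y_1, \ldots, Y_n$.

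There is essentially no obstacle here. The identity is the classical Lagrange identity, which underlies the standard (non-SoS) proof of the Cauchy--Schwarz inequality, and its manifestly polynomial nature lifts the bound to the SoS setting at no cost: no pseudo-distribution arguments, no Positivstellensatz machinery, and no composition with \cref{fact:sos_triangle} are required. One sanity check to keep in mind while writing it up is that the degree accounting is consistent with the paper's convention ($\deg p_i \le 2$ in each summand $p_i^2$), so the resulting proof is indeed degree-$2$ and can be freely composed with other SoS steps via the composition rule recalled in \cref{sec:preliminaries}.
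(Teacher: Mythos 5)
Your proof is correct and is essentially the same argument as the paper's cited reference (Ma--Shi--Steurer, Lemma A.1): Lagrange's identity
\[
\Paren{\sum_i X_i^2}\Paren{\sum_j Y_j^2} - \Paren{\sum_i X_iY_i}^2 = \tfrac{1}{2}\sum_{i,j}(X_iY_j - X_jY_i)^2
\]
is an exact polynomial identity whose right-hand side is manifestly a sum of squares, which is exactly the SoS certificate required. The algebraic verification and the degree accounting under the paper's stated convention are carried out correctly, so there is nothing to add.
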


\begin{fact}
    \label{fact:sos_spectral_upper_bound}
    Let $M \in \R^{n \times n}$ be a symmetric matrix and $X$ be an $n$-vector of formal variables.
    Then it holds that \[\sststile{2}{X} \iprod{X, M X} \leq \Norm{M} \Norm{X}^2 \,.\]
\end{fact}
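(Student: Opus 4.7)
The plan is to reduce the claimed inequality to the manifest fact that a quadratic form with a positive semidefinite Gram matrix is a sum of squares of linear forms. The key observation is that the polynomial inequality $\norm{M}\norm{X}^2 - \iprod{X, MX} \geq 0$ corresponds, as a quadratic form in $X$, to the symmetric matrix $\norm{M} \cdot I_n - M$.

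First, I would verify that $\norm{M} \cdot I_n - M$ is positive semidefinite. Since $M$ is symmetric, by the spectral theorem write $M = U^\top D U$ with $U$ orthogonal and $D = \diag(\lambda_1, \dots, \lambda_n)$ the diagonal matrix of eigenvalues. By definition of the operator norm, $\abs{\lambda_i} \leq \norm{M}$ for every $i$, so $\norm{M} - \lambda_i \geq 0$. Consequently,
\[
\norm{M} \cdot I_n - M \;=\; U^\top \Paren{\norm{M} \cdot I_n - D} U \;\succeq\; 0.
\]

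Second, I would extract a Cholesky-type factorization: since $\norm{M} \cdot I_n - M$ is PSD, there exists a matrix $B \in \R^{n \times n}$ (for instance $B = (\norm{M} \cdot I_n - D)^{1/2} U$) such that $\norm{M} \cdot I_n - M = B^\top B$. Then, as an identity of polynomials in the formal variables $X_1, \dots, X_n$,
\[
\norm{M}\norm{X}^2 - \iprod{X, MX} \;=\; X^\top (B^\top B) X \;=\; \Norm{BX}^2 \;=\; \sum_{i=1}^n (BX)_i^2,
\]
which exhibits the left-hand side as a sum of squares of linear forms in $X$. Each square $(BX)_i^2$ has degree $2$, so this is a valid degree-$2$ sum-of-squares certificate, giving $\sststile{2}{X} \iprod{X, MX} \leq \norm{M}\norm{X}^2$ as claimed.

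There is no real obstacle here; the only thing to be careful about is making sure the factorization exists at the right degree, which it does because $\norm{M} \cdot I_n - M$ is a constant (in $X$) PSD matrix, so its Cholesky factor $B$ has constant entries and each $(BX)_i$ is a linear polynomial in $X$. Thus the squared terms are degree-$2$ polynomials, matching the degree bound asserted in the fact.
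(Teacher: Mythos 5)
Your proof is correct and takes essentially the same approach as the paper: both observe that $\Norm{M} \cdot I_n - M$ is positive semidefinite, factor it as $B^\top B$ (the paper writes $LL^\top$), and conclude that the deficit polynomial equals $\Norm{BX}^2$, a sum of squares of linear forms. The only difference is that you spell out the spectral-theorem justification for the PSD claim and the explicit form of the factor $B$, which the paper leaves implicit.
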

\begin{proof}
    We can rewrite the inequality as $X^\top \Paren{\Norm{M} \cdot I_n - M} X \geq 0$.
    Since $\Norm{M} \cdot I_n - M$ is positive semi-definite, it follows that there exists a matrix $L \in \R^{n \times n}$ such that $\Norm{M} \cdot I_n - M = L L^\top$.
    Hence, \[X^\top \Paren{\Norm{M} \cdot I_n - M} X = \Norm{L X}^2 \,,\] which is a sum of squares in $X$.
\end{proof}

\begin{lemma}\label{lem:subgaussianity-elliptical-technical}
	Let $k$ be a positive integer, and let $\bm \eta$ be elliptical $d$-dimensional vector with location $0$ and scatter matrix $\Sigma$ that satisfies
	\[
	\Norm{\Sigma}_F \le \frac{\Tr\Paren{\Sigma}}{10 \sqrt{k\log d}}\,.
	\]
	Let $\noisetransformation: \R^d \to \R^d$ be a projection onto the Euclidean ball of radius $R$ centered at $0$. 
	
	Then
	$\bm \eta$ has $(2k, 2k)$-certifiable $q$-bounded $\noisetransformation$-moments, where 
	\[ 
	q =2R \sqrt{\frac{k\Norm{\Sigma}}{\Tr\Paren{\Sigma}}}\,.
	\]
\end{lemma}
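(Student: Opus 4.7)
The plan is to reduce certifying the $\noisetransformation$-moments of $\bm\eta$ first to certifying moments of its spherical projection $\bm W = \bm\eta/\Norm{\bm\eta}$, then (via rotation invariance of the sphere) to bounding normalized Gaussian moments $\E \iprod{\bm Y, v}^{2k}/\Norm{\bm Y}^{2k}$ for $\bm Y \sim N(0, \Sigma)$, and finally to handle the latter by splitting on the event that $\Norm{\bm Y}^2$ concentrates around $\Tr(\Sigma)$. Since $\noisetransformation(\eta) = \min\Set{1, R/\Norm{\eta}} \cdot \eta$ is a nonnegative scalar multiple of $\eta$ with norm at most $R$, for every realization of $\bm\eta$
\[
R^{2k} \iprod{\bm W, v}^{2k} - \iprod{\noisetransformation(\bm \eta), v}^{2k} = \Paren{R^{2k} - \min\Set{\Norm{\bm\eta}, R}^{2k}} \cdot \Paren{\iprod{\bm W, v}^k}^2
\]
is a nonnegative scalar in $\bm\eta$ times a perfect square in $v$. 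Taking expectations over $\bm\eta$ (which preserves membership in the closed SoS cone) yields a degree-$2k$ SoS proof that $\E \iprod{\noisetransformation(\bm\eta), v}^{2k} \le R^{2k} \E \iprod{\bm W, v}^{2k}$, so it remains to SoS-bound $\E \iprod{\bm W, v}^{2k}$ by $(2\sqrt{k \Norm{\Sigma}/\Tr(\Sigma)})^{2k} \Norm{v}^{2k}$.

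Writing $\bm\eta = \bm R\, \Sigma^{1/2} \bm U$ with $\bm U$ uniform on $\bbS^{d-1}$, the spherical projection $\bm W = \Sigma^{1/2}\bm U / \Norm{\Sigma^{1/2}\bm U}$ has the distribution of $\bm Y/\Norm{\bm Y}$ for $\bm Y \sim N(0, \Sigma)$ (via $\bm U \stackrel{d}{=} \bm g/\Norm{\bm g}$ for $\bm g \sim N(0, \Id_d)$). I then split on $\cA = \Set{\Norm{\bm Y}^2 \ge \Tr(\Sigma)/2}$. On $\cA$, the pointwise bound $\mathbf{1}_{\cA}/\Norm{\bm Y}^{2k} \le (2/\Tr(\Sigma))^k$, the Gaussian moment identity $\E \iprod{\bm Y, v}^{2k} = (2k-1)!!\, (v^\top \Sigma v)^k$, and the estimate $(2k-1)!! \le (2k)^k$ give
\[
\E\Brac{\mathbf{1}_{\cA}\, \iprod{\bm Y, v}^{2k}/\Norm{\bm Y}^{2k}} \le (4k/\Tr(\Sigma))^k (v^\top \Sigma v)^k,
\]
which I lift to an SoS bound by $(4k\Norm{\Sigma}/\Tr(\Sigma))^k \Norm{v}^{2k}$ using that $\Norm{\Sigma}\Norm{v}^2 - v^\top \Sigma v$ is SoS and that products of SoS polynomials are SoS (applied to the factorization $B^k - A^k = (B-A) \sum_{j=0}^{k-1} B^{k-1-j} A^j$). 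On $\cA^c$, I exploit the pointwise SoS identity $\Norm{v}^{2k} - \iprod{W, v}^{2k} = \sum_{j=0}^{k-1} \binom{k}{j} \iprod{W, v}^{2j} \Norm{P_{W^\perp} v}^{2(k-j)}$, valid for every unit vector $W$; multiplying by $\mathbf{1}_{\cA^c}$ and averaging produces an SoS bound $\E[\mathbf{1}_{\cA^c}\, \iprod{\bm W, v}^{2k}] \le \Pr(\cA^c)\, \Norm{v}^{2k}$. Applying Fact~\ref{fact:quadratic-form-gaussian} with $t = \Tr(\Sigma)^2/(8\Norm{\Sigma}_F^2) \ge 12.5\, k\log d$ bounds $\Pr(\cA^c) \le d^{-12.5k}$, which is negligible relative to $(k\Norm{\Sigma}/\Tr(\Sigma))^k \ge (k/d)^k$, so the tail contribution can be absorbed into the constant of the main term.

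The main obstacle will be ensuring genuine SoS-ness in the formal variables $v$ throughout: every step relies on the integrand being a non-negative scalar in $\bm Y$ multiplying a polynomial that is SoS in $v$ for every realization, so that expectation preserves the SoS structure rather than producing a merely pointwise non-negative polynomial. Matching the constant $q = 2R\sqrt{k\Norm{\Sigma}/\Tr(\Sigma)}$ exactly, as opposed to up to an absolute multiplicative constant, will require careful bookkeeping of the concentration threshold and of the elementary estimate on $(2k-1)!!$; a slightly larger absolute constant would be equally suitable for the downstream applications in Section~\ref{sec:elliptical}.
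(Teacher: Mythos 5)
Your proposal follows the paper's proof essentially step for step: reduce to the spherical projection $R\,\bm\eta/\Norm{\bm\eta}$ via a pointwise ``nonnegative scalar times a perfect square'' argument, replace $\bm\eta$ by a Gaussian $\bm Y\sim N(0,\Sigma)$ using that spherical projections of elliptical laws depend only on $\Sigma$, split on $\Set{\Norm{\bm Y}^2\ge \Tr(\Sigma)/2}$, bound the good event via Gaussian moments and the bad event via SoS Cauchy--Schwarz together with the Bernstein bound \cref{fact:quadratic-form-gaussian} under the Frobenius-norm hypothesis. The only difference is that the paper cites \cite{KS17} for the SoS moment bound on the Gaussian, while you rederive it from $\E\iprod{\bm Y,v}^{2k}=(2k-1)!!\,(v^\top\Sigma v)^k$ and the factorization $B^k-A^k=(B-A)\sum_j B^{k-1-j}A^j$; and the constant slack you flag as a loose end is real but trivial to repair, since replacing your estimate $(2k-1)!!\le(2k)^k$ by the AM--GM bound $(2k-1)!!\le k^k$ (which is what the cited certifiable-subgaussianity statement encodes) puts the main term at $q^{2k}/2^k$ and leaves ample room to absorb the $d^{-\Omega(k)}\Norm{v}^{2k}$ tail term, exactly as the paper's final line does.
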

\begin{proof}
	Let $v_1,\ldots, v_d$ be variables and consider the polynomial 
	$\E\Iprod{\noisetransformation\Paren{\bm\eta}, v}^{2k}$.
	Let $s\Paren{\bm \eta} = \frac{R}{\Norm{\bm \eta}}\bm \eta = \frac{R}{\Norm{\noisetransformation\Paren{\bm\eta}}}\noisetransformation\Paren{\bm\eta}$.
	Since spherical projection of elliptical distributions depends only on $\Sigma$ (see Theorem 35 in \cite{frahm2004generalized}), $s\Paren{\bm \eta}$ has the same distribution as $s\Paren{\bm w}$, 
	where $\bm w \sim N\Paren{0,\Sigma}$. Hence
	\begin{align*}
	\sststile{2k}{v} \Iprod{\noisetransformation\Paren{\bm\eta}, v}^{2k}
	&\le \Iprod{s\Paren{\bm\eta}, v}^{2k}
	\\&= R^{2k}  \Iprod{\frac{\bm w}{\Norm{\bm w}} , v}^{2k}
	\\&= R^{2k} \cdot\ind{\Norm{\bm w}^2 > \frac{\Tr\Paren{\Sigma}}{2}}\Iprod{\frac{\bm w}{\Norm{\bm w}} , v}^{2k}
	+R^{2k}\cdot \ind{\Norm{\bm w}^2 \le \frac{\Tr\Paren{\Sigma}}{2}} \Iprod{\frac{\bm w}{\Norm{\bm w}} , v}^{2k}
	\\&\le \Paren{\frac{2R^2}{\Tr\Paren{\Sigma}}}^{k} \Iprod{\bm w, v}^{2k}
	+R^{2k}\cdot \ind{\Norm{\bm w}^2 \le \frac{\Tr\Paren{\Sigma}}{2}}\cdot\norm{v}^{2k}\,.
	\end{align*}
	By \cref{fact:quadratic-form-gaussian},
	\[
	\Pr\Brac{\Norm{\bm w}^2 \le \Tr\Paren{\Sigma}/2} \le d^{-2k}\,.
	\]
	Hence
	\[
	\sststile{2k}{v} \E\Iprod{\noisetransformation\Paren{\bm\eta}, v}^{2k} 
	\le \Paren{\frac{2R^2}{\Tr\Paren{\Sigma}}}^{k} \E\Iprod{\bm w, v}^{2k} + \Paren{R/d}^{2k} \norm{v}^{2k}\,.
	\]
	Since $\bm w$ is $(2k,2k)$-certifiably $1$-subgaussian (see Lemma 5.1 in \cite{KS17}), and since $\Norm{\Sigma}/\Tr\Paren{\Sigma} \ge 1/d$,
	\[
	\sststile{2k}{v} \E\Iprod{\noisetransformation\Paren{\bm\eta}, v}^{2k} 
	\le 2\cdot\Paren{\frac{2R^2 k\Norm{\Sigma}}{\Tr\Paren{\Sigma}}}^{k} \norm{v}^{2k}\,.
	\]
\end{proof}


\end{document}